\documentclass[11pt]{article}
\usepackage[a4paper,margin=1in]{geometry}
\usepackage{amsmath,amssymb,amsthm,mathtools}
\usepackage{enumitem}
\usepackage{microtype}
\usepackage{hyperref}
\hypersetup{colorlinks=true,linkcolor=blue,citecolor=blue,urlcolor=blue}
\usepackage{braket}
\usepackage{comment}
\usepackage{xcolor}
\usepackage{authblk}

\newtheorem{theorem}{Theorem}[section]
\newtheorem{proposition}[theorem]{Proposition}
\newtheorem{lemma}[theorem]{Lemma}
\newtheorem{corollary}[theorem]{Corollary}
\newtheorem{definition}[theorem]{Definition}

\newcommand{\cH}{\mathcal H}

\newcommand{\cav}{\operatorname{cav}}
\newcommand{\eps}{\varepsilon}

\title{The $I_{3322}$ Bell inequality requires infinite dimensions}
\author{Andrea Coladangelo\thanks{Email: \texttt{coladan@cs.washington.edu}}}
\affil{Paul G. Allen School of Computer Science and Engineering, University of Washington}
\date{}

\begin{document}
\maketitle

\begin{abstract}
The $I_{3322}$ inequality is one of the simplest bipartite Bell inequalities, with three possible questions and two possible answers per party. Yet, despite its simplicity, it has long been conjectured by Pál and Vértesi (Physical Review A, 2010) to possess the following quirk: no finite-dimensional quantum strategy can attain its maximal violation, but an infinite-dimensional strategy can. A Bell correlation, with five questions and three answers per party, that provably possesses the same property has since been discovered by Coladangelo and Stark (Nature Communications, 2020). However, proving that the same property holds for $I_{3322}$, which lives in the simplest Bell scenario in which such a phenomenon could occur, has remained elusive. Here, we provide a proof of this conjecture. 

An approximate proof, containing all the main ideas, was produced by GPT 5.5 Pro after various rounds of interaction. The presentation of this proof was revised substantially by the author for completeness, correctness, and clarity.
\end{abstract}

\medskip
\medskip
\medskip
\medskip
\begin{center}
\setlength{\fboxsep}{8pt}
\fbox{%
  \parbox{0.90\textwidth}{%
    \small
    \textbf{AI usage statement.}
   An approximate proof (containing all of the main ideas) was produced by GPT 5.5 Pro after some rounds of interaction with the author. That first approximate proof was much shorter, and arguably quite difficult to parse by a human. The presentation in this paper was the result of substantial revisions by the author for completeness, correctness, and clarity, with help from GPT 5.6 Sol.
  }%
}
\end{center}

\newpage
\tableofcontents

\section{Introduction}
Bell inequalities have fascinated quantum information scientists for decades. They show that measurements performed on spatially separated, but entangled, quantum systems can produce correlations that cannot be reproduced by any classical local model. Beyond the fundamental result that such a separation exists, one can ask a quantitative question: \emph{how large of a quantum system is needed to achieve the maximal violation of a given Bell inequality?} For most familiar examples, very small amounts of entanglement suffice.  For example, a single EPR pair suffices to attain the optimal quantum value of the CHSH inequality, while two EPR pairs suffice to win the magic-square game with probability one. More generally, the optimal strategies for many of the best-known Bell inequalities and nonlocal games live in low-dimensional Hilbert spaces.

A remarkably simple inequality appears to behave very differently.  In 1981, Froissart studied the space of bipartite Bell scenarios in which each party has three possible measurements, each with two possible outcomes~\cite{froissart1981constructive}. Besides liftings of the CHSH inequality, Froissart discovered one genuinely new facet of the local correlation polytope: the inequality now known as \(I_{3322}\). Nearly three decades later, Pál and Vértesi uncovered what seems to be a striking feature of its quantum violation~\cite{PV10}. They constructed strategies of increasing local dimension whose values continue to improve with the dimension, together with a candidate infinite-dimensional limiting strategy, and found compelling numerical agreement with upper bounds obtained from the Navascués--Pironio--Acín hierarchy. Based on this evidence, they conjectured that \emph{no finite-dimensional strategy attains the optimal quantum value of \(I_{3322}\)}.

The phenomenon suggested by Pál and Vértesi was mysterious at the time: it was not even known whether a finite Bell scenario could distinguish finite- from infinite-dimensional entanglement.  Considerable progress has since been made on closely related questions.  Slofstra showed that the set of quantum correlations is not closed~\cite{slofstra2019set}, with subsequent work discovering the same phenomenon in substantially smaller Bell scenarios~\cite{dykema2019non, coladangelo2020two, beigi2021separation}. In these examples, the optimal value can only be achieved in the limit of finite-dimensional strategies, but not by any fixed finite- or even infinite-dimensional strategy.  A different phenomenon was established by Coladangelo and Stark~\cite{coladangelo2020inherently}, who constructed a finite Bell correlation that is attained exactly by an infinite-dimensional strategy but cannot be attained in any finite dimension. The latter Bell correlation uses five questions and three answers per party. This was later simplified by Beigi to just four questions and two answers per party~\cite{beigi2021separation}. Thus both kinds of infinite-dimensional behavior are now known to occur in finite Bell scenarios: an optimum may exist only as a limit, or it may be attained, but only by a genuinely infinite-dimensional strategy.

Despite this progress, the Pál--Vértesi conjecture for \(I_{3322}\) has remained open.  This case is particularly compelling because \(I_{3322}\) lives in the smallest binary-output Bell scenario in which such infinite-dimensional behavior could occur. Note that with only \emph{two} binary measurements for any one of the parties, the problem reduces, through the usual Jordan's lemma decomposition, to optimization over a state of Schmidt rank $2$. \emph{Three} measurements per party are therefore the first place where the infinite-dimensional behavior can arise.  In this sense, \(I_{3322}\) is the minimal candidate for a Bell inequality whose maximal quantum violation inherently requires infinite-dimensional entanglement.

\subsection{Our result}
In this work, we resolve the Pál--Vértesi conjecture. We show that the infinite-dimensional behavior discussed above occurs in the simplest possible binary-output Bell scenario, four decades after the inequality was first discovered by Froissart. 
\begin{theorem}[Informal]
No finite-dimensional strategy achieves the optimal quantum value of the $I_{3322}$ inequality. On the other hand, there is an infinite-dimensional strategy that attains it exactly.
\end{theorem}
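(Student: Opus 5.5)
The argument has two parts: compute the optimal value $\omega^*$ of $I_{3322}$ and exhibit an infinite-dimensional strategy attaining it, then show that any strategy attaining $\omega^*$ has a rigid structure that cannot exist in finite dimensions.

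\textbf{Reduction to a single-party problem.} Write the Bell operator as $\sum_{x,y} c_{xy} A_x\otimes B_y + \sum_x a_x A_x\otimes I + \sum_y b_y I\otimes B_y$ with $A_x^2=B_y^2=I$. Fix Alice's observables $A_1,A_2,A_3$. Because Bob's part is linear in each $B_y$, the optimum over a fixed binary observable $B_y$ is attained when $B_y$ is the sign of an effective operator $M_y=\sum_x c_{xy}A_x+b_y$. This sign has to be taken in a purification picture: using the Schmidt decomposition of $|\psi\rangle$ we pass to the standard form $\langle\psi|X\otimes Y|\psi\rangle=\operatorname{tr}(\rho^{1/2}X\rho^{1/2}Y^T)$. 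The value is then a variational expression in $\rho$ and $A_1,A_2,A_3$ alone.

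\textbf{Structure via the Pál--Vértesi family.} Since $A_1,A_2,A_3$ are three reflections, there is no Jordan decomposition into $2\times2$ blocks. I would use the symmetry of $I_{3322}$ instead: the party swap, and the permutation exchanging questions $1,2$. Averaging over these symmetries (using concavity or linearity of the value in the correlation) lets us assume the optimal correlation is symmetric. I would then reconstruct the Pál--Vértesi limiting strategy. Its state is $\sum_n \lambda_n |nn\rangle$ with geometrically decaying $\lambda_n$. Its observables $A_1,A_2$ are block-diagonal in $2\times2$ blocks on pairs $(2k,2k+1)$, and $A_3$ is block-diagonal on the shifted pairs $(2k-1,2k)$, up to a separate $1\times1$ corner at $n=0$. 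This tridiagonal, Jacobi-matrix structure is how an infinite chain arises. For an upper bound I would write the Bell operator minus $\omega^*$ as a sum of squares, $\omega^* - \mathcal B=\sum_i P_i^\dagger P_i$ with $P_i$ polynomials in the observables; equivalently, I would give a matching dual NPA certificate at a suitable level. The explicit strategy then certifies $\omega^*$ exactly.

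\textbf{Rigidity and the impossibility of finite dimensions.} Any optimal strategy must satisfy $P_i|\psi\rangle=0$ for all $i$. From these relations I would derive operator identities showing that the vectors $|\psi\rangle$, $A_1|\psi\rangle$, $A_3A_1|\psi\rangle,\dots$ obey a three-term recurrence. In this recurrence the Schmidt coefficients satisfy $\lambda_{n+1}=q\lambda_n$ for a fixed $0<q<1$, and at every step the next vector has nonzero component orthogonal to the span of the previous ones. If the Hilbert space is finite-dimensional, the chain must terminate. Termination forces a boundary condition at the last site, and I would show this boundary condition contradicts the recurrence, for instance because the tail block forces $q=1$ or makes some $\lambda_n=0$ inconsistently. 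This finite-dimensional contradiction is the step I expect to be the main obstacle. It needs an exact sum-of-squares certificate whose kernel pins down the chain, and the optimum is likely irrational, so the certificate must be found analytically rather than numerically. The route I would try first is to guess the certificate from the dual of the Jacobi-matrix eigenvalue problem satisfied by the limiting strategy.
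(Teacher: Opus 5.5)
There is a genuine gap. Both load-bearing steps are placeholders, and the structural observation you discard is the one that makes them possible.

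\textbf{The upper bound.} You say three reflections admit no $2\times2$ block decomposition, but none is needed. The pair $A_1,A_2$ alone gives $Z_A=A_1+A_2-I$ and $X_A=A_2-A_1$ with $Z_AX_A+X_AZ_A=0$ and $Z_A^2+X_A^2=I$, so Jordan's lemma yields $2\times2$ blocks labelled by $t\in(-1,1)$. In these variables $I_{3322}=Z_AZ_B+\tfrac12Z_A-\tfrac12Z_B-I+\tfrac12X_AW_B+\tfrac12W_AX_B$, and the third measurements enter only through $\|W_A\|,\|W_B\|\le1$. This is what turns the problem into one about Jacobi matrices $J_m(c)$ for \emph{arbitrary} strategies, not just the P\'al--V\'ertesi ansatz.

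Your upper bound instead rests on an exact finite-degree SOS/NPA dual certificate that you do not have, for a value $\omega^*$ you cannot write down. The paper never computes it; it only defines $q_*=\sup_m Q_m$. The certificate that actually works is a concave function $H$ of the $Z$-labels, not a polynomial. It satisfies $H(v)+d(u,v)+H(-u)\le q_*$ and $H(t)H(-t)\ge a(t)^2$. Equivalently, $\tfrac12X_A\otimes W_B\preceq H(-Z_A)\otimes I$ (and its Bob analogue), plus a scalar bound in the commuting variables $Z_A,Z_B$.

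Constructing $H$ is the heart of the proof. It takes minimal Riccati pivots of $qI-J_m(c)$ as a Bellman value function, then an odd symmetrization of $\log(h_q/a)$, then the least concave majorant, then a compactness limit $q\downarrow q_*$. ``Guess it from the dual of the Jacobi eigenvalue problem'' points the right way but supplies none of this.

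\textbf{The rigidity step.} This would fail as stated. The Schmidt coefficients of an optimizer form an eigenvector of a Jacobi operator whose coefficients $a(c_i)$, $d(c_{i-1},c_i)$ vary along the chain. So there is no fixed ratio $\lambda_{n+1}=q\lambda_n$, and ``termination contradicts the recurrence'' is not an argument.

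Nor may you assume a finite-dimensional optimizer is a single chain. The paper uses concavity of $H$ to get differentiability wherever the product inequality is tight, which gives a unique partner label $v=H'(-u)-\tfrac12$. From this it shows that the active label pairs of any optimizer form a graph of degree at most two, whose components carry Jacobi-type matrices, and rules out each type:
\begin{itemize}
\item Cycles have top eigenvalue at most $\tfrac14<q_*$.
\item Paths fall short of $q_*$ because any finite Jacobi path can be lengthened by one site with a strict increase of the top eigenvalue, $Q_{m+1}>Q_m$. That lemma is the real replacement for your boundary-condition contradiction.
\end{itemize}

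\textbf{Infinite-dimensional attainment.} This is not ``explicit'': the P\'al--V\'ertesi limit is only numerical. Moreover, $\ell^2$-normalized finite eigenvectors can spread out and converge to zero. The paper needs monotonicity of the limiting labels and the total-variation bound $\|x\|_\infty^2\ge 2(q-\tfrac14)/\operatorname{TV}(c)$ to extract a nonzero $\ell^2$ eigenvector at $q_*$.
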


We give a detailed outline of how the result is proven in the technical overview (Section~\ref{sec:technical-overview}). Here, we give a brief preview. Our proof starts from the P\'al--V\'ertesi construction, which provides a family of finite-dimensional strategies whose $I_{3322}$ values improve as the dimension
grows. It then express the optimization implicit in the
P\'al--V\'ertesi construction in terms of the largest eigenvalues of certain symmetric tridiagonal matrices of increasing dimension, and defines \(q_*\) as
the supremum of these eigenvalues. The first milestone in our proof is to prove that $q_*$ is an upper bound on the value of arbitrary strategies (not just the particular ones from P\'al--V\'ertesi).

The key object in our analysis is a \emph{certificate}: a certain concave scalar function $H$ satisfying two families of inequalities, which helps upper bound the value of an arbitrary strategy. This certificate is rooted in ideas from Bellman's principle of optimality in
dynamic programming \cite{bellman1952theory}. This was later adapted in harmonic analysis, where Bellman functions have been used to
derive operator bounds from inequalities involving only a few
scalar variables
\cite{nazarov1997hunt, nazarov2001bellman}. In a problem involving many
successive choices, the Bellman approach replaces the full history of those choices by an auxiliary function recording only the information
needed to make the next one.  If that function satisfies an appropriate
one-step inequality, then that inequality can be applied successively as each new choice is made, and a global bound can then be obtained via induction. 

Our certificate follows from this principle.  Because the matrices arising
from the P\'al--V\'ertesi construction are tridiagonal, passing from dimension $n$ to $n+1$ only adds one diagonal entry and one off-diagonal coupling between the new coordinate and the preceding one. The behavior of these matrices can therefore be described recursively, and optimizing the resulting recursion produces the function \(H\). The fact that such an $H$, obtained by considering the particular P\'al--V\'ertesi construction, is helpful in bounding the value of arbitrary strategies is quite fortunate: it is ultimately due to a certain hidden two-dimensional structure that is present in the problem. Once this structure is exposed, the value of a general strategy can be analyzed using the same
scalar expressions that determine the entries of the
P\'al--V\'ertesi matrices.

The certificate then also controls equality: any strategy attaining \(q_*\)
must saturate every inequality used in the upper bound.  In finite
dimensions, we show that these equality conditions essentially force the strategy to decompose into finite components with the same
nearest-neighbor structure underlying the P\'al--V\'ertesi construction. We then prove that no such finite component can attain \(q_*\), ruling
out finite-dimensional attainment. On the other hand, infinite-dimensional attainment is not automatic, but it boils down to showing that one can extract a well-defined limiting strategy from sequences of finite-dimensional P\'al--V\'ertesi strategies.

\subsection{Concurrent work}
As we were completing this paper, we learned first of a Lean-verified proof by Douglas~\cite{douglas2026i3322}, who kindly reached out to us about his proof made public on Zenodo on August~5. We then coordinated arXiv postings of our concurrent works~\cite{douglas26arxiv}. A few days before our two papers appeared on arXiv, Pauwels~\cite{pauwels26} posted another independent and concurrent proof (of finite-dimensional non-attainment, though not infinite-dimensional attainment).

\subsection*{Acknowledgements}
The author is thankful for support from the Google Research Scholar program. The author is also thankful to Seth Douglas for reaching out about his concurrent result, and for kindly coordinating our arXiv posts (even though the author eventually missed the agreed upon arXiv deadline..).

\section{Technical overview}
\label{sec:technical-overview}

The main result of this paper is that the quantum value of \(I_{3322}\)
is a number \(q_*\) that can be attained on separable
infinite-dimensional Hilbert spaces, but not in any finite dimension.  More precisely, if
\(\omega_d(I_{3322})\) denotes the largest value achievable with local
dimension at most \(d\), then
\[
    \omega_d(I_{3322})<q_*
        \quad\text{for every finite }d \,.
\]
On the other hand, there is an infinite-dimensional strategy attaining $q_*$ on
\(\ell^2(\mathbb Z)\otimes\ell^2(\mathbb Z)\). Going forward, when we use the term infinite-dimensional strategy, we refer to a strategy on the separable infinite-dimensional Hilbert spaces \(\ell^2(\mathbb Z)\otimes\ell^2(\mathbb Z)\).


\subsection{The \(I_{3322}\) operator and its hidden two-dimensional geometry}

The inequality \(I_{3322}\) is one of the simplest bipartite Bell
inequalities beyond CHSH: each party has three possible binary measurements.
Let Alice and Bob's measurements be specified by the projectors
\(A_1,A_2,A_3\) and \(B_1,B_2,B_3\) respectively (think of these as the projectors onto the zero outcome for each possible measurement). Then, the $I_{3322}$ Bell operator is
\begin{align*}
I_{3322}={}&-A_2-B_1-2B_2
 +A_1B_1+A_1B_2+A_2B_1+A_2B_2 \\
&-A_1B_3+A_2B_3-A_3B_1+A_3B_2.
\end{align*}
For convenience, here we omit writing the tensor product symbol between Alice's and Bob's operators when no ambiguity can arise.
The classical value of $I_{3322}$ can straightforwardly be seen to be zero.  

The following rewriting of the $I_{3322}$ operator will reveal some crucial structure. Introduce
\begin{equation}
\label{eq:relations}
        Z_A=A_1+A_2-I,
        \qquad
        X_A=A_2-A_1,
\end{equation}
and define \(Z_B,X_B\) analogously.  Also write
\[
        W_A=2A_3-I,
        \qquad
        W_B=2B_3-I.
\]
The operators \(W_A,W_B\) are reflections.  More importantly, using the
fact that $A_1$ and $A_2$ are projectors, one can straightforwardly check that
\begin{equation}
        Z_AX_A+X_AZ_A=0,
        \qquad
        Z_A^2+X_A^2=I,
\label{eq:overview-anticommutation}
\end{equation}
and the same relations hold for \(Z_B,X_B\).  In terms of the new operators, the
Bell operator then becomes
\begin{equation}
\begin{aligned}
        I_{3322}
        ={}&Z_AZ_B+\frac12Z_A-\frac12Z_B-I \\
        &+\frac12X_AW_B+\frac12W_AX_B.
\end{aligned}
\label{eq:overview-I3322-rewrite}
\end{equation}

The anticommutation in \eqref{eq:overview-anticommutation} is the source
of the following important \emph{two-dimensional block} geometry. Ignoring exceptional one-dimensional
blocks for now, a pair \(Z,X\) satisfying these relations decomposes
into two-dimensional blocks of the form
\begin{equation}
\label{eq:blocks}
        Z=
        \begin{pmatrix}t&0\\0&-t\end{pmatrix},
        \qquad
        X=
        \begin{pmatrix}
        0&\sqrt{1-t^2}\\
        \sqrt{1-t^2}&0
        \end{pmatrix},
        \qquad 0<|t|<1.
\end{equation}
The latter is not difficult to see, and follows, for example, from Jordan's lemma along with a direct analysis of $2 \times 2$ matrices.

In a basis in which \(Z\) is diagonal, we call the eigenvalue of \(Z\)
associated with a basis vector its \emph{\(Z\)-label}.  Thus, in each
two-dimensional block, the two basis vectors have labels \(t\) and
\(-t\), and \(X\) maps one to the other with coefficient
\(\sqrt{1-t^2}\).  The exceptional one-dimensional blocks have label
\(\pm1\) and \(X=0\), or label \(0\) and \(X=\pm I\).

We introduce two scalar functions that will play an important role in
the analysis:
\begin{equation}
        a(t):=\frac12\sqrt{1-t^2},
        \qquad
        d(u,v):=uv+\frac{u-v}{2}-1.
\label{eq:overview-a-d}
\end{equation}
For a joint basis vector \(e_\alpha\otimes f_\beta\), let \(u\) be the
\(Z_A\)-label of Alice's basis vector \(e_\alpha\), and let \(v\) be the
\(Z_B\)-label of Bob's basis vector \(f_\beta\).  Then \(d(u,v)\) is
exactly the contribution of the first line of
\eqref{eq:overview-I3322-rewrite} on this joint basis vector, and \(a(t)\) is one half of the coefficient with which \(X_A\) or \(X_B\)
``connects'' two basis vectors whose \(Z\)-labels are \(t\) and \(-t\).

\subsection{Jacobi paths and their canonical strategies}
\label{sec:canonical-strategies}
We are ready to introduce a key object in the proof: a ``Jacobi path'' and its associated Jacobi matrix. 
A finite Jacobi path of length \(m\) is a sequence
\[
        c=(c_0,c_1,\ldots,c_m),
        \qquad
        c_0,c_m\in\{\pm1\},
        \qquad
        c_1,\ldots,c_{m-1}\in[-1,1].
\]
Its Jacobi matrix is the \(m\times m\) real symmetric tridiagonal
matrix
\begin{equation}
        (J_m(c))_{ii}=d(c_{i-1},c_i),
        \qquad
        (J_m(c))_{i,i+1}=(J_m(c))_{i+1,i}=a(c_i).
\label{eq:overview-finite-Jacobi}
\end{equation}
The coordinates \(1,\ldots,m\) can be thought of as forming a nearest-neighbor chain, where
coordinate $i$ is coupled to coordinate $i+1$ via an edge of weight $a(c_i)$. So, the endpoint condition \(c_0,c_m\in\{\pm1\}\) can be thought of as ensuring that there are no ``couplings'' beyond coordinates \(1\) and \(m\), because \(a(c_0)=a(c_m)=0\).

\paragraph{The canonical strategy.} Every Jacobi path \(c=(c_0,\ldots,c_m)\) defines a canonical
\(I_{3322}\) strategy on
\(\mathcal H_A=\mathcal H_B=\mathbb C^m\). For the endpoint choice \(c_0=1\) and \(c_m=-1\), this construction is the same as the one considered by P\'al and V\'ertesi, rewritten in our
\(Z,X,W\) notation (note that P\'al and V\'ertesi do not use the terminology of Jacobi paths). Here, we make the minor formal extension of allowing all
four choices
\[
        c_0,c_m\in\{\pm1\}.
\]

Let
\(e_1,\ldots,e_m\) be the standard basis and set
\[
        s_i:= 2a(c_i) = \sqrt{1-c_i^2}.
\]
\begin{enumerate}
\item 
Define the diagonal operators \(Z_A,Z_B\) by
\[
        Z_Ae_i=
        \begin{cases}
        c_{i-1}e_i,&i\text{ odd},\\
        -c_i e_i,&i\text{ even},
        \end{cases}
        \qquad
        Z_Be_i=
        \begin{cases}
        c_i e_i,&i\text{ odd},\\
        -c_{i-1}e_i,&i\text{ even}.
        \end{cases}
\]
\item For each even \(i<m\), define
\[
        X_Ae_i=s_i e_{i+1},
        \qquad
        X_Ae_{i+1}=s_i e_i,
\]
and let \(X_A\) vanish on any basis vector not belonging to one of these
pairs.  Define \(X_B\) in the same way on the odd pairs:
\[
        X_Be_i=s_i e_{i+1},
        \qquad
        X_Be_{i+1}=s_i e_i
        \qquad(i<m\text{ odd}),
\]
again setting it equal to zero on any unpaired basis vector. One can see that these operators in fact satisfy the relations in \eqref{eq:overview-anticommutation}, and one can back out projectors $A_1, A_2$ and $B_1,B_2$ from them, by inverting~\eqref{eq:relations}.

\item Finally, let \(W_A\) swap \(e_i\) and \(e_{i+1}\) for every odd \(i<m\),
and let \(W_B\) perform the same swap for every even \(i<m\); both
operators act as the identity on any remaining basis vectors. Then, define $A_3 = \frac{I+W_A}{2}$ and $B_3 = \frac{I+W_B}{2}$.
\end{enumerate}




Let
\(\phi\) be a normalized
eigenvector corresponding to the largest eigenvalue of \(J_m(c)\). We can take $\phi$ to have all non-negative entries (since the off-diagonal entries
\(a(c_i)\) are nonnegative). We use its
coordinates $\phi_i$ as Schmidt coefficients to define Alice and Bob's joint state
\[
        \ket{\psi_c}=\sum_{i=1}^m\phi_i e_i\otimes e_i.
\]
For simplicity, here and in the rest of the paper, we usually denote vectors without using kets.

The diagonal terms in \eqref{eq:overview-I3322-rewrite} contribute
\(\sum_i d(c_{i-1},c_i)\phi_i^2\), while the non-diagonal terms
contribute
\(2\sum_{i=1}^{m-1}a(c_i)\phi_i\phi_{i+1}\).  Therefore,
\begin{equation}
        \langle\psi_c | I_{3322} | \psi_c\rangle
        =\langle\phi,J_m(c)\phi\rangle
        =\lambda_{\max}\left(J_m(c)\right).
\label{eq:overview-path-value}
\end{equation}

We remark that the operators involved in the canonical strategy above are all block-diagonal with $2\times 2$ blocks, just as in the optimal strategy for the correlation in \cite{coladangelo2020inherently} (on five questions and three answers). One interesting difference is that the $2\times 2$ blocks are ``interlacing'' \emph{between} Alice and Bob. Interlacing is present in  \cite{coladangelo2020inherently} as well, but only among operators on the same side, while corresponding operators on opposite sides have a matching block-decomposition.

\paragraph{The critical value $q_*$.} We can now define the supremum of the value in \eqref{eq:overview-path-value} over all finite Jacobi paths. Let 
\begin{equation}
\begin{aligned}
        Q_m
        &:={}
        \max_{\substack{c_0,c_m\in\{\pm1\}\\
                        c_1,\ldots,c_{m-1}\in[-1,1]}}
        \lambda_{\max}\left(J_m(c)\right),\quad \text{and}\\
        q_*&:=\sup_{m\geq1}Q_m.
\end{aligned}
\label{eq:overview-qstar}
\end{equation}
P\'al and V\'ertesi numerically observed a sequence of
finite-dimensional Jacobi path strategies whose values appeared to converge to \(0.250875384514\). They conjectured that: $q_*$ as defined above is in fact this limiting value, and no finite-dimensional strategy
attains this limiting value.

At this point, note that \(q_*\) is defined only as the supremum of the value that can be achieved using canonical strategies based on finite paths. Our proof will establish three facts:
\begin{itemize}
\item $q_*$ is an upper bound on the value of any arbitrary finite or infinite-dimensional strategy. 
\item No arbitrary finite-dimensional strategy attains $q_*$
\item There is an infinite-dimensional strategy attaining $q_*$. 
\end{itemize}


\paragraph{No finite path attains $q_*$}
A useful intermediate step towards our final goal is to first establish the restricted result that no strategy based on a finite Jacobi path can attain $q_*$. We do this by showing that, as long as $Q_m>0$, we have
\begin{equation}
\label{eq:8}
       Q_{m+1}>Q_m,
\end{equation}
which implies \(Q_m<q_*\) for every \(m\) (otherwise there would be an $m$ for which $Q_m$ exceeds the supremum). Note that one need not worry about the possibility that $q_* = 0$ because an explicit finite Jacobi path strategy achieving value strictly greater than $\frac14$ is known (see Lemma~\ref{lem:inf-quarter-witness}).

We will not describe how to prove \eqref{eq:8} in detail in this overview, but only
describe the main idea.  Suppose \(Q_m>0\), and choose a length-\(m\)
path \(c\) attaining \(Q_m\), whose existence follows from compactness.
If \(J_m(c)\) is reducible, choose an irreducible block with largest
eigenvalue \(Q_m\). The irreducibility ensures that the off-diagonal entries are strictly positive, so
its top eigenvector can be chosen strictly positive.  Write
\[
        b=(c_r,c_{r+1},\ldots,c_s)
\]
for the portion of the path corresponding to this block, so that
the block is simply the Jacobi matrix \(J_{s-r}(b)\).  Because the block begins and ends where
the neighboring ``couplings'' vanish, it must be that
\[
        c_r,c_s\in\{\pm1\}.
\]
We now lengthen the original path by inserting the label \(c_r(1-\delta)\) immediately to the right of \(c_r\): let
\[
        \widetilde c^{\,\delta}
        =
        (c_0,\ldots,c_r,c_r(1-\delta),
         c_{r+1},\ldots,c_m).
\]
When \(\delta=0\), the old block remains a direct summand of the enlarged matrix, and nothing changes. For small \(\delta>0\) though, the new coordinate acquires some positive coupling to the first coordinate of the old block. It turns out that, for small enough $\delta>0$, simply optimizing this new coordinate and leaving the previous top eigenvector unchanged on the old coordinates achieves a higher Rayleigh quotient, and thus the new Jacobi matrix has a strictly larger top eigenvalue (in this step it is important that the previous top eigenvector has a strictly positive first coordinate).

This gives $Q_{m+1} > Q_m$. Thus, finite Jacobi path strategies approximate \(q_*\), by definition of supremum, but do not attain it.


\subsection{The scalar certificate: a $q_*$ upper bound on arbitrary strategies}
We now turn to the first main challenge: upper bounding the value of an arbitrary strategy, finite or infinite-dimensional, by $q_*$. 

The central object in our proof is a concave function
\[
        H:[-1,1]\longrightarrow[0,\infty)
\]
satisfying
\begin{align}
        H(v)+d(u,v)+H(-u)&\leq q_*,
        &&u,v\in[-1,1],
\label{eq:overview-certificate-scalar}\\
        H(t)H(-t)&\geq a(t)^2,
        &&t\in[-1,1].
\label{eq:overview-certificate-product}
\end{align}
We refer to the latter as a \emph{scalar certificate}. Concavity will become important to
analyze the structure of strategies attaining $q_*$, but, for now, the two inequalities above already imply the
desired upper bound for arbitrary strategies. Before constructing \(H\), let us assume we have such an $H$, and prove this claim.

In this overview, we restrict our attention to finite-dimensional strategies. The extension to infinite-dimensional ones follows by the fact that the set of finite and infinite-dimensional quantum correlations have the same closure. So, let us consider an arbitrary finite-dimensional quantum strategy. By
purification and Naimark dilation, it suffices to consider a pure state and projective binary measurements. Apply the block decomposition from
\eqref{eq:blocks} on both sides.  Let \(z_\alpha\) and
\(w_\beta\) denote the corresponding \(Z_A\)- and \(Z_B\)-labels, and
expand
\[
        \ket{\psi}
        =\sum_{\alpha,\beta}
        \psi_{\alpha\beta}e_\alpha\otimes f_\beta.
\]
Let
\[
        p_{\alpha\beta}:=|\psi_{\alpha\beta}|^2,
        \qquad
        r_\alpha:=\sum_\beta p_{\alpha\beta},
        \qquad
        \ell_\beta:=\sum_\alpha p_{\alpha\beta}.
\]
The diagonal part of \eqref{eq:overview-I3322-rewrite} is exactly
\begin{align}
        \langle \psi| Z_AZ_B+\frac12Z_A-\frac12Z_B-I |\psi\rangle &= \sum_{\alpha,\beta} p_{\alpha\beta} (z_{\alpha}w_{\beta} + \frac{z_{\alpha} - w_{\beta}}{2} - 1) \nonumber \\
        &= \sum_{\alpha,\beta}
        p_{\alpha\beta}\,d(z_\alpha,w_\beta).
\label{eq:overview-arbitrary-diagonal}
\end{align}

It remains to bound the two terms containing \(X_A\) or \(X_B\).
For each Alice index \(\alpha\), define the unnormalized Bob-side
vector
\[
        R_\alpha:=\sum_\beta\psi_{\alpha\beta}f_\beta.
\]
Then
\[
        \ket{\psi}=\sum_\alpha e_\alpha\otimes R_\alpha,
        \qquad
        \|R_\alpha\|^2=r_\alpha.
\]
In this overview, for simplicity, we restrict our attention to two-dimensional blocks. Consider a two-dimensional Alice block
\(\{\alpha,\bar\alpha\}\) with labels $z_\alpha$ and $z_{\bar{\alpha}} = - z_\alpha$. On this block,
\[
        X_Ae_\alpha=2a(z)e_{\bar\alpha},
        \qquad
        X_Ae_{\bar\alpha}=2a(z)e_\alpha.
\]
Its contribution to the term
\(\frac12\langle X_A\otimes W_B\rangle\) is therefore
\[
\begin{aligned}
&\frac12\,2a(z)
 \left(
 \langle R_{\bar\alpha},W_BR_\alpha\rangle
 +\langle R_\alpha,W_BR_{\bar\alpha}\rangle
 \right)\\
&\qquad=
2a(z)\operatorname{Re}
\langle R_{\bar\alpha},W_BR_\alpha\rangle\\
&\qquad\leq
2a(z)\|R_\alpha\|\,\|R_{\bar\alpha}\|
=
2a(z)\sqrt{r_\alpha r_{\bar\alpha}},
\end{aligned}
\]
where we used \(\|W_B\|=1\).

The product inequality for \(H\), along with AM-GM, gives
\begin{equation}
\begin{aligned}
        2a(z)\sqrt{r_\alpha r_{\bar\alpha}}
        &\leq
        2\sqrt{H(z)H(-z)r_\alpha r_{\bar\alpha}}\\
        &\leq
        H(-z)r_\alpha+H(z)r_{\bar\alpha}.
\end{aligned}
\label{eq:overview-block-young}
\end{equation}
Summing over Alice's blocks gives
\[
        \frac12\langle X_AW_B\rangle
        \leq\sum_\alpha H(-z_\alpha)r_\alpha.
\]
Analogously, on Bob's side, we get $\frac12\langle W_AX_B\rangle \leq\sum_\beta H(w_\beta)\ell_\beta$.

Combining these inequalities with
\eqref{eq:overview-arbitrary-diagonal} yields
\begin{align}
        \langle I_{3322}\rangle
        &\leq
        \sum_{\alpha,\beta}p_{\alpha\beta}
        \bigl[
        d(z_\alpha,w_\beta)
        +H(-z_\alpha)+H(w_\beta)
        \bigr]
        \notag\\
        &\leq q_*.
\label{eq:overview-universal-bound}
\end{align}
The upper bound above can also be extended, with some work, to separable infinite-dimensional strategies (we refer the reader to Appendix~\ref{sec:app}).

\subsection{Where the certificate comes from}

In the previous subsection, we showed what the certificate $H$ buys us. Here, we discuss how one arrives at such a certificate. The information supplied by the P\'al--V\'ertesi construction concerns
an entire Jacobi path at once: by definition of \(q_*\),
\[
        \lambda_{\max}(J_m(c))\leq q_*
\]
for every finite Jacobi path \(c\). Thus we know how to bound the
matrix obtained after all the labels in a path have been assembled.
The previous subsection, however, requires a single function \(H\)
for which
\eqref{eq:overview-certificate-scalar} and
\eqref{eq:overview-certificate-product} hold separately for every
choice of \(u,v\), and \(t\).  These inequalities can then be applied
to the individual two-dimensional blocks of an arbitrary strategy and
averaged.  We now explain how to construct \(H\) from these bounds on the Jacobi matrices. The key is the tridiagonal form of these matrices.

Fix \(q>q_*\), and consider a finite path
\[
        c=(c_0,c_1,\ldots,c_m),
        \qquad c_0=1,
        \qquad c_1,\ldots,c_m\in[-1,1].
\]
Note that here the final path label need not lie in \(\{\pm1\}\), so this path is, strictly speaking, not necessarily a Jacobi path (which requires endpoints to be in $\{\pm 1\}$). Nevertheless, we
use the same formula as in \eqref{eq:overview-finite-Jacobi} to define
the \(m\times m\) tridiagonal matrix \(J_m(c)\). Set
\[
        A_m(q,c):=qI_m-J_m(c).
\]
Now, observe that, for each \(i\leq m\), the matrix
\(J_i(c_0,\ldots,c_i)\) has largest eigenvalue at most \(q_*\).
One way to see this is as follows: appending the endpoint label \(1\) to the path, makes \(J_i(c_0,\ldots,c_i)\) the upper-left
submatrix of the Jacobi matrix of a genuine Jacobi path, whose top eigenvalues is bounded by $q_*$. Then, the claim follows by Cauchy's interlacing theorem. Therefore,
\begin{equation}
        A_i(q,c) :=qI_i-J_i(c_0,\ldots,c_i)
        \succeq (q-q_*)I_i\succ0.
\label{eq:overview-prefix-positive}
\end{equation}
As we will see in a moment, working first with \(q>q_*\) is useful precisely because the
gap \(q-q_*\) is strictly positive (and prevents the procedure below from
encountering a zero denominator). At the end, we will be able to convert bounds involving $q$, to bounds involving $q_*$ by taking an appropriate limit \(q\downarrow q_*\).

Because \(A_m(q,c)\) is positive definite and tridiagonal, it possesses an ``LDL factorization'', i.e.\ it can be
written as
\[
        A_m(q,c)=L_mD_mL_m^{\mathsf T},
        \qquad
        D_m=\operatorname{diag}(p_1,\ldots,p_m),
\]
where \(L_m\) is lower bidiagonal with ones on its diagonal and each
\(p_i\) is positive.  The numbers \(p_i\) are called the
\emph{pivots} of \(A_m(q,c)\).  Comparing the entries on the two sides
of this factorization shows that they can be computed successively:
\begin{align}
        p_1&=q-d(c_0,c_1),
\label{eq:overview-first-pivot}\\
        p_i&=q-d(c_{i-1},c_i)
        -\frac{a(c_{i-1})^2}{p_{i-1}},
        \qquad 2\leq i\leq m.
\label{eq:overview-pivot-recursion}
\end{align}
Crucially, note that, when each successive coordinate is added, the entire preceding part of
the matrix influences the new pivot only through the single number
\(p_{i-1}\).  The term \(a(c_{i-1})^2/p_{i-1}\) accounts for the
coupling between the new coordinate and the preceding one.  This
recursion is often called the scalar Riccati recursion and is standard
in the analysis of symmetric tridiagonal matrices
\cite{BarthMartinWilkinson1967}.

Moreover, positive definiteness in \eqref{eq:overview-prefix-positive} ensures
that every pivot is well defined and positive.  In fact, the pivots are
uniformly bounded away from zero.  To see this, consider the
factorization of the \(i\times i\) prefix $A_i(q,c)$. It is not difficult to see that the bottom-right entry of
its inverse is
\[
        \bigl(A_i(q,c)^{-1}\bigr)_{ii}=\frac1{p_i}.
\]
On the other hand, \eqref{eq:overview-prefix-positive} implies
\(A_i(q,c)^{-1}\preceq(q-q_*)^{-1}I_i\).  Hence
\begin{equation}
        p_i\geq q-q_*>0.
\label{eq:overview-pivot-lower-bound}
\end{equation}

Now, as mentioned, the important feature of
\eqref{eq:overview-pivot-recursion} is that it ``compresses'' the dependence on the past.  If
a label path currently ends at \(u=c_i\) with final pivot
\(r=p_i\), then appending a new label \(v\) produces the new final
pivot
\[
        q-d(u,v)-\frac{a(u)^2}{r}.
\]
Thus, all of the earlier labels in the path affect the next step only through the current label \(u\) and the single number \(r\).

We now optimize over the part of the sequence that has already been
``traversed'', with the aim of replacing the entire preceding sequence by a single function of its final label. For \(v\in[-1,1]\), define
\begin{equation}
        h_q(v)
        :=
        \inf_{\substack{m\geq1,\ c_0=1,\ c_m=v\\
                        c_1,\ldots,c_{m-1}\in[-1,1]}}
        p_m(c),
\label{eq:overview-hq-definition}
\end{equation}
where \(p_m(c)\) denotes the final pivot produced by
\eqref{eq:overview-first-pivot}--\eqref{eq:overview-pivot-recursion}.
In other words, \(h_q(v)\) is the smallest final pivot that can be
approached by a finite path beginning at \(1\) and ending at
\(v\).  Before deriving the recursion satisfied by \(h_q\), we record that it
is finite and uniformly bounded away from zero.  The lower bound follows
because every final pivot is at least \(q-q_*\).  For the upper bound,
we use the one-step sequence \((1,v)\), whose final pivot is
\(q-d(1,v)\).  Therefore
\begin{equation}
        q-q_*\leq h_q(v)\leq q-d(1,v)
        =q+\frac{1-v}{2}\leq q+1.
\label{eq:overview-hq-bounds}
\end{equation}

Now, fix \(u,v\in[-1,1]\), choose sequences ending at \(u\) whose final
pivots tend to \(h_q(u)\), and append \(v\) to each of them.  The
recursion shows that the new final pivots then tend to
\[
        q-d(u,v)-\frac{a(u)^2}{h_q(u)}.
\]
These extended sequences are among all the sequences ending at \(v\),
over which \(h_q(v)\) takes its infimum.  Therefore, $h_q(v)$ is at most the above expression, which, upon rearranging, gives
\begin{equation}
        h_q(v)+d(u,v)+\frac{a(u)^2}{h_q(u)}\leq q.
\label{eq:overview-Bellman}
\end{equation}
The latter inequality is already close to the
certificate inequality \eqref{eq:overview-certificate-scalar} that we seek. The first two terms already have the desired
form, and the only mismatch is that the last term is
\(a(u)^2/h_q(u)\), whereas the certificate requires $h_q(-u)$. We therefore wish to
modify \(h_q\) into a function \(g_q\) satisfying
\[
        g_q(-u)=\frac{a(u)^2}{g_q(u)}.
\]
Equivalently,
\[
        g_q(u)g_q(-u)=a(u)^2,
\]
which is exactly the product inequality \eqref{eq:overview-certificate-product} required of the certificate,
in fact with equality. The remaining task is therefore to enforce this equality while retaining the upper bound we just proved. It turns out that this modification can be done, by leveraging simple symmetries of the functions $a$ and $d$. 

Eventually, we need our certificate $H$ to be concave. This property of the certificate is needed later to make the equality conditions sufficiently rigid to rule out finite-dimensional
attainment. So, we replace $g_q$
by its least concave majorant \(H_q\), which preserves both certificate inequalities (one of them for free, and the other with some work). Finally, by choosing \(q_k\downarrow q_*\) and
passing to a convergent subsequence of \(H_{q_k}\), we obtain the
required certificate \(H\) at \(q_*\).  The details of these
steps are given in Section~\ref{sec:certificate-proof}.

\medskip
Zooming out a bit, what we have done to arrive at the certificate is rooted in Bellman's principle of optimality in dynamic programming \cite{bellman1952theory}. A long sequence of choices is
summarized by a value function of the current endpoint, and extending
the sequence by one step gives a relation for that value function.  In
our setting, the endpoint label is the argument of the value function,
the final pivot is the quantity being minimized, and
\eqref{eq:overview-pivot-recursion} supplies the one-step update.

The overall procedure---compressing the Jacobi-path optimization into
the one-variable function \(h_q\), and then using \(h_q\) to construct
the certificate \(H\)---is directly connected to the Bellman-function
method used in harmonic analysis to prove operator inequalities~\cite{nazarov1997hunt, nazarov2001bellman}. In that method, one
encodes a complicated optimization in an auxiliary scalar function and
derives inequalities that control the individual pieces of an operator
decomposition.  Combining those inequalities then yields a bound for
the complete operator. This is exactly the role played by our certificate \(H\).  Its product
inequality \eqref{eq:overview-certificate-product} controls each two-dimensional \(X\)-coupling, while its other inequality \eqref{eq:overview-certificate-scalar} combines that bound with the corresponding diagonal contribution \(d(u,v)\). Averaging over all blocks then bounds
the value of the entire Bell expression. Thus, \(h_q\) can be thought of as a value function in the literal
dynamic-programming sense, while, in the terminology of harmonic
analysis, \(H\) would be called a Bellman function. 

\subsection{The structure of strategies attaining $q_*$: a reduction to paths (and cycles)}

We will now focus our attention on a hypothetical strategy that attains $q_*$, and show that it essentially reduces to a direct sum of Jacobi path (or cycle) components, which, as we already argued, cannot attain $q_*$.

Recall that the bound
\eqref{eq:overview-universal-bound} on the value of arbitrary strategies was obtained through a sequence of
inequalities.  If a strategy attains \(q_*\),
then every inequality must be saturated. More precisely, call a pair \((u,v)\) of Alice and Bob labels \emph{active} if it occurs
with positive total ``mass''
\[
        m_{u,v}
        :=
        \sum_{z_\alpha=u,\,w_\beta=v}p_{\alpha\beta}>0\,.
\]
Every active pair satisfies the scalar equality
\begin{equation}
        H(v)+d(u,v)+H(-u)=q_*\,,
\label{eq:overview-active-pair}
\end{equation}
which comes from saturating \eqref{eq:overview-universal-bound}. Moreover, consider a two-dimensional Alice block \(\{\alpha,\bar\alpha\}\), whose
two \(Z_A\)-labels are \(z_\alpha=u\) and
\(z_{\bar\alpha}=-u\), where
\(u\in(-1,1)\setminus\{0\}\), and whose marginal masses are
\(r_\alpha\) and \(r_{\bar\alpha}\). Equality in \eqref{eq:overview-block-young} then implies (since $a(u)>0$) that if either mass is positive, both are positive and
\begin{equation}
        H(u)H(-u)=a(u)^2.
\label{eq:overview-active-product}
\end{equation}
The analogous statement holds for every two-dimensional Bob block.

At this point, there is a crucial property of $H$ that leads to the degree-2 graph structure that we introduce shortly. This property is what ultimately causes any hypothetical strategy attaining $q_*$ to be decomposable into Jacobi path and cycle components.

Indeed, we show (in Lemma~\ref{lem:single-valued}) that, among pairs satisfying \eqref{eq:overview-active-pair}, any Alice label \(u \in (-1,1)\) at
which the product equality \eqref{eq:overview-active-product} holds can occur with only one \emph{unique} Bob label
\(v\).  Symmetrically, the same holds for Bob. This means that from
an active pair \((u,v)\), there can be at most one active pair of the
form \((-u,v')\) and at most one of the form \((u',-v)\). The graph
introduced next records precisely these two possible types of
continuation. This property relies crucially on the concavity of the certificate $H$.

We encode the active pairs in a finite weighted graph \(\Gamma\).  Its
vertices are the pairs
\[
        \nu=(u,v)
\]
with \(m_\nu:=m_{u,v}>0\).  Two active vertices \((u,v)\) and
\((-u,v')\) are joined by an Alice edge whenever \(0<|u|<1\), and this
edge has weight \(a(u)\).  Similarly, two active vertices \((u,v)\)
and \((u',-v)\) are joined by a Bob edge whenever \(0<|v|<1\), with
weight \(a(v)\).

The uniqueness condition above ensures that every vertex has at most one edge of each type. So, its degree
is at most two, and whenever two edges meet at a vertex, one
is an Alice edge and the other is a Bob edge.  Since the strategy is
finite-dimensional, \(\Gamma\) is finite, so every connected component
is an alternating \emph{path}, possibly consisting of a single vertex, or an
even alternating \emph{cycle}. 

To connect this graph with the value of the strategy, define a real
symmetric matrix \(M_\Gamma\), with rows and columns indexed by the
vertices of \(\Gamma\).  For simplicity, we ignore
one-dimensional blocks, and hence zero labels, in this overview. The gist of the proof does not change significantly when accounting for them. The
diagonal entries are
\begin{equation}
        (M_\Gamma)_{\nu\nu}
        =
        d(u_\nu,v_\nu),
\label{eq:overview-component-diagonal}
\end{equation}
For distinct vertices \(\nu\) and \(\mu\), the entry
\((M_\Gamma)_{\nu\mu}\) is the weight of the edge joining them, with the
two weights added in the case where \(\nu\) and \(\mu\) are
joined by both an Alice edge and a Bob edge. Thus, $M_{\Gamma}$ has the form of an \emph{adjacency} matrix.

Now, set
\[
        \xi_\nu:=\sqrt{m_\nu}.
\]
Since the active vertex masses partition the state weight,
\[
        \|\xi\|^2=\sum_\nu m_\nu=1.
\]
By \eqref{eq:overview-arbitrary-diagonal}, grouping the diagonal terms
according to their active pairs gives
\[
\sum_{\alpha,\beta}
p_{\alpha\beta}d(z_\alpha,w_\beta)
=
\sum_\nu d(u_\nu,v_\nu)m_\nu
=
\sum_\nu d(u_\nu,v_\nu)\xi_\nu^2.
\]
For the off-diagonal terms, the block calculation leading to
\eqref{eq:overview-block-young} shows that a two-dimensional Alice block
\(\{\alpha,\bar\alpha\}\) contributes at most
\[
        2a(z_\alpha)\sqrt{r_\alpha r_{\bar\alpha}},
\]
where $r_\alpha=\sum_\beta p_{\alpha\beta}$, and $
        r_{\bar\alpha}=\sum_\beta p_{\bar\alpha\beta}$.  The analogous estimate holds for Bob's two-dimensional blocks.  After grouping all blocks
corresponding to the same edge \(e=\{\nu,\mu\}\) and one application
of Cauchy--Schwarz, one can see that their total contribution is at most
\[
        2a_e\sqrt{m_\nu m_\mu}
        =
        2a_e\xi_\nu\xi_\mu,
\]
where \(a_e\) is the weight of \(e\).  Thus, we have
\[
\begin{aligned}
        q_*
        &\leq
        \sum_\nu d(u_\nu,v_\nu)\xi_\nu^2
        +
        2\sum_{e=\{\nu,\mu\}}a_e\xi_\nu\xi_\mu\\
        &=\langle\xi,M_\Gamma\xi\rangle.
\end{aligned}
\]
where the first inequality is because we are assuming a hypothetical strategy attaining $q_*$. It is not difficult to see that same calculation carried out in
\eqref{eq:overview-block-young}--\eqref{eq:overview-universal-bound},
now applied to an arbitrary real vector \(x\), gives
\[
        \langle x,M_\Gamma x\rangle\leq q_*\|x\|^2.
\]
Hence \(M_\Gamma\leq q_*I\), and, in fact, $\xi$ is the top eigenvector of $M_{\Gamma}$ with eigenvalue $q_*$.

Now, since $M_{\Gamma}$ has the form of an adjacency matrix, it decomposes as a direct sum over its connected components. Then, a connected component $\Gamma_0$ has the restriction $\xi|_{\Gamma_0}$ as a top eigenvector with eigenvalue $q_*$. Recall that each connected component of $\Gamma$ is either a path or a cycle, with alternating Alice and Bob edges. In the rest of this subsection, we will focus on path components for simplicity. Cycle components are handles with just slightly more work. We will argue that for a path component $\Gamma_0$, the matrix $M_{\Gamma_0}$ is in fact the Jacobi matrix of a certain Jacobi path. As argued in Section~\ref{sec:canonical-strategies}, any such matrix has top eigenvalue strictly less than $q_*$, which leads to the desired contradiction.

Order the vertices of $\Gamma_0$ along the path as
\[
        \nu_1,\ldots,\nu_m,
        \qquad
        \nu_i=(u_i,v_i),
\]
and let \(e_i\) be the edge joining \(\nu_i\) to \(\nu_{i+1}\).
Suppose that \(e_1\) is a Bob edge (the other case is similar). Set
\[
        c_0:=u_1,
        \qquad
        c_1:=v_1.
\]
As we move along the path, a Bob edge changes the sign of the Bob
coordinate, whereas an Alice edge changes the sign of the Alice
coordinate. Up to a sign, we define each subsequent $c_i$ to be the newly encountered coordinate. Precisely, we define the next labels recursively as
\[
        c_{i+1}:=
        \begin{cases}
        -u_{i+1},&i\ \text{odd},\\
        v_{i+1},&i\ \text{even},
        \end{cases}
        \qquad 1\leq i<m.
\]

By construction, for \(1\leq i<m\), the label \(c_i\) determines the weight of edge 
edge \(e_i\) in $\Gamma$: we have $\text{weight}(e_i)=a(c_i)$ and $0<|c_i|<1$. 
At either endpoint of the path, one edge type is missing, and this corresponds to $c_0, c_m \in \{\pm1\}$. 

Finally, using the identities $ a(-t)=a(t)$ and $d(-v,-u)=d(u,v)$, we get precisely that
\[
        (M_{\Gamma_0})_{ii}
        =
        d(c_{i-1},c_i),
        \qquad
        (M_{\Gamma_0})_{i,i+1}
        =
        a(c_i).
\]
Thus,
\[
        M_{\Gamma_0}=J_m(c)
\]
is exactly the Jacobi matrix of the path
\(c=(c_0,\ldots,c_m)\). As we argued in Section~\ref{sec:canonical-strategies}, its top eigenvalue is strictly less than
\(q_*\), which yields the desired contradiction.

\subsection{An infinite-dimensional strategy attaining $q_*$}
\label{sec:infinite-overview}
The definition of \(q_*\), along with the canonical strategies from Section~\ref{sec:canonical-strategies}, provides a sequence of finite-dimensional strategies whose values approach
\(q_*\). This approximating sequence alone does not automatically give an infinite-dimensional strategy attaining $q_*$ (and there are indeed correlations that are only attained in the limit, but not by any fixed infinite-dimensional strategy~\cite{slofstra2019set, dykema2019non}). In our case, an infinite-dimensional strategy attaining $q_*$ does exist, but we will need to prove the existence of a well-defined limit of the previously mentioned sequence. 

Recall that we defined
\[
        a(t)=\frac12\sqrt{1-t^2},
        \qquad
        d(u,v)=uv+\frac{u-v}{2}-1 \,,
\]
and that we denote by
\(J_n(c)\) the Jacobi matrix associated with a finite path \(c=(c_0,\ldots,c_n)\). We give an outline of the proof, which proceeds in four steps:
\begin{enumerate}
\item First, we choose finite Jacobi paths \(c^{(k)}\) of length $n_k$ whose largest eigenvalues \(q_k\) converge to \(q_*\).  For each \(k\), choose a non-negative eigenvector
\(\phi^{(k)}\) satisfying
\[
        J_{n_k}(c^{(k)})\phi^{(k)}=q_k\phi^{(k)}.
\]
Rescale \(\phi^{(k)}\) so that its largest coordinate is \(1\). Pad the \(k\)-th path with \(k\) repeated \(+1\) labels on the left
and \(k\) repeated \(-1\) labels on the right.  Since
\[
        a(\pm1)=0,
        \qquad
        d(1,1)=d(-1,-1)=0,
\]
the padded Jacobi matrix is simply
\[
        0_k\oplus J_{n_k}(c^{(k)})\oplus0_k.
\]
Thus the padding does not change the original eigenvalue, and we extend
\(\phi^{(k)}\) by zero on the added coordinates. Relabel the coordinates relative to the chosen maximal coordinate:
assign it index \(0\), and use negative and positive indices for the
coordinates to its left and right, respectively.  Thus
\[
        \phi_0^{(k)}=1.
\]
Going forward we denote by $c^{(k)}$ and $\phi^{(k)}$ the modified path and eigenvector. The padding ensures that there are at least \(k\) coordinates on each
side of index \(0\).  Consequently, every fixed index
\(i\in\mathbb Z\) occurs in the reindexed path for all sufficiently
large \(k\). Then, since the path labels lie in \([-1,1]\), and the rescaled eigenvector coordinates lie in \([0,1]\) we can find convergent subsequences
\[
        c_i^{(k)}\rightarrow c_i,
        \qquad
        \phi_i^{(k)}\rightarrow\phi_i
        \qquad(i\in\mathbb Z).
\]
The limiting labels \(c=(c_i)_{i\in\mathbb Z}\) naturally define the infinite-dimensional Jacobi operator \(J_\infty(c)\) by
\begin{equation}
        \bigl(J_\infty(c)x\bigr)_i
        :=
        a(c_{i-1})x_{i-1}
        +d(c_{i-1},c_i)x_i
        +a(c_i)x_{i+1},
        \qquad i\in\mathbb Z.
\label{eq:overview-infinite-Jacobi}
\end{equation}
Now, for any fixed \(i\), the index \(i\) is an interior point of the finite
path for all sufficiently large \(k\).  We may therefore pass to the
limit in the eigenvalue equation at that coordinate.  Since
\(q_k\to q_*\), this gives
\[
        J_\infty(c)\phi=q_*\phi 
\]
coordinatewise.
The normalization also passes to the limit:
\[
        0\leq\phi_i\leq1,
        \qquad
        \phi_0=1.
\]
Thus, \(\phi\) is bounded and nonzero.  However, the finite eigenvectors
were normalized by their largest coordinate, rather than by their
\(\ell^2\)-norm.  The pointwise limit may therefore fail to satisfy
\[
        \sum_{i\in\mathbb Z}\phi_i^2<\infty.
\]
Hence \(\phi\) cannot yet be used to back out a valid quantum state, as we do for a canonical strategy in Section~\ref{sec:canonical-strategies} (in particular $\phi$ cannot be used as the sequence of Schmidt coefficients of a state). 

One might naturally wonder: why not normalize the $\phi^{(k)}$ by their $\ell^2$ norm in the first place? The issue is that the vector's $\ell^2$ mass would spread over a larger and larger number of coordinates, so every coordinate would approach zero, and the limiting $\phi$ would actually be the zero vector.

\item Second, we restrict to the connected set $I$ of indices containing \(0\)
on which $\phi$ is positive, i.e.\ $\phi_i >0$ for $i \in I$. The certificate properties force all the
local inequalities used in the upper bound to be equalities on this
component. With some work, these equalities, together with the concavity of the
certificate, imply that the path labels are non-increasing from left to right along \(I\), i.e.\ $c_i\ge c_{i+1}$ for all $i$.

\item Third, we upgrade \(\phi\) to a genuine
\(\ell^2\)-eigenvector.  The property we need is the existence of
finitely supported vectors \(x^{(k)}\), supported on finite intervals
of \(I\), such that
\[
        \frac{\langle x^{(k)},J_\infty(c)x^{(k)}\rangle}
             {\|x^{(k)}\|_2^2}
        \longrightarrow q_*.
\]
These vectors are obtained by leaving \(\phi\)
unchanged on increasingly large intervals containing \(0\) and then
gradually scaling its coordinates to zero over longer boundary regions. Their support intervals give finite monotone paths from
\(+1\) to \(-1\) whose largest eigenvalues approach \(q_*\). As we will see in the main text, monotonicity is the key to ensuring that the normalized top eigenvectors of the Jacobi matrices corresponding to these monotone paths
retain at least one coordinate that is uniformly bounded away from zero. After centering at such a
coordinate and passing to a convergent subsequence, we obtain a nonzero
\(\tilde{\phi} \in \ell^2(\mathbb Z)\) satisfying
\[
        J_\infty(c) \tilde{\phi}=q_*\tilde{\phi}.
\]
We finally normalize \(\tilde{\phi}\). We refer to Section~\ref{sec:infinite-dimensional-attainment} for many more details.

\item Finally, we use the infinite-dimensional analogue of the
construction of a canonical strategy from a Jacobi path from Section~\ref{sec:canonical-strategies} to turn $\tilde{\phi}$ and the limiting path $c$ into a strategy achieving~$q_*$. 
\end{enumerate}

\section{Preliminaries}

\subsection{Binary projective strategies}

A finite-dimensional projective strategy consists of finite-dimensional Hilbert spaces $\cH_A,\cH_B$, a state $\ket{\psi}\in \cH_A\otimes\cH_B$, and projectors $A_i\in B(\cH_A)$, $B_j\in B(\cH_B)$.  The expectation notation means
\[
        \langle A_iB_j\rangle
        =\langle\psi,(A_i\otimes B_j)\psi\rangle,
\]
and similarly for one-party terms.  As usual, for binary finite-dimensional strategies it is enough to consider projective measurements and pure states: general effects can be dilated and mixed states purified without changing the achievable value.

\section{The $I_{3322}$ inequality}

Let $A_i, B_i$, for $i \in \{1,2,3\}$ be projectors on Alice and Bob's side respectively. The $I_{3322}$ Bell operator is defined to be
\begin{align}
\label{eq:I3322}
I_{3322} = &-A_2 - B_1 - 2B_2 + A_1B_1 + A_1B_2 + A_2B_1 + A_2B_2 \nonumber\\
&- A_1B_3 + A_2B_3 - A_3B_1 + A_3B_2 
\end{align}
Let $\ket{\psi} \in \mathcal{H}_A \otimes \mathcal{H}_B$.
It is known~\cite{froissart1981constructive} that for product states $\ket{\psi}$
$$ \langle \psi |I_{3322} |\psi \rangle \leq 0 \,.$$

\subsection{A useful equivalent form for $I_{3322}$}

Here, we rewrite the $I_{3322}$ Bell expression in an equivalent form. This rewriting is crucial in highlighting some of the underlying structure. Let $A_i, B_i$, for $i \in \{1,2,3\}$ be the projectors appering in \eqref{eq:I3322}. Define
\[
        Z_A=A_1+A_2-I,
        \qquad
        X_A=A_2-A_1,
\]
\[
        Z_B=B_1+B_2-I,
        \qquad
        X_B=B_2-B_1,
\]
and
\[
        W_A=2A_3-I,
        \qquad
        W_B=2B_3-I.
\]
Then $W_A,W_B$ are reflections, and a direct calculation (using $A_i^2 = A_i$ for all $i$) gives
\[
        Z_AX_A+X_AZ_A=0,
        \qquad
        Z_A^2+X_A^2=I,
\]
and the same identities on Bob's side.

\begin{lemma}[$I_{3322}$ rewrite]
\label{lem:rewrite}
The Bell operator $I_{3322}$ from \eqref{eq:I3322} can be rewritten as
\begin{equation}
\label{eq:rewrite}
        I_{3322}
    =Z_AZ_B+\frac12Z_A-\frac12Z_B
        +\frac12X_AW_B+\frac12W_AX_B-I .
\end{equation}
\end{lemma}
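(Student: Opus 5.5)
Since the identity is linear in the projectors, the plan is to invert the change of variables and substitute. From the definitions of $Z_A$ and $X_A$,
\[
A_1=\tfrac12(I+Z_A-X_A),\qquad A_2=\tfrac12(I+Z_A+X_A),\qquad A_3=\tfrac12(I+W_A),
\]
and analogously $B_1=\tfrac12(I+Z_B-X_B)$, $B_2=\tfrac12(I+Z_B+X_B)$, $B_3=\tfrac12(I+W_B)$. We substitute these into \eqref{eq:I3322} and expand. Every product $A_iB_j$ stands for $A_i\otimes B_j$, so the expansion is purely bilinear and needs no commutation relations. In particular, the anticommutation identities are not used for this lemma.

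We will group the terms rather than expand all eleven monomials blindly.

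\textbf{Diagonal terms.} The four terms $A_1B_1+A_1B_2+A_2B_1+A_2B_2$ factor as $(A_1+A_2)(B_1+B_2)=(I+Z_A)(I+Z_B)=I+Z_A+Z_B+Z_AZ_B$.

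\textbf{Mixed terms.} The terms $-A_1B_3+A_2B_3=(A_2-A_1)B_3=X_A\cdot\tfrac12(I+W_B)=\tfrac12X_A+\tfrac12X_AW_B$. Similarly, $-A_3B_1+A_3B_2=A_3(B_2-B_1)=\tfrac12X_B+\tfrac12W_AX_B$.

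\textbf{Local terms.} We have $-A_2=-\tfrac12(I+Z_A+X_A)$ and $-B_1-2B_2=-\tfrac12(I+Z_B-X_B)-(I+Z_B+X_B)=-\tfrac32I-\tfrac32Z_B-\tfrac12X_B$.

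Summing the three groups, we check the coefficients one by one. The identity gets $1-\tfrac12-\tfrac32=-1$. $Z_A$ gets $1-\tfrac12=\tfrac12$. $Z_B$ gets $1-\tfrac32=-\tfrac12$. $X_A$ gets $\tfrac12-\tfrac12=0$. $X_B$ gets $\tfrac12-\tfrac12=0$. This leaves exactly $Z_AZ_B+\tfrac12Z_A-\tfrac12Z_B+\tfrac12X_AW_B+\tfrac12W_AX_B-I$, which is \eqref{eq:rewrite}.

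There is no real obstacle here. The only step needing care is the bookkeeping of the linear $X_A$ and $X_B$ terms, which must cancel exactly. We will double-check the sign convention $X=A_2-A_1$ (respectively $X_B=B_2-B_1$) against the signs of $-A_2$ and $-B_1-2B_2$ in \eqref{eq:I3322}.
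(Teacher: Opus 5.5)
Your proposal is correct and follows the same approach as the paper: invert the change of variables and substitute $A_1=\tfrac12(I+Z_A-X_A)$, $A_2=\tfrac12(I+Z_A+X_A)$, $A_3=\tfrac12(I+W_A)$ (and likewise for Bob) into the Bell operator, then collect terms. Grouping into the factored product $(A_1+A_2)(B_1+B_2)$, the mixed terms and the local terms only organizes the same bookkeeping more transparently, and your coefficients all check out, including the exact cancellation of the linear $X_A$ and $X_B$ terms.
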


\begin{proof}
Substitute
\[
        A_1=\frac12(I+Z_A-X_A),\qquad
        A_2=\frac12(I+Z_A+X_A),
\]
\[
        B_1=\frac12(I+Z_B-X_B),\qquad
        B_2=\frac12(I+Z_B+X_B),
\]
\[
        A_3=\frac12(I+W_A),\qquad B_3=\frac12(I+W_B)
\]
into \eqref{eq:I3322} and collect terms.  The constant term is $-I$, the $Z$ terms are $\frac12Z_A-\frac12Z_B$, the diagonal two-party term is $Z_AZ_B$, and the remaining two-party terms are $\frac12X_AW_B+\frac12W_AX_B$.
\end{proof}

\section{From Jacobi paths to quantum strategies}
\label{sec:path-to-strategy}

In this section, we start by introducing the notion of a ``Jacobi path'' and its associated Jacobi matrix, which will be crucial in the rest of this proof. We will then show:
\begin{itemize}
\item Every Jacobi path corresponds to some quantum strategy for I3322 whose value is precisely the maximum eigenvalue of the Jacobi matrix. We show this in this section. This correspondence was already shown by Pal and Vertesi, but we include it here for self-containedness. Here we slightly extend the notion of path to allow for arbitrary endpoints in $\{\pm 1\}$ (not just starting at $+1$ and ending at $-1$), as we describe below.
\item Every finite-dimensional \emph{optimal} strategy for I3322 can be decomposed into a direct sum of Jacobi paths (and cycles). We will show this in Section~\ref{sec:structure-optimal}.
\end{itemize}

\begin{definition}[Jacobi path and its Jacobi matrix]
\label{def:signed-endpoint-jacobi-path}
Let $m\ge 1$. A Jacobi path of length $m$ is a sequence $c = (c_0, c_1, \ldots, c_m)$ such that
\[
        c_0, c_m \in \{\pm 1\},
        \qquad
        c_1,\ldots,c_{m-1}\in[-1,1], \,.
\]
We define its Jacobi matrix $J_m(c)$ to be the $m \times m$ real symmetric tridiagonal matrix such that
\[
        \bigl(J_m(c)\bigr)_{ii}=d(c_{i-1},c_i),
\]
\[
        \bigl(J_m(c)\bigr)_{i,i+1}
        =
        \bigl(J_m(c)\bigr)_{i+1,i}
        =a(c_i).
\]
where
\begin{equation}
\label{eq:a-d}
        a(t)=\frac12\sqrt{1-t^2},
        \qquad
        d(u,v)=uv+\frac{u-v}{2}-1.
\end{equation}
\end{definition}

\begin{definition}[Quantum strategy corresponding to a Jacobi path]
\label{def:strategy-for-path}
Let $c = (c_0, c_1, \ldots, c_m)$ be a Jacobi path. Its corresponding quantum strategy is the finite-dimensional strategy for $I_{3322}$ with $\mathcal H_A=\mathcal H_B=\mathbb C^m$ defined as follows.

Let $s_i=\sqrt{1-c_i^2}=2a(c_i)$.
Let $e_1,\ldots,e_m$ be the standard basis of $\mathbb C^m$.  We first define Hermitian operators $Z_A,X_A,Z_B,X_B,W_A,W_B$:
\begin{itemize}
\item Define $Z_A$ and $Z_B$ to be diagonal:  
\[
        Z_A e_i=
        \begin{cases}
        c_{i-1}e_i, & i\text{ odd},\\
        -c_i e_i, & i\text{ even},
        \end{cases}
\]
and
\[
        Z_B e_i=
        \begin{cases}
        c_i e_i, & i\text{ odd},\\
        -c_{i-1} e_i, & i\text{ even}.
        \end{cases}
\]
\item Define $X_A$ on the ``even edges'' by
\[
        X_A e_i=s_i e_{i+1},
        \qquad
        X_A e_{i+1}=s_i e_i
        \qquad (i\text{ even},\ 1\le i<m).
\]
Then, set $X_A e_1 =0$, and, if $m$ is even, $X_A e_m =0$ (i.e.\ $X_A e_j = 0$ when $j$ does not belong to any of the even edges). 

Define $X_B$ similarly on the ``odd edges'':
\[
        X_B e_i=s_i e_{i+1},
        \qquad
        X_B e_{i+1}=s_i e_i
        \qquad (i\text{ odd},\ 1\le i<m),
\]
and set $X_B e_m=0$ if $m$ is odd.

\item Finally, define $W_A$ to be the reflection which swaps the two endpoints of every odd edge and acts as the identity on the remaining standard basis vectors:
\[
        W_A e_i=e_{i+1},
        \qquad
        W_A e_{i+1}=e_i
        \qquad (i\text{ odd},\ 1\le i<m),
\]
and $W_A e_j=e_j$ otherwise.  

Define $W_B$ similarly on the even edges:
\[
        W_B e_i=e_{i+1},
        \qquad
        W_B e_{i+1}=e_i
        \qquad (i\text{ even},\ 1\le i<m),
\]
and $W_B e_j=e_j$ otherwise.
\end{itemize}
Then, the strategy uses \[
        A_1=\frac{I+Z_A-X_A}{2},
        \qquad
        A_2=\frac{I+Z_A+X_A}{2},
\]
\[
        B_1=\frac{I+Z_B-X_B}{2},
        \qquad
        B_2=\frac{I+Z_B+X_B}{2},
\]
\[
        A_3=\frac{I+W_A}{2},
        \qquad
        B_3=\frac{I+W_B}{2}.
\]
Since the Jacobi matrix $J_m(c)$ has nonnegative off-diagonal entries, its largest eigenvalue admits a normalized eigenvector $\lambda=(\lambda_1,\ldots,\lambda_m)$ with $\lambda_i\ge0$ (indeed, replacing any real vector $x$ by $(|x_1|,\ldots,|x_m|)$ does not decrease $\frac{\langle x, J_m(c) x \rangle }{\|x\|^2}$, and leaves the norm unchanged). Choose such a nonnegative top eigenvector $\lambda$ with norm $1$, and define the state
\[
        \ket{\psi}=\sum_{i=1}^m \lambda_i e_i\otimes e_i.
\]
in $\mathbb C^m\otimes\mathbb C^m$. This is the state shared by Alice and Bob.
\end{definition}

Note that all of the $A_i$ and $B_i$ are block-diagonal with $2\times 2$ blocks in the standard basis. 

\begin{proposition}[Value of a Jacobi path strategy]
\label{prop:path-to-strategy}
Let $c = (c_0, c_1, \ldots, c_m)$ be a Jacobi path, and let $J_m(c)$ be its Jacobi matrix. The quantum strategy corresponding to Jacobi path $c$ (as in Definition~\ref{def:strategy-for-path}) achieves value $\lambda_{\max}\left(J_m(c)\right)$.
\end{proposition}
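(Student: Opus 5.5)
The plan is a direct computation, organized around the rewrite in Lemma~\ref{lem:rewrite}. First I will check that the operators $A_1,A_2,B_1,B_2,A_3,B_3$ of Definition~\ref{def:strategy-for-path} are genuine projectors, so that the strategy is valid. Each operator is block diagonal with respect to the partition of $\{1,\dots,m\}$ into even edges (for Alice's $Z_A,X_A$) or odd edges (for Bob's), together with singletons. On an even edge $\{i,i+1\}$ we have $Z_A=\operatorname{diag}(-c_i,c_i)$, since $i+1$ is odd and so $Z_Ae_{i+1}=c_ie_{i+1}$. Also $X_A=s_i\sigma_x$ there. Hence $Z_A^2+X_A^2=(c_i^2+s_i^2)I=I$, and $Z_A$ and $X_A$ anticommute. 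On an unpaired vector the $Z_A$-label is an endpoint label $c_0$ or $c_m\in\{\pm1\}$, and $X_A=0$ there, so the relations still hold. These two relations are exactly what makes $\frac12(I+Z_A\pm X_A)$ idempotent, because $(Z\pm X)^2=I$. Bob's side is identical, and $W_A,W_B$ are plainly reflections.

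Next, by Lemma~\ref{lem:rewrite}, I will evaluate $\langle\psi|\,\cdot\,|\psi\rangle$ for $\ket\psi=\sum_i\lambda_ie_i\otimes e_i$, term by term.

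\emph{Diagonal terms.} Since $Z_A,Z_B$ are diagonal, $\langle Z_AZ_B+\frac12Z_A-\frac12Z_B-I\rangle=\sum_i\lambda_i^2\,d(z_i,w_i)$, where $z_i,w_i$ are the labels on $e_i$. For odd $i$ the labels are $(c_{i-1},c_i)$, giving $d(c_{i-1},c_i)$ directly. For even $i$ they are $(-c_i,-c_{i-1})$, and the symmetry $d(-v,-u)=uv+\frac{-v+u}{2}-1=d(u,v)$ again gives $d(c_{i-1},c_i)$. So the diagonal part equals $\sum_i d(c_{i-1},c_i)\lambda_i^2$.

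\emph{Off-diagonal terms.} $\langle\psi|X_A\otimes W_B|\psi\rangle=\sum_{i,j}\lambda_i\lambda_j\langle e_i,X_Ae_j\rangle\langle e_i,W_Be_j\rangle$. For $i\neq j$, both factors are nonzero only when $\{i,j\}$ is an even edge, and then the term equals $s_i\cdot1$. For $i=j$ the factor $\langle e_i,X_Ae_i\rangle$ vanishes. So this term is $2\sum_{i\text{ even}}s_i\lambda_i\lambda_{i+1}$. The same argument, with the edge parities swapped, gives $\langle W_AX_B\rangle=2\sum_{i\text{ odd}}s_i\lambda_i\lambda_{i+1}$. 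With the factors $\frac12$ and $s_i=2a(c_i)$, the combined contribution is $2\sum_{i=1}^{m-1}a(c_i)\lambda_i\lambda_{i+1}$.

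Adding the two parts gives $\langle\lambda,J_m(c)\lambda\rangle=\lambda_{\max}(J_m(c))$, since $\lambda$ is a real unit top eigenvector. The existence of a nonnegative such $\lambda$ was already justified in the definition.

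The only step that needs real care is the index and parity bookkeeping. Three points must be confirmed: the $Z$-labels on the two ends of each edge are negatives of each other, which is consistent with the $2\times2$ block structure; the off-diagonal terms are supported only on edges where one party's $X$ and the other party's $W$ pair the same two vectors; and the endpoint vectors carry labels in $\{\pm1\}$ so that $X$ can vanish on them. None of these is a real obstacle.
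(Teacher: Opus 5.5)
Your proposal is correct and follows essentially the same route as the paper. You verify the projector relations blockwise, then use $\langle\psi|R\otimes S|\psi\rangle=\sum_{i,j}\lambda_i\lambda_j R_{ij}S_{ij}$ together with Lemma~\ref{lem:rewrite} and the symmetry $d(-v,-u)=d(u,v)$ to identify the value with $\langle\lambda,J_m(c)\lambda\rangle$; the paper does the same bookkeeping by showing the coefficient matrix $M$ equals $J_m(c)$ entrywise, while you compute term by term.
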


\begin{proof}
On each $2\times 2$ block corresponding to an even edge $(e_i,e_{i+1})$, the operators $Z_A, X_A$ have the form
\[Z_A=\begin{pmatrix}-c_i&0\\0&c_i\end{pmatrix},
        \qquad
        X_A=\begin{pmatrix}0&s_i\\s_i&0\end{pmatrix}.
\]
Since these anti-commute, and $c_i^2 + c_i^2 =1$, we have
\[
        Z_AX_A+X_AZ_A=0,
        \qquad
        Z_A^2+X_A^2=I.
\]
On $1$-dimensional blocks, $X = 0$ and $Z=\pm I$, so the same relations still hold. Thus, these relations hold everywhere. Similarly,
\[
        Z_BX_B+X_BZ_B=0,
        \qquad
        Z_B^2+X_B^2=I.
\]
Also $W_A$ and $W_B$ are self-adjoint unitaries.  From the above identities, one can straightforwardly check that the corresponding $A_i$ and $B_i$ (as defined in Definition~\ref{def:strategy-for-path}) are indeed projectors.

For any operators $R,S$ on $\mathbb C^m$, letting $R_{ij}$ and $S_{ij}$ be their entries in the standard basis, we have
\[
        \langle\psi| (R\otimes S) |\psi\rangle
        =
        \sum_{i,j=1}^m \lambda_i\lambda_j R_{ij}S_{ij}.
\]
Now, recall from Lemma~\ref{lem:rewrite} that the $I_{3322}$ Bell operator can be rewritten as
\[
        I_{3322}
        =Z_AZ_B+\frac12Z_A-\frac12Z_B
        +\frac12X_AW_B+\frac12W_AX_B-I,
\]
Then, the value of the strategy on state $\ket{\psi}$ can be written as 
\[
        \sum_{i,j=1}^m \lambda_i\lambda_j M_{ij},
\]
where
\[
        M_{ij}
        =(Z_A)_{ij}(Z_B)_{ij}
        +\frac12 (Z_A)_{ii}\delta_{ij}
        -\frac12 (Z_B)_{ii}\delta_{ij}
        -\delta_{ij}
        +\frac12 (X_A)_{ij}(W_B)_{ij}
        +\frac12 (W_A)_{ij}(X_B)_{ij}.
\]
We will now show that, for the choice of $Z_A, Z_B, X_A, X_B, W_A, W_B$ in our chosen quantum strategy (i.e.\ the strategy corresponding to Jacobi path $c$), the operator $M$ is actually equal to $J_m(c)$.

For the diagonal entries, if $i$ is odd then
\[
        (Z_A)_{ii}=c_{i-1},
        \qquad
        (Z_B)_{ii}=c_i,
\]
and hence
\[
        M_{ii}=c_{i-1}c_i+\frac{c_{i-1}-c_i}{2}-1=d(c_{i-1},c_i).
\]
If $i$ is even then
\[
        (Z_A)_{ii}=-c_i,
        \qquad
        (Z_B)_{ii}=-c_{i-1},
\]
and the same computation gives
\[
        M_{ii}=c_{i-1}c_i+\frac{c_{i-1}-c_i}{2}-1=d(c_{i-1},c_i).
\]

For an off-diagonal pair $(i,i+1)$, if $i$ is even then
\[
        (X_A)_{i,i+1}=s_i,
        \qquad
        (W_B)_{i,i+1}=1,
\]
while $(X_B)_{i,i+1}=0$.  Therefore
\[
        M_{i,i+1}=\frac{s_i}{2}=a(c_i).
\]
If $i$ is odd then
\[
        (W_A)_{i,i+1}=1,
        \qquad
        (X_B)_{i,i+1}=s_i,
\]
while $(X_A)_{i,i+1}=0$, and again
\[
        M_{i,i+1}=\frac{s_i}{2}=a(c_i).
\]
All other off-diagonal entries vanish. Thus, $M=J_m(c)$.
So the value of the strategy is
\[
        \sum_{i,j=1}^m \lambda_i\lambda_j M_{ij}
        =
        \langle \lambda, J_m(c) \lambda\rangle
        =
        \lambda_{\max}\left(J_m(c)\right).
\]
This proves the proposition.
\end{proof}

\begin{definition}[The critical value $q_*$]
\label{def:q*}
For $n \in \mathbb{N}$, define
\[
        Q_m=\sup_{\substack{c_1,\ldots,c_{m-1}\in[-1,1]\\ c_0,c_m \in \{\pm 1\}}} \lambda_{\max}\left(J_m(c)\right),
        \qquad
        q_*:=\sup_{m\ge1}Q_m.
\]
\end{definition}
We will show in Section~\ref{sec:upper-bound} that $q_*$ upper bounds the value of any quantum strategy (even infinite-dimensional). Moreover, we will show in Section~\ref{sec:infinite-dimensional-attainment} that there is an infinite-dimensional strategy achieving $q_*$. So, as we will see, $q^*$ is in fact the optimal value for $I_{3322}$. We remark that, even though we define $q_*$ as the supremum over paths with arbitrary endpoints in $\{\pm 1\}$, this supremum is equal to the one over paths that start at $+1$ and end at $-1$, which are those considered by Pál and Vértesi. We show the latter in Lemma~\ref{lem:inf-open-reduction}. So, indeed $q_*$ is the same optimal value conjectured by Pál and Vértesi.

\section{Finite Jacobi paths cannot attain $q^*$}
\label{sec:paths-do-not-attain-q*}
In this section, we show that \emph{finite} Jacobi paths cannot attain $q_*$. The main idea is that, given any finite Jacobi path $c$ there is a simple way to increase its length by 1 to obtain a new path $c'$ such that $\lambda_{\max}\left(J_n(c')\right)>\lambda_{\max}\left(J_n(c)\right)$, i.e.\ the new Jacobi matrix has a \emph{strictly} larger top eigenvalue (and hence the corresponding $I_{3322}$ strategy attains a \emph{strictly} larger value). This immediately yields that no finite Jacobi path can exactly attain $q_*$, since $q_*$ is defined as the supremum attained by any finite path (Definition~\ref{def:q*}).

Note that this structural result is only the first step towards showing that no finite dimensional strategy can attain $q^*$: it only handles quantum strategies that are derived from a Jacobi path. The crux later in Section~\ref{sec:structure-optimal} will be to show that any \emph{arbitrary} optimal quantum strategy achieving $q^*$ must decompose into a direct sum of Jacobi path strategies.

We start with the following lemma.

\begin{lemma}[A coarse upper bound]
\label{lem:rowsum}
For all $n \in \mathbb{N}$, for all Jacobi paths $c = (c_0, \ldots, c_n)$,
\[
        \lambda_{\max}\left(J_n(c) \right)<\frac13.
\]
\end{lemma}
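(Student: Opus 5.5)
The plan is a row-sum (Gershgorin/Perron) bound that reduces the claim to a two-variable scalar inequality. Since $J_n(c)$ is real symmetric with nonnegative off-diagonal entries $a(c_i)\ge 0$, its largest eigenvalue is at most its largest row sum. To see this, take a nonnegative top eigenvector $x$, as in Definition~\ref{def:strategy-for-path}, and look at a coordinate $i$ where $x_i$ is maximal. Then
\[
\lambda_{\max}x_i=\sum_j (J_n)_{ij}x_j\le x_i\sum_j (J_n)_{ij}
\]
holds, because the off-diagonal entries are nonnegative and $x_j\le x_i$. The sum of row $i$ equals
\[
d(c_{i-1},c_i)+a(c_{i-1})+a(c_i).
\]
At the two ends the missing neighbour contributes nothing, and this matches the formula because $a(c_0)=a(c_n)=0$ when $c_0,c_n\in\{\pm1\}$. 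Hence
\[
\lambda_{\max}(J_n(c))\le \max_{u,v\in[-1,1]} f(u,v),\qquad f(u,v):=uv+\frac{u-v}{2}-1+\frac12\sqrt{1-u^2}+\frac12\sqrt{1-v^2}.
\]
Since $f$ is continuous on the compact square $[-1,1]^2$, it suffices to prove $f(u,v)<1/3$ pointwise. The maximum is then attained and is strictly below $1/3$, which gives a bound uniform in $n$ and $c$.

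The main step, and the only real obstacle, is the scalar inequality $\max f<1/3$. Quick numerical probing suggests the maximum is about $0.31$, near $u\approx 0.9$, $v\approx 0.6$. The margin is therefore small and crude estimates will not do. My first attempt is to remove the square roots with weighted AM--GM,
\[
\tfrac12\sqrt{1-t^2}\le \tfrac14\Bigl(s+\tfrac{1-t^2}{s}\Bigr),\qquad s>0,
\]
using separate constants $s_u,s_v$ chosen near the optimal values $\sqrt{1-u^2}\approx 0.44$ and $\sqrt{1-v^2}\approx 0.8$. This bounds $f$ by a quadratic polynomial
\[
P(u,v)=uv-\frac{u^2}{4s_u}-\frac{v^2}{4s_v}+\frac{u-v}{2}+\text{const}.
\]
Its Hessian is negative definite precisely when $4s_us_v<1$, so $P$ is concave and its maximum over $\mathbb R^2$ follows from solving a $2\times 2$ linear system. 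I would then pick rational $s_u,s_v$ with $4s_us_v<1$ that make this maximum fall below $1/3$. The concern is that $s_u s_v\approx 0.35$ sits close to the concavity threshold $1/4$, so one global choice of $(s_u,s_v)$ may lose too much away from the optimum.

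If a single choice fails, the fallback is to split the square. On the region where $u\le 0$ or $v\ge 0.95$ (say), crude bounds such as $a\le 1/2$ together with $uv+\frac{u-v}{2}\le$ an easy linear estimate should already give values well below $1/3$. On the remaining region near the maximizer, I would apply the AM--GM tangent bound with parameters adapted to that region. Failing that, I would substitute $u=\cos\theta$, $v=\cos\varphi$, write $f$ as a trigonometric polynomial, and check the critical-point equations $\partial_u f=\partial_v f=0$ directly; for instance, $\partial_v f=u-\frac12-\frac{v}{2\sqrt{1-v^2}}=0$ expresses $v$ explicitly in terms of $u$, which reduces the problem to one variable. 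Once $\max_{[-1,1]^2}f<1/3$ is established, the row-sum bound finishes the proof of Lemma~\ref{lem:rowsum}.
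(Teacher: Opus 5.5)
Your reduction is the same as the paper's. The row-sum (Gershgorin/Perron) bound reduces the lemma to $\max_{[-1,1]^2} f<\frac13$ with $f(u,v)=d(u,v)+a(u)+a(v)$, and the paper then simply asserts that this maximum is $\frac{\sqrt5-1}{4}\approx 0.309$.

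The gap is that you never prove this scalar inequality, and your main plan for it cannot succeed. Put $A=\frac{1}{2s_u}$ and $B=\frac{1}{2s_v}$, so that concavity means $AB>1$. Then the maximum of $P$ over $\mathbb R^2$ is
\[
-1+\frac{A+B}{2}+\frac{A+B}{8AB}+\frac{A+B-2}{8(AB-1)}\ \ge\ -1+r+\frac{1}{4r}+\frac{1}{4(r+1)}\ >\ \frac38,\qquad r:=\sqrt{AB}>1 .
\]
Here the left side is increasing in $A+B\ge 2r$, and the middle expression increases from $\frac38$ at $r=1$. So every choice with $4s_us_v<1$ yields a bound above $\frac38$.

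The reason is visible at the true maximizer $(u,v)\approx(0.916,0.639)$. There $\sqrt{1-u^2}\sqrt{1-v^2}\approx 0.31>\frac14$, so the tangent quadratic is a saddle. Forcing concavity pushes the tangent points too far from where they need to be.

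Your splitting fallback also fails as stated. The bound $a\le\frac12$ yields $f\le uv+\frac{u-v}{2}$, which equals $1$ at $(-1,-1)$ and $0.975$ at $(1,0.95)$. On both regions you named, the $a$-terms must be retained.

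Your last idea, made exact, closes the gap in two lines. Write
\[
f(u,v)=(u-\tfrac12)v+\tfrac12\sqrt{1-v^2}+\tfrac u2-1+\tfrac12\sqrt{1-u^2}.
\]
Cauchy--Schwarz in $v$ gives $(u-\tfrac12)v+\tfrac12\sqrt{1-v^2}\le\sqrt{u^2-u+\tfrac12}$. A weighted Cauchy--Schwarz with $\mu=\frac{1+\sqrt5}{4}$ then gives
\[
\sqrt{u^2-u+\tfrac12}+\tfrac12\sqrt{1-u^2}\le\sqrt{\Bigl(1+\tfrac{1}{4\mu}\Bigr)\Bigl(u^2-u+\tfrac12+\mu(1-u^2)\Bigr)}=\frac{3+\sqrt5}{4}-\frac u2 .
\]
The last equality holds because the radicand is exactly $\bigl(\frac{3+\sqrt5}{4}-\frac u2\bigr)^2$. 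Hence $f\le\frac{\sqrt5-1}{4}<\frac13$ on all of $[-1,1]^2$, since $45<49$. Together with your row-sum step, this proves the lemma and confirms the value the paper states.
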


\begin{proof}
The off-diagonal entries of $J_n(c)$ are nonnegative, so, by Gershgorin's circle theorem, the largest eigenvalue is bounded by the largest row sum:
\[
\lambda_{\max}\left(J_n(c)\right)
\le
\max_i\{d(c_{i-1},c_i)+a(c_{i-1})+a(c_i)\}.
\]
A direct two-variable maximization gives
\[
        \max_{u,v\in[-1,1]}
        \{d(u,v)+a(u)+a(v)\}
        =\frac{\sqrt5-1}{4}<\frac13.
\]
\end{proof}

For convenience, we switch notation from $m$ to $n$.
\begin{lemma}[Strict increase with each dimension]
\label{lem:strict-increase}
Assume \(Q_n>0\). Then \(Q_{n+1}>Q_n\).
\end{lemma}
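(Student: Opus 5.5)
We take a maximizing path, isolate an irreducible block carrying the top eigenvalue, insert one new label just to the left of that block, and test the enlarged Jacobi matrix against the old top eigenvector extended by a small new coordinate. A first-order computation in the insertion parameter, combined with the coarse bound of Lemma~\ref{lem:rowsum}, should show that the Rayleigh quotient strictly exceeds $Q_n$.

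\textbf{Step 1: a maximizing path and an irreducible block.} The set of Jacobi paths of length $n$ is compact and $c\mapsto\lambda_{\max}(J_n(c))$ is continuous, so some path $c=(c_0,\ldots,c_n)$ attains $Q_n$. The matrix $J_n(c)$ splits as a direct sum of irreducible tridiagonal blocks at the positions $i$ with $a(c_i)=0$, i.e.\ $c_i\in\{\pm1\}$. I would pick a block with largest eigenvalue $Q_n$, corresponding to a sub-path $b=(c_r,\ldots,c_s)$ with $c_r=\sigma\in\{\pm1\}$ and $c_s\in\{\pm1\}$. Its off-diagonal entries are strictly positive, so by Perron--Frobenius its unit top eigenvector $\phi$ is strictly positive; in particular $\phi_1>0$.

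\textbf{Step 2: insert a label.} For small $\delta>0$, define the length-$(n+1)$ path $\widetilde c^{\,\delta}=(c_0,\ldots,c_r,\sigma(1-\delta),c_{r+1},\ldots,c_n)$. Its endpoints are unchanged, so it is a valid Jacobi path and $Q_{n+1}\ge\lambda_{\max}(J_{n+1}(\widetilde c^{\,\delta}))$. The new coordinate $\ast$ sits between $c_r$ and $\sigma(1-\delta)$. Because $a(c_r)=0$, it is decoupled from everything to its left. Its diagonal entry is
\[
d(\sigma,\sigma(1-\delta))=-\delta+\tfrac{\sigma\delta}{2}=O(\delta),
\]
and its coupling to the first coordinate of the old block is
\[
a(\sigma(1-\delta))=\tfrac12\sqrt{2\delta-\delta^2}=\sqrt{\delta/2}\,(1+O(\delta)).
\]
The only other change is the first diagonal entry of the block, which becomes $d(\sigma(1-\delta),c_{r+1})$ in place of $d(\sigma,c_{r+1})$. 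The difference is $-\sigma\delta\,(c_{r+1}+\tfrac12)$, of absolute value at most $\tfrac32\delta$. All remaining entries, and the other blocks, are untouched.

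\textbf{Step 3: the test vector.} I would use the vector equal to $\varepsilon$ on $\ast$, to $\phi$ on the old block, and to $0$ elsewhere, with $\varepsilon=\kappa\sqrt\delta$. Expanding the quadratic form gives numerator
\[
Q_n+2\kappa\sqrt{\delta}\sqrt{\delta/2}\,\phi_1-\sigma\delta(c_{r+1}+\tfrac12)\phi_1^2+O(\delta^{3/2})
\]
and denominator $1+\kappa^2\delta$. The Rayleigh quotient minus $Q_n$ is therefore
\[
\delta\Bigl(\sqrt2\,\kappa\phi_1-Q_n\kappa^2-\sigma(c_{r+1}+\tfrac12)\phi_1^2\Bigr)+o(\delta).
\]
Since $Q_n>0$, maximizing in $\kappa$ (take $\kappa=\phi_1/(\sqrt2 Q_n)$) makes the first two terms equal to $\phi_1^2/(2Q_n)$. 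The bracket is then at least $\phi_1^2\bigl(\tfrac1{2Q_n}-\tfrac32\bigr)$. This is strictly positive because $Q_n<\tfrac13$ by Lemma~\ref{lem:rowsum} and $\phi_1>0$. Hence for small $\delta$ the quotient exceeds $Q_n$, and so $Q_{n+1}>Q_n$.

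\textbf{Main obstacle.} The delicate point is Step 3. The gain from the new coupling is only of order $\delta$ (a $\sqrt\delta$ coupling times a $\sqrt\delta$ coordinate), which is the same order as the perturbation of the block's first diagonal entry. So the sign of the net effect is a genuine competition between these two order-$\delta$ terms, and it is decided quantitatively by the bound $Q_n<\tfrac13$. The hypothesis $Q_n>0$ is used to optimize the $\varepsilon^2$ penalty. I would double-check the sign conventions in $d$ for both choices of $\sigma$ to confirm that the worst case is indeed bounded by $\tfrac32\phi_1^2\delta$.
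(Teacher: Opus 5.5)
Your proposal is correct and follows the paper's proof essentially step for step. It uses the same maximizing path and strictly positive top eigenvector of an irreducible block, the same insertion of $\sigma(1-\delta)$ next to the block's left endpoint label, the same test vector, and the same decisive inequality $\frac{1}{2Q_n}-\frac32>0$ via Lemma~\ref{lem:rowsum}. The only cosmetic difference is that you fix the ansatz $\varepsilon=\kappa\sqrt{\delta}$ and expand to first order, whereas the paper optimizes the new coordinate exactly, $t=w_\delta x_1/(Q_n-\alpha_\delta)$, before letting $\delta\downarrow 0$.
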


\begin{proof}
For fixed \(n\), the set of Jacobi paths is compact: this is the set of $(c_0, c_1, \ldots, c_n)$, where $c_0, c_n \in \{\pm 1\}$, and $c_1, \ldots, c_{n-1} \in [-1,1]$. Moreover, the map
\[
        c\longmapsto \lambda_{\max}J_n(c)
\]
is continuous. So, the supremum defining \(Q_n\) is attained. Let
\[
        c_0, c_1,\ldots,c_{n-1}, c_n
\]
be a maximizing path, where $c_0, c_n \in \{\pm1\}$.

Now, write $J_n(c)$ as a direct sum of irreducible blocks: $$J_n(c) = \bigoplus_{\ell \in [n]} J_n^{(\ell)}(c) \,,$$
where the $J_n^{\ell}(c)$ are the irreducible blocks
(this decomposition always exists -- though it could be trivial if $\ell = 1$ -- and is unique, i.e.\ the blocks $J_n(c)$ are uniquely defined, if we do not allow for ``reordering'' of the standard basis vectors).

Since $J_n(c)$ is tridiagonal with off-diagonal entries $\left( J_n(c) \right)_{j,j+1} = \left( J_n(c) \right)_{j+1,j} = a(c_j) $, each irreducible block ends at an index $j$ such that $a(c_j) = 0$ (since this is where the off-diagonal entry between indices $j$ and $j+1$ vanishes). The latter happens if and only if $c_j = \pm 1$.

Let $J_n^{\ell}(c)$ be a block whose largest eigenvalue is $Q_n$. Suppose this block is
supported on the standard basis vectors
\[
        e_p, e_{p+1},\ldots,e_q.
\]
Consider the Jacobi path $b = (b_0, b_1, \ldots, b_m)$, where
\[
        b_0=c_{p-1},\quad b_1=c_p,\quad b_2=c_{p+1},\ldots,\quad
        b_m=c_q,
\]
and \(m=q-p+1\). Note that this is indeed a Jacobi path since $c_{p-1}, c_q \in \{\pm 1\}$, by the argument above (and this is also true in the edge cases -- if the block is the first or the last -- since $c_0, c_n \in \{\pm 1\}$). Now, notice that the block $J_n^{\ell}(c)$ is precisely equal to the $m \times m$ Jacobi matrix $J_m(b)$. The latter has diagonal entries
\[
\left(J_m(b)\right)_{i,i} = d(b_{i-1},b_i)
\]
and off-diagonal entries
\[
        \left(J_m(b)\right)_{i,i+1}=\left(J_m(b)\right)_{i+1,i}=a(b_i).
\]
By construction, the block is irreducible, which implies
\[
        a(b_i)>0
        \qquad \textnormal{for all } \, 1\le i<m \,.
\]
Let \(x\) be the normalized top eigenvector of $J_m(b)$:
\[
        J_m(b)x=Q_nx,
        \qquad
        \|x\|_2=1.
\]
Now, for a large enough $\mu>0$, the matrix $J_m(b)+\mu I$ has \emph{all} entries strictly positive, since, as mentioned above, the off-diagonal entries of $J_m(b)$ are strictly positive. $J_m(b)+\mu I$ is also still symmetric, and thus has real eigenvalues. The Perron-Frobenius theorem says that a matrix with all positive entries has a top eigenvector with all positive entries. Let $x$ be such an eigenvector. Then, $x$ is also a top eigenvector of $J_m(b)$. Since, by construction, the top eigenvalue of $J_m(b)$ is $Q_n$, we have 
\[
        J_m(b)x=Q_nx,
        \qquad
        \|x\|_2=1\,.
\]
Moreover, in particular, we have $x_1>0$.

Now fix \(0\le\delta<1\). Define a new Jacobi path
\[
        \widetilde c^{\,\delta}_0,\ldots,
        \widetilde c^{\,\delta}_{n+1}
\]
by inserting the value \(c_{p-1}(1-\delta)\) between \(c_{p-1}\) and
\(c_p\), so that:
\[
  \begin{cases}      \widetilde c^{\,\delta}_j=c_j
        \qquad\qquad (0\le j\le p-1), \\
     \widetilde c^{\,\delta}_p=c_{p-1}(1-\delta),\\
        \widetilde c^{\,\delta}_j=c_{j-1}
        \qquad \,\,\,\,\,\, (p+1\le j\le n+1).
\end{cases}
\]
This is still a valid Jacobi path since the endpoints are still $\pm1$, but it is longer by $1$ compared to the original $c$. Let $J_{n+1}(\tilde{c}^{\delta})$ be its Jacobi matrix.

The off-diagonal entry
between the inserted ``vertex'' and the first ``vertex'' of the old block is
\[
        J_{n+1}(\tilde{c}^{\delta})_{p,p+1} = J_{n+1}(\tilde{c}^{\delta})_{p+1,p} = a(\sigma(1-\delta)) =: w_\delta \,.
\]
The diagonal entry of the inserted vertex is
\begin{equation}
\label{eq:23}
        J_{n+1}(\tilde{c}^{\delta})_{p,p} =
        d(\sigma,\sigma(1-\delta)) = 
        -\left(1-\frac{\sigma}{2}\right)\delta =: \alpha_\delta \,,
\end{equation}
and the change in the first diagonal entry of the old block is
\begin{equation}
\label{eq:24}
        J_{n+1}(\tilde{c}^{\delta})_{p+1,p+1} - J_{n}(c)_{p,p} = d(\sigma(1-\delta),b_1)-d(\sigma,b_1)
        =
        -\sigma\delta\left(b_1+\frac12\right) =: \beta_\delta
\end{equation}

Now, consider the following vector supported only on the inserted vertex and on the
previous block:
\[
        y_\delta=(0^{p-1}, t,x_1,\ldots,x_m, 0^{n-q}),
\]
where $t \in \mathbb{R}$, and $x$ is the eigenvector of the old block defined earlier. Then,
       $\|y_\delta\|^2=1+t^2$, and
\[
\langle y_\delta,J_{n+1}(\widetilde c^{\,\delta})y_\delta\rangle
        = Q_n+\beta_\delta x_1^2
        +2w_\delta t x_1 + \alpha_\delta t^2\,.
\]
So, 
\[\frac{\langle y_\delta,J_{n+1}(\widetilde c^{\,\delta})y_\delta\rangle}
     {\|y_\delta\|_2^2}
-Q_n
=
\frac{
(\alpha_\delta-Q_n)t^2
+2w_\delta t x_1
+\beta_\delta x_1^2
}{1+t^2}.
\]
Choose
\[
        t=\frac{w_\delta x_1}{Q_n-\alpha_\delta}.
\]
This is well-defined for all sufficiently small \(\delta\), because
\(Q_n>0\) and \(\alpha_\delta\le0\). With this choice,
\begin{equation}
\label{eq:rayleigh}
\frac{\langle y_\delta,J_{n+1}(\widetilde c^{\,\delta})y_\delta\rangle}
     {\|y_\delta\|_2^2}
-Q_n
=
\frac{x_1^2}{1+t^2}
\left(
        \frac{w_\delta^2}{Q_n-\alpha_\delta}
        +\beta_\delta
\right).
\end{equation}
Now, $w_\delta^2
        =
        a(\sigma(1-\delta))^2
        =
        \frac{2\delta-\delta^2}{4}$, and so, using \eqref{eq:23} and \eqref{eq:24}, we have
\[
\lim_{\delta\downarrow0}
\frac1\delta
\left(
        \frac{w_\delta^2}{Q_n-\alpha_\delta}
        +\beta_\delta
\right)
=
\frac{1}{2Q_n}
-\sigma\left(b_1+\frac12\right).
\]
Since \(b_1\in[-1,1]\) and \(\sigma\in\{\pm1\}\),
\[
        \sigma\left(b_1+\frac12\right)\le \frac32.
\]
By Lemma~\ref{lem:rowsum}, \(Q_n<1/3\). Hence,
\[
        \frac{1}{2Q_n}
        -\sigma\left(b_1+\frac12\right)
        \ge
        \frac{1}{2Q_n}-\frac32
        >0.
\]
Thus, for all sufficiently small \(\delta>0\),
\[
        \frac{w_\delta^2}{Q_n-\alpha_\delta}
        +\beta_\delta>0.
\]
Substituting this back into \eqref{eq:rayleigh}, and since \(x_1>0\), we have
\[\frac{\langle y_\delta,J_{n+1}(\widetilde c^{\,\delta})y_\delta\rangle}
     {\|y_\delta\|_2^2} - Q_n > 0 \,.
\]
Thus,
\[
        \lambda_{\max}\left(J_{n+1}(\widetilde c^{\,\delta})\right)>Q_n.
\]
Taking the supremum over all possible Jacobi paths gives
\[
        Q_{n+1}>Q_n.
\]
\end{proof}

\begin{corollary}[No Jacobi path attains $q_*$]
\label{cor:no-finite-path}
For all $n \in \mathbb{N}$,
\[
        Q_n<q_*.
\]
Equivalently, no finite signed-endpoint Jacobi path, with any endpoint
signs, attains \(q_*\).
\end{corollary}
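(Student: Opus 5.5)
The plan is to combine Lemma~\ref{lem:strict-increase} with the definition of $q_*$ as a supremum. This requires knowing that $Q_n>0$ at the relevant $n$, so the argument splits into two cases according to the sign of $Q_n$.

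\emph{Case $Q_n>0$.} Here Lemma~\ref{lem:strict-increase} applies directly and gives $Q_{n+1}>Q_n$. By Definition~\ref{def:q*}, $q_*\ge Q_{n+1}$, so $Q_n<Q_{n+1}\le q_*$.

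\emph{Case $Q_n\le 0$.} It suffices to show $q_*>0$, since then $Q_n\le 0<q_*$. To get $q_*>0$, I will exhibit one explicit finite Jacobi path whose Jacobi matrix has a positive top eigenvalue. The overview refers to Lemma~\ref{lem:inf-quarter-witness}, which supplies a path of value exceeding $\tfrac14$. If I want to avoid a forward reference, a short self-contained witness would do. The simplest candidate is a length-two path $c=(1,t,-1)$. Its Jacobi matrix has diagonal entries $d(1,t)=\tfrac{t-1}{2}$ and $d(t,-1)=-\tfrac{t+1}{2}$, and off-diagonal entry $\tfrac12\sqrt{1-t^2}$. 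At $t=0$ this matrix has trace $-1$ and determinant $\tfrac14-\tfrac14=0$, so its top eigenvalue is exactly $0$, which is not enough. I would therefore take a length-three path such as $(1,t,-t,-1)$ with a suitable $t$, or check numerically that some short path has a positive top eigenvalue via an explicit Rayleigh quotient with a positive test vector. Alternatively, I can start from the length-two path with top eigenvalue $0$ and apply the insertion construction from the proof of Lemma~\ref{lem:strict-increase} at $Q=0$.

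The main obstacle is this positivity witness. The insertion argument divides by $Q_n$, so it breaks down at $Q_n=0$. For this reason I expect to rely on Lemma~\ref{lem:inf-quarter-witness}, or on a concrete small path verified by direct computation, rather than push the perturbation argument through at zero.

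Once both cases are settled, $Q_n<q_*$ holds for every $n$. The equivalent formulation then follows: any finite Jacobi path $c$ of length $n$, with any endpoint signs, satisfies $\lambda_{\max}(J_n(c))\le Q_n<q_*$. By Proposition~\ref{prop:path-to-strategy}, the value of its canonical strategy is exactly this top eigenvalue, so no such strategy attains $q_*$.
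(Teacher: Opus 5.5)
Your proof is correct and is essentially the paper's argument. The paper combines Lemma~\ref{lem:strict-increase} with $q_*>\frac14$ from Lemma~\ref{lem:inf-quarter-witness}, phrased as a contradiction (if $Q_n=q_*$ then $Q_n>0$, so $Q_{n+1}>q_*$), which is equivalent to your case split; and if you want a self-contained positivity witness, a length-two path already suffices: $(1,\tfrac34,1)$ gives $J_2$ with trace $-\tfrac12$ and determinant $-\tfrac1{16}$, hence top eigenvalue $(\sqrt2-1)/4>0$.
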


\begin{proof}
Recall that \[
Q_n=\sup_{\substack{c_1,\ldots,c_{n-1}\in[-1,1]\\ c_0,c_n \in \{\pm 1\}}} \lambda_{\max}\left(J_n(c)\right),
        \qquad
        q_*:=\sup_{n\ge1}Q_n.
\]
Since $q_*>1/4$, there exists some \(n\) with
\(Q_n>0\).  If \(Q_n=q_*\) for some finite \(n\), then Lemma~\ref{lem:strict-increase}
would give
\[
        Q_{n+1}>Q_n=q_*,
\]
contradicting the definition
\[
        q_*=\sup_{n\ge1}Q_n.
\]
\end{proof}
\section{A certificate upper-bounding the value of arbitrary strategies}
\label{sec:upper-bound}

In this section, we prove an upper bound on the optimal value of arbitrary strategies for $I_{3322}$. The proof of this upper bound crucially goes through finding a ``certificate'' $H$, which will be a scalar function satisfying certain inequalities. This certificate will be especially important later when trying to understand the structure of optimal strategies, as these will have to saturate the inequalities involving $H$. 

We will start by stating the properties of the desired certificate $H$, and proving the upper-bound on the value of arbitrary strategies assuming the existence of such an $H$. We will then spend the rest of the section proving the existence of $H$.

\subsection{The desired certificate}
Recall that we defined, for $t,u,v \in [-1,1]$,
\begin{equation}
\label{eq:10}
        a(t)=\frac12\sqrt{1-t^2},
        \qquad
        d(u,v)=uv+\frac{u-v}{2}-1.
\end{equation}
The following is the desired certificate.
\begin{theorem}[Scalar certificate]
\label{thm:scalar-certificate}
There exists a concave function $H:[-1,1]\to[0,\infty)$ such that
\begin{equation}
\label{eq:Hstar-cert}
        H(v)+d(u,v)+H(-u)\le q_*
        \qquad\forall u,v\in[-1,1],
\end{equation}
and
\begin{equation}
\label{eq:Hstar-product}
        H(t)H(-t)\ge a(t)^2
        \qquad\forall t\in[-1,1].
\end{equation}
\end{theorem}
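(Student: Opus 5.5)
The plan is to build $H$ as a limit, as $q\downarrow q_*$, of certificates $H_q$ that satisfy \eqref{eq:Hstar-cert} with $q$ in place of $q_*$. Each $H_q$ comes from the dynamic-programming value function described in the technical overview.

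\emph{Step 1 (value function).} Fix $q>q_*$. Define $h_q$ by \eqref{eq:overview-hq-definition}, using the pivot recursion \eqref{eq:overview-first-pivot}--\eqref{eq:overview-pivot-recursion}. Every prefix path $(c_0=1,c_1,\ldots,c_i)$ can be completed by appending the label $1$. This makes $J_i$ a leading principal submatrix of a genuine Jacobi matrix. By Cauchy interlacing and the definition of $q_*$ (Definition~\ref{def:q*}), $qI-J_i\succeq (q-q_*)I$. Since $1/p_i$ is the bottom-right entry of $(qI-J_i)^{-1}$, every pivot satisfies $p_i\ge q-q_*$, which gives \eqref{eq:overview-hq-bounds}. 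Appending one label to near-optimal paths then yields the Bellman inequality \eqref{eq:overview-Bellman}:
\[
h_q(v)+d(u,v)+a(u)^2/h_q(u)\le q .
\]

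\emph{Step 2 (symmetrization).} Set $k_q(t):=a(t)^2/h_q(-t)$. Using $a(-t)=a(t)$, the Bellman inequality reads $h_q(v)+d(u,v)+k_q(-u)\le q$. Applying it to the pair $(-v,-u)$ and using $d(-v,-u)=d(u,v)$ gives the mirrored inequality $k_q(v)+d(u,v)+h_q(-u)\le q$. I would then define
\[
g_q:=\sqrt{h_qk_q}.
\]
This gives the product identity
\[
g_q(t)g_q(-t)=\sqrt{h_q(t)k_q(-t)\,h_q(-t)k_q(t)}=a(t)^2 .
\]
For the certificate inequality, AM--GM gives
\[
g_q(v)+g_q(-u)\le\tfrac12\big[h_q(v)+k_q(-u)\big]+\tfrac12\big[k_q(v)+h_q(-u)\big]\le q-d(u,v).
\]
So $g_q\ge0$ satisfies both certificate inequalities at level $q$. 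Moreover, since $d\ge -3$, we get $g_q(v)+g_q(-u)\le q+3$, hence $0\le g_q\le q+3\le 4$ uniformly in $q$. The lower bound on $h_q$ degenerates as $q\downarrow q_*$, so this uniform bound is what rescues compactness.

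\emph{Step 3 (concavification).} Let $H_q$ be the least concave majorant of $g_q$; it is finite, nonnegative and at most $4$. The product inequality survives because $H_q\ge g_q$. For the certificate inequality I would majorize one variable at a time. For fixed $u$, the map $v\mapsto q-d(u,v)-g_q(-u)$ is affine and dominates $g_q$, so it dominates $H_q$. This gives $g_q(-u)\le q-d(u,v)-H_q(v)$. For fixed $v$, the right-hand side is affine in $u$, and $w\mapsto H_q(-w)$ is the least concave majorant of $w\mapsto g_q(-w)$. Hence $H_q(-u)\le q-d(u,v)-H_q(v)$.

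\emph{Step 4 (limit).} Take $q_k\downarrow q_*$. The functions $H_{q_k}$ are concave with values in $[0,4]$. Each is nondecreasing and then nonincreasing, so its total variation is at most $8$. By Helly's selection theorem, a subsequence converges pointwise on all of $[-1,1]$, including the endpoints. Pointwise limits preserve concavity and both inequalities (with $q_k\to q_*$), which yields $H$.

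I expect the main obstacle to be Step 2: finding a modification that simultaneously enforces the exact product relation and keeps the additive bound. The geometric-mean symmetrization, combined with the $d(-v,-u)=d(u,v)$ symmetry, is the choice I would try first. A secondary risk is the uniform bound needed in Step 4, which must come from the certificate inequality itself rather than from $h_q\ge q-q_*$.
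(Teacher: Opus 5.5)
Your proposal is correct and follows the paper's architecture: Riccati pivots and the Bellman supersolution $h_q$, a symmetrization enforcing $g(t)g(-t)=a(t)^2$, the least concave majorant, and a limit $q\downarrow q_*$. Your $g_q=\sqrt{h_qk_q}=a(t)\sqrt{h_q(t)/h_q(-t)}$ is exactly the paper's $h_0$. The differences are in how each step is carried out.

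\textbf{Symmetrization.} The paper writes $h_q=a\,e^{p}$ on $(-1,1)$. It shows that $F(r)=\sup_{u,v}\{d(u,v)+a(v)e^{r(v)}+a(u)e^{-r(u)}\}$ is convex and invariant under $r\mapsto -r(-\,\cdot\,)$, and then replaces $p$ by its odd part. Your mirrored Bellman inequality with termwise AM--GM is that same argument unpacked: midpoint convexity of $\exp$ is AM--GM, and the invariance of $F$ is the identity $d(-v,-u)=d(u,v)$. Because your $k_q$ is defined on all of $[-1,1]$ with $k_q(\pm1)=0$, you get the inequality on the closed square directly. This spares you the paper's separate limiting argument at $u,v=\pm1$.

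\textbf{Concavification.} Your one-variable-at-a-time affine-majorant argument is equivalent to the paper's $\varepsilon$-decomposition argument.

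\textbf{Limit.} The paper proves uniform Lipschitz bounds on compact subintervals and uses Arzel\`a--Ascoli with a diagonal subsequence. It then defines endpoint values as one-sided limits and extends both inequalities to the boundary by approximation. You instead use Helly's selection theorem, which is valid because a concave function with values in $[0,M]$ has total variation at most $2M$. Helly gives pointwise convergence on all of $[-1,1]$, so concavity and both inequalities pass to the limit at every point with no boundary work. What the paper's route buys is a limit that is continuous up to the endpoints. Neither the theorem nor its later uses (concavity, the two inequalities, interior differentiability) need that.

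One small point to state explicitly: your bound $g_q\le q+3\le 4$ requires $q\le 1$. This is harmless, since $q_*<\tfrac13$.
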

\noindent We defer the proof to Section~\ref{sec:certificate-proof}.

\subsection{Upper bounding the value of arbitrary strategies assuming existence of the certificate}
\label{sec:upper-bound-subsec}

\begin{theorem}[$q_*$ upper bound]
\label{thm:universal-upper}
Every finite-dimensional strategy for $I_{3322}$ has value at most $q_*$. The same bound holds for separable infinite-dimensional strategies.
\end{theorem}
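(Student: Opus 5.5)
The plan is to take the certificate $H$ from Theorem~\ref{thm:scalar-certificate} as given and run the block-averaging argument sketched in the overview. Everything will be done carefully, including the one-dimensional blocks. We first handle finite dimensions. By Naimark dilation and purification it suffices to consider projectors $A_i,B_j$ and a pure state $\psi$. Pass to $Z_A,X_A,W_A$ and $Z_B,X_B,W_B$ as in Lemma~\ref{lem:rewrite}. The relations $Z X+XZ=0$ and $Z^2+X^2=I$ give, via Jordan's lemma applied to $A_1,A_2$, an orthonormal basis $\{e_\alpha\}$ of $\cH_A$ with the following structure. $Z_A$ is diagonal with labels $z_\alpha$. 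The space splits into three kinds of blocks:
\begin{itemize}
\item two-dimensional blocks $\{\alpha,\bar\alpha\}$ with $z_{\bar\alpha}=-z_\alpha$, $0<|z_\alpha|<1$, and $X_A e_\alpha=2a(z_\alpha)e_{\bar\alpha}$ (a phase can be absorbed into the basis);
\item one-dimensional blocks with $z_\alpha=\pm1$, where $X_A e_\alpha=0$;
\item one-dimensional blocks with $z_\alpha=0$, where $X_Ae_\alpha=\pm e_\alpha$.
\end{itemize}
Do the same for Bob with basis $\{f_\beta\}$ and labels $w_\beta$. Expanding $\psi=\sum\psi_{\alpha\beta}e_\alpha\otimes f_\beta$, the diagonal part of \eqref{eq:rewrite} equals $\sum p_{\alpha\beta}d(z_\alpha,w_\beta)$ exactly.

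Next we bound $\tfrac12\langle X_A W_B\rangle$ block by block, writing $\psi=\sum_\alpha e_\alpha\otimes R_\alpha$.
\begin{itemize}
\item \emph{Two-dimensional block.} The contribution is $2a(z)\,\mathrm{Re}\langle R_{\bar\alpha},W_BR_\alpha\rangle\le 2a(z)\sqrt{r_\alpha r_{\bar\alpha}}$. Using \eqref{eq:Hstar-product} and AM--GM this is at most $H(-z_\alpha)r_\alpha+H(-z_{\bar\alpha})r_{\bar\alpha}$. (If $H(z)H(-z)=0$ then $a(z)=0$, which is excluded here, so no division issues arise; to be safe one may use $2\sqrt{xy}\le x+y$ directly with $x=H(-z)r_\alpha$ and $y=H(z)r_{\bar\alpha}$.)
\item \emph{Label $\pm1$.} The contribution is $0\le H(\mp1)r_\alpha$, since $H\ge0$.
\item \emph{Label $0$.} The contribution is $\pm\tfrac12\langle R_\alpha,W_BR_\alpha\rangle\le\tfrac12 r_\alpha$. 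Here we need $H(0)\ge\tfrac12$, which follows from \eqref{eq:Hstar-product} at $t=0$: $H(0)^2\ge a(0)^2=\tfrac14$.
\end{itemize}
Hence $\tfrac12\langle X_AW_B\rangle\le\sum_\alpha H(-z_\alpha)r_\alpha=\sum_{\alpha\beta}p_{\alpha\beta}H(-z_\alpha)$. Symmetrically, $\tfrac12\langle W_AX_B\rangle\le\sum_{\alpha\beta}p_{\alpha\beta}H(w_\beta)$; the one-dimensional Bob cases are identical. Adding these and applying \eqref{eq:Hstar-cert} pointwise to each $(z_\alpha,w_\beta)$ gives $\langle I_{3322}\rangle\le q_*\sum p_{\alpha\beta}=q_*$.

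For separable infinite-dimensional strategies, there are two routes.
\begin{itemize}
\item \emph{Direct route.} The same argument should go through, since Jordan's lemma for two projections holds on any Hilbert space (Halmos' two-subspace theorem). However, $Z_A$ may then have continuous spectrum, so sums must become spectral integrals and the block pairing becomes a unitary intertwining rather than a basis pairing. This is where I expect the real work to be.
\item \emph{Approximation route (cleaner).} Approximate by finite-dimensional strategies. Given $\psi$ and projectors on separable spaces, take finite-rank projections $P_n\uparrow I$ on each side and truncate the state to $\psi_n=(P_n\otimes P_n)\psi/\|\cdot\|$. Compressions $P_nA_iP_n$ are no longer projections, but they are contractions with $0\le P_nA_iP_n\le I$, i.e.\ POVM effects on a finite-dimensional space. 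By Naimark dilation these give a finite-dimensional projective strategy whose value on $\psi_n$ equals the value of the compressed operators. That value tends to the original value, because $\langle\psi_n,(P_nA_iP_n\otimes P_nB_jP_n)\psi_n\rangle\to\langle\psi,(A_i\otimes B_j)\psi\rangle$ strongly. Since each finite strategy has value $\le q_*$, the limit does too.
\end{itemize}
The only care needed in the approximation route is to check that the dilation preserves the expectation values of the Bell operator. That holds because $I_{3322}$ is linear in the effects, so I would take this route as the main argument.
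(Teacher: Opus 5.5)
Your proposal is correct and follows the paper's proof. It uses the same Jordan-block decomposition, the same per-block bound via the product inequality and AM--GM (with $H(0)\ge\frac12$ for zero-label blocks and $H\ge0$ for $\pm1$ blocks), and the same pointwise application of the scalar certificate to $\sum_{\alpha,\beta}p_{\alpha\beta}\bigl[d(z_\alpha,w_\beta)+H(-z_\alpha)+H(w_\beta)\bigr]$.

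For the separable infinite-dimensional case, the paper simply cites the fact that such correlations lie in the closure of the finite-dimensional ones. Your approximation route (finite-rank compression followed by a simultaneous Naimark/Halmos dilation of the three effects on each side) is a standard self-contained proof of exactly that fact.
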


\begin{proof}[Proof (assuming Theorem~\ref{thm:scalar-certificate})]
We will first assume a finite-dimensional strategy. As we discuss at the end, the proof will immediately extend to separable infinite-dimensional strategies. Without loss of generality (via purification and Naimark dilation), it is enough
to consider a pure state and projective binary measurements.  Let the strategy be specified by the state
\[
        \ket{\psi}\in \mathcal H_A\otimes\mathcal H_B
\]
and projectors $A_i, B_i$ for $i \in \{1,2,3\}$ on $\mathcal H_A$ and $\mathcal H_B$. Define
\[
        Z_A=A_1+A_2-I,\qquad X_A=A_2-A_1,
\]
\[
        Z_B=B_1+B_2-I,\qquad X_B=B_2-B_1,
\]
and
\[
        W_A=2A_3-I,\qquad W_B=2B_3-I.
\]
Then, all of these operators are Hermitian, and one can verify directly that
\[
        Z_AX_A+X_AZ_A=0,\qquad Z_A^2+X_A^2=I,
\]
and similarly for \(Z_B,X_B\). The above identities imply the following useful structure.
\begin{lemma}
\label{lem:structure}
Let $Z,X$ be self-adjoint operators on a finite-dimensional Hilbert space satisfying $ZX+XZ=0$ and $Z^2+X^2=I$.  Then there is an orthonormal basis $\{e_{\alpha}\}$ in which $Z$ and $X$ are simultaneously block-diagonal with $2\times 2$ (or $1 \times 1$) blocks such that the following holds. 

Denote by $\bar{\alpha}$ the index paired up with $\alpha$ to form a $2\times 2$ block. For any pair $(e_{\alpha}, e_{\bar{\alpha}})$ corresponding to a $2 \times 2$ block, there exists $z_\alpha\in (-1,1)$ with $z_{\alpha} \neq 0$ such that
\[
        Z e_\alpha=z_\alpha e_\alpha,
        \qquad
Z e_{\bar{\alpha}}=-z_\alpha e_{\bar{\alpha}},
\]
and
\[
        X e_\alpha=\sqrt{1-z_\alpha^2}\,e_{\bar\alpha}.
\]
For any $e_{\alpha}$ corresponding to a $1\times 1$ block, either: 
\begin{itemize}
\item there exists $z_{\alpha} \in \{\pm1\}$ such that $Z e_{\alpha} = z_{\alpha} e_{\alpha}$ and $X e_\alpha=0$, or
\item $Z e_{\alpha} = 0$ and $Xe_{\alpha} = \varepsilon_{\alpha} e_{\alpha}$ for some $\varepsilon_{\alpha} \in \{\pm 1\}$.
\end{itemize}
\end{lemma}
The proof of this lemma is straightforward and follows from an application of Jordan's lemma combined with the two identities in the lemma statement (we include this proof for completeness in Appendix~\ref{sec:app}). 

Now, back to the proof of Theorem~\ref{thm:universal-upper}. We invoke Lemma~\ref{lem:structure} for the pair $X_A, Z_A$ defined at the beginning of the proof. This tells us that there is
an orthonormal basis $\{e_\alpha\}$ for Alice such that $Z_A$ and $X_A$ are simultaneously block-diagonal with $2\times 2$ or $1\times 1$ blocks. Moreover, for each pair $(e_{\alpha}, e_{\bar{\alpha}})$ corresponding to a $2$-dimensional block, there is a $z_\alpha\in (-1,1)$, such that
\begin{equation}
\label{eq:structure-1}
        Z_A e_\alpha=z_\alpha e_\alpha,
        \qquad
Z_A e_{\bar{\alpha}}=-z_\alpha e_{\bar{\alpha}},
\end{equation}
and
\begin{equation}
\label{eq:structure-2}
        X_A e_\alpha=\sqrt{1-z_\alpha^2}\,e_{\bar\alpha}.
\end{equation}
From the above, it also follows that $z_{\bar{\alpha}} = - z_{\alpha}$. And for any $e_{\alpha}$ corresponding to a $1\times 1$ block, either:
\begin{itemize}
\item there exists $z_{\alpha}\in \{\pm 1\}$ such that $Z e_{\alpha} = z_{\alpha} e_{\alpha}$ and $X e_\alpha=0$, or
\item $Z e_{\alpha} = 0$ and $Xe_{\alpha} = \varepsilon_\alpha e_{\alpha}$ for some $\varepsilon_\alpha \in \{\pm 1\}$.
\end{itemize}


Similarly, Bob has an analogous orthonormal
basis $\{f_\beta\}$ and corresponding values $w_\beta$. Now, we can write the state $\ket{\psi}$ in these two bases: 
\[
        \ket{\psi}=\sum_{\alpha,\beta}\psi_{\alpha\beta}
        e_\alpha\otimes f_\beta.
\]
for some coefficients $\psi_{\alpha \beta}$.
Define
\[
        p_{\alpha\beta}=|\psi_{\alpha\beta}|^2,
        \qquad
        r_\alpha=\sum_\beta p_{\alpha\beta},
        \qquad
        \ell_\beta=\sum_\alpha p_{\alpha\beta}.
\]
In words, $r_\alpha$ is the marginal mass on Alice's $\alpha$-th index, and
$\ell_\beta$ is the marginal mass on Bob's $\beta$-th index.

For each Alice index $\alpha$, define the Bob-side vector
\[
        R_\alpha:=\sum_\beta \psi_{\alpha\beta}f_\beta\in\mathcal H_B\,.
\]
Then, $\|R_\alpha\|^2=r_\alpha$.
Similarly, for each Bob index \(\beta\), define the Alice-side vector
\[
        C_\beta:=\sum_\alpha \psi_{\alpha\beta}e_\alpha\in\mathcal H_A,
\]
and $\|C_\beta\|^2=\ell_\beta$. Now, recall that the $I_{3322}$ Bell operator is
\[
        I_{3322}
        =
        Z_A\otimes Z_B+\frac12 Z_A\otimes I
        -\frac12 I\otimes Z_B
        +\frac12 X_A\otimes W_B
        +\frac12 W_A\otimes X_B
        -I\otimes I.
\]
The ``diagonal'' part of the Bell expression is
\begin{align}
&\bra{\psi}
\left(
Z_A\otimes Z_B+\frac12 Z_A\otimes I
-\frac12 I\otimes Z_B-I\otimes I
\right)\ket{\psi} \nonumber \\
&\qquad =
\sum_{\alpha,\beta}
p_{\alpha\beta}
\left(
z_\alpha w_\beta+\frac12 z_\alpha-\frac12 w_\beta-1
\right) \nonumber \\
&\qquad =
\sum_{\alpha,\beta}p_{\alpha\beta}d(z_\alpha,w_\beta). \label{eq:diagonal-bound}
\end{align}
Next, we bound the term containing \(X_AW_B\) (the $W_AX_B$ term is analogous). Define the operator
\[
        K_A
        :=
        \operatorname{Tr}_A\!\left[
        (X_A\otimes I)|\psi\rangle\langle\psi|
        \right]
\]
on $\mathcal{H}_B$.
Then it is straightforward to verify that $K_A$ is Hermitian (since $X_A$ is Hermitian), and by properties of the partial trace,
\begin{equation}
\label{eq:off-diagonal-term}
        \langle\psi | X_A\otimes W_B | \psi\rangle
        =
        \operatorname{Tr}(W_BK_A).
\end{equation}
Let us compute \(K_A\) in terms of the vectors \(R_\alpha\).  Note first, by the definition of $R_{\alpha}$, that
\[
        \ket{\psi}=\sum_\alpha e_\alpha\otimes R_\alpha \,.
\]
So,
\[
\begin{aligned}
        K_A
        &=
        \operatorname{Tr}_A
        \left[
        \sum_{\alpha,\gamma}
        X_A|e_\alpha\rangle\langle e_\gamma|
        \otimes
        |R_\alpha\rangle\langle R_\gamma|
        \right] \\
        &=
        \sum_{\alpha,\gamma}
        \langle e_\gamma,X_Ae_\alpha\rangle\,
        |R_\alpha\rangle\langle R_\gamma|.
\end{aligned}
\]
From the structural identities \eqref{eq:structure-1} and \eqref{eq:structure-2} for the 2-dimensional blocks (and the special cases of 1-dimensional blocks), the above becomes
$$ K_A = \sum_{\alpha} \langle e_{\bar{\alpha}}| X_A |e_{\alpha} \rangle |R_\alpha\rangle\langle R_{\bar{\alpha}}| \,, $$
with the convention that we define $\bar{\alpha} = \alpha$ for 1-dimensional blocks.

Now, for a 2-dimensional block $b=\{\alpha,\bar\alpha\}$, we have $\langle e_{\bar{\alpha}}| X_A |e_{\alpha} \rangle = 2 a(z_{\alpha}) = 2a(z_{\bar{\alpha}})$, where recall that we defined $a(t) = \frac12 \sqrt{1-t^2}$. So, the contribution to $K_A$ from a block $b=\{\alpha,\bar\alpha\}$ is
\begin{equation}
\label{eq:2d-block}
        K_A(b)
        =
        2a(z_{\alpha})
        \left(
        |R_\alpha\rangle\langle R_{\bar\alpha}|
        +
        |R_{\bar\alpha}\rangle\langle R_\alpha|
        \right).
\end{equation}
For a one-dimensional block \(b=\{\alpha\}\), where \(z_\alpha=0\)
and \(X_Ae_\alpha=\varepsilon_\alpha e_\alpha\), the block contribution is
\[
        K_A(b)
        =
        \varepsilon_\alpha |R_\alpha\rangle\langle R_\alpha|.
\]
For a one-dimensional block \(b=\{\alpha\}\) where \(z_\alpha=\pm1\), the block
contribution is $K_A(b)= 0$.  We can then write
\[
        K_A=\sum_{b} K_A(b),
\]
where the sum is over all the blocks.

Since \(W_B\) is Hermitian and unitary, \(\|W_B\|_{\mathrm{op}}=1\).
By trace-norm duality,
\[
        \operatorname{Tr}(W_BK_A)\le \|K_A\|_1.
\]
Substituting this back into \eqref{eq:off-diagonal-term} gives
\begin{equation}
\label{eq:22}
\langle\psi | X_A\otimes W_B | \psi\rangle \leq \| K_A\|_1 \,.
\end{equation}
By a triangle inequality, we have
\begin{equation}
\label{eq:23}
        \|K_A\|_1
        \leq
        \sum_{b}\|K_A(b)\|_1.
\end{equation}
Now, for a rank-one operator \(|u\rangle\langle v|\), we have
\[
        \bigl\||u\rangle\langle v|\bigr\|_1
        =
        \|u\|\,\|v\|.
\]
Hence, for a two-dimensional Alice block $b = \{\alpha, \bar{\alpha}\}$, we have, from \eqref{eq:2d-block} (and a triangle inequality),
\begin{align}
        \|K_A(b)\|_1
        &\le
        2a(z_{\alpha})
        \left(
        \bigl\||R_\alpha\rangle\langle R_{\bar\alpha}|\bigr\|_1
        +\bigl\||R_{\bar\alpha}\rangle\langle R_\alpha|\bigr\|_1
        \right) \nonumber\\
        &=4a(z_{\alpha})\|R_\alpha\|\,\|R_{\bar\alpha}\| \nonumber \\
        &= 4a(z_{\alpha})\sqrt{r_\alpha r_{\bar\alpha}}. \label{eq:83}
\end{align}

Now, let $H:[-1,1]\to[0,\infty)$ be the certificate supplied by Theorem~\ref{thm:scalar-certificate}. We remark that Theorem~\ref{thm:scalar-certificate} guarantees existence of a \emph{concave} function $H$ satisfying the desired inequalities. However, concavity is not actually needed for the current proof of Theorem~\ref{thm:universal-upper}, i.e.\ of the $q_*$ upper bound on the value of any strategy. Instead, concavity will be crucial later in Section~\ref{sec:no-finite-dimensional-attainment} when we show that no finite-dimensional strategy can attain $q_*$.

Theorem~\ref{thm:scalar-certificate} gives us that $H$ satisfies
\begin{equation*}
        H(v)+d(u,v)+H(-u)\le q_*
        \qquad\forall u,v\in[-1,1],
\end{equation*}
and
\begin{equation*}
        H(t)H(-t)\ge a(t)^2
        \qquad\forall t\in[-1,1].
\end{equation*}
Then, for a two-dimensional block $b = \{\alpha, \bar{\alpha}\}$, we have
\begin{equation}
\label{eq:34}
        K_A(b) \leq 4a(z_{\alpha})\sqrt{r_\alpha r_{\bar\alpha}}
        \le 4\sqrt{H(z_{\alpha})H(-z_{\alpha})r_\alpha r_{\bar\alpha}} \le
        2H(-z_{\alpha})r_\alpha+2H(z_{\alpha})r_{\bar\alpha}.
\end{equation}
where the last inequality is by AM-GM.

For a one-dimensional block $b = \{\alpha\}$ with $z_{\alpha} = 0$,
\[
        \|K_A(b)\|_1
        =
        r_\alpha,
\] 
Since $H(0)^2\ge a(0)^2=1/4$, and so
\(H(0)\ge1/2\), we have
\begin{equation}
\label{eq:211}
        \|K_A(b)\|_1
       = r_\alpha\le 2H(0)r_\alpha.
\end{equation}
Finally, one-dimensional blocks with $z_{\alpha} = \pm 1$ contribute zero. 

So, summing over all the blocks, and combining with Equations \eqref{eq:22} and \eqref{eq:23}, gives
\begin{equation}
\label{eq:off-diagonal-bound1}
       \langle\psi | X_A\otimes W_B | \psi\rangle  \le
        2 \sum_\alpha H(-z_\alpha)r_\alpha.
\end{equation}
where the sum on the RHS now ranges over all possible ``labels'' $\alpha$ (so it includes both labels involved in a two-dimensional block).
The analysis is identical on Bob's side, and we similarly get
\begin{equation}
\label{eq:off-diagonal-bound2}
        \langle\psi | W_A\otimes X_B | \psi\rangle
        \le
        2 \sum_\beta H(w_\beta)\ell_\beta.
\end{equation}
Note that we choose the upper bound $2 \sum_{\beta} H(w_{\beta}) \ell_{\beta}$ instead of $2 \sum_{\beta} H(-w_{\beta}) \ell_{\beta}$ for Bob. Both upper bounds are valid (we have the freedom to choose how to split factors at the AM-GM step above), but the stated one is what is compatible with our certificate's guarantees in order to achieve the final bound.
Combining the bounds from \eqref{eq:off-diagonal-bound1} and \eqref{eq:off-diagonal-bound2} with the bound for the ``diagonal'' part from \eqref{eq:diagonal-bound}, we have
\begin{align}
\langle \psi | I_{3322} |\psi \rangle
&\le
\sum_{\alpha,\beta}p_{\alpha\beta}d(z_\alpha,w_\beta)
+
\sum_\alpha H(-z_\alpha)r_\alpha
+
\sum_\beta H(w_\beta)\ell_\beta \nonumber\\
&=
\sum_{\alpha,\beta}
p_{\alpha\beta}
\bigl[
d(z_\alpha,w_\beta)+H(-z_\alpha)+H(w_\beta)
\bigr]. \label{eq:35}
\end{align}
Since
        $H(v)+d(u,v)+H(-u)\le q_*$ for all $u,v\in[-1,1]$,
each summand is at most $q_*$. Since $\sum_{\alpha,\beta}p_{\alpha\beta}=1$, we get
$$\langle \psi | I_{3322} |\psi \rangle
\leq  q_* \,,$$
as desired. 

The proof above assumed a finite-dimensional
strategy. However, the same $q_*$ bound extends to separable infinite-dimensional strategies. This is because the set of correlations achievable with infinite-dimensional strategies lies in the closure of the set of correlations achievable with finite-dimensional strategies~\cite{scholz2008tsirelson}, and the \(I_{3322}\) value is
a linear function of the correlation. Hence, the bound
\(I_{3322}\leq q_*\) passes immediately to the limit. This completes the proof of Theorem~\ref{thm:universal-upper}.
\end{proof}

\subsection{Proving existence of the certificate}
\label{sec:certificate-proof}
In this section, we prove Theorem~\ref{thm:scalar-certificate}, i.e.\ that the desired certificate $H$ exists, which in turn concludes the proof of the $q_*$ upper bound from Theorem~\ref{thm:universal-upper}. 

\subsubsection{Riccati pivots}
\label{sec:riccati-pivots}
For convenience, we recall once more the definition of $q_* = \sup_{n\ge1}Q_n$, where $$Q_n=\sup_{\substack{c_1,\ldots,c_{n-1}\in[-1,1]\\ c_0,c_n \in \{\pm 1\}}} \lambda_{\max}\left(J_n(c)\right)\,,$$
and $J_n(c)$ is the Jacobi matrix corresponding to the Jacobi path $c = (c_0, \ldots, c_n)$ (from Definition~\ref{def:signed-endpoint-jacobi-path}).
Recall also the definitions of $a(t)$ and $d(u,v)$ from Equation \eqref{eq:10}.

For an overview of the main ideas we refer the reader to the technical overview.

\begin{lemma}[Riccati pivots]
\label{lem:pivots}
Fix \(q>q_*\). Let $m \in \mathbb{N}$. Let $c = (c_0, \ldots, c_m)$ be a ``path'' with $c_0 = 1$ and $c_1, \ldots, c_m \in [-1,1]$ (here we don't strictly require the right endpoint to be $\pm 1$ as in Definition~\ref{def:signed-endpoint-jacobi-path}). Define
\[
        p_1=q-d(c_0,c_1),
\]
and, recursively,
\[
        p_i
        =
        q-d(c_{i-1},c_i)
        -
        \frac{a(c_{i-1})^2}{p_{i-1}},
        \qquad 2\le i\le m.
\]
Then every \(p_i\) is well-defined and satisfies
\[
        p_i\ge q-q_*>0.
\]
We refer to the $p_i$ as the ``Riccati pivots'' corresponding to the path $c$.
\end{lemma}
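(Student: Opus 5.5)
The plan is to identify the recursion with the LDL (Gaussian elimination) pivots of the tridiagonal matrix $A_i(q,c):=qI_i-J_i(c_0,\ldots,c_i)$, and to get positivity and the lower bound from an eigenvalue bound on every prefix. The argument has three steps, carried out by induction on $i$.

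\emph{Step 1: prefix bound.} For each $i\le m$, I will show that $\lambda_{\max}(J_i(c_0,\ldots,c_i))\le q_*$. Consider the genuine Jacobi path $c'=(c_0,\ldots,c_i,1)$ of length $i+1$. Since $c_0=1$ and the last endpoint is $1$, this is a valid Jacobi path in the sense of Definition~\ref{def:signed-endpoint-jacobi-path}. Its Jacobi matrix $J_{i+1}(c')$ has upper-left $i\times i$ principal submatrix exactly equal to $J_i(c_0,\ldots,c_i)$. This holds because the diagonal entries $d(c_{j-1},c_j)$ and off-diagonal entries $a(c_j)$ for $j\le i$ do not involve the appended label, apart from the coupling $a(c_i)$ to the new coordinate. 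By Cauchy interlacing (or simply restricting the Rayleigh quotient to vectors supported on the first $i$ coordinates),
\[
\lambda_{\max}(J_i(c_0,\ldots,c_i))\le\lambda_{\max}(J_{i+1}(c'))\le Q_{i+1}\le q_*.
\]
Hence $A_i(q,c)\succeq(q-q_*)I_i\succ0$.

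\emph{Step 2: the recursion gives the pivots.} I will show by induction that, as long as $p_1,\ldots,p_{i-1}$ are nonzero, $A_i=L_iD_iL_i^{\mathsf T}$ with $D_i=\operatorname{diag}(p_1,\ldots,p_i)$ and $L_i$ unit lower bidiagonal with subdiagonal entries $-a(c_{j})/p_{j}$. Expanding the product entrywise, the $(j,j)$ entry of $LDL^{\mathsf T}$ equals $p_j+a(c_{j-1})^2/p_{j-1}$. Setting this equal to $q-d(c_{j-1},c_j)$ is precisely the stated recursion. The $(j,j+1)$ entry is $-a(c_j)$, as required.

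In parallel, I will use the Schur complement / determinant identity $\det A_i=p_i\det A_{i-1}$, so that $p_i=\det A_i/\det A_{i-1}$. Step 1 makes every leading determinant positive, so each $p_i>0$ and the next division in the recursion is well-defined. This closes the induction for well-definedness.

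\emph{Step 3: quantitative bound.} From the factorization, $A_i^{-1}=L_i^{-\mathsf T}D_i^{-1}L_i^{-1}$. Since $L_i^{-1}$ is unit lower triangular, its last row is $e_i^{\mathsf T}L_i^{-1}$, and $L_i^{-1}e_i=e_i$, so the last column of $L_i^{-1}$ is $e_i$. This gives $(A_i^{-1})_{ii}=1/p_i$. Equivalently, $1/p_i$ is the inverse of the Schur complement of $A_{i-1}$ in $A_i$.

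Step 1 gives $A_i^{-1}\preceq(q-q_*)^{-1}I$, so $1/p_i=e_i^{\mathsf T}A_i^{-1}e_i\le(q-q_*)^{-1}$, that is, $p_i\ge q-q_*>0$.

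I expect the only mild obstacle to be the bookkeeping in Step 2: making sure the recursion is exactly the LDL pivot recursion, with the correct index on $a(c_{i-1})$, which couples coordinates $i-1$ and $i$. Everything else is standard.
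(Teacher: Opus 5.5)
Your proposal is correct and follows essentially the same route as the paper: bound each prefix $J_i(c_0,\ldots,c_i)$ by $q_*$ via interlacing after appending a $+1$ label, identify the recursion with the $LDL^{\mathsf T}$ pivots of $qI_i-J_i$, and deduce $p_i\ge q-q_*$ from $(A_i^{-1})_{ii}=1/p_i$. The differences are cosmetic: you append $+1$ unconditionally, which avoids the paper's case split on whether $c_i=\pm1$, and you justify positivity of the pivots through $p_i=\det A_i/\det A_{i-1}$ rather than citing the existence of the factorization for positive definite tridiagonal matrices.
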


\begin{proof}
For \(1\le r\le m\), we define \(J_r(c_0,\ldots,c_r)\) to be the
\(r\times r\) ``Jacobi matrix'' associated with the truncated path
$(c_0,\ldots,c_r)$ (we use quotations marks because $(c_0,\ldots,c_r)$ may not technically be a Jacobi path, since the right endpoint may not be $\pm1 $). More precisely, let
\[
     \bigl(J_r(c_0,\ldots,c_r)\bigr)_{ii}
        =
        d(c_{i-1},c_i),
\]
and
\[
        \bigl(J_r(c_0,\ldots,c_r)\bigr)_{i,i+1}
        =
        \bigl(J_r(c_0,\ldots,c_r)\bigr)_{i+1,i}
        =
        a(c_i).
\]
Set
\[
        A_r=qI_r-J_r(c_0,\ldots,c_r).
\]
Then \(A_r\) is tridiagonal and symmetric with diagonal entries
\[
        \alpha_i=q-d(c_{i-1},c_i),
        \qquad 1\le i\le r,
\]
and off-diagonal entries
\[
        \beta_i=-a(c_i),
        \qquad 1\le i<r.
\]

We first show that
$$\lambda_{\max} \left( J_r(c_0,\ldots,c_r) \right)\le q_*\,.$$
If \(c_r= \pm 1\), then \(J_r(c_0,\ldots,c_r)\) is the Jacobi matrix of a bona fide Jacobi path as in Definition~\ref{def:signed-endpoint-jacobi-path}, and, by definition, $\lambda_{\max} \left( J_r(c_0,\ldots,c_r) \right) \le q_*$.
If $c_r \in (-1,1)$, append the value $c_{r+1} = 1$ to the path. Then
\[
        c_0=1,\quad c_1,\ldots,c_r,\quad c_{r+1}=1
\]
is a bona fide Jacobi path and
\(J_r(c_0,\ldots,c_r)\) is a principal submatrix of the bona fide
\((r+1)\times(r+1)\) Jacobi matrix $J_{r+1}(c_0, \ldots, c_{r+1})$.  By the Cauchy interlacing theorem, and since a Jacobi matrix is Hermitian, the largest eigenvalue of the principal submatrix is upper bounded by the largest eigenvalue of the full Jacobi matrix, and thus by \(q_*\). Therefore, we have
\[
        A_r=qI_r-J_r(c_0,\ldots,c_r)
        \succeq (q-q_*)I_r
        \succ0.
\]
In particular, \(A_r\) is strictly positive definite for every \(r\).

We now compute the \(LDL^T\) factorization of \(A_r\).  Since \(A_r\)
is positive definite and tridiagonal, it has a factorization
\[
        A_r=L_rD_rL_r^T,
\]
where \(L_r\) is lower bidiagonal with all ones on the diagonal, and
\[
        D_r=\operatorname{diag}(p^{(r)}_1,\ldots,p^{(r)}_r)
\]
has strictly positive diagonal entries.  Write
\[
        (L_r)_{i+1,i}=\ell_i.
\]
Multiplying out \(L_rD_rL_r^T\), its first diagonal entry is $p^{(r)}_1$. So, we have
\[
        p^{(r)}_1=\alpha_1=q-d(c_0,c_1).
\]
where recall that the $\alpha_i$ are the diagonal entries of $A_r$. Considering the $(i,i+1)$-entry gives the identity
\[
        \beta_i=p^{(r)}_i\ell_i,
\]
so
\[
        \ell_i=\frac{\beta_i}{p^{(r)}_i}
        =
        -\frac{a(c_i)}{p^{(r)}_i}.
\]
Finally, considering the \((i+1,i+1)\)-entry gives the identity
\[
        \alpha_{i+1}
        =
        p^{(r)}_{i+1}
        +
        \bigl(p^{(r)}_i\bigr)\ell_i^2.
\]
Substituting the formula for \(\ell_i\), we obtain
\[
        p^{(r)}_{i+1}
        =
        \alpha_{i+1}
        -
        \frac{\beta_i^2}{p^{(r)}_i}
        =
        q-d(c_i,c_{i+1})
        -
        \frac{a(c_i)^2}{p^{(r)}_i}.
\]
The latter is well-defined since $p_i^{(r)}$ is strictly positive. Thus the diagonal entries of \(D_r\) are exactly the Riccati pivots
defined above:
\begin{equation}
\label{eq:pivots}
        p^{(r)}_i=p_i,
        \qquad 1\le i\le r.
\end{equation}

It remains to prove the lower bound \(p_i\ge q-q_*\).  Fix \(i\).  Apply
the preceding discussion to the truncated matrix
\[
        A_i=qI_i-J_i(c_0,\ldots,c_i).
\]
The final diagonal pivot in the \(LDL^T\) factorization of \(A_i\) is
\(p_i\).  Since
\[
        A_i\succeq (q-q_*)I_i,
\]
we have
\[
        A_i^{-1}\preceq (q-q_*)^{-1}I_i.
\]
Since $L_i$ is lower bidiagonal with ones on the diagonal, one can verify that the last column of \(L_i^{-1}\) is
the last standard basis vector. Therefore, from
\[
        A_i^{-1}=L_i^{-T}D_i^{-1}L_i^{-1},
\]
we get
\[
(A_i^{-1})_{ii}=\frac1{p_i^{(i)}} = \frac{1}{p_i} \,,
\]
where the last equality is from Equation \eqref{eq:pivots} (which we have shown holds for  all $r$).
Hence
\[
        \frac1{p_i}
        =
        (A_i^{-1})_{ii}
        \le
        \frac1{q-q_*}.
\]
Thus
\[
        p_i\ge q-q_*>0.
\]
Since \(i\) was arbitrary, the above holds for all $i$.
\end{proof}

\begin{lemma}[Bellman supersolution]
\label{lem:bellman}
For every $q>q_*$ there is a function $h_q:[-1,1]\to(0,\infty)$ such that
\begin{equation}
\label{eq:bellman}
        h_q(v)+d(u,v)+\frac{a(u)^2}{h_q(u)}\le q
        \qquad\forall u,v\in[-1,1].
\end{equation}
Moreover, for all $v \in [-1,1]$,
\[
        q-q_*\le h_q(v)\le q-d(1,v)=q+\frac{1-v}{2}.
\]
\end{lemma}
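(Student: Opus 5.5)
I will define $h_q$ as a value function, exactly as sketched in the technical overview. For $v\in[-1,1]$ set
\[
        h_q(v):=\inf\bigl\{\,p_m(c)\;:\;m\ge1,\ c=(c_0,\ldots,c_m),\ c_0=1,\ c_m=v,\ c_1,\ldots,c_{m-1}\in[-1,1]\,\bigr\},
\]
where $p_m(c)$ is the final Riccati pivot of Lemma~\ref{lem:pivots}. That lemma applies to every such path because $q>q_*$, so every pivot is well defined and $p_m(c)\ge q-q_*$. Taking the infimum gives the lower bound $h_q(v)\ge q-q_*>0$. In particular $h_q$ maps into $(0,\infty)$.

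For the upper bound I use the one-step path $(1,v)$, which has $m=1$. Its only pivot is $p_1=q-d(1,v)$. Since $d(1,v)=v+\frac{1-v}{2}-1=\frac{v-1}{2}$, this equals $q+\frac{1-v}{2}$. Hence $h_q(v)\le q-d(1,v)=q+\frac{1-v}{2}$, and $h_q$ is finite.

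For the Bellman inequality \eqref{eq:bellman}, fix $u,v$ and take a sequence of admissible paths $c^{(k)}$ ending at $u$ whose final pivots $p^{(k)}$ converge to $h_q(u)$. Appending $v$ to each path gives a path that starts at $1$ and ends at $v$. By the recursion of Lemma~\ref{lem:pivots}, its final pivot is
\[
q-d(u,v)-\frac{a(u)^2}{p^{(k)}}.
\]
This pivot is well defined because Lemma~\ref{lem:pivots} applies to the extended path, and it bounds $h_q(v)$ from above. Since $p^{(k)}\ge q-q_*>0$ and $h_q(u)>0$, the map $r\mapsto a(u)^2/r$ is continuous along the sequence. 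Letting $k\to\infty$ gives
\[
h_q(v)\le q-d(u,v)-\frac{a(u)^2}{h_q(u)},
\]
which rearranges to \eqref{eq:bellman}.

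There is no serious obstacle in this argument; the care lies in two points. The first is checking that appending an arbitrary $v\in[-1,1]$ keeps the path within the hypotheses of Lemma~\ref{lem:pivots}. It does, because that lemma places no condition on the right endpoint. The second is that the strict positivity $q-q_*>0$ is what keeps the limit of the term $a(u)^2/p^{(k)}$ under control.
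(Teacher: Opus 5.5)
Your proposal is correct and follows the paper's proof essentially step for step. Both define $h_q(v)$ as the infimum of final Riccati pivots over paths from $1$ to $v$, take the two bounds from Lemma~\ref{lem:pivots} and from the one-step path $(1,v)$, and obtain \eqref{eq:bellman} by appending $v$ to near-optimal paths ending at $u$ and passing to the limit.
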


\begin{proof}
For $v\in[-1,1]$, define $h_q(v)$ to be the infimum of the final Riccati pivot $p_m$ over all $m\in \mathbb{N}$ and all paths $c = (c_0,c_1,\ldots,c_m)$ with $c_0 = 1$ and $c_m =v$. Lemma~\ref{lem:pivots} gives $h_q(v)\ge q-q_*>0$. The path $(1,v)$ gives the upper bound.

Now, fix $u,v \in [-1,1]$. Choose a sequence of paths $\{c^{(k)}\}$ with right endpoint $u$ whose final pivots tend to $h_q(u)$ (such a sequence exists by definition of $h_q(u)$ as the infimum). Appending the value $v$ to each of the paths in the sequence produces a new sequence of paths $\{c'^{(k)}\}$ ending at $v$ whose final pivots are
\[
        q-d(u,v)-\frac{a(u)^2}{p^{(k)}},
\]
where $p^{(k)}$ is the final pivot for the old path $c^{(k)}$. By definition,
$$h_q(v) \leq q - d(u,v) -\frac{a(u)^2}{p^{(k)}} \,.$$
Rearranging and taking the limit as $k \rightarrow \infty$ (under which $p^{(k)} \rightarrow h_q(u)$), gives \eqref{eq:bellman}.
\end{proof}

\subsubsection{Symmetrizations}
\label{sec:product}

For convenience, we recall here that $a(t) = \frac12\sqrt{1-t^2}$, and $d(u,v) = uv + \frac{u-v}{2} -1$ for $t, u,v \in [-1,1]$.
\begin{lemma}
\label{lem:product-sat}
For every \(q>q_*\) there is a bounded function
\(h_0:[-1,1]\to[0,\infty)\) such that
\begin{equation}
\label{eq:product-sat}
        h_0(t)h_0(-t)=a(t)^2
        \qquad\forall t\in[-1,1],
\end{equation}
and
\begin{equation}
\label{eq:product-cert}
        h_0(v)+d(u,v)+h_0(-u)\le q
        \qquad\forall u,v\in[-1,1].
\end{equation}
\end{lemma}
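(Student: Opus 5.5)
The idea is to symmetrize the Bellman supersolution $h_q$ from Lemma~\ref{lem:bellman} using two symmetries of the data: $a(-t)=a(t)$ and $d(-v,-u)=d(u,v)$. The second one is checked directly, since $(-v)(-u)+\frac{-v+u}{2}-1=uv+\frac{u-v}{2}-1$.

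Fix $q>q_*$ and let $h:=h_q$ from Lemma~\ref{lem:bellman}. Recall that $q-q_*\le h\le q+1$. Define the ``dual'' function $g(t):=a(t)^2/h(-t)$, which is well defined because $h>0$. In this notation, \eqref{eq:bellman} reads
\[
h(v)+d(u,v)+g(-u)\le q\qquad\forall u,v\in[-1,1].
\]
Now substitute $u\mapsto -v'$ and $v\mapsto -u'$. This gives $h(-u')+d(-v',-u')+a(v')^2/h(-v')\le q$. By the symmetry of $d$, this becomes
\[
g(v')+d(u',v')+h(-u')\le q\qquad\forall u',v'.
\]
Writing $S(u,v):=q-d(u,v)$, we therefore have both
\[
h(v)+g(-u)\le S(u,v)\quad\text{and}\quad g(v)+h(-u)\le S(u,v).
\]

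Next, take the geometric mean
\[
h_0(t):=\sqrt{h(t)g(t)}=a(t)\sqrt{h(t)/h(-t)}.
\]
The product identity \eqref{eq:product-sat} is then immediate:
\[
h_0(t)h_0(-t)=a(t)^2\sqrt{\tfrac{h(t)}{h(-t)}\cdot\tfrac{h(-t)}{h(t)}}=a(t)^2.
\]
This also holds at $t=\pm1$, where both sides vanish. Boundedness follows from the two-sided bounds on $h$:
\[
0\le h_0(t)\le \tfrac12\sqrt{(q+1)/(q-q_*)}.
\]

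The only real step is \eqref{eq:product-cert}. Put $x=h(v)$, $y=g(v)$, $z=h(-u)$ and $w=g(-u)$, all of which are nonnegative. By Cauchy--Schwarz applied to the vectors $(\sqrt x,\sqrt w)$ and $(\sqrt y,\sqrt z)$,
\[
h_0(v)+h_0(-u)=\sqrt{xy}+\sqrt{wz}\le\sqrt{(x+w)(y+z)}.
\]
Both factors under the root are at most $S(u,v)$ by the two inequalities above, and $S(u,v)\ge 0$ since it dominates a sum of nonnegative numbers. Hence $h_0(v)+h_0(-u)\le S(u,v)$, which is exactly \eqref{eq:product-cert}.

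I expect the main obstacle to be spotting the change of variables $(u,v)\mapsto(-v,-u)$ that turns the single Bellman inequality into the pair of inequalities for $(h,g)$. It is also necessary to pair the terms correctly in the Cauchy--Schwarz step: $h(v)$ goes with $g(-u)$, and $g(v)$ goes with $h(-u)$. Once both of these are in place, the rest is routine.
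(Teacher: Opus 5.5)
Your proof is correct and is essentially the paper's argument written multiplicatively instead of logarithmically: your dual function $g(t)=a(t)^2/h_q(-t)$ is exactly $a(t)e^{p^\sharp(t)}$ in the paper's notation $h_q=a e^{p}$, your substituted inequality is the invariance $F(p^\sharp)=F(p)$, your $h_0=\sqrt{h_q g}$ is the paper's $a e^{p_{odd}}$, and your Cauchy--Schwarz step plays the role of the convexity of $F$ (plain AM--GM would also suffice). One small advantage of your formulation is that it works directly on all of $[-1,1]$, so you do not need the paper's separate limiting argument to extend the certificate inequality to the endpoints $\pm1$.
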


\begin{proof}
Let \(h_q\) be the Bellman supersolution from
Lemma~\ref{lem:bellman}. Thus,
\begin{equation}
\label{eq:bellman}
        h_q(v)+d(u,v)+\frac{a(u)^2}{h_q(u)}\le q
        \qquad\forall u,v\in[-1,1],
\end{equation}
and, for all $t \in [-1,1]$,
\[
        q-q_*\le h_q(t)\le q+\frac{1-t}{2}\le q+1.
\]
In particular, \(h_q(t)>0\) for every \(t\in[-1,1]\).

Now, note that on the open interval \((-1,1)\), we have \(a(t)>0\).  Therefore there is
a real-valued function \(p:(-1,1)\to\mathbb R\) defined by
\[
        h_q(t)=a(t)e^{p(t)},
        \qquad |t|<1.
\]
Equivalently,
\[
        p(t)=\log\frac{h_q(t)}{a(t)}.
\]
Substituting this into \eqref{eq:bellman}, for
\(u,v\in(-1,1)\), gives
\[
        d(u,v)+a(v)e^{p(v)}+a(u)e^{-p(u)}\le q.
\]
For any real-valued function \(r:(-1,1)\to\mathbb R\), define
\[
        F(r)
        :=
        \sup_{u,v\in(-1,1)}
        \left\{
        d(u,v)+a(v)e^{r(v)}+a(u)e^{-r(u)}
        \right\}.
\]
The preceding inequality equivalently says that
\[
        F(p)\le q.
\]
We now show that \(F\) is convex.  Let $r_0,r_1: (-1,1) \rightarrow \mathbb{R}$, and let $\theta\in[0,1]$.  Let
\[
        r_\theta=\theta r_0+(1-\theta)r_1.
\]
For fixed $u,v \in (-1,1)$, convexity of the exponential function gives
\[
        e^{r_\theta(v)}
        \le
        \theta e^{r_0(v)}+(1-\theta)e^{r_1(v)} \,,
\]
and
\[
        e^{-r_\theta(u)}
        =
        e^{\theta(-r_0(u))+(1-\theta)(-r_1(u))}
        \le
        \theta e^{-r_0(u)}+(1-\theta)e^{-r_1(u)}.
\]
So, since $a(v), a(u) >0$,
\[
\begin{aligned}
&d(u,v)+a(v)e^{r_\theta(v)}+a(u)e^{-r_\theta(u)}\\
&\qquad\le
\theta\left[
d(u,v)+a(v)e^{r_0(v)}+a(u)e^{-r_0(u)}
\right]\\
&\qquad\quad+
(1-\theta)\left[
d(u,v)+a(v)e^{r_1(v)}+a(u)e^{-r_1(u)}
\right].
\end{aligned}
\]
Taking the supremum over $u,v\in(-1,1)$ gives
\[
        F(r_\theta)\le \theta F(r_0)+(1-\theta)F(r_1).
\]
Thus, $F$ is convex.

Next we record a symmetry of \(F\).  For a function $r: (-1,1) \rightarrow \mathbb{R}$, define its
``reflected sign-reversal'' \(r^\sharp\) by
\[
        r^\sharp(t):=-r(-t),
        \qquad t \in (-1,1).
\]
We claim that
\[
        F(r^\sharp)=F(r).
\]
Indeed,
\[
\begin{aligned}
F(r^\sharp)
&=
\sup_{u,v\in(-1,1)}
\left\{
d(u,v)+a(v)e^{r^\sharp(v)}+a(u)e^{-r^\sharp(u)}
\right\}\\
&=
\sup_{u,v\in(-1,1)}
\left\{
d(u,v)+a(v)e^{-r(-v)}+a(u)e^{r(-u)}
\right\}.
\end{aligned}
\]
Now make the change of variables
\[
        U=-v,\qquad V=-u.
\]
Since \(u,v\in(-1,1)\), also \(U,V\in(-1,1)\).  Using
\[
        a(-t)=a(t)
\]
and
\[
        d(-V,-U)=d(U,V),
\]
we obtain
\[
\begin{aligned}
F(r^\sharp)
&=
\sup_{U,V\in(-1,1)}
\left\{
d(U,V)+a(U)e^{-r(U)}+a(V)e^{r(V)}
\right\}\\
&=
\sup_{U,V\in(-1,1)}
\left\{
d(U,V)+a(V)e^{r(V)}+a(U)e^{-r(U)}
\right\}\\
&=
F(r).
\end{aligned}
\]

Now, define the odd part of \(p\) by
\[
        p_{odd}(t):=\frac{p(t)-p(-t)}2.
\]
We can write this as
\[
        p_{odd}=\frac{p+p^\sharp}{2}.
\]
By convexity of \(F\) and by the symmetry \(F(p^\sharp)=F(p)\),
\[
        F(p_{odd})
        =
        F\left(\frac{p+p^\sharp}{2}\right)
        \le
        \frac12F(p)+\frac12F(p^\sharp)
        =
        F(p)
        \le q.
\]

Now define $h_0: [-1,1] \rightarrow \mathbb{R}$ on the open interval by
\[
        h_0(t):=a(t)e^{p_{odd}(t)},
        \qquad t \in (-1,1),
\]
and set
\[
        h_0(1)=h_0(-1)=0.
\]
By construction, \(p_{odd}\) is odd, so
\[
        p_{odd}(-t)=-p_{odd}(t).
\]
Therefore, for $t\in (-1,1)$,
\[
\begin{aligned}
        h_0(t)h_0(-t)
        &=
        a(t)e^{p_{odd}(t)}\,a(-t)e^{p_{odd}(-t)}\\
        &=
        a(t)^2 e^{p_{odd}(t)-p_{odd}(t)}
        =
        a(t)^2.
\end{aligned}
\]
At \(t=\pm1\), both sides are zero, because \(a(\pm1)=0\) and
\(h_0(\pm1)=0\).  Hence
\[
        h_0(t)h_0(-t)=a(t)^2
        \qquad\forall t\in[-1,1].
\]

Now, the inequality \(F(p_{odd})\le q\) says that for all \(u,v\in(-1,1)\),
\[
        d(u,v)+a(v)e^{p_{odd}(v)}+a(u)e^{-p_{odd}(u)}\le q.
\]
Using the definition of \(h_0\) and the oddness of \(p_{odd}\), this becomes
\[
        d(u,v)+h_0(v)+h_0(-u)\le q
        \qquad\forall u,v\in(-1,1).
\]
Thus \eqref{eq:product-cert} holds in the interior.

It remains to justify the endpoint values and boundedness. We start with boundedness. From
\[
        h_q(t)=a(t)e^{p(t)}
\]
and the definition of \(p_{odd}\), we can write \(h_0\) directly in terms of
\(h_q\):
\[
\begin{aligned}
        h_0(t)
        &=
        a(t)\exp\left(\frac{p(t)-p(-t)}2\right)\\
        &=
        a(t)
        \sqrt{
        \frac{e^{p(t)}}{e^{p(-t)}}
        }\\
        &=
        a(t)
        \sqrt{
        \frac{h_q(t)/a(t)}{h_q(-t)/a(-t)}
        }\\
        &=
        a(t)\sqrt{\frac{h_q(t)}{h_q(-t)}}.
\end{aligned}
\]
Since
\[
        q-q_*\le h_q(s)\le q+1
        \qquad\forall s\in[-1,1],
\]
we get, for $t\in (-1,1)$,
\[
        0\le h_0(t)
        \le
        a(t)\sqrt{\frac{q+1}{q-q_*}}
        \le
        \frac12\sqrt{\frac{q+1}{q-q_*}}.
\]
Thus \(h_0\) is bounded.  


Finally, we are left with extending \eqref{eq:product-cert} from
\((-1,1)^2\) to \([-1,1]^2\). We start by observing that, since
\[
        0\le h_0(t)
        \le
        a(t)\sqrt{\frac{q+1}{q-q_*}},
\]
we have
\[
        h_0(t)\to0
        \qquad\text{as }t\to\pm1.
\]
This ``agrees'' with the definition \(h_0(\pm1)=0\).

Let \(u,v\in[-1,1]\).  Choose sequences
\(u_j,v_j\in(-1,1)\) as follows: if \(u\in(-1,1)\), take
\(u_j=u\) for all \(j\); if \(u=\pm1\), take any sequence
\(u_j\in(-1,1)\) with \(u_j\to u\).  Similarly, if
\(v\in(-1,1)\), take \(v_j=v\) for all \(j\); if \(v=\pm1\), take
\(v_j\in(-1,1)\) with \(v_j\to v\).

For every \(j\), the interior inequality gives
\[
        d(u_j,v_j)+h_0(v_j)+h_0(-u_j)\le q.
\]
As \(j\to\infty\), the term \(d(u_j,v_j)\) converges to \(d(u,v)\) by
continuity of \(d\).  If \(v\in(-1,1)\), then \(v_j=v\), so
\(h_0(v_j)=h_0(v)\) for all \(j\); if \(v=\pm1\), then
\(h_0(v_j)\to0=h_0(v)\) by the endpoint bound.  Likewise, if
\(u\in(-1,1)\), then \(h_0(-u_j)=h_0(-u)\) for all \(j\); if
\(u=\pm1\), then \(-u_j\to -u\in\{\pm1\}\), so
\(h_0(-u_j)\to0=h_0(-u)\).

Therefore, passing to the limit gives
\[
        d(u,v)+h_0(v)+h_0(-u)\le q.
\]
Thus \eqref{eq:product-cert} holds for all \(u,v\in[-1,1]\).
\end{proof}



\subsubsection{Concavification}
In the previous subsections, \ref{sec:riccati-pivots} and \ref{sec:product}, we proved the existence of a certificate function $h_0$ that is ``almost'' the desired $H$ from Theorem~\ref{thm:scalar-certificate}. There are two missing ingredients still, which we will obtain in this subsection:
\begin{itemize}
\item[(i)] The upper bound in \eqref{eq:bellman} is some $q>q_*$ rather than $q_*$ itself (and $h_0$ is allowed to depend on $q$).
\item[(ii)] The function $h_0$ is not necessarily concave.
\end{itemize}
Item (i) is essential to establish the desired $q_*$ upper bound on the value of any strategy. Item (ii) is not strictly necessary for the latter. Instead, it will be crucial in Section~\ref{sec:no-finite-dimensional-attainment} when proving that any arbitrary finite-dimensional strategy cannot attain $q_*$. Moreover, it turns out that our proof of the existence of the certificate $H$ with the correct $q_*$ upper bound does in fact go through a step in which we ``concavify'' $h_0$, as we describe next.


For a bounded function $f$ on $[-1,1]$, let $\cav(f)$ denote its least concave majorant. Equivalently,
\[
        \cav(f)(x)=\sup\left\{\sum_i\theta_i f(x_i):
        \theta_i\ge0,
        \sum_i\theta_i=1,
        \sum_i\theta_i x_i=x\right\}.
\]
By Caratheodory's theorem, in one dimension it is enough to use at most two points in the decomposition.

\begin{lemma}[Concavification preserves the certificate]
\label{lem:concavification}
Let $h_0:[-1,1]\to[0,\infty)$ be a bounded function that satisfies \eqref{eq:product-sat} and \eqref{eq:product-cert} for some $q>q_*$, and set $H=\cav(h_0)$.  Then $H$ is concave, non-negative, bounded, and satisfies
\begin{equation}
\label{eq:H-product}
        H(t)H(-t)\ge a(t)^2
        \qquad\forall t\in[-1,1],
\end{equation}
and
\begin{equation}
\label{eq:H-cert-q}
        H(v)+d(u,v)+H(-u)\le q
        \qquad\forall u,v\in[-1,1].
\end{equation}
\end{lemma}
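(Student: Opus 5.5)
We need to show that $H=\cav(h_0)$ inherits both inequalities, together with concavity, non-negativity and boundedness.

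\emph{Basic properties.} The function $H$ is concave by construction. It is finite and bounded because $h_0$ is bounded: the constant $\sup h_0$ is a concave majorant, so $0\le h_0\le H\le \sup h_0$. Since $h_0\le H$ pointwise and $H\ge 0$, we get
\[
H(t)H(-t)\ge h_0(t)h_0(-t)=a(t)^2 .
\]
This is exactly \eqref{eq:H-product}, and it comes for free.

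\emph{Reformulating the certificate inequality.} The main work is \eqref{eq:H-cert-q}. Write $\tilde h_0(u):=h_0(-u)$, and observe that $\cav(\tilde h_0)(u)=H(-u)$, since reflection preserves concavity. Inequality \eqref{eq:product-cert} says that
\[
G(u,v):=h_0(v)+h_0(-u)\le q-d(u,v)=q-uv-\tfrac{u-v}{2}+1
\]
on $[-1,1]^2$. I want to upgrade this to $H(v)+H(-u)\le q-d(u,v)$.

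\emph{Concavifying one variable at a time.} I would do this one coordinate at a time, using the structure of $d$. Fix $u$. As a function of $v$, the right-hand side $q-d(u,v)=(q+1-u/2)+v(\tfrac12-u)$ is affine in $v$. The function $v\mapsto h_0(v)$ is bounded above by the affine function $v\mapsto q-d(u,v)-h_0(-u)$. The least concave majorant lies below every affine majorant, so $H(v)\le q-d(u,v)-h_0(-u)$ for all $v$. This gives
\[
H(v)+d(u,v)+h_0(-u)\le q\qquad\text{for all }u,v.
\]
Now fix $v$ instead. The map $u\mapsto q-d(u,v)-H(v)=(q+1-H(v)+v/2)+u(-v-\tfrac12)$ is affine in $u$, and it dominates $u\mapsto h_0(-u)$. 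Hence it also dominates the least concave majorant of that function, which is $u\mapsto H(-u)$. This yields \eqref{eq:H-cert-q}.

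\emph{Where the argument hinges.} The key point is that $d$ is bilinear, so it is affine in each variable separately. This is what lets the two-step concavification go through. The one step needing care is the identity $\cav(h_0(-\cdot))(u)=\cav(h_0)(-u)$, together with the fact that affine majorants dominate the concave envelope. Both follow immediately from the two-point Carathéodory formula for $\cav$ given above: for any convex combination $\sum\theta_i x_i=x$, affinity gives $\sum\theta_i\,\ell(x_i)=\ell(x)$ for an affine $\ell$. I do not expect a genuine obstacle; the only thing to double-check is that the bilinearity of $d$ is used correctly in each fixed-variable step.
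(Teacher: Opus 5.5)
Your proof is correct and uses the same key idea as the paper: $d$ is affine in each variable separately, so convex combinations used to define $\cav$ can be pushed through the certificate inequality. The paper does this in one step, averaging the inequality over a product of $\varepsilon$-optimal convex decompositions of $v$ and $-u$. You do it one variable at a time, using the fact that affine majorants dominate the concave envelope; this avoids the $\varepsilon$ but is otherwise the same argument.
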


\begin{proof}
Boundedness and non-negativity of $H$ clearly follow from boundedness and non-negativity of $h_0$. Moreover, since $H\ge h_0$, \eqref{eq:H-product} follows from \eqref{eq:product-sat}. 

We are left with proving \eqref{eq:H-cert-q}. We know $h_0$ satisfies \eqref{eq:product-sat}, which, setting $x=v$ and $y=-u$, we can rewrite as
\begin{equation}
\label{eq:30}
        h_0(x)+h_0(y)+d(-y,x)\le q. 
\end{equation}
Fix $x,y$ and $\eps>0$.  Choose convex decompositions
\[
        x=\sum_i\theta_i x_i,
        \qquad
        y=\sum_j\eta_j y_j,
\]
such that
\[
        H(x)\le\sum_i\theta_i h_0(x_i)+\eps,
        \qquad
        H(y)\le\sum_j\eta_j h_0(y_j)+\eps.
\]
We can now apply inequality \eqref{eq:30} to every pair $(x_i,y_j)$ and average with weights $\theta_i\eta_j$.  Since $d(-y,x)$ is affine separately in $x$ and $y$, we get
\[
        \sum_i\theta_i h_0(x_i)+\sum_j\eta_j h_0(y_j)+d(-y,x)\le q.
\]
Thus $H(x)+H(y)+d(-y,x)\le q+2\eps$.  Taking the limit $\eps\downarrow0$ gives \eqref{eq:H-cert-q}.
\end{proof}

We now have a \emph{concave} function $H$ that satisfies the inequalities \eqref{eq:H-product} and \eqref{eq:H-cert-q}. The only upgrade left to perform is to improve the upper bound in \eqref{eq:H-cert-q} from some $q>q_*$ to $q_*$ itself. We describe this next.

\begin{lemma}[Compactness for uniformly bounded concave functions]
\label{lem:concave-compactness}
Let $\left(G_k:[-1,1]\to[0,M]\right)$ be a sequence of concave functions. Then
there is a subsequence $(\tilde{G}_k)$, and a concave function
\(G^\circ:(-1,1)\to[0,M]\) such that
\[
        \tilde{G}_k\to G^\circ
\]
locally uniformly on \((-1,1)\).  Moreover \(G^\circ\) has finite
one-sided endpoint limits
\[
        G_*(-1):=\lim_{x\downarrow -1}G^\circ(x),
        \qquad
        G_*(1):=\lim_{x\uparrow 1}G^\circ(x),
\]
and the function \(G_*:[-1,1]\to[0,M]\) obtained by setting
\(G_*=G^\circ\) on \((-1,1)\) and using these endpoint values is concave
on \([-1,1]\).
\end{lemma}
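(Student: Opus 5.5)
The plan is to obtain local uniform convergence from a uniform Lipschitz bound on compact subintervals, then use Arzelà--Ascoli with a diagonal argument, and finally handle the endpoints using monotonicity of difference quotients.

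\textbf{Step 1: uniform local Lipschitz bound.} Fix $0<\delta<1$ and work on $[-1+\delta,1-\delta]$. For a concave $G:[-1,1]\to[0,M]$ and points $-1+\delta\le x<y\le 1-\delta$, the three-slope inequality gives
\[
\frac{G(x)-G(-1)}{x+1}\;\ge\;\frac{G(y)-G(x)}{y-x}\;\ge\;\frac{G(1)-G(y)}{1-y}.
\]
Since $G$ takes values in $[0,M]$, the left side is at most $M/\delta$ and the right side is at least $-M/\delta$. Hence every $G_k$ is $(M/\delta)$-Lipschitz on $[-1+\delta,1-\delta]$, uniformly in $k$.

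\textbf{Step 2: extraction of the limit.} The family $(G_k)$ is uniformly bounded and, by Step 1, equicontinuous on each $[-1+1/j,\,1-1/j]$. Arzelà--Ascoli therefore gives a uniformly convergent subsequence on each such interval. A diagonal argument over $j$ produces a single subsequence $(\tilde G_k)$ converging locally uniformly on $(-1,1)$ to some $G^\circ$. The limit takes values in $[0,M]$. It is concave, because the inequality $G(\theta x+(1-\theta)y)\ge\theta G(x)+(1-\theta)G(y)$ holds for each $\tilde G_k$ and passes to pointwise limits.

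\textbf{Step 3: endpoint limits and concavity of $G_*$.} A concave function on an interval is either monotone or increasing-then-decreasing. Hence near $1$ it is eventually monotone, and the bounded limit $G_*(1)=\lim_{x\uparrow1}G^\circ(x)$ exists; the same holds at $-1$. To show that the extension $G_*$ is concave on $[-1,1]$, take the concavity inequality for $G^\circ$ at interior points $x_n\to x$ and $y_n\to y$, with $x,y$ possibly endpoints, and pass to the limit. The point $\theta x_n+(1-\theta)y_n$ converges to $\theta x+(1-\theta)y$. If that point is interior, continuity of $G^\circ$ there gives the inequality. If it is an endpoint, then $x=y$ and the inequality is trivial.

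\textbf{Main obstacle.} The only subtle point is the endpoints. The values $\tilde G_k(\pm1)$ need not converge to $G_*(\pm1)$, since concave functions may drop at the endpoints. This is why the statement claims convergence only on $(-1,1)$ and defines $G_*$ at $\pm1$ by one-sided limits, not as a limit of the sequence.
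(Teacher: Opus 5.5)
Your proof is correct and follows the paper's argument step for step: a uniform $M/\delta$ Lipschitz bound on $[-1+\delta,1-\delta]$ from the three-slope inequality, Arzel\`a--Ascoli with a diagonal subsequence, pointwise passage of concavity to the limit, and endpoint values defined as one-sided limits. Your eventual-monotonicity argument for the existence of those limits and your explicit limit argument for concavity of $G_*$ are slightly more detailed than the paper's. One small imprecision in Step~3: an endpoint value of $\theta x+(1-\theta)y$ forces $x=y$ only when $\theta\in(0,1)$, but for $\theta\in\{0,1\}$ the inequality is trivial anyway.
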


\begin{proof}
Let
\[
        K_\delta=[-1+\delta,1-\delta]\subset(-1,1),
        \qquad 0<\delta<1.
\]
We first prove that each $G_k$ is Lipschitz on $K_{\delta}$ (with a Lipschitz constant independent of $k$).
Fix $k$. For \(x<y\) in \(K_\delta\), define the ``secant slope''
\[
        S_k(x,y):=\frac{G_k(y)-G_k(x)}{y-x}.
\]
For a concave function, secant slopes decrease as the interval moves to
the right: if \(r<s<t\), then
\[
        S_k(r,s)\ge S_k(r,t)\ge S_k(s,t).
\]
Since \(-1<x<y<1\), we can apply the above to get
\[
        S_k(-1,y)\ge S_k(x,y)\ge S_k(x,1).
\]
Since $G_k$ takes values in $[0,M]$, we have
\[
        S_k(-1,y)
        =
        \frac{G_k(y)-G_k(-1)}{y+1}
        \le
        \frac{M}{y+1}
        \le
        \frac{M}{\delta},
\]
and
\[
        S_k(x,1)
        =
        \frac{G_k(1)-G_k(x)}{1-x}
        \ge
        -\frac{M}{1-x}
        \ge
        -\frac{M}{\delta}.
\]
Therefore
\[
        |S_k(x,y)|\le \frac{M}{\delta}
        \qquad\forall x<y\text{ in }K_\delta.
\]
Equivalently, the functions \(G_k\) are Lipschitz on
\(K_\delta\), with Lipschitz constant \(M/\delta\). Note that this holds for any $k$ and $\delta>0$.

By Arzelà--Ascoli, for each fixed compact interval
\[
        K_\delta=[-1+\delta,1-\delta]\subset(-1,1),
\]
the sequence \((G_k)\) has a subsequence converging uniformly on
\(K_\delta\).  We need a single subsequence that converges uniformly on
every compact subinterval of \((-1,1)\).  We obtain it by a diagonal
argument.

For \(m\ge2\), set
\[
        K_m:=\left[-1+\frac1m,\,1-\frac1m\right].
\]
Then
\[
        K_2\subset K_3\subset K_4\subset\cdots,
        \qquad
        \bigcup_{m\ge2}K_m=(-1,1).
\]
Moreover, every compact set \(K\subset(-1,1)\) is contained in \(K_m\)
for some sufficiently large \(m\).

We now construct nested subsequences.  Starting from the original sequence
\((G_k)\), Arzelà--Ascoli gives a subsequence, call it
\[
        (G^{(2)}_j)_{j\ge1},
\]
which converges uniformly on \(K_2\). Note, crucially, that we have already shown that each function in $(G^{(2)}_j)_{j\ge1}$ is Lipschitz on $K_\delta$ for any $\delta >0$, and so in particular on $K_3$. Thus, from this subsequence, apply
Arzelà--Ascoli again on \(K_3\) to obtain a further subsequence $(G^{(3)}_j)_{j\ge1}$ that converges uniformly on \(K_3\). Formally, by induction, we obtain that for each
\(m\ge2\) there is a subsequence
\[
        (G^{(m)}_j)_{j\ge1}
\]
that converges uniformly on $K_m$. Moreover, for all $m \geq 2$, $(G^{(m+1)}_j)_{j\ge1}$ is a subsequence of $(G^{(m)}_j)_{j\ge1}$.

Let \(G^{[m]}\) (with domain $K_m$) denote the uniform limit on \(K_m\). Crucially these limits, for different $m$'s, are
compatible on overlaps. Indeed, since $(G^{(m+1)}_j)_j$ is a subsequence of $(G^{(m)}_j)_j$,
and since \((G^{(m)}_j)_j\) converges uniformly on \(K_m\) to
\(G^{[m]}\), the subsequence \((G^{(m+1)}_j)_j\) also converges uniformly
on \(K_m\) to \(G^{[m]}\).  But it converges uniformly on \(K_{m+1}\), and
hence also on \(K_m\), to \(G^{[m+1]}\).  By uniqueness of uniform limits,
\[
        G^{[m+1]}=G^{[m]}
        \qquad\text{on }K_m.
\]
Therefore the functions \(G^{[m]}\) piece together to define a single
function
\[
        G^\circ:(-1,1)\to[0,M],
\]
by setting
\[
        G^\circ(x)=G^{[m]}(x)
        \quad\text{whenever }x\in K_m.
\]
This is well-defined because the limiting functions agree on overlaps. The fact that $G^\circ$ takes values in $[0,M]$ follows from the fact that each $G_k$ takes values in $[0,M]$, and thus all of the limiting functions also do.

Now choose the diagonal subsequence
\[
        \widetilde G_j:=G^{(j)}_j,
        \qquad j\ge2.
\]
Fix \(m\ge2\).  For every \(j\ge m\), the term \(\widetilde G_j\) belongs
to the subsequence \((G^{(m)}_r)_r\), because the subsequences are nested.
Since \((G^{(m)}_r)_r\) converges uniformly on \(K_m\) to \(G^{[m]}\),
the diagonal subsequence \((\widetilde G_j)_j\) also converges uniformly
on \(K_m\) to \(G^{[m]}=G^\circ|_{K_m}\).

Thus the diagonal subsequence converges uniformly on every \(K_m\).  Since
every compact subset of \((-1,1)\) is contained in some \(K_m\), this means
that
\[
        \widetilde G_j\to G^\circ
\]
locally uniformly on \((-1,1)\).\footnote{``Locally uniform'' convergence on $(-1,1)$ means exactly that $G_j\to G^\circ$ uniformly on every compact subset of $(-1,1)$.}

We now verify that the limit \(G^\circ\) is concave. Indeed, for \(x,y\in(-1,1)\) and
\(\theta\in[0,1]\), the concavity inequality for \(G_k\) gives
\[
        G_k(\theta x+(1-\theta)y)
        \ge
        \theta G_k(x)+(1-\theta)G_k(y).
\]
Letting \(k\to\infty\) gives the same inequality for \(G^\circ\).

It remains to define the endpoints.  Since \(G^\circ\) is finite and
concave on \((-1,1)\), it has one-sided limits at the endpoints, possibly
in \([-\infty,\infty]\). This is a standard fact that follows from the observation the fact that the ``secant slopes'' are monotone non-increasing (and thus have a limit). But \(0\le G^\circ\le M\), so these one-sided
limits are finite and belong to \([0,M]\).  Define
\[
        G_*(-1):=\lim_{x\downarrow -1}G^\circ(x),
        \qquad
        G_*(1):=\lim_{x\uparrow 1}G^\circ(x),
\]
and set \(G_*=G^\circ\) on \((-1,1)\).

Finally, \(G_*\) is concave on the closed interval \([-1,1]\).  This
follows by applying the concavity inequality for \(G^\circ\) at interior
points and then letting the relevant interior points tend to the endpoints.
For example, if one of \(x,y\) is an endpoint, approximate it from inside
\((-1,1)\), use concavity of \(G^\circ\), and pass to the one-sided limit.
\end{proof}




We now have the necessary ingredients to complete the proof of Theorem~\ref{thm:scalar-certificate}. Recall that we have been able to prove that, for any $q>q_*$, there exists an $H: [-1,1]\to[0,\infty)$ that is concave, bounded, and satisfies
\begin{equation}
\label{eq:32}
        H(t)H(-t)\ge a(t)^2
        \qquad\forall t\in[-1,1],
\end{equation}
and
\begin{equation}
\label{eq:31}
        H(v)+d(u,v)+H(-u)\le q
        \qquad\forall u,v\in[-1,1].
\end{equation}
All that is left is to improve the upper bound in \eqref{eq:31} to $q_*$ for an $H$ that is still concave.

\begin{proof}[Proof of Theorem~\ref{thm:scalar-certificate}]
Choose a decreasing sequence
\[
        q_k\downarrow q_*,
        \qquad q_k>q_*.
\]
For each \(k\), Lemma~\ref{lem:product-sat} gives a function $h_{0,k}: [-1,1]\to[0,\infty)$ satisfying the hypothesis of Lemma~\ref{lem:concavification}, and
the latter gives that its least concave majorant $H_k: [-1,1]\to[0,\infty)$ is concave, bounded, nonnegative, and satisfies
\begin{equation}
\label{eq:Hk-product}
        H_k(t)H_k(-t)\ge a(t)^2
        \qquad\forall t\in[-1,1]\,,
\end{equation}
and
\begin{equation}
\label{eq:Hk-cert}
        H_k(v)+d(u,v)+H_k(-u)\le q_k
        \qquad\forall u,v\in[-1,1]\,.
\end{equation}

We start by obtaining a bound on $H_k$ that is uniform across all $k$. In \eqref{eq:Hk-cert}, set
\[
        u=-v.
\]
Then
\[
        H_k(v)+d(-v,v)+H_k(v)\le q_k,
\]
so
\[
        2H_k(v)\le q_k-d(-v,v), \qquad\forall v\in[-1,1] \,.
\]
Now,
\[
        d(-v,v)
        =
        (-v)v+\frac{-v-v}{2}-1
        =
        -v^2-v-1.
\]
Therefore,
\[
        2H_k(v)
        \le
        q_k+v^2+v+1, \qquad\forall v\in[-1,1] \,.
\]
Since \(v\in[-1,1]\),
\[
        v^2+v+1\le 3.
\]
Also $q_k\le q_1$ by construction. So, we have
\[
        0\le H_k(v)\le \frac{q_1+3}{2}
        \qquad\forall v\in[-1,1],\ \forall k\,.
\]
Now, letting
\[
        M:=\frac{q_1+3}{2}.
\]
we have that the sequence $H_k$ is uniformly bounded in $[0,M]$, as desired.

By Lemma~\ref{lem:concave-compactness}, after passing to a subsequence,
there is a concave function
\[
        H^\circ:(-1,1)\to[0,M]
\]
such that
\[
        H_k\to H^\circ
\]
locally uniformly on \((-1,1)\).  Define \(H_*\) on the closed interval
\([-1,1]\) by
\[
        H_*(t)=H^\circ(t)
        \qquad (t\in(-1,1)),
\]
and by the one-sided endpoint limits
\[
        H_*(-1)=\lim_{t\downarrow -1}H^\circ(t),
        \qquad
        H_*(1)=\lim_{t\uparrow 1}H^\circ(t).
\]
By Lemma~\ref{lem:concave-compactness}, these endpoint limits are well-defined, and belong to $[0,M]$. Moreover, $H_*$ is concave on
\([-1,1]\), and
\[
        0\le H_*(t)\le M
        \qquad\forall t\in[-1,1].
\]
We now pass the two inequalities \eqref{eq:Hk-product} and \eqref{eq:Hk-cert} to the limit.

We start with \eqref{eq:Hk-product}. Let \(t\in(-1,1)\).  Since
\(H_k\to H_*\) locally uniformly on \((-1,1)\), we have
\[
        H_k(t)\to H_*(t),
        \qquad
        H_k(-t)\to H_*(-t)\,.
\]
Taking \(k\to\infty\) in
\[
        H_k(t)H_k(-t)\ge a(t)^2
\]
gives
\[
        H_*(t)H_*(-t)\ge a(t)^2
        \qquad(t\in(-1,1)).
\]
At the endpoints \(t=\pm1\), the right-hand side is zero:
\[
        a(\pm1)^2=0.
\]
Since \(H_*\ge0\), we also have
\[
        H_*(\pm1)H_*(\mp1)\ge0=a(\pm1)^2.
\]
Thus 
\[
        H_*(t)H_*(-t)\ge a(t)^2
        \qquad \forall t\in[-1,1].
\]

We now consider \eqref{eq:Hk-cert}. First let
\[
        u,v\in(-1,1).
\]
Then also \(-u\in(-1,1)\), and local uniform convergence gives
\[
        H_k(v)\to H_*(v),
        \qquad
        H_k(-u)\to H_*(-u).
\]
Since \(q_k\to q_*\), taking \(k\to\infty\) in
\[
        H_k(v)+d(u,v)+H_k(-u)\le q_k
\]
gives
\[
        H_*(v)+d(u,v)+H_*(-u)\le q_*.
\]
It remains to extend the inequality to boundary points.  Let
\[
        u,v\in[-1,1].
\]
Choose sequences \(u_j,v_j\in(-1,1)\) such that
\[
        u_j\to u,
        \qquad
        v_j\to v.
\]
If \(u\in(-1,1)\), we may take \(u_j=u\) for all \(j\); if \(u=\pm1\),
we choose \(u_j\) approaching \(u\) from the interior.  Similarly, if
\(v\in(-1,1)\), we may take \(v_j=v\) for all \(j\); if \(v=\pm1\), we
choose \(v_j\) approaching \(v\) from the interior. For each \(j\), the already-proved interior inequality gives
\[
        H_*(v_j)+d(u_j,v_j)+H_*(-u_j)\le q_*.
\]
Now let \(j\to\infty\).  The function \(d\) is continuous, so
\[
        d(u_j,v_j)\to d(u,v).
\]
Also, by the definition of the endpoint values of \(H_*\) as one-sided
limits, we have
\[
        H_*(v_j)\to H_*(v),
        \qquad
        H_*(-u_j)\to H_*(-u).
\]
Therefore
\[
        H_*(v)+d(u,v)+H_*(-u)\le q_* \qquad \forall u,v \in [-1,1] \,.
\]
$H_*$ is the desired certificate, and completes the proof of Theorem~\ref{thm:scalar-certificate}. In turn, this completes the proof of Theorem~\ref{thm:universal-upper}.
\end{proof}

\section{The structure of optimal strategies}
\label{sec:structure-optimal}

In this section, we analyze the structure of finite-dimensional strategies that attain the upper bound $q_*$. We show that any such strategy essentially decomposes into Jacobi path (and cycle) components, and that the value corresponding to these components (up to some small edge cases) can be understood in terms of the Jacobi matrices introduced in Definition~\ref{def:signed-endpoint-jacobi-path} and analyzed in Section~\ref{sec:paths-do-not-attain-q*}. Thanks to this decomposition, we will eventually arrive at a contradiction, based on the fact that such components can contribute only strictly less than $q_*$.

\subsection{Vanishing of every slack term}
\label{sec:zero-slack}
We start by recalling, for convenience, some of the notation we introduced in Section~\ref{sec:upper-bound-subsec} to describe an arbitrary finite-dimensional strategy, as well as some of the structure we derived.

Let the strategy be specified by the state
\[
        \ket{\psi}\in \mathcal H_A\otimes\mathcal H_B
\]
and projectors $A_i, B_i$ for $i \in \{1,2,3\}$ on $\mathcal H_A$ and $\mathcal H_B$. Define
\[
        Z_A=A_1+A_2-I,\qquad X_A=A_2-A_1,
\]
\[
        Z_B=B_1+B_2-I,\qquad X_B=B_2-B_1,
\]
and
\[
        W_A=2A_3-I,\qquad W_B=2B_3-I.
\]
Then, all of these operators are Hermitian, and one can verify directly that
\[
        Z_AX_A+X_AZ_A=0,\qquad Z_A^2+X_A^2=I,
\]
and similarly for \(Z_B,X_B\). The above identities imply the following useful structure by Lemma~\ref{lem:structure}. There is
an orthonormal basis $\{e_\alpha\}$ for Alice such that $Z_A$ and $X_A$ are simultaneously block-diagonal with $2\times 2$ or $1\times 1$ blocks. Moreover, for each pair $(e_{\alpha}, e_{\bar{\alpha}})$ corresponding to a $2$-dimensional block, there is a $z_\alpha\in (-1,1)\setminus\{0\}$ such that
\[
        Z_A e_\alpha=z_\alpha e_\alpha,
        \qquad
Z_A e_{\bar\alpha}=-z_\alpha e_{\bar\alpha},
\]
and
\[
        X_A e_\alpha=\sqrt{1-z_\alpha^2}\,e_{\bar\alpha}.
\]
From the above, it also follows that $z_{\bar{\alpha}} = - z_{\alpha}$. And for any $e_{\alpha}$ corresponding to a $1\times 1$ block, either:
\begin{itemize}
\item there exists $z_{\alpha}\in \{\pm 1\}$ such that $Z e_{\alpha} = z_{\alpha} e_{\alpha}$ and $X e_\alpha=0$, or
\item $Z e_{\alpha} = 0$ and $Xe_{\alpha} = \varepsilon_\alpha e_{\alpha}$ for some $\varepsilon_\alpha \in \{\pm 1\}$.
\end{itemize}

Similarly, Bob has an analogous orthonormal
basis $\{f_\beta\}$ and corresponding values $w_\beta$. Now, we can write the state $\ket{\psi}$ in these two bases:
\[
        \ket{\psi}=\sum_{\alpha,\beta}\psi_{\alpha\beta}
        e_\alpha\otimes f_\beta.
\]
for some coefficients $\psi_{\alpha \beta}$. Then, define 
\[
        p_{\alpha\beta}=|\psi_{\alpha\beta}|^2,
        \qquad
        r_\alpha=\sum_\beta p_{\alpha\beta},
        \qquad
        \ell_\beta=\sum_\alpha p_{\alpha\beta}.
\]

\begin{lemma}[Zero slack]
\label{lem:zero-slack}
Let $H:[-1,1]\to[0,\infty)$ be the concave certificate function guaranteed by Theorem~\ref{thm:scalar-certificate}. If a finite-dimensional strategy attains value $q_*$, then:
\begin{enumerate}[label=\textup{(\alph*)}]
\item For any pair $(\alpha, \beta)$ with $p_{\alpha\beta}>0$, 
\begin{equation}
\label{eq:pointwise-active}
        H(w_\beta)+d(z_\alpha,w_\beta)+H(-z_\alpha)=q_*;
\end{equation}
\item For any two-dimensional Alice block $\{\alpha, \bar{\alpha}\}$, 
\begin{equation}
\label{eq:chain}
2a(z_{\alpha})\sqrt{r_\alpha r_{\bar\alpha}}
        =2\sqrt{H(z_{\alpha})H(-z_{\alpha})r_\alpha r_{\bar\alpha}}  =
        H(-z_{\alpha})r_\alpha+H(z_{\alpha})r_{\bar\alpha}.
\end{equation}
As a consequence, if $r_{\alpha}, r_{\bar{\alpha}}>0$, we have
\begin{equation}
\label{eq:local-product-active}
        H(z_{\alpha})H(-z_{\alpha})=a(z_{\alpha})^2.
\end{equation}
Moreover, since $|z_\alpha| <1$, we have
$$r_{\alpha}>0 \iff r_{\bar\alpha}>0 \,.$$
The analogous statements hold on Bob's side.
\item For any one-dimensional Alice block $\{\alpha\}$ with $z_{\alpha} = 0$,
\begin{equation}
\label{eq:ralpha}
r_\alpha = 2H(0)r_\alpha \,.
\end{equation}
As a consequence, if $r_{\alpha}>0$, then $H(0) = \frac12$. 

Similarly, on Bob's side.
\end{enumerate}
\end{lemma}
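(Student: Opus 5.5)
The plan is to revisit the chain of inequalities in the proof of Theorem~\ref{thm:universal-upper} and observe that, if the strategy attains $q_*$, the chain begins and ends at $q_*$. Then every inequality in it is an equality. The chain is
\[
q_*=\langle\psi|I_{3322}|\psi\rangle\le \sum_{\alpha,\beta}p_{\alpha\beta}d(z_\alpha,w_\beta)+\tfrac12\|K_A\|_1+\tfrac12\|K_B\|_1\le \sum_{\alpha,\beta}p_{\alpha\beta}\bigl[d(z_\alpha,w_\beta)+H(-z_\alpha)+H(w_\beta)\bigr]\le q_*.
\]
The middle step is itself a sum over blocks of the individual inequalities \eqref{eq:34}, \eqref{eq:211}, and the trivial zero bound for $z_\alpha=\pm1$ blocks, done separately for Alice and Bob. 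Each of these termwise inequalities has the form (left) $\le$ (right). A sum of nonnegative slacks that totals zero forces every slack to vanish.

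For part (a), the last inequality is the average over $p_{\alpha\beta}$ of the pointwise bounds from \eqref{eq:Hstar-cert}, each of which is at most $q_*$. Since $\sum p_{\alpha\beta}=1$ and the average equals $q_*$, every term with $p_{\alpha\beta}>0$ must equal $q_*$, which is \eqref{eq:pointwise-active}.

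For parts (b) and (c), note that every step between $\|K_A\|_1$ and $2\sum_\alpha H(-z_\alpha)r_\alpha$ is a sum over blocks of nonnegative block slacks. These include the triangle-inequality slack \eqref{eq:23}, the slack in \eqref{eq:83}, the slack in the product step, and the AM--GM slack. Their total is dominated by the global slack, which is zero, so each vanishes blockwise. For the Alice part I will apply this only to the chain from $4a\sqrt{r_\alpha r_{\bar\alpha}}$ onward, using the already established bound $\|K_A(b)\|_1\le 4a(z_\alpha)\sqrt{r_\alpha r_{\bar\alpha}}$. This yields equality throughout $4a\sqrt{rr}\le 4\sqrt{HHrr}\le 2Hr+2Hr$, which after dividing by 2 is \eqref{eq:chain}.

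One subtlety needs care: the Alice and Bob off-diagonal bounds are added together with the diagonal term. Their slacks are separately nonnegative, so both must vanish. The bound $\langle X_A\otimes W_B\rangle\le\|K_A\|_1$ has its own slack, but it is also nonnegative, so the remaining chain still closes.

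The consequences then follow. If $r_\alpha,r_{\bar\alpha}>0$, cancelling $2\sqrt{r_\alpha r_{\bar\alpha}}$ in the first equality gives $a(z_\alpha)=\sqrt{H(z_\alpha)H(-z_\alpha)}$, which is \eqref{eq:local-product-active}.

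For the equivalence $r_\alpha>0\iff r_{\bar\alpha}>0$, suppose $r_{\bar\alpha}=0<r_\alpha$. Then the left side of \eqref{eq:chain} is zero, so $H(-z_\alpha)r_\alpha=0$ and hence $H(-z_\alpha)=0$. This contradicts $H(z_\alpha)H(-z_\alpha)\ge a(z_\alpha)^2>0$, which holds because $|z_\alpha|<1$.

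Part (c) is the equality case of \eqref{eq:211}: $r_\alpha=2H(0)r_\alpha$, and if $r_\alpha>0$ then $H(0)=\tfrac12$. Bob's side is symmetric.

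The main obstacle is bookkeeping rather than mathematics. Everything must be arranged so that the total slack is exhibited as a sum of individually nonnegative terms, one per block plus the diagonal/$d$ averaging terms. In particular, the trace-norm duality step and the triangle inequality must be treated as nonnegative slacks that are simply discarded, not needed for the conclusion, so that the blockwise equalities in \eqref{eq:chain} can be extracted cleanly.
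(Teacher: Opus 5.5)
Your proposal is correct and follows the paper's proof: attaining $q_*$ forces every inequality in the chain from the proof of Theorem~\ref{thm:universal-upper} to be tight. This gives (a) from the averaging step \eqref{eq:35}, (b) from the blockwise chain \eqref{eq:34}, (c) from \eqref{eq:211}, and the same contradiction argument for $r_\alpha>0\iff r_{\bar\alpha}>0$. You also make explicit, as the paper does not, that every intermediate slack (trace-norm duality, the triangle inequality, the $z_\alpha=\pm1$ blocks) is nonnegative and therefore vanishes individually, which is precisely the justification the paper's one-line appeal to the chain leaves implicit.
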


\begin{proof}
The proof of Theorem~\ref{thm:universal-upper} tells us that, in order for the strategy to attain $q_*$, all summands in the final expression \eqref{eq:35} with $p_{\alpha\beta}>0$ must equal $q_*$. This gives precisely \eqref{eq:pointwise-active}. Moreover, the chain of inequalities in \eqref{eq:34} must be a chain of equalities, which gives precisely~\eqref{eq:chain}. Consequence \eqref{eq:local-product-active} is immediate. 

Now, if $|z_\alpha| < 1$, then $a(z_\alpha) = \frac12 \sqrt{1-z_\alpha^2}>0$. Then, the product inequality gives 
$$H(z_\alpha) H(-z_\alpha)\geq a(z_\alpha)^2 >0 \,, $$
which implies, since $H$ is non-negative, $H(z_\alpha)>0$ and $H(-z_\alpha)>0$. Then, suppose $r_\alpha >0$. If $r_{\bar\alpha} = 0$, then the LHS of \eqref{eq:chain} is zero, while the RHS is $H(-z_\alpha)r_\alpha >0$, which is a contradiction.  The converse direction is analogous.

Finally, for a one-dimensional Alice block $\{\alpha\}$, the proof of Theorem~\ref{thm:universal-upper} tells us that the inquality in \eqref{eq:211} must be an equality, which gives precisely \eqref{eq:ralpha}. The consequence when $r_\alpha >0$ is immediate.

The statements on Bob's side are proven analogously.

\end{proof}

\subsection{Consequences of concavity}

We first record two elementary consequences of concavity.  Let \(I\subset\mathbb R\)
be an interval and let \(G:I\to\mathbb R\) be a finite concave function.  For an
interior point \(x\in I^\circ\), the left and right derivatives
\[
        D_-G(x)
        :=
        \lim_{s\uparrow x}\frac{G(x)-G(s)}{x-s},
        \qquad
        D_+G(x)
        :=
        \lim_{s\downarrow x}\frac{G(s)-G(x)}{s-x}
\]
exist as finite real numbers and satisfy
\[
        D_-G(x)\ge D_+G(x).
\]
Indeed, for \(r<s<t\), concavity gives the monotonicity of secant slopes
\[
        \frac{G(s)-G(r)}{s-r}
        \ge
        \frac{G(t)-G(r)}{t-r}
        \ge
        \frac{G(t)-G(s)}{t-s}.
\]
This implies that the left and right secant slopes have one-sided limits,
and that the left derivative dominates the right derivative. We can see the latter as follows.

Let \(G:I\to\mathbb R\) be a finite concave function on an interval
\(I\), and let \(x\in I^\circ\).  For \(p<q\), write
\[
        S(p,q):=\frac{G(q)-G(p)}{q-p}
\]
for the secant slope of \(G\) on \([p,q]\).  Concavity implies that if
\(r<s<t\), then
\[
        S(r,s)\ge S(r,t)\ge S(s,t).
\]
Indeed, the inequality \(S(r,s)\ge S(r,t)\) is obtained by writing \(s\)
as a convex combination of \(r\) and \(t\) and applying concavity to
\(G(s)\); the inequality \(S(r,t)\ge S(s,t)\) follows from the same
concavity inequality after rearranging.

Now fix \(x\in I^\circ\).  For \(s<x\), define
\[
        L_x(s):=S(s,x)=\frac{G(x)-G(s)}{x-s}.
\]
If \(s_1<s_2<x\), then applying the preceding secant-slope inequality to
\(s_1<s_2<x\) gives
\[
        S(s_1,x)\ge S(s_2,x).
\]
Hence \(L_x(s)\) is monotone nonincreasing as \(s\uparrow x\).

Choose \(a,b\in I\) with \(a<x<b\).  For every \(s\in(a,x)\), the same
secant-slope inequalities give
\[
        S(x,b)\le S(s,x)\le S(a,x).
\]
Thus \(L_x\) is monotone and bounded on \((a,x)\).  Therefore the limit
\[
        D_-G(x):=\lim_{s\uparrow x}L_x(s)
        =
        \lim_{s\uparrow x}\frac{G(x)-G(s)}{x-s}
\]
exists as a finite real number.

Similarly, for \(t>x\), define
\[
        R_x(t):=S(x,t)=\frac{G(t)-G(x)}{t-x}.
\]
If \(x<t_1<t_2\), then
\[
        S(x,t_1)\ge S(x,t_2),
\]
so \(R_x(t)\) is monotone nonincreasing as a function of \(t\).  Moreover,
for \(t\in(x,b)\),
\[
        S(x,b)\le S(x,t)\le S(a,x).
\]
Thus \(R_x\) is monotone and bounded near \(x\), and therefore
\[
        D_+G(x):=\lim_{t\downarrow x}R_x(t)
        =
        \lim_{t\downarrow x}\frac{G(t)-G(x)}{t-x}
\]
exists as a finite real number.

Finally, for every \(s<x<t\), concavity gives
\[
        S(s,x)\ge S(x,t).
\]
Taking \(s\uparrow x\) and \(t\downarrow x\), we obtain
\[
        D_-G(x)\ge D_+G(x).
\]
Thus the one-sided derivatives of a finite concave function exist at every
interior point, and the left derivative dominates the right derivative.

\vspace{5mm}
We call
\[
        \Delta_G(x):=D_-G(x)-D_+G(x)\ge0
\]
the derivative jump of \(G\) at \(x\).  Thus \(G\) is differentiable at
\(x\) if and only if \(\Delta_G(x)=0\).

For the scalar certificate \(H\), we say that \emph{local product equality}
holds at an interior point \(t\in(-1,1)\) if
\[
        H(t)H(-t)=a(t)^2.
\]
This is a local condition at the pair of points \(t\) and \(-t\); it does
not assert product equality for all labels.

\begin{lemma}[Local product equality implies differentiability]
\label{lem:differentiability}
Let \(H:[-1,1]\to[0,\infty)\) be concave and suppose
\[
        H(t)H(-t)\ge a(t)^2
        \qquad\forall t\in[-1,1].
\]
If \(t_0\in(-1,1)\) satisfies local product equality,
\[
        H(t_0)H(-t_0)=a(t_0)^2,
\]
then \(H\) is differentiable at both \(t_0\) and \(-t_0\).
\end{lemma}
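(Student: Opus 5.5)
The plan is to look at the function
\[
        \Phi(t):=\log H(t)+\log H(-t)-\log\bigl(a(t)^2\bigr),
\]
defined near \(t_0\). It is nonnegative by the product inequality and vanishes at \(t_0\). First, \(H(t_0)>0\) and \(H(-t_0)>0\), because their product equals \(a(t_0)^2>0\) when \(|t_0|<1\). Since \(H\) is concave and finite, it is continuous on \((-1,1)\), so \(H\) stays positive near \(\pm t_0\) and \(\Phi\) is well defined on a neighborhood of \(t_0\). The function \(t\mapsto\log(a(t)^2)=\log\frac{1-t^2}{4}\) is smooth on \((-1,1)\). As a result, \(\Phi\) has one-sided derivatives at \(t_0\), and they are computed from those of \(H\).

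Second, the one-sided derivatives. By the discussion preceding the lemma, \(H\) has finite left and right derivatives at \(t_0\) and at \(-t_0\), with \(D_-H\ge D_+H\) at both points. By the chain rule for one-sided derivatives, \(\log H\) has one-sided derivatives \(D_\pm H(t_0)/H(t_0)\) at \(t_0\). The map \(t\mapsto H(-t)\) has right derivative \(-D_-H(-t_0)\) and left derivative \(-D_+H(-t_0)\) at \(t_0\), since moving right in \(t\) moves left at \(-t\). Writing \(\ell'\) for the derivative of \(\log(a^2)\), this gives
\[
        D_+\Phi(t_0)=\frac{D_+H(t_0)}{H(t_0)}-\frac{D_-H(-t_0)}{H(-t_0)}-\ell'(t_0),
\qquad
        D_-\Phi(t_0)=\frac{D_-H(t_0)}{H(t_0)}-\frac{D_+H(-t_0)}{H(-t_0)}-\ell'(t_0).
\]

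Third, use that \(t_0\) is a minimum of \(\Phi\). Since \(\Phi\ge0=\Phi(t_0)\) nearby, we must have \(D_+\Phi(t_0)\ge0\) and \(D_-\Phi(t_0)\le0\), so \(D_-\Phi(t_0)-D_+\Phi(t_0)\le 0\). Computing this difference from the formulas above gives
\[
        D_-\Phi(t_0)-D_+\Phi(t_0)=\frac{\Delta_H(t_0)}{H(t_0)}+\frac{\Delta_H(-t_0)}{H(-t_0)}.
\]
This is a sum of two nonnegative terms, because derivative jumps of a concave function are nonnegative and \(H>0\) at both points. A sum of nonnegative terms that is \(\le 0\) must vanish termwise, so \(\Delta_H(t_0)=\Delta_H(-t_0)=0\). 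Therefore \(H\) is differentiable at both \(t_0\) and \(-t_0\).

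The main obstacle is bookkeeping rather than anything conceptual. The key points are getting the sign swap right for \(t\mapsto H(-t)\), and confirming that concavity makes both kinks point the same way, so that they add rather than cancel. The degenerate case \(t_0=0\) needs no special treatment: there the two jumps coincide and the identity reads \(2\Delta_H(0)/H(0)\le0\), which still forces \(\Delta_H(0)=0\). An alternative that avoids logarithms is to compare \(H(t_0+s)H(-t_0-s)\) with \(a(t_0+s)^2\) to first order in \(s\to0^\pm\). This gives the same conclusion, but the log form is cleaner.
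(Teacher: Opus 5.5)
Your proof is correct and is essentially the paper's argument. The paper uses \(P(t)=H(t)H(-t)-a(t)^2\) and the product rule for one-sided derivatives in place of your logarithm, and obtains \(P'_-(t_0)-P'_+(t_0)=\Delta_H(t_0)H(-t_0)+H(t_0)\Delta_H(-t_0)\le 0\); this is your identity multiplied by \(H(t_0)H(-t_0)=a(t_0)^2\), and it is exactly the first-order ``alternative'' you mention at the end.
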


\begin{proof}
Since \(t_0\in(-1,1)\), we have
\[
        a(t_0)=\frac12\sqrt{1-t_0^2}>0.
\]
The equality
\[
        H(t_0)H(-t_0)=a(t_0)^2>0
\]
therefore implies
\[
        H(t_0)>0,
        \qquad
        H(-t_0)>0.
\]

Define
\[
        P(t):=H(t)H(-t)-a(t)^2.
\]
By the product inequality for \(H\),
\[
        P(t)\ge0
        \qquad\forall t\in[-1,1].
\]
By local product equality at \(t_0\),
\[
        P(t_0)=0.
\]
Thus \(P\) has a minimum at \(t_0\).

We now compute the one-sided derivatives of \(P\) at \(t_0\).  Since \(H\)
is concave, the one-sided derivatives of \(H\) exist at \(t_0\) and at
\(-t_0\).  Because \(H\) is continuous on \((-1,1)\), the usual product
rule for one-sided derivatives applies.  Also
\[
        a(t)^2=\frac{1-t^2}{4},
\]
so
\[
        \frac{d}{dt}\bigl[-a(t)^2\bigr]
        =
        \frac{t}{2}.
\]

First compute the left derivative of \(P\).  The left derivative of
\(t\mapsto H(t)\) at \(t_0\) is \(D_-H(t_0)\).  The left derivative of
\(t\mapsto H(-t)\) at \(t_0\) is
\[
        -D_+H(-t_0),
\]
because as \(t\uparrow t_0\), the argument \(-t\) decreases to
\(-t_0\) from the right.  Therefore
\[
\begin{aligned}
        P'_-(t_0)
        &=
        D_-H(t_0)H(-t_0)
        -
        H(t_0)D_+H(-t_0)
        +
        \frac{t_0}{2}.
\end{aligned}
\]
Similarly, the right derivative of \(t\mapsto H(-t)\) at \(t_0\) is
\[
        -D_-H(-t_0),
\]
because as \(t\downarrow t_0\) from the right, the argument \(-t\)
approaches \(-t_0\) from the left.  Hence
\[
\begin{aligned}
        P'_+(t_0)
        &=
        D_+H(t_0)H(-t_0)
        -
        H(t_0)D_-H(-t_0)
        +
        \frac{t_0}{2}.
\end{aligned}
\]
Subtracting these two identities gives
\[
\begin{aligned}
        P'_-(t_0)-P'_+(t_0)
        &=
        \bigl(D_-H(t_0)-D_+H(t_0)\bigr)H(-t_0)\\
        &\qquad
        +
        H(t_0)\bigl(D_-H(-t_0)-D_+H(-t_0)\bigr)\\
        &=
        \Delta_H(t_0)H(-t_0)+H(t_0)\Delta_H(-t_0).
\end{aligned}
\]
Both terms on the right-hand side are nonnegative.

On the other hand, \(P\) has a minimum at \(t_0\).  Therefore its left
derivative is nonpositive and its right derivative is nonnegative:
\[
        P'_-(t_0)\le0\le P'_+(t_0).
\]
Consequently
\[
        P'_-(t_0)-P'_+(t_0)\le0.
\]
Combining this with the previous nonnegative expression, we obtain
\[
        \Delta_H(t_0)H(-t_0)+H(t_0)\Delta_H(-t_0)=0.
\]
Since
\[
        H(t_0)>0,
        \qquad
        H(-t_0)>0,
\]
and since both derivative jumps are nonnegative, we must have
\[
        \Delta_H(t_0)=0,
        \qquad
        \Delta_H(-t_0)=0.
\]
Thus \(H\) is differentiable at both \(t_0\) and \(-t_0\).
\end{proof}

We shall also use the following terminology.  An \emph{affine majorant} of
a function \(G\) on an interval is an affine function
\[
        L(x)=mx+b
\]
such that
\[
        G(x)\le L(x)
        \qquad\text{for all }x
\]
in the interval.  If equality holds at \(x_0\), we say that \(L\) touches
\(G\) at \(x_0\).

If \(G\) is concave and differentiable at \(x_0\), and if an affine
majorant \(L(x)=mx+b\) touches \(G\) at \(x_0\), then
\[
        m=G'(x_0).
\]
Indeed, the function \(L-G\) is nonnegative and has a minimum equal to
zero at \(x_0\).  Therefore
\[
        D_-(L-G)(x_0)\le0\le D_+(L-G)(x_0).
\]
Since \(G\) is differentiable at \(x_0\),
\[
        D_-(L-G)(x_0)=D_+(L-G)(x_0)=m-G'(x_0).
\]
Hence \(m-G'(x_0)=0\), as claimed.

\begin{lemma}[Single-valuedness]
\label{lem:single-valued}
Let $H:[-1,1]\to[0,\infty)$ be the concave certificate from Theorem~\ref{thm:scalar-certificate}, and define its equality set
\[
        E_H
        :=
        \{(u,v)\in[-1,1]^2:
        H(v)+d(u,v)+H(-u)=q_*\}.
\]
Suppose $(u,v)\in E_H$. Then, the following holds:
\begin{itemize}
\item If $u \in (-1,1)$ and $H(u)H(-u)=a(u)^2$, then 
\begin{equation}
\label{eq:v-from-u}
        v=H'(-u)-\frac12.
\end{equation}
In particular, $u$ uniquely determines $v$.
\item If $v \in (-1,1)$ and $H(v)H(-v)=a(v)^2$, then 
\begin{equation}
\label{eq:u-from-v}
       u=\frac12 -H'(v) \,.
\end{equation}
In particular, $v$ uniquely determines $u$.
\end{itemize}
\end{lemma}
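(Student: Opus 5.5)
The plan is a first-order optimality argument at the equality point, using the differentiability of $H$ from Lemma~\ref{lem:differentiability}. Fix $(u,v)\in E_H$. By Theorem~\ref{thm:scalar-certificate}, the function
\[
        \Phi(u',v'):=H(v')+d(u',v')+H(-u')
\]
satisfies $\Phi\le q_*$ on $[-1,1]^2$, and $\Phi(u,v)=q_*$. So $(u,v)$ is a global maximizer of $\Phi$. I will freeze one coordinate and read off a derivative condition in the other. The key is that $d$ is affine in each variable separately: $d(u',v)=u'(v+\tfrac12)-\tfrac v2-1$ and $d(u,v')=v'(u-\tfrac12)+\tfrac u2-1$.

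For the first bullet, freeze $v$ and view $u'\mapsto H(-u')$ on $[-1,1]$. The inequality $\Phi(u',v)\le q_*$ rewrites as
\[
        H(-u')\le q_*-H(v)-d(u',v)=:L(u'),
\]
where $L$ is affine in $u'$ with slope $-(v+\tfrac12)$. Equality holds at $u'=u$. Substituting $x=-u'$, the function $G(x)=H(x)$ is concave with affine majorant $\tilde L(x)=L(-x)$, of slope $v+\tfrac12$, which touches $G$ at $x_0=-u$.

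Since $u\in(-1,1)$ and local product equality holds at $u$, Lemma~\ref{lem:differentiability} gives that $H$ is differentiable at $-u$, which is an interior point. The remark preceding the lemma (an affine majorant touching a concave function at a point of differentiability has slope equal to the derivative) then gives $v+\tfrac12=H'(-u)$. This is \eqref{eq:v-from-u}.

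For the second bullet, freeze $u$ instead. Then $\Phi(u,v')\le q_*$ gives
\[
        H(v')\le q_*-H(-u)-d(u,v'),
\]
an affine majorant of $H$ of slope $-(u-\tfrac12)=\tfrac12-u$, which touches $H$ at the interior point $v$. Lemma~\ref{lem:differentiability} applied at $t_0=v$ gives differentiability of $H$ at $v$. Hence $\tfrac12-u=H'(v)$, which is \eqref{eq:u-from-v}.

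Uniqueness is then immediate: the right-hand sides depend only on $u$, respectively only on $v$. There is no real obstacle here. The only points needing care are the sign bookkeeping in the change of variables $x=-u'$, and checking that the touching point lies in the open interval so that the derivative remark applies. Both hold because $|u|<1$ and $|v|<1$.
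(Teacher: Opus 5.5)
Your proposal is correct and follows essentially the same argument as the paper: freeze one coordinate, read the certificate inequality as an affine majorant touching a concave function at an interior point where Lemma~\ref{lem:differentiability} gives differentiability, and equate slopes. The only difference is cosmetic. For the first bullet, you reflect the affine majorant and work with $H$ at $-u$, whereas the paper composes $H$ with $x\mapsto -x$ and works with $G(x)=H(-x)$ at $u$.
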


\begin{proof}
We first prove \eqref{eq:u-from-v}. Suppose the product equality holds at $v \in (-1,1)$. Then, Lemma~\ref{lem:differentiability}
implies that \(H\) is differentiable at \(v\).  Fix \(u\), and define the affine
function
\[
        L_u(x):=q_*-d(u,x)-H(-u).
\]
The scalar certificate says
\[
        H(x)+d(u,x)+H(-u)\le q_*
        \qquad\forall x\in[-1,1].
\]
Equivalently,
\[
        H(x)\le L_u(x)
        \qquad\forall x\in[-1,1].
\]
Thus \(L_u\) is an affine majorant of \(H\).  Since \((u,v)\in E_H\),
we have
\[
        H(v)+d(u,v)+H(-u)=q_*,
\]
or equivalently
\[
        H(v)=L_u(v).
\]
Therefore \(L_u\) touches \(H\) at \(v\).  Since \(H\) is differentiable
at \(v\), the slope of \(L_u\) must equal \(H'(v)\).

Now
\[
        d(u,x)=ux+\frac{u-x}{2}-1,
\]
so
\[
        \frac{d}{dx}d(u,x)=u-\frac12.
\]
Hence
\[
        L_u'(x)
        =
        -\frac{d}{dx}d(u,x)
        =
        \frac12-u.
\]
Since \(L_u'(x)=H'(v)\), we obtain
\[
        H'(v)=\frac12-u,
\]
which is \eqref{eq:u-from-v}.

We now prove \eqref{eq:v-from-u}. Suppose the product equality holds at $u$ and $u \in (-1,1)$. Then, Lemma~\ref{lem:differentiability}
implies that \(H\) is differentiable at \(-u\). Fix \(v\), and define
\[
        G(x):=H(-x).
\]
Because \(H\) is concave, \(G\) is also concave.  Since \(H\) is
differentiable at \(-u\), the function \(G\) is differentiable at \(u\),
with
\[
        G'(u)=-H'(-u).
\]

Define the affine function
\[
        M_v(x):=q_*-d(x,v)-H(v).
\]
The scalar certificate says
\[
        H(v)+d(x,v)+H(-x)\le q_*
        \qquad\forall x\in[-1,1].
\]
Equivalently,
\[
        G(x)=H(-x)\le M_v(x)
        \qquad\forall x\in[-1,1].
\]
Thus \(M_v\) is an affine majorant of \(G\).  Since \((u,v)\in E_H\),
we have
\[
        H(v)+d(u,v)+H(-u)=q_*,
\]
or
\[
        G(u)=M_v(u).
\]
Therefore \(M_v\) touches \(G\) at \(u\).  Since \(G\) is differentiable
at \(u\), the slope of \(M_v\) must equal \(G'(u)\).

Now
\[
        d(x,v)=xv+\frac{x-v}{2}-1,
\]
so
\[
        \frac{d}{dx}d(x,v)=v+\frac12.
\]
Hence
\[
        M_v'(x)=-(v+\tfrac12).
\]
Therefore
\[
        -H'(-u)=G'(u)=M_v'(u)=-(v+\tfrac12).
\]
Rearranging gives
\[
        v=H'(-u)-\frac12,
\]
which is \eqref{eq:v-from-u}.
\end{proof}

\subsection{The graph of a strategy attaining \(q_*\)}
In this subsection, we introduce a graph $\Gamma$ associated to a hypothetical strategy attaining $q_*$, whose vertices are pairs of labels corresponding to Alice and Bob blocks. We will then use the vanishing of slack terms and the uniqueness condition from the previous subsection to show that $\Gamma$ has degree two, and its connected components are paths and cycles. Finally, we will associate a symmetric matrix with $\Gamma$, and show that each path or cycle
component actually essentially has a form very similar to that of a Jacobi matrix, with top eigenvalue \(q_*\). 

Let $H$ be the concave certificate function from Theorem~\ref{thm:scalar-certificate}, and recall that we defined the \emph{equality set}
\[
        E_H
        :=
        \{(u,v)\in[-1,1]^2:
        H(v)+d(u,v)+H(-u)=q_*\}\,.
\]

A key object that will help us understand the structure of strategies attaining $q_*$ is the following weighted graph.
\begin{definition}[Graph $\Gamma$ of a strategy attaining $q_*$]
\label{def:graph}
For a strategy attaining $q_*$, we define its corresponding graph $\Gamma$ as follows (where we use the same notation to describe the strategy as introduced at the start of Section~\ref{sec:zero-slack}). 

A vertex is a pair
\[
        \nu=(u,v)\in E_H
\]
with strictly positive total ``mass''
\[
        m_\nu=\sum_{z_\alpha=u,\,w_\beta=v}p_{\alpha\beta}>0.
\]
We include an edge between vertices $(u,v)$ and $(-u,v')$ if $0<|u|<1$. We refer to the latter as an ``Alice edge'', and let its weight be $a(u)$. Similarly, we include an edge between $(u,v)$ and $(u',-v)$ if $0<|v|<1$. We refer to the latter as a ``Bob edge'', and let its weight be $a(v)$.    
\end{definition}
Now, for the rest of this section, assume (for the sake of contradiction) a finite-dimensional strategy exists that attains $q_*$, and let $\Gamma$ be its corresponding weighted graph as in Definition~\ref{def:graph}. Then, $\Gamma$ satisfies the following structural property. 
\begin{lemma}
\label{lem:path-or-cycle}
Each vertex of $\Gamma$ has at most one Alice edge and at most one Bob edge incident to it. As a consequence, every finite connected component of $\Gamma$ is either a path or a cycle, with alternating Alice and Bob edges.    
\end{lemma}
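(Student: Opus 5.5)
The plan is to show that any two Alice edges at a vertex must coincide, using Lemma~\ref{lem:single-valued} together with the zero-slack information of Lemma~\ref{lem:zero-slack}. The graph statement will then follow from elementary graph theory.

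Fix a vertex $\nu=(u,v)$ with $0<|u|<1$. Every Alice edge at $\nu$ goes to a vertex of the form $(-u,v')$, so it suffices to show that $v'$ is uniquely determined by $u$.

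First I would verify that local product equality $H(u)H(-u)=a(u)^2$ holds. Since $m_\nu>0$, some Alice index $\alpha$ with $z_\alpha=u$ has $r_\alpha\ge p_{\alpha\beta}>0$. Because $0<|u|<1$, this $\alpha$ lies in a two-dimensional block $\{\alpha,\bar\alpha\}$, so Lemma~\ref{lem:zero-slack}(b) gives $r_{\bar\alpha}>0$, and then \eqref{eq:local-product-active} gives the product equality at $u$ (equivalently at $-u$, since the condition is symmetric under $t\mapsto -t$).

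Now let $(-u,v')$ be a neighbouring vertex. It lies in $E_H$ by the definition of vertices, and product equality holds at $-u\in(-1,1)$. Applying the first bullet of Lemma~\ref{lem:single-valued} to the pair $(-u,v')$ gives $v'=H'(u)-\tfrac12$, which depends only on $u$. Hence there is at most one vertex of the form $(-u,\cdot)$, and so at most one Alice edge at $\nu$.

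The Bob side is symmetric. For $\nu=(u,v)$ with $0<|v|<1$, positive mass gives some $\beta$ with $w_\beta=v$ and $\ell_\beta>0$. The Bob analogue of Lemma~\ref{lem:zero-slack}(b) then gives product equality at $v$ (hence at $-v$). Any neighbour $(u',-v)$ lies in $E_H$, and the second bullet of Lemma~\ref{lem:single-valued} gives $u'=\tfrac12-H'(-v)$, which is unique.

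I also need to rule out self-loops, so that an edge always joins distinct vertices. An Alice edge changes the Alice coordinate from $u$ to $-u\neq u$ because $u\neq0$, and likewise a Bob edge changes $v$ to $-v\neq v$. The only point needing care is that the definition literally says "include an edge", which could in principle produce multi-edges. I would treat an Alice edge and a Bob edge joining the same pair of vertices as two distinct edges of different types; this is consistent with $M_\Gamma$ adding both weights in that case.

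With these facts in hand, every vertex has degree at most two, and whenever it has degree two its two edges are of different types. A finite connected graph with maximum degree two is a path (possibly a single vertex) or a cycle. Since consecutive edges at each interior vertex have different types, the edge types alternate along the component. In a cycle, alternation forces even length; the one exception is a 2-cycle consisting of an Alice edge and a Bob edge between the same two vertices, which is also alternating.

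I expect the main obstacle to be the first step: making sure that a vertex with positive mass really gives product equality at the right point, $\pm u$. Once that is in place, everything else is a direct invocation of Lemma~\ref{lem:single-valued}.
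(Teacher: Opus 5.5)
Your proof is correct and follows essentially the same route as the paper: you get product equality at $\pm u$ from Lemma~\ref{lem:zero-slack}(b), use Lemma~\ref{lem:single-valued} to fix the unique Bob label $v'=H'(u)-\tfrac12$ that can be paired with $-u$ (and symmetrically on Bob's side), and finish with the elementary degree-two graph argument. Getting product equality through the block partner $\bar\alpha$ of the same $\alpha$, and explicitly ruling out self-loops and allowing a parallel Alice/Bob 2-cycle, makes explicit points the paper leaves implicit.
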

\begin{proof}
Let $(u,v) \in E_H$ be a vertex. By definition, $m_{\nu} >0$. This implies that there exists a pair $\alpha, \beta$ such that $u = z_{\alpha}$, $v = w_{\beta}$, and $p_{\alpha, \beta}>0$.

First, let us consider Alice edges, which can exist only if $|u|<1$ and $u\neq 0$. Then, it must be that $\alpha$ is part of 2-dimensional Alice block (since the $1$-dimensional blocks have $z_{\alpha} \in \{-1,0,1\}$). Moreover, $r_{\alpha}>0$ since $p_{\alpha \beta}>0$. Now, suppose there is an Alice edge $e$ from $(u,v)$ to $(-u,v')$. Then, similarly, there exists a pair $\alpha', \beta'$ such that $-u = z_{\alpha'}$, $v' = w_{\beta'}$, and $p_{\alpha' \beta'} > 0$, which implies $r_{\alpha'} >0$. Thus, we can invoke Lemma~\ref{lem:zero-slack} to deduce that
$$H(z_{\alpha})H(-z_{\alpha}) = a(z_{\alpha})^2 \,.$$
Combining this with the fact that $(u,v) \in E_H$, lets us apply Lemma~\ref{lem:single-valued} to obtain that $v=H'(-u)-\frac12$. Thus, $u$ uniquely determines $v$, and similarly $-u$ uniquely determines $v'$. Thus, $e$ is the unique Alice edge at $(u,v)$. Analogously, we can show that $(u,v)$ has at most one Bob edge. 

The final consequence of the lemma is immediate: in a finite graph where vertices have degree at most 2, any connected components is either a paths or a cycle. Since every vertex has at most one Alice edge and Bob edge, the edges along any such path or cycle must alternate.

\end{proof}

\begin{lemma}
\label{cor:no-dangling-scalar}
Let \(\nu=(u,v)\in\Gamma\).  If \(0<|u|<1\), then \(\nu\) has a
unique Alice-edge neighbour \((-u,v')\) in \(\Gamma\).  Analogously, if
\(0<|v|<1\), then \(\nu\) has a unique Bob-edge neighbour.
\end{lemma}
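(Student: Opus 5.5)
Uniqueness is already given by Lemma~\ref{lem:path-or-cycle}, so only existence needs proof: we must exhibit a vertex of $\Gamma$ of the form $(-u,v')$. The idea is to go back from the vertex $\nu$ to an actual pair of basis indices carrying mass, and then use the zero-slack condition of Lemma~\ref{lem:zero-slack}(b) on the two-dimensional Alice block containing that index.

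Concretely, since $m_\nu>0$, there are indices $\alpha,\beta$ with $z_\alpha=u$, $w_\beta=v$ and $p_{\alpha\beta}>0$. Because $0<|u|<1$, and one-dimensional Alice blocks have labels in $\{-1,0,1\}$, the index $\alpha$ lies in a two-dimensional block $\{\alpha,\bar\alpha\}$ with $z_{\bar\alpha}=-u$. We have $r_\alpha\ge p_{\alpha\beta}>0$, so Lemma~\ref{lem:zero-slack}(b) gives $r_{\bar\alpha}>0$. Hence there is some Bob index $\beta'$ with $p_{\bar\alpha\beta'}>0$. Set $v':=w_{\beta'}$.

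It remains to check that $(-u,v')$ is a vertex of $\Gamma$. First, Lemma~\ref{lem:zero-slack}(a) applied to the pair $(\bar\alpha,\beta')$ gives $H(v')+d(-u,v')+H(u)=q_*$, that is, $(-u,v')\in E_H$. Second, the mass $m_{(-u,v')}\ge p_{\bar\alpha\beta'}>0$. So $(-u,v')$ is a vertex. Since $0<|u|<1$, Definition~\ref{def:graph} places an Alice edge between $(u,v)$ and $(-u,v')$, and it is unique by Lemma~\ref{lem:path-or-cycle}.

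The Bob statement is proved symmetrically. We pick $\beta$ in a two-dimensional Bob block $\{\beta,\bar\beta\}$, use the Bob-side version of Lemma~\ref{lem:zero-slack}(b) to get $\ell_{\bar\beta}>0$, take $\alpha'$ with $p_{\alpha'\bar\beta}>0$, and conclude that $(z_{\alpha'},-v)$ is a vertex joined to $\nu$ by a Bob edge.

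The only step needing real input is $r_{\bar\alpha}>0$, which rests on the saturated AM--GM equality in Lemma~\ref{lem:zero-slack}(b) together with $a(u)>0$. Everything else is bookkeeping. One point to watch is that vertices are defined by aggregated mass over label values, not by individual indices, so we must pass through a specific index pair, as done above.
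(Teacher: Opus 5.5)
Your proof is correct and follows essentially the same route as the paper: you pass from $m_\nu>0$ to an index pair, apply Lemma~\ref{lem:zero-slack}(b) to get $r_{\bar\alpha}>0$, and extract a vertex $(-u,w_{\beta'})$. The only differences are minor. You obtain uniqueness by citing Lemma~\ref{lem:path-or-cycle}, whereas the paper invokes product equality and Lemma~\ref{lem:single-valued} directly; since Lemma~\ref{lem:path-or-cycle} rests on exactly that argument, this is not circular. You also explicitly check $(-u,v')\in E_H$ via Lemma~\ref{lem:zero-slack}(a), which the paper leaves implicit.
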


\begin{proof}
Since \(m_\nu>0\), there are indices \(\alpha,\beta\) such that
\[
        z_\alpha=u,\qquad w_\beta=v,\qquad p_{\alpha\beta}>0.
\]
Thus \(r_\alpha>0\).  Since \(0<|u|<1\), the index \(\alpha\)
belongs to a two-dimensional Alice block
\(\{\alpha,\bar\alpha\}\).  Lemma~\ref{lem:zero-slack}, part~\textup{(b)},
gives
\[
        r_{\bar\alpha}>0.
\]
Hence \(p_{\bar\alpha\beta'}>0\) for some \(\beta'\), and
\((-u,w_{\beta'})\) is a vertex of \(\Gamma\).  Since \(0<|u|<1\), the definition of $\Gamma$ (Definition~\ref{def:graph}) includes an Alice edge joining
these two vertices.

Since \(r_\alpha>0\) and \(r_{\bar\alpha}>0\), the product-equality
conclusion of Lemma~\ref{lem:zero-slack}, part~\textup{(b)}, gives
\[
        H(u)H(-u)=a(u)^2.
\]
Lemma~\ref{lem:single-valued} therefore implies that there is
at most one Bob label $v'$ that can be paired with the Alice label
\(-u\).  Hence the Alice-edge neighbour of \((u,v)\) is unique. The proof
for Bob-edge neighbours is analogous.
\end{proof}

\begin{lemma}
\label{lem:95}
Let $M_{\Gamma}$ be the real symmetric matrix with rows and columns labeled by vertices of $\Gamma$ defined as follows. The diagonal entries are 
$$ (M_{\Gamma})_{\nu \nu} = \begin{cases} d(u_{\nu}, v_{\nu}) \quad \quad \quad \,\,\,\,\textnormal{if } u_{\nu}, v_{\nu} \neq 0 \\
d(u_{\nu}, v_{\nu}) + \frac12 \quad \quad \textnormal{if } u_{\nu} = 0 \textnormal{ or } v_{\nu} = 0, \textnormal{ but not both} \\
d(u_{\nu}, v_{\nu}) + 1 \quad \quad \textnormal{ if } u_{\nu} = v_{\nu}=0
\end{cases}
$$
For two distinct vertices \(\nu,\mu\), the off-diagonal entry
\((M_\Gamma)_{\nu\mu}\) is the sum of the weights of all Alice or Bob edges
joining them (if the same two vertices are joined by one Alice
edge and one Bob edge, both weights are included). In other words,
\(M_\Gamma\) is the weighted adjacency matrix of \(\Gamma\), with
\(d(u_\nu,v_\nu)\), \(d(u_\nu,v_\nu)+\frac12\), or
\(d(u_\nu,v_\nu)+1\) on the diagonal.
Then,
$$M_\Gamma\le q_*I\,.$$ 
Moreover, the vector $\xi$ with entries $\xi_{\nu} = \sqrt{m_{\nu}}$ is an eigenvector of $M_{\Gamma}$ with eigenvalue~$q_*$.
\end{lemma}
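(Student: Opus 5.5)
The proof has two parts. First, $\langle x,M_\Gamma x\rangle\le q_*\|x\|^2$ for every real $x$, using only the certificate inequalities of Theorem~\ref{thm:scalar-certificate}. Second, $\langle\xi,M_\Gamma\xi\rangle\ge q_*$, using the value of the optimal strategy. Together with $\|\xi\|^2=\sum_\nu m_\nu=1$, these show that $\xi$ maximizes the Rayleigh quotient of the symmetric matrix $M_\Gamma$. Hence $\xi$ is an eigenvector with eigenvalue $q_*=\lambda_{\max}(M_\Gamma)$, and $M_\Gamma\le q_*I$.

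\textbf{Upper bound.} Since the off-diagonal entries are nonnegative, replacing $x$ by $|x|$ does not decrease the quadratic form, so I will assume $x\ge0$. For an Alice edge joining $\nu=(u,v)$ and $\mu=(-u,v')$ with weight $a(u)$, the product inequality and AM--GM give
\[
2a(u)x_\nu x_\mu\le H(-u)x_\nu^2+H(u)x_\mu^2 .
\]
I will charge $H(-u_\nu)x_\nu^2$ to $\nu$ and $H(-u_\mu)x_\mu^2$ to $\mu$. Bob edges are handled the same way, charging $H(v_\nu)x_\nu^2$. By Lemma~\ref{lem:path-or-cycle}, each vertex has at most one Alice edge and at most one Bob edge. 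So each vertex receives at most $H(-u_\nu)+H(v_\nu)$, using $H\ge0$ when an edge is missing. This also covers the case of an Alice and a Bob edge joining the same pair, since the two weights are simply summed.

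For zero labels, the extra $\tfrac12$ on the diagonal when $u_\nu=0$ is bounded by $H(0)=H(-u_\nu)$, because $H(0)\ge a(0)=\tfrac12$; the case $v_\nu=0$ is symmetric. Zero labels carry no edges, since edges require $0<|u|<1$, so there is no double charging. Each diagonal contribution is therefore at most $\bigl(d(u_\nu,v_\nu)+H(-u_\nu)+H(v_\nu)\bigr)x_\nu^2\le q_*x_\nu^2$ by \eqref{eq:Hstar-cert}.

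\textbf{Lower bound.} I will rerun the proof of Theorem~\ref{thm:universal-upper} on the optimal strategy, but stop before applying $H$. The value then satisfies
\[
q_*\le\sum_{\alpha\beta}p_{\alpha\beta}d(z_\alpha,w_\beta)+\tfrac12\sum_b\|K_A(b)\|_1+\tfrac12\sum_b\|K_B(b)\|_1 .
\]
The diagonal part regroups as $\sum_\nu d(u_\nu,v_\nu)m_\nu$. A $z_\alpha=0$ block contributes $\tfrac12 r_\alpha$; summed over $\beta$, this gives the extra $\tfrac12 m_\nu$ for vertices with $u_\nu=0$.

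For a two-dimensional Alice block, the contribution is at most $2a(u)\sqrt{r_\alpha r_{\bar\alpha}}$. Fix a label value $u$ with $0<|u|<1$. By Lemma~\ref{lem:single-valued}, all mass on Alice indices with label $u$ sits on a single vertex $(u,v)$, and all mass on label $-u$ sits on the neighbour $(-u,v')$. Hence $\sum_{z_\alpha=u}r_\alpha=m_{(u,v)}$, and the analogous identity holds for $-u$. Summing over blocks with this label and applying Cauchy--Schwarz,
\[
\sum\sqrt{r_\alpha r_{\bar\alpha}}\le\sqrt{m_\nu m_\mu}.
\]
This yields $2a_e\xi_\nu\xi_\mu$ per edge, with the blocks labelled $u$ and $-u$ paired so that each edge is counted once. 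Bob's side is identical. Therefore $q_*\le\langle\xi,M_\Gamma\xi\rangle$.

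\textbf{Main obstacle.} The delicate step is the bookkeeping in the lower bound. One must check that every Alice index with label $u$ lies in a vertex of $\Gamma$ (since $r_\alpha>0$ forces this via Lemma~\ref{lem:zero-slack}) and that block masses aggregate exactly into the vertex masses. One must also verify that the orientation of labels, $\pm u$ on the two endpoints of an edge, is consistent, including when $u$ and $-u$ blocks coexist. Blocks with zero mass contribute nothing and can be ignored.
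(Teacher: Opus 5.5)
Your proposal is correct and follows the paper's proof essentially step for step. The same AM--GM edge-charging argument, with $H(0)\ge\frac12$ absorbing the zero-label diagonal corrections, gives $M_\Gamma\le q_*I$. The same regrouping of the bound from Theorem~\ref{thm:universal-upper}, via Lemmas~\ref{lem:zero-slack} and~\ref{lem:single-valued} plus Cauchy--Schwarz, gives $q_*\le\langle\xi,M_\Gamma\xi\rangle$ and forces $\xi$ to be a top eigenvector. You leave implicit that $\|\xi\|^2=\sum_\nu m_\nu=1$; this follows from part (a) of Lemma~\ref{lem:zero-slack}, exactly as the paper notes.
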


\begin{proof}
We first prove $M_\Gamma\le q_*I$.

Let $V(\Gamma)$, $E_A(\Gamma)$, $E_B(\Gamma)$ be respectively the vertex set, Alice edge set, and Bob edge set of $\Gamma$. Let \(y=(y_\nu)_{\nu\in V(\Gamma)}\in\mathbb R^{V(\Gamma)}\),
and let \(x_\nu=|y_\nu|\).  Since all off-diagonal edge weights are
nonnegative,
\[
        \langle y,M_\Gamma y\rangle
        \le
        \langle x,M_\Gamma x\rangle.
\]
So it suffices to consider \(x_\nu\ge0\). 

Now, for an Alice edge
\[
        e=\{\nu,\mu\},
        \qquad
        \nu=(u,v),
        \qquad
        \mu=(-u,v'),
\]
its contribution to $\langle x,M_\Gamma x\rangle$ is $2a(u)x_\nu x_\mu$. Notice that the product inequality
\[
        H(u)H(-u)\ge a(u)^2
\]
(where $H:[-1,1] \rightarrow [0,\infty)$ is the certificate from Theorem~\ref{thm:scalar-certificate})
gives
\[
        2a(u)x_\nu x_\mu
        \le 2\sqrt{H(u)H(-u)} x_{\nu} x_{\mu}\leq
        H(-u)x_\nu^2+H(u)x_\mu^2.
\]
where the last inequality is by AM-GM.
Similarly, for a Bob edge
\[
        e=\{\nu,\mu\},
        \qquad
        \nu=(u,v),
        \qquad
        \mu=(u',-v),
\]
its contribution to
\(\langle x,M_\Gamma x\rangle\) is \(2a(v)x_\nu x_\mu\), and we have
\[
        2a(v)x_\nu x_\mu
        \le
        H(v)x_\nu^2+H(-v)x_\mu^2.
\]
Then, we have
\[
\begin{aligned}
\langle x,M_\Gamma x\rangle
&\le
\sum_{\nu\in V(\Gamma)}(M_\Gamma)_{\nu\nu}x_\nu^2\\
&\quad+
\sum_{e=\{\nu,\mu\}\in E_A(\Gamma)}
\bigl[H(-u_\nu)x_\nu^2+H(u_\nu)x_\mu^2\bigr]\\
&\quad+
\sum_{e=\{\nu,\mu\}\in E_B(\Gamma)}
\bigl[H(v_\nu)x_\nu^2+H(-v_\nu)x_\mu^2\bigr].
\end{aligned}
\]
Now, we can rewrite the latter in terms of sums over vertices, rather than edges. The key observation is that each vertex $\nu = (u,v)$ is involved in at most one Alice edge and one Bob edge (or less if one of $u$ or $v$ are $0$). Thus, the contribution from terms involving $x_{\nu}^2$ is at most
$$
\begin{cases}
d(u,v)x_{\nu}^2 + H(-u)x_{\nu}^2 + H(v)x_{\nu}^2 \quad\quad \textnormal{if } u, v \neq 0 \\
(d(u,v) + \frac12) x_{\nu}^2 + H(v)x_{\nu}^2 \quad\quad \textnormal{if } u = 0  \,\land \,v \neq 0 \\
(d(u,v) + \frac12) x_{\nu}^2 + H(-u)x_{\nu}^2 \quad\quad \textnormal{if } u \neq 0  \,\land \,v = 0 \\
(d(u,v) + 1)x_{\nu}^2  \quad\quad \textnormal{if } u = 0  \,\land \,v = 0 \,,
\end{cases}
$$
where here we used the fact that $H$ is non-negative, and that there is no Alice edge adjacent to a vertex that has $u=0$, and similarly no Bob edge with $v=0$.
Crucially, since $H(0)^2 \geq a(0)^2 = \frac14$, and so $H(0) \geq \frac12$, we can upper bound the expression in all four cases above simply by 
$$ d(u,v)x_{\nu}^2 + H(-u)x_{\nu}^2 + H(v)x_{\nu}^2 \,.$$
Thus, putting things together, we have
\[
\begin{aligned}
        \langle x,M_\Gamma x\rangle
        &\le
        \sum_{\nu\in V(\Gamma)}
        \bigl[
        d(u_\nu,v_\nu)+H(-u_\nu)+H(v_\nu)
        \bigr]x_\nu^2. \,,
\end{aligned}
\]
By the properties of the certificate,
\[
        H(v)+d(u,v)+H(-u)\le q_*
        \qquad\forall u,v\in[-1,1].
\]
Applying this to every term in the above sum, we get
\[
        \langle x,M_\Gamma x\rangle
        \le
        q_*\sum_{\nu}x_\nu^2
        =
        q_*\sum_{\nu}y_\nu^2.
\]
Thus,
\[
        \langle y,M_\Gamma y\rangle
        \le
        q_*\|y\|^2
\]
for every real vector \(y\), proving
\[
        M_\Gamma\le q_*I.
\]

We now turn to proving that, if a strategy attains $q_*$, then the vector $\xi$ with entries $\xi_{\nu} = \sqrt{m_{\nu}}$ is an eigenvector of $M_\Gamma$ with eigenvalue $q_*$.

Let \(V(\Gamma)\) denote the set of vertices of $\Gamma$. Recall that a vertex is a pair $\nu=(u_\nu,v_\nu)\in E_H$, and we defined its ``mass''
$$m_\nu=\sum_{z_\alpha=u_\nu,\,w_\beta=v_\nu}p_{\alpha\beta}
$$
and required $m_{\nu}>0$.

The first observation is that, since the strategy attains
\(q_*\), 
\begin{equation}    \label{eq:sum_m_nu}\sum_{\nu\in V(\Gamma)}m_\nu=1.
\end{equation}
The latter follows from Lemma~\ref{lem:zero-slack} and the fact that $\sum_{\alpha,\beta} p_{\alpha \beta} = 1$: to see this, note that, for any $p_{\alpha \beta}>0$, Lemma~\ref{lem:zero-slack} guarantees that $\nu = (z_{\alpha}, w_{\beta}) \in E_H$; moreover, $m_{\nu} \geq p_{\alpha \beta} >0$, and so $\nu = (z_{\alpha}, w_{\beta})$ is a legitimate vertex, and thus $p_{\alpha \beta}$ is included in the summation above.

In the proof of the
universal upper bound (Theorem~\ref{thm:universal-upper}), the Bell value is decomposed as $\text{val} = D+T_A+T_B$,
where
\[
        D
        =
        \sum_{\alpha,\beta}
        p_{\alpha\beta}d(z_\alpha,w_\beta)
\]
is the diagonal contribution, and
\begin{equation}
\label{eq:T_A}
        T_A
        =
        \frac12
        \langle \psi,X_A\otimes W_B\,\psi\rangle,
        \qquad
        T_B
        =
        \frac12
        \langle \psi,W_A\otimes X_B\,\psi\rangle.
\end{equation}
Aggregating contributions corresponding to the same vertex, we can rewrite the diagonal contribution as
\[
        D
        =
        \sum_{\nu\in V(\Gamma)}
        d(u_\nu,v_\nu)m_\nu .
\]
Now consider the Alice term \(T_A\). In the proof of Theorem~\ref{thm:universal-upper}, we showed that $T_A$ decomposes into a sum over Alice blocks. As we have seen a few times by now, there are three types of Alice blocks to consider (according to Lemma~\ref{lem:structure}):

\begin{itemize}
\item First, if $b = \{\alpha,\bar\alpha\}$ is a two-dimensional Alice block with $z_\alpha=u, z_{\bar\alpha}=-u$, and $u \neq 0, |u|<1$, then, its contribution to $T_A$ is, from Equation~\eqref{eq:83} at most
\[
        2a(u)\sqrt{r_\alpha r_{\bar\alpha}},
\]
where recall that $r_\alpha=\sum_\beta p_{\alpha\beta}$, and $r_{\bar\alpha}=\sum_\beta p_{\bar\alpha\beta}$. 

The contribution to $T_A$ from two-dimensional blocks can then be rewritten as a sum over Alice edges, because of the following. First, we can ignore blocks where either $r_\alpha$ or $r_{\bar{\alpha}}$ is zero, since they contribute zero. On the other hand, since both $r_{\alpha}, r_{\bar\alpha}>0$, there exist $\beta$ and $\beta'$ such that $p_{\alpha\beta}, p_{\bar\alpha \beta'}>0$. Then, since the strategy attains $q_*$, we can invoke Lemma~\ref{lem:zero-slack} to deduce that slacks vanish: in particular, $a(u)^2 = H(u) H(-u)$, and there must exist vertices $\nu = (u, v)$ and $\mu = (-u, v')$ in $\Gamma$, where $v = w_\beta$ and $v' = w_{\beta'}$. Moreover, we can invoke the single-valuedness lemma, Lemma~\ref{lem:single-valued}, at $u$ to deduce that each contribution $2a(u)\sqrt{r_{\alpha} r_{\bar{\alpha}}}$ corresponds to exactly one Alice edge \(e=\{\nu,\mu\}\), where \(\nu=(u,v)\) and \(\mu=(-u,v')\), with weight $a_e := a(u)$ and $u = z_{\alpha}$. 

So, letting \(\mathcal B_e\) be the set of blocks corresponding to Alice edge $e$, and $E_A(\Gamma)$ the set of Alice edges, we have that the total contribution to $T_A$ from two-dimensional blocks is
\[
\begin{aligned}
        \sum_{e \in E_A(\Gamma)}\sum_{\{\alpha, \bar{\alpha}\} \in \mathcal B_e}
        2a_e\sqrt{r_\alpha r_{\bar\alpha}}
        &\le \sum_{e \in E_A(\Gamma)}
        2a_e
        \sqrt{
        \left(\sum_{\{\alpha, \bar{\alpha}\} \in \mathcal B_e}r_\alpha\right)
        \left(\sum_{\{\alpha, \bar{\alpha}\} \in \mathcal B_e}r_{\bar\alpha}\right)
        } \\
        &=
        \sum_{e \in E_A(\Gamma)}2a_e\sqrt{m_\nu m_\mu}.
\end{aligned}
\]
where the identities $m_{\nu} = \sum_{\{\alpha, \bar{\alpha}\} \in \mathcal B_e}r_\alpha$ and \(m_{\mu} =\sum_{\{\alpha, \bar{\alpha}\} \in \mathcal B_e}r_{\bar\alpha}\) also follow by Lemma~\ref{lem:single-valued}.
\item Second, if \(b=\{\alpha\}\) is a one-dimensional zero block with $z_\alpha=0, 
        X_Ae_\alpha=\varepsilon_\alpha e_\alpha$, and $
        \varepsilon_\alpha\in\{\pm1\}$, its contribution to $T_A$ is, from Equation~\eqref{eq:211}, at most
\[
        \frac12 r_\alpha.
\]
Since the strategy attains $q_*$, Lemma~\ref{lem:zero-slack} gives $\frac14 = a(0)^2 = H(0)^2$. By Lemma~\ref{lem:single-valued}, each one-dimensional zero block
\(\{\alpha\}\) is associated with a single vertex
\(\nu=(0,v_\nu)\).  For each such vertex $\nu$, recall that we defined
\[
        m_\nu= \sum_{\alpha:\,\{\alpha\}\text{ is associated with }\nu}
        r_\alpha.
\]
Since the block \(\{\alpha\}\) contributes at most
\(\frac12r_\alpha\) to \(T_A\), summing over all one-dimensional zero
blocks we have that
\[
        \frac12\sum_{\nu:\,u_\nu=0}m_\nu
\]
is an upper bound on their total contribution to \(T_A\).

\item Third, if \(b=\{\alpha\}\) is a one-dimensional block with $z_\alpha=\pm1, \,
       X_Ae_\alpha=0$, its contribution to $T_A$ is zero. 
\end{itemize}
Thus, we have
\begin{equation}
        T_A
        \leq 2\sum_{e=\{\nu,\mu\}\in E_A(\Gamma)}
        a_e\sqrt{m_\nu m_\mu} +
\frac12 \sum_{\nu \in V(\Gamma): u_{\nu} = 0} m_\nu 
\end{equation}
The same argument on Bob's side gives
\[
        T_B
        \le
        2\sum_{e=\{\nu,\mu\}\in E_B(\Gamma)}
        a_e\sqrt{m_\nu m_\mu} + \frac12 \sum_{\mu \in V(\Gamma) : v_{\mu} = 0} m_{\mu}
\]
Therefore, the value $D + T_A + T_B$ of the original strategy is bounded above by
\begin{align*}
&\sum_{\nu\in V(\Gamma)}
d(u_\nu,v_\nu)m_\nu
        +
        2\sum_{e=\{\nu,\mu\}\in E(\Gamma)}
        a_e\sqrt{m_\nu m_\mu}
        +
        \frac12\sum_{\nu:\,u_\nu=0}m_\nu
        +
        \frac12\sum_{\mu:\,v_\mu=0}m_\mu
\\
&=
\sum_{\nu\in V(\Gamma)}
(M_\Gamma)_{\nu\nu}m_\nu
+
2\sum_{\nu, \mu \in V(\Gamma): \nu \neq \mu}
(M_\Gamma)_{\nu\mu}\sqrt{m_\nu m_\mu}
\\
&=
\langle\xi,M_\Gamma\xi\rangle.
\end{align*}
where the first equality follows by the definition of $M_{\Gamma}$, given in Lemma~\ref{lem:95}.

Now, let $\xi$ be the vector whose entries, indexed by vertices, are
\[
        \xi_\nu=\sqrt{m_\nu}.
\]
By Lemma~\ref{lem:95}, $M_\Gamma\le q_*I$. Thus, we have
$$ \text{val} = D + T_A + T_B \leq \langle \xi, M_\Gamma \xi \rangle \leq q_* \|\xi \|^2 = q_*$$
where the last equality is by \eqref{eq:sum_m_nu}. Since the strategy we are considering attains $q_*$, the inequality must be equalities, which means $\xi$ is an eigenvector of $M_{\Gamma}$ with eigenvalue $q_*$.
\end{proof}

Lemma~\ref{lem:95} readily gives the following corollary.
\begin{corollary}
Decompose $\Gamma$ into its connected components: $\Gamma = \bigsqcup_j \Gamma_j$, where the $\Gamma_j$ are the connected components. Then, $$M_{\Gamma} = \bigoplus_j M_{\Gamma_j}\,,$$
where the $M_{\Gamma_j}$ are defined analogously as $M_{\Gamma}$ but with respect to $\Gamma_j$. Let $\xi$ be the vector with entries $\xi_\nu = \sqrt{m_\nu}$ for $\nu \in \Gamma$. Then, $\xi$ decomposes as:
\[
        \xi=\bigoplus_j \xi^{(j)}.
\]
where each $\xi^{(j)}$ is an eigenvector of $M_{\Gamma_j}$ with eigenvalue $q_*$.
\end{corollary}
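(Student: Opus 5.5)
The plan is to derive the corollary from Lemma~\ref{lem:95} and the fact that $M_\Gamma$ is a weighted adjacency matrix plus a diagonal.

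\textbf{Step 1: block structure.} The entry $(M_\Gamma)_{\nu\mu}$ with $\nu\neq\mu$ is a sum of edge weights, so it vanishes unless $\nu$ and $\mu$ are joined by an edge. In particular it vanishes whenever $\nu,\mu$ lie in different connected components. The diagonal entries depend only on the vertex itself, through $d(u_\nu,v_\nu)$ and the possible $\frac12$ or $1$ correction. Order the vertex set so that each component's vertices are consecutive. Then $M_\Gamma=\bigoplus_j M_{\Gamma_j}$, where $M_{\Gamma_j}$ is built by the same recipe restricted to $\Gamma_j$. This holds because edges of $\Gamma$ only join vertices within the same component.

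\textbf{Step 2: splitting $\xi$.} Since the components partition $V(\Gamma)$, the vector $\xi$ splits accordingly as $\xi=\bigoplus_j\xi^{(j)}$ with $\xi^{(j)}=\xi|_{\Gamma_j}$. Lemma~\ref{lem:95} gives $M_\Gamma\xi=q_*\xi$. Reading this equation blockwise, the block-diagonal form gives $M_{\Gamma_j}\xi^{(j)}=q_*\xi^{(j)}$ for every $j$.

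\textbf{Step 3: nonzero blocks.} To speak of genuine eigenvectors, observe that every entry $\xi_\nu=\sqrt{m_\nu}$ is strictly positive, because vertices are required to have $m_\nu>0$. Hence each $\xi^{(j)}$ is a nonzero, indeed entrywise positive, vector, and so is an honest eigenvector with eigenvalue $q_*$.

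We can also record that $M_{\Gamma_j}\le q_*I$. This follows either by compressing $M_\Gamma\le q_*I$ to the coordinates of $\Gamma_j$, or directly from the direct sum. So $q_*$ is in fact the top eigenvalue of each $M_{\Gamma_j}$, which is what the later contradiction with Corollary~\ref{cor:no-finite-path} will use.

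\textbf{Main obstacle.} There is essentially none. The only point needing care is that an edge's two endpoints always lie in the same component, which is true by definition, so no off-diagonal entry crosses blocks.
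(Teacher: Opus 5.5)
Your proposal is correct and follows the same route as the paper. The block-diagonal form comes from the off-diagonal entries of $M_\Gamma$ being edge weights (which never cross components) and the diagonal entries depending only on the vertex, and the blockwise eigen-equation is read off from $M_\Gamma\xi=q_*\xi$ in Lemma~\ref{lem:95}; your extra remarks that each $\xi^{(j)}$ is entrywise positive (so a genuine eigenvector) and that $M_{\Gamma_j}\le q_*I$ are exactly what the paper uses later in the finite-dimensional nonattainment argument.
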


\begin{proof}
The fact that $$M_{\Gamma} = \bigoplus_j M_{\Gamma_j} $$
follows directly from the fact that $M_{\Gamma}$ has the structure of an adjacency matrix of $\Gamma$. Then, the fact that each $\xi^{(j)}$ is an eigenvector of $M_{\Gamma_j}$ with eigenvalue $q_*$ follows immediately from Lemma~\ref{lem:95}.
\end{proof}

\begin{lemma}
\label{lem:component-jacobi}
Let \(\Gamma_0\) be a finite connected component of the graph
\(\Gamma\) from Definition~\ref{def:graph}, and let
\(M_{\Gamma_0}\) be the submatrix of \(M_\Gamma\) obtained by
restricting its rows and columns to the vertices of \(\Gamma_0\).  Define
\[
        \rho_0(t):=\frac12\mathbf 1_{\{t=0\}},
        \qquad
        \widehat d(u,v):=d(u,v)+\rho_0(u)+\rho_0(v).
\]
Let $m$ denote number of vertices of $\Gamma_0$. By Lemma~\ref{lem:path-or-cycle}, \(\Gamma_0\) is either a path or a cycle. Moreover, it has one of the following two forms:
\begin{enumerate}
\item \emph{Path form.}
Suppose $\Gamma_0$ is a cycle. Then, there exist ``labels'' 
\[
        c_0,c_m\in\{-1,0,1\},
        \qquad
        c_i\in(-1,1)\setminus\{0\}
        \quad(1\le i<m),
\]
such that, for every \(x=(x_1,\ldots,x_m)\in\mathbb R^m\),
\begin{equation}
\label{eq:102}
        \langle x,M_{\Gamma_0}x\rangle
        =
        \sum_{i=1}^m
        \widehat d(c_{i-1},c_i)x_i^2
        +
        2\sum_{i=1}^{m-1}
        a(c_i)x_ix_{i+1}.
\end{equation}
Thus the only possible difference between $M_{\Gamma_0}$ and an ordinary Jacobi matrix (as in Definition~\ref{def:signed-endpoint-jacobi-path}) is a \(+\frac12\) correction in the first diagonal entry when \(c_0=0\)
and in the last diagonal entry when \(c_m=0\).  When \(m=1\), both
corrections occur in the same entry.

\item \emph{Cycle form.}
Suppose $\Gamma_0$ is a cycle. Then, $m$ is even, and there exist ``labels''
\[
        c_1,\ldots,c_m\in(-1,1)\setminus\{0\},
        \qquad
        c_0:=c_m,
\]
such that, for every \(x=(x_1,\ldots,x_m)\in\mathbb R^m\),
\[
\begin{aligned}
        \langle x,M_{\Gamma_0}x\rangle
        &=
        \sum_{i=1}^m d(c_{i-1},c_i)x_i^2
        +
        2\sum_{i=1}^m a(c_i)x_ix_{i+1},
        \qquad x_{m+1}:=x_1.
\end{aligned}
\]
Here \(\widehat d=d\), since none of the cycle labels is zero.  This
formula also covers \(m=2\): the Alice and Bob edges are then ``parallel'',
and their weights add in the corresponding off-diagonal entry.
\end{enumerate}
In either form, for all \(1\le i\le m\),
\[
        (c_{i-1},c_i)\in E_H \,,
\]
where for a cycle the indices are
read modulo \(m\). Moreover, for every \(1\le i<m\) in the path case and for every \(i\) in the cycle
case, we have
\[
        H(c_i)H(-c_i)=a(c_i)^2 \,.
\]
In the path case, the same equality also holds at \(c_0\) or
\(c_m\) whenever that endpoint label is zero.
\end{lemma}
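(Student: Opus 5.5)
The plan is to walk along $\Gamma_0$ and translate each vertex into one new label, exactly as in the technical overview, then check entry by entry that $M_{\Gamma_0}$ has the claimed form. By Lemma~\ref{lem:path-or-cycle}, $\Gamma_0$ is a path or an even alternating cycle. Consider first the path case, ordered as $\nu_1,\ldots,\nu_m$ with $\nu_i=(u_i,v_i)$ and edge $e_i=\{\nu_i,\nu_{i+1}\}$. Say $e_1$ is a Bob edge (the Alice case is symmetric, using $d(-v,-u)=d(u,v)$ and $a(-t)=a(t)$, or equivalently reversing the path). Then the edges alternate Bob, Alice, Bob, \dots. Set $c_0:=u_1$ and $c_1:=v_1$. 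For $1\le i<m$, set $c_{i+1}:=-u_{i+1}$ if $i$ is odd and $c_{i+1}:=v_{i+1}$ if $i$ is even.

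The first step is to check the edge relations. A Bob edge $e_i$ ($i$ odd) keeps the Alice label and flips the Bob label, so $u_{i+1}=u_i$ and $v_{i+1}=-v_i$. An Alice edge ($i$ even) flips the Alice label and keeps the Bob label. By induction, vertex $\nu_i$ equals $(c_{i-1},c_i)$ for $i$ odd and $(-c_i,-c_{i-1})$ for $i$ even. The diagonal entry is then $d(c_{i-1},c_i)$ in both cases, by $d(-v,-u)=d(u,v)$. The $\rho_0$ corrections in Lemma~\ref{lem:95} match $\widehat d$, because $\rho_0$ is even. The edge $e_i$ has weight $a(v_i)$ or $a(u_{i+1})$, which in both cases equals $a(\pm c_i)=a(c_i)$. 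The edge exists only when the flipped coordinate satisfies $0<|c_i|<1$, so the interior labels lie in $(-1,1)\setminus\{0\}$.

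The second step, and the main obstacle, is the endpoints: showing $c_0,c_m\in\{-1,0,1\}$. At $\nu_1$ the Alice edge is missing. If $0<|u_1|<1$, then Lemma~\ref{cor:no-dangling-scalar} gives an Alice neighbour. That neighbour must differ from $\nu_2$, because $\nu_2$ has the same Alice label $u_1\neq -u_1$. This would contradict $\Gamma_0$ being a path ending at $\nu_1$. Hence $c_0=u_1\in\{-1,0,1\}$, and the same argument at $\nu_m$ gives $c_m\in\{-1,0,1\}$. One subtlety needs care: when $m=2$ and the two vertices are joined by both an Alice and a Bob edge, the component is treated as a cycle, and the weights add.

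In the cycle case the same recursion applies cyclically. The length $m$ is even because the edges alternate. Closing up the cycle forces the recursion to return to the starting vertex, which gives $c_0=c_m$ and yields the cyclic quadratic form. No label is $0$ or $\pm1$, because every vertex has both edge types. I will also check that the orientation conventions close consistently after an even number of steps.

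Finally, the last conclusions follow from earlier lemmas. The membership $(c_{i-1},c_i)\in E_H$ follows from $\nu_i\in E_H$ together with the symmetry $(u,v)\in E_H\iff(-v,-u)\in E_H$, which is checked directly from $d(-v,-u)=d(u,v)$. The product equality at each interior $c_i$ comes from Lemma~\ref{lem:zero-slack}(b), since the edge $e_i$ arises from a two-dimensional block with both masses positive. At a zero endpoint, it comes from Lemma~\ref{lem:zero-slack}(c), which gives $H(0)=\tfrac12$ and hence $H(0)^2=a(0)^2$.
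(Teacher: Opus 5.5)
Your route is the paper's: the same labels ($c_0=u_1$, $c_1=v_1$, then $c_{i+1}=-u_{i+1}$ or $v_{i+1}$), the same vertex formulas $\nu_i=(c_{i-1},c_i)$ or $(-c_i,-c_{i-1})$, and the same $\tau$-invariance $(u,v)\mapsto(-v,-u)$ of $\widehat d$ and $E_H$. You also use Lemma~\ref{cor:no-dangling-scalar} at the endpoints, get $c_0=c_m$ from the closing Alice edge, and get the product equalities from Lemma~\ref{lem:zero-slack}(b),(c), all as the paper does.

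However, your description of the edges is wrong. By Definition~\ref{def:graph}, a Bob edge joins $(u,v)$ to $(u',-v)$ with $u'$ unconstrained, and an Alice edge joins $(u,v)$ to $(-u,v')$ with $v'$ unconstrained. The non-flipped coordinate is exactly the fresh label. For instance, in the canonical strategy the Bob edge between vertices $i$ (odd) and $i+1$ goes from $(c_{i-1},c_i)$ to $(-c_{i+1},-c_i)$. If $u_{i+1}=u_i$ really held, your recursion would force $c_{i+1}=-c_{i-1}$. It would also rule out the parallel-edge $2$-cycle that you yourself mention. Your vertex formulas still hold, because the induction only uses the sign flip and the definition of $c_{i+1}$.

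The error does matter in your endpoint step. The claim that the Alice neighbour of $\nu_1$ must differ from $\nu_2$ because $\nu_2$ has Alice label $u_1$ is unjustified, since nothing forces $u_2=u_1$. The fix is the paper's direct contrapositive. Since $\nu_1$ is a path endpoint, its only incident edge is the Bob edge $e_1$, so no Alice edge is incident to it. An Alice edge to $\nu_2$ would be a second, parallel edge, which would make $\Gamma_0$ the $2$-cycle rather than a path. Lemma~\ref{cor:no-dangling-scalar} then gives $u_1\in\{-1,0,1\}$, and the same argument works at $\nu_m$.

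Two smaller points. First, handle $m=1$ separately: both edge types are missing, so $c_0=u_1$ and $c_1=v_1$ both lie in $\{-1,0,1\}$. Second, when $e_1$ is an Alice edge, use $\tau$, i.e.\ set $c_0=-v_1$ and $c_1=-u_1$. Reversing the path does not work when $m$ is even, because both end edges are then Alice edges.
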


\begin{proof}
By Lemma~\ref{lem:path-or-cycle}, $\Gamma_0$ is either a path (possibly a single isolated vertex) or a cycle, with alternating Alice and Bob edges.

By Lemma~\ref{cor:no-dangling-scalar}, if
\(\nu=(u,v)\) satisfies \(0<|u|<1\), then \(\nu\) has a (unique) Alice
neighbour.  Consequently, if no Alice edge is incident to \(\nu\), then
\[
        u\in\{-1,0,1\}.
\]
Similarly, if no Bob edge is incident to \(\nu\), then $v\in\{-1,0,1\}$.
We shall also use the identities
\[
        d(-v,-u)=d(u,v),
        \qquad
        \rho_0(-t)=\rho_0(t).
\]
They imply $\widehat d(-v,-u)=\widehat d(u,v)$, and also that that the equality set $E_H$ is invariant under the map $\tau(u,v) =(-v,-u)$:
\[
        (u,v)\in E_H
        \Longleftrightarrow
        \tau(u,v)\in E_H.
\]

\noindent \textbf{The path case.} Suppose first that \(\Gamma_0\) is a path.  If \(m=1\), write its only
vertex as \(\nu_1=(u_1,v_1)\).  Both edge types (Alice and Bob) are missing, so, by the earlier observation, it must be
\[
        u_1,v_1\in\{-1,0,1\}.
\]
Let \(c_0=u_1\) and \(c_1=v_1\). Then, by the definition of the diagonal entries of \(M_\Gamma\) in
Lemma~\ref{lem:95}, we have
\[
        M_{\Gamma_0}
        =
        [\widehat d(c_0,c_1)] \,,
\]
which immediately implies identity~\eqref{eq:102}.
Since $(c_0,c_1)=(u_1,v_1)$ is a vertex of $\Gamma$, we also have $(c_0,c_1)\in E_H$, by definition of the graph.
Moreover, if either \(c_0\) or \(c_1\) is zero, then
Lemma~\ref{lem:zero-slack} gives
\[
        H(0)=\frac12,
        \qquad
        H(0)^2=a(0)^2.
\]
This proves all the desired statements when \(m=1\).

Now assume \(m\ge2\). Order the path vertices as
\[
        \nu_1,\ldots,\nu_m,
        \qquad
        \nu_i=(u_i,v_i),
\]
and let the $i$-th edge \(e_i\) join \(\nu_i\) to \(\nu_{i+1}\).
If \(e_1\) is a Bob edge, set
\[
        c_0:=u_1,\qquad c_1:=v_1.
\]
We define labels \(c_2,\ldots,c_m\) recursively. In words, recall that across a Bob edge the second coordinate changes sign, and across
an Alice edge the first coordinate changes sign; the other coordinate
at the new vertex defines the next \(c_i\), up to a sign. Formally, suppose first that \(i\) is odd.  Then \(e_i\) is a
Bob edge, so $v_{i+1}=-v_i=-c_i$.
We define
\[
        c_{i+1}:=-u_{i+1},
\]
which gives
\[
        (u_{i+1},v_{i+1})=(-c_{i+1},-c_i).
\]
If \(i\) is even, then \(e_i\) is an Alice edge, so $u_{i+1}=-u_i=c_i$.
We define
\[
        c_{i+1}:=v_{i+1},
\]
which gives
\[
        (u_{i+1},v_{i+1})=(c_i,c_{i+1}).
\]
Thus, compactly, the recursion can be written as
\begin{equation}
\label{eq:rec_1}
        (u_i,v_i)=
        \begin{cases}
        (c_{i-1},c_i),&i\ \text{odd},\\
        (-c_i,-c_{i-1}),&i\ \text{even}.
        \end{cases}
\end{equation}
If \(e_1\) is an Alice
edge, set
\[
        c_0:=-v_1,\qquad c_1:=-u_1.
\]
The same recursion, with the parities reversed, gives
\begin{equation}
\label{eq:rec_2}
        (u_i,v_i)=
        \begin{cases}
        (-c_i,-c_{i-1}),&i\ \text{odd},\\
        (c_{i-1},c_i),&i\ \text{even}.
        \end{cases}
\end{equation}

Now, by the Definition of the graph $\Gamma$ (Definition~\ref{def:graph}) an Alice edge $\left((u,v) , (-u,v')\right)$ has weight $a(u)$, and moreover $0<|u|<1$. Similarly, the weight of a Bob edge $\left((u,v) , (u',-v)\right)$ is $a(v)$, and $0<|v|<1$. Crucially, note that, from our definition of the labels $c_i$, we have that the weight of edge $e_i$ in $\Gamma_0$ is precisely $a(c_i)$ (where the latter also uses the fact that $a$ is an even function), and moreover $0<|c_i|<1$, for all \(1\le i<m\). 

The definition of \(M_\Gamma\) in Lemma~\ref{lem:95} then gives
\begin{equation}
\label{eq:ac_path}
        (M_{\Gamma_0})_{i,i+1}
        =
        (M_{\Gamma_0})_{i+1,i}
        =
        a(c_i).
\end{equation}
At each endpoint of the path, exactly one of the two edge types is
missing. By our earlier observation at the start of the proof, this forces the corresponding coordinate to belong to \(\{-1,0,1\}\). At the first vertex in the path, this coordinate is, up to sign, $c_0$, and at the last vertex it is $c_m$. Thus, we have 
\[
        c_0,c_m\in\{-1,0,1\}.
\]

By definition, the diagonal entry in $M_{\Gamma_0}$ corresponding to vertex \((u_i,v_i)\) is \(\widehat d(u_i,v_i)\).  The relations \eqref{eq:rec_1} and \eqref{eq:rec_2}
and the \(\tau\)-invariance of \(\widehat d\) then give
\begin{equation}
\label{eq:d_path}
        (M_{\Gamma_0})_{ii}
        =
        \widehat d(c_{i-1},c_i).
\end{equation}
Together, \eqref{eq:ac_path} and \eqref{eq:d_path} give identity~\eqref{eq:102}.

Since every vertex belongs to \(E_H\), the same relations and the
\(\tau\)-invariance of \(E_H\) give
\[
        (c_{i-1},c_i)\in E_H.
\]
Moreover, since, up to a sign, $c_i$ is the ``label'' of edge $e_i$ for $1 \leq i <m$, Lemma~\ref{lem:zero-slack} also gives
\[
        H(c_i)H(-c_i)=a(c_i)^2.
\]
If an endpoint label, $c_0$ or $c_m$ happens to be zero, then Lemma~\ref{lem:zero-slack} also implies that
\(H(0)=\frac12\), since the label is the coordinate of a vertex $\nu$ in $\Gamma$, which, by definition, must have positive mass $m_\nu >0$. This finishes the path case.

\noindent \textbf{The cycle case.} Finally, suppose that \(\Gamma_0\) is a cycle. Let the vertices on the cycle be
\[
        \nu_1,\ldots,\nu_m,
\]
and let \(e_i\) be the edge joining \(\nu_i\) to \(\nu_{i+1}\). Here, we denote $\nu_{m+1}=\nu_1$.

Now, momentarily disregard the closing edge \(e_m\). The remaining vertices and
edges form a path. Define labels \(c_0,\ldots,c_m\) recursively
exactly as we did in the earlier path case. These satisfy either \eqref{eq:rec_1} or \eqref{eq:rec_2} depending on whether the first edge is a Bob or Alice edge.

Since $\Gamma_0$ is a cycle, $m$ is even. Moreover, by Lemma~\ref{lem:path-or-cycle}, Alice and Bob edges alternate around the cycle. Thus, if \(e_1\) is a Bob edge, then \(e_m\) is an Alice edge, and \eqref{eq:rec_1} gives
\[
        \nu_1=(c_0,c_1),
        \qquad
        \nu_m=(-c_m,-c_{m-1}).
\]
Because \(e_m\) is an Alice edge, the Alice coordinates of its endpoints
are opposite.  Hence \(c_0=c_m\).  If \(e_1\) is an Alice edge, then
\(e_m\) is a Bob edge, and the same argument using \eqref{eq:rec_2} gives
\[
        c_0=c_m.
\]

For \(1\le i<m\), the label \(c_i\), up to sign, is the ``label'' of
edge \(e_i\) (i.e.\ its weight in $\Gamma$ is $a(c_i)$) as in the path case.  The label \(c_m=c_0\), up to sign,
is similarly the label of the closing edge \(e_m\).  Thus every \(c_i\)
labels an actual Alice or Bob edge, and Definition~\ref{def:graph}
implies
\[
        0<|c_i|<1
        \qquad(1\le i\le m).
\]
In particular, \(\rho_0(c_i)=0\) for every \(i\), so
\[
        \widehat d(c_{i-1},c_i)=d(c_{i-1},c_i).
\]

The calculation from the path case gives the diagonal terms and the
contributions of \(e_1,\ldots,e_{m-1}\).  Adding the contribution of the
closing edge \(e_m\) gives
\[
\begin{aligned}
        \langle x,M_{\Gamma_0}x\rangle
        &=
        \sum_{i=1}^m d(c_{i-1},c_i)x_i^2
        +
        2\sum_{i=1}^m a(c_i)x_ix_{i+1},
        \qquad x_{m+1}:=x_1.
\end{aligned}
\]

As in the path case, each vertex \(\nu_i\) is represented either by
\((c_{i-1},c_i)\) or by
\((-c_i,-c_{i-1})\).  Since every vertex belongs to \(E_H\) and \(E_H\)
is invariant under \((u,v)\mapsto(-v,-u)\), it follows that
\[
        (c_{i-1},c_i)\in E_H
        \qquad(1\le i\le m).
\]
Moreover, because every \(c_i\) labels an edge, Lemma~\ref{lem:zero-slack} also gives, for all $1\le i\le m$, 
\[
        H(c_i)H(-c_i)=a(c_i)^2 \,.
\]

Note that when \(m=2\) the edges \(e_1\) and \(e_2\) are the ``parallel'' Alice and
Bob edges between the same two vertices.  The cyclic edge sum includes
both contributions, as required.

\end{proof}

\section{No finite-dimensional strategy attains $q_*$}
\label{sec:no-finite-dimensional-attainment}

Throughout the next two lemmas, \(\Gamma\) denotes the graph from
Definition~\ref{def:graph} associated with a hypothetical
finite-dimensional strategy attaining \(q_*\).

\begin{lemma}[No finite path component attains $q_*$]
\label{lem:no-finite-calibrated-path}
Any finite path component $\Gamma_0$ of \(\Gamma\) is such that $\lambda_{max}(M_{\Gamma_0}) < q_*$.
\end{lemma}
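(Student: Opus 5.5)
The plan is to reduce every finite path component to an honest Jacobi path as in Definition~\ref{def:signed-endpoint-jacobi-path}, and then invoke Corollary~\ref{cor:no-finite-path}.

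\textbf{Step 1: the case with no zero endpoints.} By Lemma~\ref{lem:component-jacobi} (path form), $M_{\Gamma_0}$ has the form \eqref{eq:102}. Its labels satisfy $c_0,c_m\in\{-1,0,1\}$ and $c_i\in(-1,1)\setminus\{0\}$ for $1\le i<m$. If $c_0,c_m\in\{\pm1\}$, then $\widehat d=d$ on every diagonal entry. Hence $M_{\Gamma_0}=J_m(c)$ for a genuine Jacobi path, and
\[
\lambda_{\max}(M_{\Gamma_0})\le Q_m<q_*
\]
by Corollary~\ref{cor:no-finite-path}.

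\textbf{Step 2: one zero endpoint, by folding.} Suppose $c_0=0$ and $c_m\in\{\pm1\}$. I would ``unfold'' the path across the zero label by considering the length-$2m$ sequence
\[
c'=(-c_m,-c_{m-1},\ldots,-c_1,\,0,\,c_1,\ldots,c_m).
\]
This is a Jacobi path, since its endpoints $\pm c_m$ lie in $\{\pm1\}$.

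Using $d(-v,-u)=d(u,v)$ and $a(-t)=a(t)$, the reversed half reproduces the same diagonal entries $d(c_{i-1},c_i)$ and the same couplings $a(c_i)$. The two copies of vertex~$1$ are joined by an edge of weight $a(0)=\tfrac12$.

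Now take a nonnegative top eigenvector $x$ of $M_{\Gamma_0}$ and form the symmetric test vector $y=(x_m,\ldots,x_1,x_1,\ldots,x_m)$. A direct computation gives
\[
\langle y,J_{2m}(c')y\rangle=2\langle x,J_m(c)x\rangle+x_1^2=2\langle x,M_{\Gamma_0}x\rangle.
\]
The last equality holds because the $+\tfrac12$ correction in $\widehat d(0,c_1)$ contributes exactly $\tfrac12x_1^2$. Since $\|y\|^2=2\|x\|^2$, we get
\[
\lambda_{\max}(M_{\Gamma_0})\le\lambda_{\max}(J_{2m}(c'))\le Q_{2m}<q_*.
\]
The case $c_m=0$, $c_0\in\{\pm1\}$ is symmetric, folding at the right end instead. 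The case $m=1$ needs a one-line check.

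\textbf{Step 3: both endpoints zero — the main obstacle.} When $c_0=c_m=0$, folding both ends produces a periodic chain, not a finite Jacobi path.

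I would first reflect at one end as in Step~2. This gives a path of length $2m$ with a single zero endpoint carrying a $+\tfrac12$ correction. I would then iterate the unfolding to obtain the infinite periodic operator $J_\infty$. On this operator the periodic extension of the top eigenvector $\xi$ of $M_{\Gamma_0}$ is a bounded, strictly positive generalized eigenvector, and Lemma~\ref{lem:95} says its eigenvalue would be $q_*$.

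Truncating the periodic chain at a long window, and cutting at labels replaced by $\pm1$ (justified via Cauchy interlacing, as in Lemma~\ref{lem:pivots}), gives finite Jacobi paths with values approaching $\lambda_{\max}(M_{\Gamma_0})$ from below. This alone only yields $\le q_*$.

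To get strictness I would argue by contradiction. Assume $\lambda_{\max}(M_{\Gamma_0})=q_*$. Then I would exploit the insertion perturbation from the proof of Lemma~\ref{lem:strict-increase}, inserting a label $\sigma(1-\delta)$ next to a $\pm1$ cut. Combined with the positivity of $\xi$ at the cut, this should produce a finite Jacobi path with $\lambda_{\max}>q_*$, contradicting the definition of $q_*$.

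Making this limiting argument uniform in the window length is the step I expect to require the most care. If it fails, a fallback is to use the rigidity from Lemma~\ref{lem:single-valued}. That lemma gives $H(0)=\tfrac12$, and $c_1$ is determined by $H'(0)$ at both ends. The fallback would be to rule out this doubly-zero configuration directly from the equality conditions.
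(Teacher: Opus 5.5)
Your Steps~1 and~2 are correct and coincide with the paper's Cases~1 and~2; your test vector $y$ is the paper's isometry $U$ applied to $x$, up to a factor $\sqrt2$. The gap is Step~3, where $c_0=c_m=0$.

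The perturbative route does not close as described. Every truncated path, and every path obtained from it by insertion, is a finite Jacobi path, so it already has $\lambda_{\max}<q_*$. A contradiction can only come from using the hypothesis $\lambda_{\max}(M_{\Gamma_0})=q_*$ quantitatively, and your mechanism does not do that. The insertion of Lemma~\ref{lem:strict-increase} raises the value by an amount of order $\delta x_1^2$, where $x$ is the normalized top eigenvector of the \emph{truncated} path and $x_1$ is its coordinate at the cut. Nothing bounds $x_1$ below uniformly in the window length, since the top eigenvector spreads over the window. Nothing in your proposal compares this gain with the truncation deficit, and positivity of the periodic, non-$\ell^2$ vector $\xi$ gives no control over $x_1$.

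Testing with a cut-off copy of $\xi$ instead does not rescue the plan. This chain has no $\pm1$ labels, so creating a cut is itself a local change of fixed size: it removes a coupling $2a(c_j)\xi_j\xi_{j+1}>0$ and alters two diagonal entries. You give no reason an $O(\delta)$ insertion gain would offset it. Reflecting at one zero end does not reduce the problem either, since $(0,-c_{m-1},\ldots,-c_1,0,c_1,\ldots,c_{m-1},0)$ again has two zero endpoints. Your fallback names the right tool, but the boundary relations $c_1=H'(0)-\tfrac12$ and $c_{m-1}=\tfrac12-H'(0)$ do not by themselves give a contradiction. The relation has to be propagated through the whole chain.

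The missing idea is the rigidity that comes from concavity of $H$, and it makes any limiting argument unnecessary.
\begin{itemize}
\item By Lemma~\ref{lem:component-jacobi}, $(c_{i-1},c_i)\in E_H$ for all $i$, and product equality holds at every label (at $0$ because $H(0)=\tfrac12$).
\item Lemma~\ref{lem:single-valued} then gives $c_{i-1}=F(c_i)$ with $F:=\tfrac12-H'$. This $F$ is nondecreasing on the labels, because $H'$ is nonincreasing for concave $H$.
\item Hence $0=c_m\mapsto c_{m-1}\mapsto\cdots\mapsto c_0=0$ is a periodic orbit of a nondecreasing map. If $0<F(0)$, then $0<F(0)\le F^2(0)\le\cdots\le F^m(0)=0$, which is impossible; the case $F(0)<0$ is symmetric.
\item So $c_{m-1}=F(0)=0$, contradicting $c_{m-1}\neq0$ when $m\ge2$. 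The doubly-zero configuration therefore cannot occur for $m\ge2$.
\item For $m=1$ the component is the isolated vertex $(0,0)$. Then $M_{\Gamma_0}=[d(0,0)+1]=[0]$, and $0<q_*$ by Lemma~\ref{lem:inf-quarter-witness}.
\end{itemize}
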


\begin{proof}
Let \(\Gamma_0\) be a finite path component. By Lemma~\ref{lem:component-jacobi}, after ordering the $m$ vertices in the path, there are labels \(c_0,\ldots,c_m\) such that
\[
        c_0,c_m\in\{-1,0,1\},
        \qquad
        c_i\in(-1,1)\setminus\{0\}
        \quad(1\le i<m),
\]
and
\[
        (M_{\Gamma_0})_{ii}=\widehat d(c_{i-1},c_i),
        \qquad
        (M_{\Gamma_0})_{i,i+1}=(M_{\Gamma_0})_{i+1,i}=a(c_i),
\]
where
\[
        \widehat d(u,v)
        =
        d(u,v)+\rho_0(u)+\rho_0(v),
        \qquad
        \rho_0(t)=\frac12\mathbf 1_{\{t=0\}}.
\]
There are three cases.

\medskip
\noindent
\textbf{Case 1: neither endpoint label is zero.}

In this case
\[
        c_0,c_m\in\{\pm1\}.
\]
Therefore \(\rho_0(c_0)=\rho_0(c_m)=0\), and \(M_{\Gamma_0}\) is exactly the
ordinary Jacobi matrix \(J_m(c)\), as defined in Definition~\ref{def:signed-endpoint-jacobi-path}. Hence
Corollary~\ref{cor:no-finite-path} gives
\[
        \lambda_{\max}(M_{\Gamma_0})<q_*.
\]

\medskip
\noindent
\textbf{Case 2: exactly one endpoint label is zero.}

If the zero occurs at the right endpoint, reverse the vertex order, i.e.\ define a new sequence $(c'_0,\ldots,c'_m)$ where
\[
        c'_i:=-c_{m-i},
        \qquad 0\le i\le m.
\]
Let \(M'\) be the matrix associated with this new sequence:
\[
        M'_{ii}
        :=
        \widehat d(c'_{i-1},c'_i),
        \qquad
        M'_{i,i+1}=M'_{i+1,i}:=a(c'_i).
\] 
This differs from $M_{\Gamma_0}$ only by a
permutation similarity, because
\[
        \widehat d(-v,-u)=\widehat d(u,v),
        \qquad
        a(-t)=a(t).
\]
Thus, $M'$ has the same eigenvalues as $M_{\Gamma_0}$. We may therefore assume without loss of generality that the original sequence has
\[
        c_0=0,
        \qquad
        c_m=\eta\in\{\pm1\}.
\]

Now, reflect the path through its zero endpoint and define the length-\(2m\) sequence
\[
        \widetilde c
        =
        (-c_m,-c_{m-1},\ldots,-c_1,0,
          c_1,\ldots,c_{m-1},c_m).
\]
Its two endpoint labels are \(-\eta\) and \(\eta\), so
\(\widetilde c\) is an ordinary finite Jacobi path, as in Definition~\ref{def:signed-endpoint-jacobi-path}. Let
\[
        \widetilde J:=J_{2m}(\widetilde c) \,,
\]
where $J_{2m}(\widetilde c)$ is the ordinary Jacobi matrix corresponding to the path $\widetilde c$, as in Definition~\ref{def:signed-endpoint-jacobi-path}. Define an isometry \(U:\mathbb R^m\to\mathbb R^{2m}\) by
\[
        U(x_1,\ldots,x_m)
        =
        \frac1{\sqrt2}
        (x_m,x_{m-1},\ldots,x_1,
         x_1,\ldots,x_{m-1},x_m).
\]
Thus \(Ux\) has the same amplitudes on the two reflected halves and
\[
        \|Ux\|=\|x\|.
\] 

We will now expand the quadratic form $ \langle Ux,\widetilde J\,Ux\rangle$ directly.  Note that if we index the sequence $\widetilde c$ by $(0,\ldots,2m)$, then the middle label is $\widetilde c_m=0$.
Moreover, the definition of \(U\) gives
\[
        (Ux)_m=(Ux)_{m+1}=\frac{x_1}{\sqrt2}.
\]
Thus, we have
\[
\begin{aligned}
        \langle Ux,\widetilde J\,Ux\rangle
        &=
        \frac12\sum_{i=1}^m
        \Bigl(
                d(-c_i,-c_{i-1})
                +
                d(c_{i-1},c_i)
        \Bigr)x_i^2\\
        &\quad+
        \sum_{i=1}^{m-1}
        \Bigl(
                a(-c_i)+a(c_i)
        \Bigr)x_ix_{i+1}\\
        &\quad+
        2a(0)\frac{x_1}{\sqrt2}\frac{x_1}{\sqrt2}.
\end{aligned}
\]
The last line is the contribution of the edge between vertices \(m\)
and \(m+1\): this edge has label \(\widetilde c_m=0\), and the
coordinates of \(Ux\) at those vertices are both \(x_1/\sqrt2\). Using
\[
        d(-v,-u)=d(u,v),
        \qquad
        a(-t)=a(t),
        \qquad
        a(0)=\frac12,
\]
the previous expression simplifies to
\[
\begin{aligned}
        \langle Ux,\widetilde J\,Ux\rangle
        &=
        \sum_{i=1}^m d(c_{i-1},c_i)x_i^2
        +
        2\sum_{i=1}^{m-1}a(c_i)x_ix_{i+1}
        +
        \frac12x_1^2.
\end{aligned}
\]
Since \(c_0=0\) and \(c_1,\ldots,c_m\ne0\),
\[
        \widehat d(c_{i-1},c_i)
        =
        d(c_{i-1},c_i)
        +
        \frac12\mathbf 1_{\{i=1\}}.
\]
Therefore,
\[
\begin{aligned}
        \langle x, M_{\Gamma_0} x\rangle
        &=
        \sum_{i=1}^m d(c_{i-1},c_i)x_i^2
        +
        2\sum_{i=1}^{m-1}a(c_i)x_ix_{i+1}
        +
        \frac12x_1^2\\
        &=
        \langle Ux,\widetilde J\,Ux\rangle.
\end{aligned}
\]
Since the latter holds for all $x\in\mathbb R^m$, and $\|U x\| = \|x\|$, we have 
$$ \lambda_{\max}(M_{\Gamma_0}) \leq \lambda_{\max}(\widetilde J) \,. $$
Note that the matrices above are symmetric, so maximizing the quadratic form over real vectors gives the same result as maximizing over complex vectors.

But now \(\widetilde J\) is a Jacobi matrix, so
Corollary~\ref{cor:no-finite-path} gives
\[
        \lambda_{\max}(\widetilde J)<q_*.
\]
Hence,
\[
        \lambda_{\max}(M_{\Gamma_0})<q_*.
\]

\medskip
\noindent
\textbf{Case 3: both endpoint labels are zero.}

Suppose $c_0=c_m=0$.
Lemma~\ref{lem:component-jacobi} gives, for \(1\le i\le m\),
\[
        (c_{i-1},c_i)\in E_H \quad \text{ and } \quad H(c_i)H(-c_i)=a(c_i)^2.
\]
Let $S:=\{c_0,\ldots,c_m\}$.
All points of \(S\) lie in \((-1,1)\).  The product equalities above
ensure that \(H\) is differentiable at every point of \(S\), so
Lemma~\ref{lem:single-valued} applies to each equality pair
\((c_{i-1},c_i)\).  Define, for \(t\in S\),
\[
        F(t):=\frac12-H'(t).
\]
Then, Lemma~\ref{lem:single-valued} gives
\begin{equation}
\label{eq:zero-zero-path-dynamics}
        c_{i-1}=F(c_i)
        \qquad(1\le i\le m).
\end{equation}
In particular, \(F\) is well-defined on \(S\); relation
\eqref{eq:zero-zero-path-dynamics}, together with \(c_m=c_0\), also
shows that \(F(S)\subseteq S\).
Starting from \(c_m=0\) and applying
\eqref{eq:zero-zero-path-dynamics} repeatedly gives
\[
\begin{aligned}
        c_{m-1}&=F(0),\\
        c_{m-2}&=F^2(0),\\
        &\ \vdots\\
        c_0&=F^m(0).
\end{aligned}
\]
Since \(c_0=0\), it follows that
\[
        F^m(0)=0.
\]
Thus the labels form the finite orbit
\[
        0=c_m\longmapsto c_{m-1}\longmapsto\cdots
        \longmapsto c_1\longmapsto c_0=0
\]
under \(F\).

Concavity of the certificate \(H\) implies that \(H'\) is nonincreasing at the points
where it exists.  Hence \(F=\frac12-H'\) is nondecreasing on this finite
orbit.  Moreover, every label \(x\) on the orbit satisfies \(F^m(x)=x\).  A
nondecreasing map cannot have any non-constant finite periodic orbit. Thus, it must be $F(x) = x$ for all $x$ on the orbit. 
In particular, $F(0)=0$.
The displayed recurrence then gives successively
\[
        c_{m-1}=0,\quad c_{m-2}=0,\quad\ldots,\quad c_0=0.
\]
If \(m\ge2\), this contradicts the fact that every non-endpoint label
\(c_i\) is nonzero.  If \(m=1\), the component consists of the isolated
vertex \((0,0)\), and
\[
        M_{\Gamma_0}=[\,d(0,0)+1\,]=[\,0\,] \,,
\]
where recall the definition of $d$ from \eqref{eq:a-d}.
Thus, the top eigenvalue of $M_{\Gamma_0}$ is \(0<q_*\), since
Lemma~\ref{lem:inf-quarter-witness} gives \(q_*>\frac14\).

\smallskip
Overall, the three cases all satisfy
\[
        \lambda_{\max}(M_{\Gamma_0})<q_* \,,
\]
as desired.
\end{proof}

\begin{lemma}[No finite cycle component attains $q_*$]
\label{lem:no-finite-cycle}
Every finite cycle component \(\Gamma_0\) satisfies
\[
        \lambda_{\max}(M_{\Gamma_0})\le\frac14<q_*.
\]
In particular, no such component can have top eigenvalue \(q_*\).
\end{lemma}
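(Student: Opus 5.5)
The plan is to mirror Case~3 of Lemma~\ref{lem:no-finite-calibrated-path}: the cycle structure forces all labels to be a single fixed point of a monotone map, and then the eigenvalue can be computed by hand. Let $\Gamma_0$ be a cycle component with $m$ vertices. By Lemma~\ref{lem:component-jacobi}, $m$ is even and there are labels $c_1,\ldots,c_m\in(-1,1)\setminus\{0\}$, with $c_0:=c_m$, such that:
\begin{itemize}
\item $(c_{i-1},c_i)\in E_H$ for all $1\le i\le m$ (indices read cyclically);
\item $H(c_i)H(-c_i)=a(c_i)^2$ for every $i$;
\item the quadratic form of $M_{\Gamma_0}$ is the cyclic Jacobi form with diagonal entries $d(c_{i-1},c_i)$ and couplings $a(c_i)$ between $x_i$ and $x_{i+1}$, where $x_{m+1}=x_1$.
\end{itemize}

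\emph{Step 1: the labels form a periodic orbit.} By Lemma~\ref{lem:differentiability}, the product equalities make $H$ differentiable at every point of $S:=\{c_1,\ldots,c_m\}$. Lemma~\ref{lem:single-valued}, applied to each pair $(c_{i-1},c_i)\in E_H$ with product equality at $c_i$, gives
\[
c_{i-1}=F(c_i),\qquad F(t):=\tfrac12-H'(t).
\]
So $F$ is well defined on $S$, maps $S$ into $S$, and $c_m\mapsto c_{m-1}\mapsto\cdots\mapsto c_1\mapsto c_0=c_m$ is a periodic orbit of $F$.

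\emph{Step 2: the orbit is a fixed point.} Since $H$ is concave, $H'$ is nonincreasing where it exists, so $F$ is nondecreasing on $S$. A nondecreasing self-map of a finite subset of $\mathbb R$ has no non-constant periodic orbits. To see this, let $x$ be the minimum of the orbit. Then $F(x)\ge x$, and iterating monotonically gives $x\le F(x)\le F^2(x)\le\cdots\le F^m(x)=x$, so $F(x)=x$. Hence the orbit is constant, and $c_i=c$ for all $i$, for some $c\in(-1,1)\setminus\{0\}$.

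\emph{Step 3: compute the eigenvalue.} With all labels equal to $c$, $M_{\Gamma_0}=(c^2-1)I+a(c)\,A_{C_m}$, where $A_{C_m}$ is the adjacency matrix of the $m$-cycle. For $m=2$ we read $A_{C_2}$ as having entry $2$ off the diagonal, since the parallel Alice and Bob edges add. The top eigenvalue of $A_{C_m}$ is $2$ in either case, so
\[
\lambda_{\max}(M_{\Gamma_0})=c^2-1+\sqrt{1-c^2}.
\]
Setting $s=\sqrt{1-c^2}\in(0,1]$, this equals $s-s^2\le\frac14$. The strict inequality $\frac14<q_*$ is Lemma~\ref{lem:inf-quarter-witness}.

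The main obstacle is Step~1: it needs both the differentiability at every $c_i$ and the cyclic closure $c_0=c_m$ so that the recursion genuinely closes into an orbit. Both are supplied by Lemma~\ref{lem:component-jacobi} together with Lemmas~\ref{lem:differentiability} and~\ref{lem:single-valued}, so this reduces to checking that those hypotheses hold for every index, including the closing edge $i=m$.
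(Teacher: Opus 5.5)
Your proposal is correct and follows the paper's proof essentially step for step: you take the cyclic labels from Lemma~\ref{lem:component-jacobi}, derive the recursion $c_{i-1}=F(c_i)$ via Lemmas~\ref{lem:differentiability} and~\ref{lem:single-valued}, use the monotonicity of $F$ to force a constant orbit, and conclude with the bound $c^2-1+\sqrt{1-c^2}\le\frac14<q_*$. The differences are minor: you spell out why a nondecreasing map has no non-constant periodic orbit, which the paper only asserts, and you read off the top eigenvalue from the spectrum of the cycle's adjacency matrix, where the paper uses the inequality $2\sum_i x_ix_{i+1}\le 2\sum_i x_i^2$.
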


\begin{proof}
Use the cyclic labels from Lemma~\ref{lem:component-jacobi}.  Every label
belongs to \((-1,1)\setminus\{0\}\), so we actually have $\widehat d = d$ on the diagonal of
\(M_\Gamma\). Moreover,
for every \(i\), with indices read mod $m$,
\[
        (c_{i-1},c_i)\in E_H,
        \qquad
        H(c_i)H(-c_i)=a(c_i)^2.
\]
Lemma~\ref{lem:single-valued} therefore gives
\[
        c_{i-1}=F(c_i),
        \qquad
        F(t):=\frac12-H'(t).
\]
Concavity implies that \(F\) is nondecreasing on these labels, and a nondecreasing
map has no non-constant finite periodic orbit.  Hence it must be
\[
        c_1=\cdots=c_m=:c.
\]

Then, we have
\[
        \langle x,M_{\Gamma_0}x\rangle
        =
        d(c,c)\sum_{i=1}^m x_i^2
        +
        2a(c)\sum_{i=1}^m x_i x_{i+1},
        \qquad x_{m+1}=x_1.
\]
For every real \(x\),
\[
        2\sum_{i=1}^m x_i x_{i+1}
        \le
        \sum_{i=1}^m(x_i^2+x_{i+1}^2)
        =
        2\sum_{i=1}^m x_i^2.
\]
(where note that this also holds for $m=2$, in which the two terms in the edge sum correspond to ``parallel'' Alice and Bob edges). So, we have
\[
        \lambda_{\max}(M_{\Gamma_0})
        = d(c,c) + 2a(c)= 
        c^2-1+ \sqrt{1-c^2}.
\]
Setting \(y=\sqrt{1-c^2}\in[0,1]\), we obtain
\[
        c^2-1+\sqrt{1-c^2}
        =
        y-y^2
        =
        \frac14-\left(y-\frac12\right)^2
        \le\frac14.
\]
Finally, Lemma~\ref{lem:inf-quarter-witness} gives
\(q_*>\frac14\), proving the claim.
\end{proof}

\begin{theorem}[Finite-dimensional nonattainment]
\label{thm:finite-nonattainment}
For every finite \(d\),
\[
        \omega_d(I_{3322})<q_*.
\]
where $\omega_d(I_{3322})$ is the supremum over strategies of dimension $d$.
\end{theorem}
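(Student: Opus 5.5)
The plan has two parts: first show that no single strategy of dimension at most $d$ attains $q_*$, and then upgrade this to a strict inequality for the supremum using compactness.

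\textbf{Step 1: attainment is impossible.} Suppose, for contradiction, that some finite-dimensional strategy has value exactly $q_*$. After purification and Naimark dilation (as in the proof of Theorem~\ref{thm:universal-upper}), we may take it projective with a pure state in some finite dimension, and build the graph $\Gamma$ of Definition~\ref{def:graph}. Since the dimension is finite, $\Gamma$ has finitely many vertices. By Lemma~\ref{lem:path-or-cycle}, every connected component is a finite path or a finite cycle. By the corollary following Lemma~\ref{lem:95}, $\xi=\bigoplus_j\xi^{(j)}$, and each $\xi^{(j)}$ is an eigenvector of $M_{\Gamma_j}$ with eigenvalue $q_*$. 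Because $\|\xi\|^2=\sum_\nu m_\nu=1$, some $\xi^{(j)}$ is nonzero. Then $q_*$ is an eigenvalue of $M_{\Gamma_j}$, so $\lambda_{\max}(M_{\Gamma_j})\ge q_*$. This contradicts Lemma~\ref{lem:no-finite-calibrated-path} when $\Gamma_j$ is a path, and Lemma~\ref{lem:no-finite-cycle} when $\Gamma_j$ is a cycle. Hence no finite-dimensional strategy has value $q_*$.

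\textbf{Step 2: the supremum is attained.} Fix $d$. Strategies of local dimension at most $d$ are not automatically pure and projective in dimension $d$. However, for a Bell value it suffices to consider projective binary measurements and pure states in some bounded dimension:
\begin{itemize}
\item Naimark dilation of two-outcome POVMs on $\mathbb C^d$ only needs dimension at most $2d$ per measurement, so three measurements need a fixed bounded dimension $D(d)$.
\item Purification at most squares the dimension.
\end{itemize}
So $\omega_d(I_{3322})$ is at most the supremum of the Bell value over the set of tuples $(\psi,A_1,A_2,A_3,B_1,B_2,B_3)$ with unit vectors $\psi\in\mathbb C^{D}\otimes\mathbb C^{D}$ and projectors on $\mathbb C^{D}$. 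This set is compact, since the set of projectors in a fixed dimension is closed and bounded, and the value is a continuous function of these data. Hence the supremum is attained by some strategy in dimension $D$. By Step 1 (applied in dimension $D$), that maximum is strictly below $q_*$, and therefore $\omega_d(I_{3322})<q_*$.

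\textbf{Main obstacle.} The only delicate point is the bookkeeping in Step 2. The bound on the dilated dimension must depend only on $d$, so that compactness applies in a fixed dimension; this is why the dilation is carried out before taking the supremum. The set of projectors of varying rank in a fixed dimension is a finite union of compact Grassmannian-type sets, hence compact. Everything else is a direct assembly of the lemmas already established.
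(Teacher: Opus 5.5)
Your proposal is correct and follows the paper's proof essentially step for step: assume attainment, build the graph $\Gamma$, use the component decomposition with the eigenvector $\xi$ to contradict Lemmas~\ref{lem:no-finite-calibrated-path} and~\ref{lem:no-finite-cycle}, and then use compactness in fixed dimension to turn non-attainment into a strict inequality for the supremum. The only difference is in the compactness step: you dilate and purify to a fixed dimension $D(d)$ before taking the maximum, whereas the paper pads to $\mathbb{C}^d$ and applies compactness directly there. Either order works, since one could also maximize over POVMs and mixed states in dimension $d$ and dilate the maximizer afterwards.
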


\begin{proof}
First, we show that any finite-dimensional strategy achieves a value strictly less than $q_*$. Suppose for a contradiction that a finite-dimensional strategy attains \(q_*\).
Let \(\Gamma\) be its associated graph as in
Definition~\ref{def:graph}.  The graph is finite.  By
Lemma~\ref{lem:95},
\[
        M_\Gamma\le q_*I,
        \qquad
        M_\Gamma\xi=q_*\xi,
        \quad \text{where } \,
        \xi_\nu=\sqrt{m_\nu}>0.
\]
Decompose $M_{\Gamma}$ and $\xi$ according to the connected-components of $\Gamma$:
\[
        M_\Gamma=\bigoplus_j M_{\Gamma_j},
        \qquad
        \xi=\bigoplus_j\xi^{(j)}.
\]
For every component, \(\xi^{(j)}\ne0\) and
\[
        M_{\Gamma_j}\xi^{(j)}
        =
        q_*\xi^{(j)}.
\]
Since also \(M_{\Gamma_j}\le q_*I\), it follows that
\[
        \lambda_{\max}(M_{\Gamma_j})=q_*.
\]
By Lemma~\ref{lem:component-jacobi}, every finite component is a path or
a cycle.  The path case contradicts
Lemma~\ref{lem:no-finite-calibrated-path}, and the cycle case contradicts
Lemma~\ref{lem:no-finite-cycle}.  Hence no finite-dimensional strategy
attains \(q_*\).

Finally, we show that for any fixed $d$, even the \emph{supremum} over finite-dimensional strategies is strictly less than $q_*$. 
After padding smaller local spaces to \(\mathbb C^d\), the sets of
states and of finite families of binary projectors
are compact, and the Bell value is continuous. Thus, the value
\(\omega_d(I_{3322})\) is attained, i.e.\ the supremum is a maximum. But, we have just shown that any strategy of dimension $d$ achieves a value stricly less than $q_*$. Therefore,
\[
        \omega_d(I_{3322})<q_*.
\]
\end{proof}

\section{Infinite-dimensional attainment}
\label{sec:infinite-dimensional-attainment}

In this section we prove that the value \(q_*\) is attained on separable infinite-dimensional Hilbert spaces. For an outline of the proof, see Section~\ref{sec:infinite-overview} of the technical overview.

\subsection{Two preliminary facts about $q_*$}

Recall that we defined $q_*$ as follows:
\[
        q_*:=\sup_{m\ge1}Q_m.
\]
where 
$$Q_m=\sup_{\substack{c_1,\ldots,c_{m-1}\in[-1,1]\\ c_0,c_m \in \{\pm 1\}}} \lambda_{\max}\left(J_m(c)\right)\,.$$ 
Here $J_m(c)$ is the Jacobi matrix from Definition~\ref{def:signed-endpoint-jacobi-path}. 

It will be convenient in later arguments to focus only on Jacobi paths with $c_0 = 1$ and $c_m = -1$, rather than arbitrary $c_0,c_m \in \{\pm1\}$. The next lemma shows that this does not change the supremum.

\begin{lemma}
\label{lem:inf-open-reduction}
Let
\[
 \tilde{q}_*
 :=
 \sup_{m\geq 1}
 \sup_{\substack{c_0=1,\ c_m=-1\\
                  c_1,\ldots,c_{m-1}\in[-1,1]}}
 \lambda_{\max}\left(J_m(c)\right).
\]
Then
\[
        \tilde{q}_*=q_*.
\]
\end{lemma}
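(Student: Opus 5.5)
The plan is to show the two inequalities $\tilde q_*\le q_*$ and $q_*\le \tilde q_*$ separately. The first is immediate: every path with $c_0=1$ and $c_m=-1$ is one of the Jacobi paths allowed in Definition~\ref{def:q*}, so each inner supremum defining $\tilde q_*$ is at most $Q_m$, and hence $\tilde q_*\le q_*$.

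For the reverse inequality, I would take an arbitrary Jacobi path $c=(c_0,\ldots,c_m)$ with $c_0,c_m\in\{\pm1\}$ and convert it into a path from $+1$ to $-1$ whose top eigenvalue is at least $\lambda_{\max}(J_m(c))$. The conversion pads the wrong endpoints, as follows.

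\emph{Left endpoint.} If $c_0=-1$, I prepend the label $+1$ to obtain $(1,-1,c_1,\ldots,c_m)$. The new first coordinate has diagonal entry $d(1,-1)=-1+\tfrac{1+1}{2}-1=-1$. Its coupling to the old first coordinate is $a(c_0)=a(-1)=0$. Every other diagonal and off-diagonal entry is unchanged, because each entry depends only on consecutive labels.

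\emph{Right endpoint.} If $c_m=+1$, I append the label $-1$. The new last diagonal entry is again $d(1,-1)=-1$, and the new coupling is $a(c_m)=a(1)=0$.

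In all four endpoint cases the resulting path $c'$ starts at $1$ and ends at $-1$. Its Jacobi matrix is a direct sum of $J_m(c)$ with zero, one, or two $1\times1$ blocks $[-1]$. Therefore
\[
\lambda_{\max}(J_{m'}(c'))=\max\{\lambda_{\max}(J_m(c)),-1\}\ge\lambda_{\max}(J_m(c)).
\]
Taking suprema over $c$ and $m$ then gives $q_*\le\tilde q_*$.

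I do not expect a real obstacle here. The only care needed is to check that the inserted labels leave the neighbouring entries unchanged. This holds because the diagonal entries $d(c_{i-1},c_i)$ for $i\ge1$ of the original path are untouched (they are merely reindexed), and the inserted couplings vanish because $a(\pm1)=0$.
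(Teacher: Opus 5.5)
Your proposal is correct and follows the paper's own proof. The paper also prepends $+1$ when $c_0=-1$ and appends $-1$ when $c_m=+1$, and uses $a(\pm1)=0$ and $d(1,-1)=-1$ to see that the new Jacobi matrix is $J_m(c)$ plus at most two $1\times1$ blocks $[-1]$, whose top eigenvalue is therefore at least $\lambda_{\max}(J_m(c))$.
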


\begin{proof}
The inequality $\tilde{q}_*\leq q_*$ is immediate.  Conversely, let
\(c=(c_0,\ldots,c_m)\) be any path with arbitrary endpoints in $\{\pm1\}$. If necessary,
prepend \(+1\) when \(c_0=-1\), and append \(-1\) when \(c_m=+1\).  The
result is a \(+1\) to \(-1\) path $c'$.  Since, $a(1)=a(-1)=0$, and $d(1,-1)=-1$, the new Jacobi matrix for $b'$ is the direct
sum of \(J_m(b)\) and at most two copies of the scalar matrix \([-1]\). In particular, its largest eigenvalue is at least
\(\lambda_{\max}J_m(b)\). Taking suprema gives
\(q_*\leq \tilde{q}_*\).
\end{proof}

We will make use of the inequality \(q_*>1/4\), which was already observed numerically by Pál and Vértesi. We prove this via an explicit witness.

\begin{lemma}
\label{lem:inf-quarter-witness}
        $q_*>\frac14$.
\end{lemma}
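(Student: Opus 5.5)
The plan is to exhibit one explicit finite Jacobi path $c=(c_0,\ldots,c_m)$ with $c_0=1$ and $c_m=-1$, and to certify $\lambda_{\max}(J_m(c))>\frac14$ by exact rational arithmetic. By Definition~\ref{def:q*} this immediately gives $q_*\ge\lambda_{\max}(J_m(c))>\frac14$.

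The key observation is that if every interior label $c_i$ is chosen rational, then the whole matrix is "rational in the right sense". The diagonal entries $d(c_{i-1},c_i)$ are rational. The off-diagonal entries $a(c_i)$ are square roots, but only their squares $a(c_i)^2=(1-c_i^2)/4$ enter the Riccati recursion of Lemma~\ref{lem:pivots}. So I will not compute the top eigenvector. Instead, I will study $A:=\frac14 I_m-J_m(c)$ and run the pivot recursion
\[
p_1=\tfrac14-d(c_0,c_1),\qquad p_i=\tfrac14-d(c_{i-1},c_i)-\frac{a(c_{i-1})^2}{p_{i-1}},
\]
entirely over $\mathbb Q$. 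If $p_1,\ldots,p_{k-1}>0$ and $p_k<0$ for some $k\le m$, the argument closes as follows. The leading principal minors of $A$ are $p_1\cdots p_j$, by the same $LDL^{\mathsf T}$ computation as in Lemma~\ref{lem:pivots}, valid as long as the previous pivots are nonzero. Hence the $k\times k$ leading principal minor is negative. So that principal submatrix, and hence $A$ itself, has a negative eigenvalue (using Cauchy interlacing, or simply a test vector supported on the first $k$ coordinates). This means $\lambda_{\max}(J_m(c))>\frac14$ strictly. Equivalently, the final output of the proof is a finite table of rational pivots, each of which can be checked by hand.

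The main obstacle is choosing the labels. Two quick sanity checks show that naive choices fail. First, $m=2$ gives value exactly $0$. Second, a constant interior label $c$ with $\sqrt{1-c^2}=\frac12$ is exactly the cycle optimum $\frac14$ from Lemma~\ref{lem:no-finite-cycle}. Paths with constant interior labels therefore presumably stay below $\frac14$ because of boundary losses, and the gain must come from a non-constant, roughly monotone decreasing profile from $+1$ to $-1$. This is the shape of the P\'al--V\'ertesi strategies and of the monotonicity discussed in Section~\ref{sec:infinite-overview}.

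To find such labels, I would first take the P\'al--V\'ertesi numerical optimizers for a moderate dimension, say $m$ between roughly $10$ and $20$, large enough that their value (numerically about $0.2508$ in the limit) already exceeds $\frac14$ with some margin. I would then round each interior label to a nearby rational with small denominator. Since $\lambda_{\max}$ is continuous in the labels and the margin above $\frac14$ is of order $10^{-4}$ or more, a rounding fine enough preserves the strict inequality. Finally, I would present the resulting rational path and its pivot sequence, noting the first negative pivot.

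As a fallback, I can give an explicit test vector $x$ with $\langle x,J_m(c)x\rangle>\frac14\|x\|^2$. The cross terms then involve $\sqrt{1-c_i^2}$, which can be made rational by choosing Pythagorean labels $c_i=\frac{r^2-s^2}{r^2+s^2}$. With that choice the inequality again reduces to a single rational comparison.
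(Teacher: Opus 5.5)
\textbf{Verdict: there is a genuine gap.} The proposal never exhibits a witness path, and a concrete path is the entire content of this lemma.

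Your verification mechanism is sound, and it is the same one the paper uses. You form $\frac14 I-J$ and run the $LDL^{\mathsf T}$ pivot recursion, in which only $a(c_i)^2$ appears. A first negative pivot after positive ones makes a leading principal minor negative, which forces $\lambda_{\max}(J)>\frac14$.

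The problem is what comes before the verification. Since $q_*$ is defined as a supremum over finite paths, $q_*>\frac14$ is equivalent to the existence of a single finite path with $\lambda_{\max}>\frac14$. Your plan obtains that path by ``taking the P\'al--V\'ertesi numerical optimizers'' for $m$ around $10$--$20$. You then assert that their value exceeds $\frac14$ by a margin of order $10^{-4}$, enough to survive rounding. That assertion is exactly what has to be proved. As written, no labels are given and no pivot table is computed, so nothing checkable has been established; the proposal describes a search, not a proof.

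The size and margin claims are also unsupported. The limiting excess $q_*-\frac14$ is itself only about $9\times10^{-4}$, and by Corollary~\ref{cor:no-finite-path} every finite length falls strictly below it. The paper's explicit witness has length $46$ and clears the threshold with a final pivot of only about $-3\times10^{-4}$. With $20$ instead of $21$ plateau labels, the same construction gives a final pivot of about $+2\times10^{-4}$, so $\frac14 I-B\succ0$ and it fails. You therefore cannot take for granted that some path of length at most $20$ works, let alone with a margin robust to crude rounding.

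What the paper supplies is a concrete path:
\begin{itemize}
\item The labels are $c_0=1$, $c_1=\cdots=c_{21}=\frac{\sqrt3}{2}$, $c_{22}=\frac23$, $c_{23}=0$, completed antisymmetrically by $c_{46-i}=-c_i$.
\item Reversal symmetry lets you restrict $J_{46}(c)$ to the symmetric subspace. This gives a $23\times 23$ tridiagonal $B$ whose last diagonal entry is $d(\frac23,0)+a(0)=-\frac16$.
\item On the plateau, $d=-\frac14$ and $a(\frac{\sqrt3}{2})^2=\frac1{16}$. The pivots of $\frac14I-B$ therefore obey $p_j=\frac12-\frac{1}{16p_{j-1}}$, with closed form $p_j=\frac14\bigl(1+\frac{z}{1+(j-1)z}\bigr)$ and $z=2-\sqrt3$.
\item The last two pivots are then explicit elements of $\mathbb{Q}(\sqrt3)$. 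The final check $p_{23}<0$ reduces to the integer inequality $3\cdot 3685^2>6381^2$.
\end{itemize}

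Your rational-label variant, which keeps all arithmetic in $\mathbb{Q}$, is a legitimate alternative in principle. To make it a proof you must actually write down a rational path and its pivot sequence. The paper's two devices are what keep such a table short enough to verify by hand: the symmetry reduction, and a long constant plateau at a label where the pivot recursion has a closed form.
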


\begin{proof}
Let $t=\frac{\sqrt3}{2}$ and $r=\frac23$, and consider the Jacobi path of length \(46\) given by
\[
\begin{gathered}
        c_0=1,
        \qquad
        c_1=\cdots=c_{21}=t,
        \qquad
        c_{22}=r,
        \qquad
        c_{23}=0,\\
        c_{24}=-r,
        \qquad
        c_{25}=\cdots=c_{45}=-t,
        \qquad
        c_{46}=-1.
\end{gathered}
\]
Note the symmetry: \(c_{46-i}=-c_i\).
Let \(R\) denote the ``coordinate reversal'' map,
\[
        Re_j=e_{47-j}.
\]
Since \(c_{46-i}=-c_i\), the identities
\[
        d(-v,-u)=d(u,v),
        \qquad
        a(-t)=a(t)
\]
imply that \(J_{46}(c)\) commutes with \(R\).  Hence the
reversal-symmetric subspace
\[
        S = \{x\in\mathbb R^{46}:Rx=x\}
\]
is invariant under \(J_{46}(c)\).  We study the restriction of
\(J_{46}(c)\) to $S$. Consider the orthonormal basis of $S$
\[
        f_j=\frac{e_j+e_{47-j}}{\sqrt2},
        \qquad 1\le j\le23.
\]
The restriction \(B\) of \(J_{46}(c)\) to $S$ is tridiagonal in this basis. Its diagonal entries are
\[
\begin{aligned}
        B_{11}&=d(1,t),\\
        B_{jj}&=d(t,t) &&(2\leq j\leq21),\\
        B_{22,22}&=d(t,r),\\
        B_{23,23}&=d(r,0)+a(0)=\frac{r-1}{2},
\end{aligned}
\]
and its off-diagonal entries are
\[
        B_{j,j+1}=a(t)=\frac14
        \quad(1\leq j\leq21),
        \qquad
        B_{22,23}=a(r)=\frac{\sqrt5}{6}.
\]
We show that \(A:=\frac14I-B\) is not positive semidefinite.  
To test whether \(A\) is positive semidefinite, we compute its
\(LDL^{\mathsf T}\) factorization
\[
        A=LDL^{\mathsf T},
\]
where \(L\) is unit lower triangular and
\(D=\operatorname{diag}(p_1,\ldots,p_{23})\). The numbers \(p_j\) are referred to as ``pivots''. Because \(A\) is
tridiagonal, the pivots are given recursively by
\[
        p_1=A_{11},
        \qquad
        p_j=A_{jj}-\frac{A_{j,j-1}^2}{p_{j-1}}.
\]
We will find that the first \(22\) pivots are positive but the last one is negative.  Since \(A\) and \(D\) have the same numbers of positive and negative eigenvalues, this will show that \(A\) is not positive semidefinite.  The pivots can be computed to be
\[
\begin{aligned}
 p_1&=\frac{3-\sqrt3}{4},\\
 p_j&=\frac12-\frac{1/16}{p_{j-1}}
       &&(2\leq j\leq21),\\
 p_{22}&=\frac{19-7\sqrt3}{12}-\frac{1/16}{p_{21}},\\
 p_{23}&=\frac5{12}-\frac{5/36}{p_{22}}.
\end{aligned}
\]
With \(z=2-\sqrt3\), induction in the second line gives
\[
        p_j
        =
        \frac14\left(1+\frac{z}{1+(j-1)z}\right),
        \qquad 1\leq j\leq21.
\]
In particular, \(p_1,\ldots,p_{21}\) are positive.  Consequently,
\[
        p_{22}
        =
        \frac{5(1697-737\sqrt3)}{6312}>0,
\]
where positivity follows already from \(\sqrt3<2\).  Moreover,
\[
        p_{22}-\frac13
        =
        \frac{6381-3685\sqrt3}{6312}<0,
\]
because
\[
        3\cdot 3685^2-6381^2=20514>0.
\]
It follows that
\[
        p_{23}
        =
        \frac5{12}-\frac{5/36}{p_{22}}<0.
\]
Thus \(A\) is not positive semidefinite, so
\[
        \lambda_{\max}(B)>\frac14.
\]
Since \(B\) is a restriction of \(J_{46}(c)\), we obtain
\[
        q_*
        \geq\lambda_{\max}\left(J_{46}(c)\right)
        \geq\lambda_{\max}(B)
        >\frac14.
\]
\end{proof}

\subsection{Consequences of the certificate}

For the rest of the section, we fix a concave certificate function
\[
        H:[-1,1]\longrightarrow[0,\infty)
\]
supplied by Theorem~\ref{thm:scalar-certificate}. Recall that this satisfies:
\begin{align}
 H(v)+d(u,v)+H(-u)&\leq q_*
 &&(u,v\in[-1,1]),
 \label{eq:inf-certificate}\\
 H(t)H(-t)&\geq a(t)^2
 &&(t\in[-1,1]).
 \label{eq:inf-product}
\end{align}

The following elementary inequalities will later help us show that the the limit of the sequence of paths that we consider later has nonincreasing labels from left to right.
\begin{lemma}
\label{lem:inf-upward-row-sums}
For every \(u,v\in[-1,1]\),
\begin{equation}
        d(u,v)+a(v)\leq\frac14,
        \qquad
        d(u,v)+a(u)\leq\frac14.
\label{eq:inf-one-sided-row}
\end{equation}
If \(u\leq v\), then
\begin{equation}
        d(u,v)+a(u)+a(v)\leq\frac14.
\label{eq:inf-two-sided-row}
\end{equation}
\end{lemma}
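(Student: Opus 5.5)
The plan is a direct elementary maximization. It rests on three facts: $d$ is affine in each variable separately, $a$ is concave, and the symmetries $d(-v,-u)=d(u,v)$ and $a(-t)=a(t)$ already used throughout the paper.

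\emph{One-sided bounds.} I will first prove $d(u,v)+a(v)\le\frac14$; the second bound then follows by substituting $(u,v)\mapsto(-v,-u)$, since $d(-v,-u)+a(-u)=d(u,v)+a(u)$. For fixed $v$, $u\mapsto d(u,v)$ is affine with slope $v+\frac12$, so its maximum over $u\in[-1,1]$ is attained at $u=\pm1$.
\begin{itemize}
\item If $v\ge-\frac12$, take $u=1$. This gives $\frac{v-1}{2}+\frac12\sqrt{1-v^2}$. Writing $v=\cos\theta$, this equals $\frac12(\cos\theta+\sin\theta-1)\le\frac{\sqrt2-1}{2}<\frac14$.
\item If $v<-\frac12$, take $u=-1$. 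This gives $-\frac32(v+1)+a(v)$. Here $-\frac32(v+1)\le-\frac34$ and $a(v)\le\frac12$, so the sum is negative.
\end{itemize}

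\emph{Two-sided bound under $u\le v$.} Set $g(u,v):=d(u,v)+a(u)+a(v)$ on the triangle $T=\{-1\le u\le v\le 1\}$. For fixed $v$, $g$ is affine plus concave in $u$, hence concave in $u$; likewise it is concave in $v$ for fixed $u$. I will treat the boundary of $T$ first.
\begin{itemize}
\item On the diagonal $u=v$: $g(u,u)=u^2-1+\sqrt{1-u^2}=y-y^2\le\frac14$ with $y=\sqrt{1-u^2}$. This is the computation already done in Lemma~\ref{lem:no-finite-cycle}.
\item On the edges $u=-1$ and $v=1$: one $a$-term vanishes, so the one-sided bounds just proved apply.
\end{itemize}

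For interior critical points, the stationarity equations are
\[
v+\tfrac12=\frac{u}{2\sqrt{1-u^2}},\qquad u-\tfrac12=\frac{v}{2\sqrt{1-v^2}}.
\]
I will show that these have no solution with $u<v$ at which $g>\frac14$. The intended route is the parametrization $u=\cos\alpha$, $v=\cos\beta$ with $\alpha\ge\beta$. Using $\cos\alpha\cos\beta=\cos(\alpha-\beta)-\sin\alpha\sin\beta$, one gets
\[
g-\tfrac14=\bigl(\cos(\alpha-\beta)-1\bigr)+\tfrac12(\cos\alpha-\cos\beta)-\bigl(\tfrac12-\sin\alpha\bigr)\bigl(\tfrac12-\sin\beta\bigr).
\]
The first two terms are $\le0$ precisely because $u\le v$. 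The last term is $\le0$ whenever $\sin\alpha,\sin\beta$ lie on the same side of $\frac12$.

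The main obstacle is the mixed case, where one sine is below $\frac12$ and the other above. There the angles are forced apart, so that either $|\alpha-\beta|$ or $\cos\beta-\cos\alpha$ is bounded below. I would try first to show that $1-\cos(\alpha-\beta)+\frac12(\cos\beta-\cos\alpha)$ dominates $\bigl(\sin\alpha-\frac12\bigr)\bigl(\frac12-\sin\beta\bigr)$, which is at most $\frac14$. If that estimate proves awkward, the fallback is to use the stationarity equations to eliminate one variable and check the resulting one-variable inequality directly.
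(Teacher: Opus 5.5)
Your overall plan matches the paper's: an affine-in-$u$ reduction for the one-sided bounds, and for $u\le v$, the boundary of the triangle $T$ plus interior critical points. Your diagonal and edge analyses are fine. The gap is the interior of $T$, which is never actually handled. Your trigonometric identity only settles the region where $\sin\alpha$ and $\sin\beta$ lie on the same side of $\tfrac12$. The mixed region, which you yourself flag as the main obstacle, is left as something you ``would try'', with an unspecified fallback. That identity does not use stationarity at all, so along that route you would be proving $g\le\tfrac14$ directly on the whole mixed region, which is harder than needed.

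The missing observation is that the stationarity equations you wrote down are already inconsistent. In the form
\[
\frac{u}{\sqrt{1-u^2}}=2v+1,\qquad \frac{v}{\sqrt{1-v^2}}=2u-1,
\]
the left-hand side of the first is strictly smaller than that of the second when $u<v$, because $t\mapsto t/\sqrt{1-t^2}$ is strictly increasing on $(-1,1)$. The right-hand side of the first, however, is strictly larger, since $2v+1>2u-1$. So $g$ has no critical point with $-1<u<v<1$ at all, not just none with $g>\tfrac14$. The maximum of the continuous function $g$ on the compact set $T$ is therefore attained on the boundary of $T$, and your boundary analysis finishes the proof. This one-line monotonicity argument is exactly how the paper closes the interior case.

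There is also an error in your one-sided bound for $v<-\tfrac12$. The estimate $-\tfrac32(v+1)\le-\tfrac34$ is reversed: for $v\in[-1,-\tfrac12)$ one has $-\tfrac32(v+1)\in(-\tfrac34,0]$. The conclusion that the sum is negative is also false. For instance, $v=-0.9$ gives $d(-1,v)+a(v)\approx 0.068$, and the maximum on this branch is $\tfrac{\sqrt{10}-3}{2}\approx0.081$, attained at $v=-3/\sqrt{10}$.

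The bound $\le\tfrac14$ still holds and is easy to repair. With $v=\cos\theta$, Cauchy--Schwarz gives $-\tfrac32(1+\cos\theta)+\tfrac12\sin\theta\le-\tfrac32+\tfrac12\sqrt{10}<\tfrac14$. You reuse the one-sided bounds on the edges $u=-1$ and $v=1$ of $T$, so this repair is needed for the two-sided bound too.
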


\begin{proof}
We have 
$$d(u,v) + a(v) = uv + \frac{u-v}{2} -1 + \frac12\sqrt{1-v^2}\,.$$
For fixed \(v\), the latter is affine in \(u\),
with coefficient \(v+1/2\).  If \(v\geq-1/2\), its maximum over \(u \in [-1,1]\) is attained at \(u=1\), and
\[
        d(1,v)+a(v)
        =
        \frac{v-1+\sqrt{1-v^2}}2
        \leq
        \frac{\sqrt2-1}{2}
        <\frac14.
\]
If \(v\leq-1/2\), the maximum is attained at \(u=-1\), and
\[
        d(-1,v)+a(v)
        =
        -\frac32(v+1)+\frac12\sqrt{1-v^2}
        \leq
        \frac{\sqrt{10}-3}{2}
        <\frac14.
\]
This proves the first inequality in \eqref{eq:inf-one-sided-row}.  The
second follows from
\[
        d(u,v)=d(-v,-u),
        \qquad
        a(t)=a(-t).
\]
For the third, let
\[
        L(u,v):=d(u,v)+a(u)+a(v)
\]
Assume \(-1\leq u\leq v\leq1\). Suppose that \(L\) attained its maximum at an interior point $(u,v)$ such that
\[
        -1<u<v<1.
\]
Since \(F\) is differentiable there, both of its partial derivatives
would have to vanish.  The equations
\[
        \frac{\partial L}{\partial u}=0,
        \qquad
        \frac{\partial L}{\partial v}=0
\]
are equivalent to
\[
        \frac{u}{\sqrt{1-u^2}}=2v+1,
        \qquad
        \frac{v}{\sqrt{1-v^2}}=2u-1.
\]
This is impossible: the left side of the first equation is smaller than
the left side of the second, whereas its right side is larger.  Hence the maximum must lie on the boundary of the triangle \(-1\leq u\leq v\leq1\). On \(u=v=t\),
\[
        L(t,t)
        =t^2-1+\sqrt{1-t^2}
        =-y^2+y
        \leq\frac14,
        \qquad y:=\sqrt{1-t^2}.
\]
On \(u=-1\) or \(v=1\), the result follows from
\eqref{eq:inf-one-sided-row}, because \(a(\pm1)=0\).  This proves
\eqref{eq:inf-two-sided-row}.
\end{proof}

\medskip
Let
\[
        D_H
        :=
        \left\{
        t\in(-1,1):
        H(t)H(-t)=a(t)^2
        \right\}.
\]
By Lemma~\ref{lem:differentiability}, \(H\) is differentiable at every
point of \(D_H\).  Moreover, \(D_H\) is invariant under
\(t\mapsto-t\).  We may therefore define
\begin{equation}
\label{eq:F-and-G}
        F(t):=\frac12-H'(t),
        \qquad
        G(t):=-F(-t)=H'(-t)-\frac12,
        \qquad t\in D_H.
\end{equation}
Recall also that we defined the other equality set
\[
        E_H
        :=
        \{(u,v)\in[-1,1]^2:
        H(v)+d(u,v)+H(-u)=q_*\}\,.
\]
Whenever \((u,v)\in E_H\), Lemma~\ref{lem:single-valued} gives
\[
        v\in D_H \Longrightarrow u=F(v),
        \qquad
        u\in D_H \Longrightarrow v=G(u).
\]

\begin{lemma}[Monotonicity of the equality maps]
\label{lem:inf-map-monotonicity}
The maps $F$ and $G$ are nondecreasing on $D_H$.
\end{lemma}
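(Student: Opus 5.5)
The plan is to reduce both claims to the elementary fact that the derivative of a concave function is nonincreasing wherever it exists. Then use the reflection $t\mapsto -t$ to transfer the conclusion from $F$ to $G$.

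\emph{Monotonicity of $H'$ on $D_H$.} By Lemma~\ref{lem:differentiability}, $H$ is differentiable at every point of $D_H$, so $H'(s)$ and $H'(t)$ exist for $s,t\in D_H$ with $s<t$. Since $H$ is concave on $[-1,1]$ and $s,t\in(-1,1)$ are interior, the secant-slope monotonicity established earlier (for $r<s<t$, $S(r,s)\ge S(r,t)\ge S(s,t)$) gives
\[
  H'(s)=D_+H(s)\ \ge\ S(s,t)\ \ge\ D_-H(t)=H'(t).
\]
The first inequality holds because $S(s,\cdot)$ is nonincreasing on $(s,t]$, so its limit as the right endpoint decreases to $s$ dominates $S(s,t)$. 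The second holds because $S(\cdot,t)$ is nonincreasing on $[s,t)$, so $S(s,t)$ dominates its limit as the left endpoint increases to $t$. Hence $H'$ is nonincreasing on $D_H$.

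\emph{Conclusion for $F$.} It follows at once that $F(t)=\tfrac12-H'(t)$ is nondecreasing on $D_H$.

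\emph{Conclusion for $G$.} Recall that $D_H$ is invariant under $t\mapsto -t$. Take $s<t$ in $D_H$. Then $-t<-s$ both lie in $D_H$, so $F(-t)\le F(-s)$, and therefore $G(s)=-F(-s)\le -F(-t)=G(t)$. Equivalently, $G(t)=H'(-t)-\tfrac12$ and $t\mapsto H'(-t)$ is nondecreasing.

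There is no real obstacle here. The only points to watch are that every point of $D_H$ is interior, so the one-sided derivatives are finite and coincide, and that the invariance of $D_H$ under negation makes $G$ well defined on all of $D_H$.
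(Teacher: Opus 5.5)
Your proposal is correct and follows the paper's proof: you use concavity to get $H'(s)\ge H'(t)$ for $s<t$ in $D_H$, which makes $F$ nondecreasing, and then the symmetry $D_H=-D_H$ transfers this to $G$. The secant-slope justification of $H'(s)\ge H'(t)$ is more detail than the paper gives, but the argument is the same.
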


\begin{proof}
Let $s<t$ with $s,t \in D_H$.  Since $H$ is concave,
\[
        H'(s)\ge H'(t),
\]
and therefore
\[
        F(s)\le F(t).
\]
Thus $F$ is nondecreasing.  Then, since $-t<-s$, the the monotonicity
of $F$ also gives
\[
        F(-t)\le F(-s).
\]
Multiplying the latter by $-1$, we obtain
\[
        G(s)=-F(-s)\le -F(-t)=G(t).
\]
Hence $G$ is also nondecreasing.
\end{proof}

\subsection{A monotone generalized eigenvector}
For a label sequence \(c=(c_i)_{i\in\mathbb Z}\) with $c_i \in [-1,1]$, define the operator $J_\infty(c)$ as follows. It acts on $x = (x_i)_{i\in\mathbb Z}$ by
\begin{equation}
 (J_\infty(c)x)_i
 =a(c_{i-1})x_{i-1}
  +d(c_{i-1},c_i)x_i
  +a(c_i)x_{i+1}.
\label{eq:inf-Jacobi-operator}
\end{equation}
Since \(c_i\in[-1,1]\) and \(a,d\) are continuous, the diagonal and
off-diagonal coefficients are bounded uniformly in \(i\).  Then, it is straightforward to see that
\(J_\infty(C)\) defines a bounded self-adjoint operator on
\(\ell^2(\mathbb Z)\) with respect to the usual operator norm. In fact, it is straightforward to see, via a triangle inequality for the operator norm applied to the diagonal and off-diagonal components of $J_{\infty}(c)$, that
\[
        \|J_\infty(C)\|
        \le
        \sup_i|d(C_{i-1},C_i)|
        +
        2\sup_i a(C_i)
        <\infty.
\]
More generally,
for an interval \(\Lambda\subseteq\mathbb Z\), let $J_\Lambda(C)$ denote the
operator on \(\ell^2(\Lambda)\) with the same diagonal entries and only the
``off-diagonal'' terms whose two endpoints lie in $\Lambda$.

\begin{lemma}
\label{lem:inf-Jacobi-upper-bound}
For every sequence \(c\) and every interval \(\Lambda\subseteq\mathbb Z\),
\[
        J_\Lambda(c)\leq q_*I.
\]
In particular, \(J_\infty(c)\leq q_*I\).
\end{lemma}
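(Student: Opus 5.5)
The plan is to reduce to finite intervals and, for those, embed $J_\Lambda(c)$ as a principal submatrix of a genuine Jacobi matrix. Then the definition of $q_*$ and Cauchy interlacing do the work, exactly as in the proof of Lemma~\ref{lem:pivots}.

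\emph{Finite intervals.} Let $\Lambda=\{p,p+1,\ldots,q\}$ with $p\le q$. The operator $J_\Lambda(c)$ is the $(q-p+1)\times(q-p+1)$ tridiagonal matrix with diagonal entries $d(c_{i-1},c_i)$ for $p\le i\le q$ and off-diagonal entries $a(c_i)$ for $p\le i<q$. I would form the extended sequence
\[
        b=(1,\,c_{p-1},\,c_p,\,\ldots,\,c_q,\,1),
\]
which is a bona fide Jacobi path of length $m=q-p+3$, since both endpoints lie in $\{\pm1\}$ and all interior labels lie in $[-1,1]$. Its Jacobi matrix $J_m(b)$ has the following diagonal:
\[
        d(1,c_{p-1}),\ d(c_{p-1},c_p),\ \ldots,\ d(c_{q-1},c_q),\ d(c_q,1).
\]
Its off-diagonal entries are $a(c_{p-1}),a(c_p),\ldots,a(c_q)$. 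Deleting the first and last rows and columns removes the entries $d(1,c_{p-1})$, $d(c_q,1)$, $a(c_{p-1})$ and $a(c_q)$. What remains is exactly $J_\Lambda(c)$. By Cauchy interlacing and Definition~\ref{def:q*},
\[
        \lambda_{\max}(J_\Lambda(c))\le\lambda_{\max}(J_m(b))\le Q_m\le q_*,
\]
which gives $J_\Lambda(c)\le q_*I$.

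\emph{Infinite intervals (including $\Lambda=\mathbb Z$).} Here $J_\Lambda(c)$ is bounded and self-adjoint by the norm estimate above, so $x\mapsto\langle x,J_\Lambda(c)x\rangle$ is continuous on $\ell^2(\Lambda)$. Finitely supported vectors are dense in $\ell^2(\Lambda)$. For such an $x$, choose a finite subinterval $\Lambda'\subseteq\Lambda$ containing its support. Then
\[
        \langle x,J_\Lambda(c)x\rangle=\langle x|_{\Lambda'},J_{\Lambda'}(c)\,x|_{\Lambda'}\rangle.
\]
The only terms that could differ are couplings to sites outside $\Lambda'$, and these multiply zero coordinates. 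By the finite case this is at most $q_*\|x\|^2$. Passing to limits gives $\langle x,J_\Lambda(c)x\rangle\le q_*\|x\|^2$ for all $x\in\ell^2(\Lambda)$. Complex vectors cause no trouble because the matrix entries are real, so the real and imaginary parts can be treated separately.

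\emph{Alternative and expected obstacle.} An equivalent route uses the certificate of Theorem~\ref{thm:scalar-certificate} directly, following the proof of Lemma~\ref{lem:95}. One bounds each coupling by
\[
        2a(c_i)x_ix_{i+1}\le H(c_i)x_i^2+H(-c_i)x_{i+1}^2.
\]
Site $i$ then collects $d(c_{i-1},c_i)+H(-c_{i-1})+H(c_i)\le q_*$, and missing boundary edges only help since $H\ge0$. There is no real obstacle in either route. The one point to check carefully is the bookkeeping that the deleted rows and columns in the interlacing step remove precisely the boundary couplings absent from $J_\Lambda(c)$.
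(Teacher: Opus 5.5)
Your proof is correct, but your main route differs from the paper's. The paper's proof is your ``alternative'' argument. For finitely supported real $x$ it splits each coupling via the product inequality and AM--GM, $2a(c_i)x_ix_{i+1}\le H(c_i)x_i^2+H(-c_i)x_{i+1}^2$. It then collects $d(c_{i-1},c_i)+H(c_i)+H(-c_{i-1})\le q_*$ at each site, drops the unpaired boundary $H$-terms since $H\ge0$, and extends to all of $\ell^2(\Lambda)$ by density.

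Your main route embeds $J_\Lambda(c)$ as the principal submatrix of $J_{q-p+3}(1,c_{p-1},\ldots,c_q,1)$ obtained by deleting the first and last rows and columns. Your bookkeeping of that deletion is right. Combined with Cauchy interlacing, this uses nothing beyond Definition~\ref{def:q*}, the same device as in Lemma~\ref{lem:pivots}. It is therefore independent of Theorem~\ref{thm:scalar-certificate} and the whole certificate construction (pivots, symmetrization, concavification, compactness), and is the more elementary proof of the lemma as stated.

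What the paper's certificate route buys is an explicit decomposition of $q_*\|x\|^2-\langle x,J_\Lambda(c)x\rangle$ into nonnegative site slacks $s_ix_i^2$ plus positive semidefinite $2\times2$ edge forms. That decomposition is reused in \eqref{eq:inf-slack-lower-bound} in the proof of Lemma~\ref{lem:inf-calibrated-component}, where it forces $s_i=0$ and product equality along $\Lambda$. Your interlacing argument gives the operator inequality but not this local decomposition.
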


\begin{proof}
Let \(x\) have finite support in \(\Lambda\) with real entries. For every $i$, \eqref{eq:inf-product} and AM-GM give
\begin{equation}
\label{eq:44}
 2a(c_i)\overline{x_i}x_{i+1}
 \leq 2 \sqrt{H(c_i)H(-c_i)} |x_i||x_{i+1}| \leq
 H(c_i)|x_i|^2+H(-c_i)|x_{i+1}|^2.
\end{equation}
Using this, we have
\begin{align*}
 \langle x,J_\Lambda(c)x\rangle
 &\leq
 \sum_{i\in \Lambda}
 \bigl[d(c_{i-1},c_i)+H(c_i)+H(-c_{i-1})\bigr]|x_i|^2\\
 &\leq q_*\|x\|_2^2
\end{align*}
where the first inequality follows from \eqref{eq:44} after summing all off-diagonal contributions (and noticing that a boundary index might be missing an \(H\)-term, but that term is non-negative); and the second inequality is by \eqref{eq:inf-certificate}. Since finitely supported vectors are dense in \(\ell^2(\Lambda)\) and
\(J_\Lambda(c)\) is bounded, the inequality extends by continuity to every
\(x\in\ell^2(\Lambda)\), which gives the desired operator inequality
\[
        J_\Lambda(c)\le q_*I.
\]
Consider $x$ with real entries is enough to get the above since $J_{\Lambda}(c)$ is a real symmetric matrix.
\end{proof}

As outlined in Section~\ref{sec:infinite-overview} of the technical overview, the first main step of the proof of infinite-dimensional attainment is to construct a nonzero sequence $\phi\in\ell^\infty(\mathbb Z)$ satisfying
\[
        J_\infty(c)\phi=q_*\phi \,.
\]
We refer to such a $\phi$ as a generalized
eigenvector. This need not belong to \(\ell^2(\mathbb Z)\), only to $\ell^\infty(\mathbb Z)$, so it will not yet be a genuine eigenvector of \(J_\infty(c)\) as an operator on
\(\ell^2(\mathbb Z)\).

\begin{lemma}
\label{lem:inf-bounded-ground-state}
There are labels \(c_i\in[-1,1]\) and a sequence
\(\phi=(\phi_i)_{i\in\mathbb Z}\) such that
\[
        0\leq\phi_i\leq1,
        \qquad
        \phi_0=1,
\]
and
\begin{equation}
        J_\infty(c)\phi=q_*\phi
\label{eq:inf-generalized-eigen-equation}
\end{equation}
coordinatewise on \(\mathbb Z\).
\end{lemma}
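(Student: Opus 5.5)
We follow step 1 of the outline in Section~\ref{sec:infinite-overview}. By Definition~\ref{def:q*}, choose finite Jacobi paths $c^{(k)}=(c^{(k)}_0,\ldots,c^{(k)}_{n_k})$ with $q_k:=\lambda_{\max}(J_{n_k}(c^{(k)}))\to q_*$. As in Definition~\ref{def:strategy-for-path}, take a nonnegative top eigenvector $\phi^{(k)}$, which is nonzero. Rescale it so that $\max_i\phi^{(k)}_i=1$, and let $j_k$ be an index attaining this maximum. Pad the path with $k$ extra labels $+1$ on the left and $k$ extra labels $-1$ on the right. Since $a(\pm1)=0$ and $d(1,1)=d(-1,-1)=0$, the padded Jacobi matrix is $0_k\oplus J_{n_k}(c^{(k)})\oplus 0_k$. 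Extending $\phi^{(k)}$ by zero therefore gives an eigenvector of the padded matrix with eigenvalue $q_k$. Next, shift indices so that $j_k$ becomes index $0$. This gives labels $c^{(k)}_i$ defined for $i$ in an interval $[L_k,R_k]$ with $L_k\le -k$ and $R_k\ge k$, and $\phi^{(k)}_i$ defined for $L_k+1\le i\le R_k$. In particular $\phi^{(k)}_0=1$ and $0\le \phi^{(k)}_i\le 1$.

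All labels lie in $[-1,1]$ and all eigenvector coordinates lie in $[0,1]$. A diagonal argument over the countable index set $\mathbb Z$, as in the proof of Lemma~\ref{lem:concave-compactness} but now on a product of compact intervals, gives a subsequence along which $c^{(k)}_i\to c_i$ and $\phi^{(k)}_i\to\phi_i$ for every $i\in\mathbb Z$. The normalizations pass to the limit, so $0\le\phi_i\le1$ and $\phi_0=1$.

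Now fix $i\in\mathbb Z$. For all large $k$ (namely $k>|i|+1$), the index $i$ is an interior coordinate of the padded finite path, so the $i$-th row of the finite eigenvalue equation reads
\[
 a(c^{(k)}_{i-1})\phi^{(k)}_{i-1}+d(c^{(k)}_{i-1},c^{(k)}_i)\phi^{(k)}_i+a(c^{(k)}_i)\phi^{(k)}_{i+1}=q_k\phi^{(k)}_i .
\]
The functions $a$ and $d$ are continuous on $[-1,1]$ and $[-1,1]^2$, and $q_k\to q_*$. Letting $k\to\infty$ along the subsequence therefore yields \eqref{eq:inf-generalized-eigen-equation} at coordinate $i$.

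The only point requiring care is the bookkeeping at the path boundaries. A fixed index must eventually be strictly inside the finite path, so that the row equation has no truncated neighbours. This is exactly what the padding by $k$ labels on each side guarantees, and it costs nothing because the padding contributes only zero blocks. A maximal coordinate can never fall in the padding, since $\phi^{(k)}$ vanishes there while the maximum equals $1$. Hence the recentering at index $0$ is always well defined.
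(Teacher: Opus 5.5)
Your proposal is correct and follows the paper's proof essentially step for step. Both arguments take near-optimal finite paths and nonnegative top eigenvectors normalized by their maximal coordinate. Both pad by $k$ labels $\pm1$ on each side, recenter at the maximizer, extract a diagonal subsequence, and pass to the limit row by row.

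One minor inaccuracy: you take paths with arbitrary endpoints in $\{\pm1\}$ rather than $+1\to-1$ paths, whereas the paper uses Lemma~\ref{lem:inf-open-reduction} to get the latter. As a result, the padded matrix is not literally $0_k\oplus J_{n_k}(c^{(k)})\oplus 0_k$ when $c^{(k)}_0=-1$ or $c^{(k)}_{n_k}=1$, because a diagonal entry $d(1,-1)=-1$ appears. It is still block diagonal since $a(\pm1)=0$, so the zero extension remains an eigenvector with eigenvalue $q_k$ and the rest of your argument is unaffected.
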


\begin{proof}
By Lemma~\ref{lem:inf-open-reduction}, we can choose finite paths
\[
        c^{(k)}
        =
        \bigl(c_0^{(k)},c_1^{(k)},\ldots,c_{n_k}^{(k)}\bigr),
\]
with
\[
        c_0^{(k)}=1,
        \qquad
        c_{n_k}^{(k)}=-1,
        \qquad
        c_i^{(k)}\in[-1,1]
        \quad(1\le i<n_k),
\]
such that
\[
        q_k
        :=
        \lambda_{\max}\left(J_{n_k}(c^{(k)})\right)
        \rightarrow q_*.
\] 
For each \(k\), the matrix \(J_{n_k}(c^{(k)})\) has nonnegative
off-diagonal entries.  Its largest eigenvalue \(q_k\) therefore admits
an eigenvector \(u^{(k)}\) with nonnegative coordinates:
\[
        J_{n_k}(c^{(k)})u^{(k)}=q_k u^{(k)},
        \qquad
        u_i^{(k)}\ge0.
\]
Since \(q_k\to q_*\) and Lemma~\ref{lem:inf-quarter-witness} gives
\(q_*>\frac14\), after discarding finitely many terms we may assume
\[
        q_k>0
        \qquad\text{for every }k.
\]

Pad the path $c^{(k)}$ by \(k\) additional \(+1\) labels on the
left and \(k\) additional \(-1\) labels on the right.  Since
\[
        a(\pm1)=0,
        \qquad
        d(1,1)=d(-1,-1)=0,
\]
the Jacobi matrix corresponding to the padded path is
\[
        0_k\oplus J_{n_k}(c^{(k)})\oplus0_k.
\]
Extend \(u^{(k)}\) by zeros on the two new blocks. Choose a coordinate
\(s_k\) at which this padded vector is maximal, and translate \(s_k\) to the
index \(0\). In other words, relabel the coordinates relative to the chosen maximal coordinate:
assign it index \(0\), and use negative and positive indices for the
coordinates to its left and right, respectively. Divide the vector by its maximal coordinate. Denote by \(\phi^{(k)}\) the resulting vector. Then
\[
        0\leq\phi_i^{(k)}\leq1,
        \qquad
        \phi_0^{(k)}=1.
\]
Going forward, we use \(c^{(k)}\) to denote the corresponding padded and shifted path. Notice that padding and shifting do not remove any of the original eigenvalues, and \(\phi^{(k)}\) remains an
eigenvector of the Jacobi matrix associated
with the padded and shifted path \(c^{(k)}\).

Now, the distance from coordinate \(0\) to either endpoint is at
least \(k\). Hence, for every fixed \(i\in\mathbb Z\), the
coordinates \(i-1,i,i+1\) occur in path \(c^{(k)}\) for all sufficiently
large \(k\). So, the eigenvalue equation at \(i\) is
\begin{equation}
\label{eq:evalue-eq}
        a(c_{i-1}^{(k)})\phi_{i-1}^{(k)}
        +d(c_{i-1}^{(k)},c_i^{(k)})\phi_i^{(k)}
        +a(c_i^{(k)})\phi_{i+1}^{(k)}
        =
        q_k\phi_i^{(k)}.
\end{equation}
for all sufficiently large $k$. Since $c_i^{(k)} \in [-1,1]$ and $\phi_i^{(k)} \in [0,1]$,
and \(\mathbb Z\) is countable, a nested
subsequence argument gives a single subsequence along which
\[
        c_i^{(k)}\rightarrow c_i,
        \qquad
        \phi_i^{(k)}\rightarrow\phi_i
\]
for every \(i \in\mathbb Z\), for some $c_i$ and $\phi$, with $0 \leq \phi_i \leq 1$ and $\phi_0 =1$.

Passing to the limit in \eqref{eq:evalue-eq},
and using the continuity of \(a\) and \(d\) together with
\(q_k\to q_*\), gives
\[
        a(c_{i-1})\phi_{i-1}
        +d(c_{i-1},c_i)\phi_i
        +a(c_i)\phi_{i+1}
        =
        q_*\phi_i.
\]
Thus \(J_\infty(c)\phi=q_*\phi\) coordinatewise, as desired.
\end{proof}

For the rest of the section, we fix a path $c$ and a generalized eigenvector $\phi$ of $J_\infty(c)$ as supplied by Lemma~\ref{lem:inf-bounded-ground-state}.

Define a graph on \(\mathbb Z\) by joining \(i\) to \(i+1\) whenever
\(a(c_i)>0\), and let $\Lambda$ be the connected component containing \(0\).
If two indices are joined and one has a positive \(\phi\)-coordinate,
then so does the other. Indeed, at an index \(j\) with \(\phi_j=0\),
the coordinate equation \(J_\infty(c)\phi=q_*\phi\) reduces to
\[
        a(c_{j-1})\phi_{j-1}
        +a(c_j)\phi_{j+1}=0.
\]
Since both terms are nonnegative, if \(j\) were joined to \(j-1\), i.e.\ $a(c_{j-1})>0$, and
\(\phi_{j-1}>0\), then the first term would be
strictly positive, contradicting the equality.  The same argument
applies to \(j+1\). Now, since \(\phi_0=1\), it follows that
\(\phi_i>0\) for every \(i\in \Lambda\).  Moreover, \(\Lambda\) is an interval, and,
by construction, \(a(c_i)>0\) for every edge \(\{i,i+1\}\) contained in
\(\Lambda\).
\begin{lemma}
\label{lem:inf-calibrated-component}
The interval \(\Lambda\) is infinite. Moreover, at every vertex \(i\in \Lambda\),
\begin{equation}
 H(c_i)+d(c_{i-1},c_i)+H(-c_{i-1})=q_*\,,
\label{eq:inf-scalar-calibration}
\end{equation}
and at every internal edge \(\{i,i+1\}\subseteq \Lambda\),
\begin{equation}
        H(c_i)H(-c_i)=a(c_i)^2.
\label{eq:inf-product-calibration}
\end{equation}
\end{lemma}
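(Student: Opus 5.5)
The plan has two parts. First, show that $\Lambda$ is infinite by reducing to Corollary~\ref{cor:no-finite-path}. Second, show that the equalities hold along $\Lambda$ using a Riccati ratio argument that uses the boundedness $0\le\phi\le1$. The second part is the main obstacle, and the form of the argument given for it is uncertain.

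\emph{$\Lambda$ is infinite.} Suppose instead that $\Lambda=\{p,\ldots,r\}$ is finite. By the definition of $\Lambda$ as a connected component, $a(c_{p-1})=a(c_r)=0$, so $c_{p-1},c_r\in\{\pm1\}$. Then $b=(c_{p-1},c_p,\ldots,c_r)$ is a genuine Jacobi path in the sense of Definition~\ref{def:signed-endpoint-jacobi-path}, and $J_\Lambda(c)=J_{r-p+1}(b)$. The boundary couplings vanish, so the coordinate equations of \eqref{eq:inf-generalized-eigen-equation} at the indices $i\in\Lambda$ involve only coordinates in $\Lambda$. Hence $\phi|_\Lambda$ is a nonzero vector (indeed strictly positive) with $J_{r-p+1}(b)\phi|_\Lambda=q_*\phi|_\Lambda$. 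It follows that $\lambda_{\max}(J_{r-p+1}(b))\ge q_*$, which contradicts Corollary~\ref{cor:no-finite-path}.

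\emph{The equalities: setup.} Since $\phi_i>0$ on $\Lambda$ and $a(c_i)>0$ on internal edges, I introduce the ratios $\rho_i=\phi_{i+1}/\phi_i$. I then set
\[
x_i:=a(c_i)\rho_i,\qquad y_{i+1}:=a(c_i)/\rho_i,\qquad\text{so that}\qquad x_iy_{i+1}=a(c_i)^2\le H(c_i)H(-c_i).
\]
Dividing the coordinate equation at $i$ by $\phi_i$ gives
\[
q_*=d(c_{i-1},c_i)+y_i+x_i .
\]
Comparing with \eqref{eq:inf-certificate} gives
\[
x_i+y_i\ \ge\ H(c_i)+H(-c_{i-1}),
\]
with equality exactly when \eqref{eq:inf-scalar-calibration} holds at $i$. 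I write $x_i=H(c_i)e^{\theta_i}$, which is possible since $H(c_i)>0$ by the product inequality. Then $y_{i+1}\le H(-c_i)e^{-\theta_i}$, with equality exactly when \eqref{eq:inf-product-calibration} holds at the edge $\{i,i+1\}$. Substituting into the vertex inequality at $i+1$ yields
\[
H(c_{i+1})\bigl(e^{\theta_{i+1}}-1\bigr)\ \ge\ H(-c_i)\bigl(1-e^{-\theta_i}\bigr)+\sigma_{i+1},
\]
where $\sigma_{i+1}\ge0$ collects the two slacks (the vertex slack at $i+1$ and the product slack at edge $i$).

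\emph{The equalities: propagation (main obstacle).} From this recursion, $\theta_i\ge0$ forces $\theta_{i+1}\ge0$, and any strictly positive slack or strictly positive $\theta$ is amplified as one moves to the right. Symmetrically, by running the same computation from right to left (using the $\tau$-symmetry $d(-v,-u)=d(u,v)$), negative $\theta$'s propagate and grow to the left. The step I expect to be hardest is converting a nonzero $\theta$ or slack into a contradiction with $0\le\phi\le1$. The intended argument is this. A positive deviation should make $\rho_j$ eventually bounded below by a constant strictly greater than $1$, so that $\phi$ grows geometrically to the right. A negative deviation should give geometric growth to the left. Either way, boundedness of $\phi$ is contradicted on the relevant infinite half of $\Lambda$ (here we use that $\Lambda$ is infinite). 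The difficulty is uniformity: $H(c_j)$ and $a(c_j)$ may degenerate as $c_j\to\pm1$, and an infinite $\Lambda$ may be infinite in only one direction.

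\emph{Fallback.} To handle these degeneracies, I would first try to obtain uniform lower bounds on $H$ and $a$ along the relevant half of $\Lambda$, using \eqref{eq:inf-one-sided-row}, $q_*>\frac14$, and the vertex equation. If that fails, I would instead return to the approximating finite paths $c^{(k)}$. For these, the slacks summed against $(u^{(k)})^2$ total exactly $(q_*-q_k)\|u^{(k)}\|^2$, and I would try to localize this identity near index $0$ via the same Riccati ratios.
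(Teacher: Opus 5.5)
Your proof that $\Lambda$ is infinite is correct and is the paper's argument. The gap is in the second part, which is the real content of the lemma. There you never derive a contradiction from a nonzero slack, and the mechanism you sketch would not produce one, for three reasons.
\begin{enumerate}
\item $\theta_i$ is not amplified in general. With $u_i=e^{\theta_i}-1$, your recursion only gives $u_{i+1}\ge \frac{H(-c_i)}{H(c_{i+1})}\cdot\frac{u_i}{1+u_i}$. This lower bound saturates as $u_i$ grows, and its prefactor has no lower bound.
\item Even if $\theta_j\ge\theta>0$ for all large $j$, you only get $\phi_{j+1}/\phi_j\ge e^{\theta}H(c_j)/a(c_j)$. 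Nothing forces the calibrated ratio $H(c_j)/a(c_j)$ to be at least $1$: under product equality its product with $H(-c_j)/a(c_j)$ is $1$. So no geometric growth of $\phi$ follows, and $0\le\phi\le1$ is never contradicted.
\item When $\Lambda$ is infinite in only one direction, a deviation propagating toward the finite end needs a boundary argument, and you give none. The fallback via the approximating paths is only a suggestion.
\end{enumerate}

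The missing idea is to weight by $\phi^2$. Then $\phi\le1$ enters directly and no lower bounds on $H$ or $a$ are ever needed. Write $s_i:=q_*-d(c_{i-1},c_i)-H(c_i)-H(-c_{i-1})\ge0$ for the vertex slacks.

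The paper uses a ground-state transform. For finitely supported $f$, $A=q_*I-J_\Lambda(c)$ and $x:=\phi f$, one has $\langle x,Ax\rangle=\sum a(c_i)\phi_i\phi_{i+1}(f_i-f_{i+1})^2$. This is at most $1/N$ for a tent cutoff $f^{(N)}$ equal to $1$ on a fixed window $K$. The same quantity is at least $\sum_i s_ix_i^2$ plus the nonnegative edge brackets $H(c_i)x_i^2+H(-c_i)x_{i+1}^2-2a(c_i)x_ix_{i+1}$. Hence every slack in $K$ vanishes. Product equality then follows because $(\phi_i,\phi_{i+1})$ is a positive null vector of the positive semidefinite matrix with diagonal $H(c_i),H(-c_i)$ and off-diagonal $-a(c_i)$.

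Alternatively, your Riccati picture can be rescued by tracking the flux $F_i:=a(c_i)\phi_i\phi_{i+1}-H(c_i)\phi_i^2=H(c_i)(e^{\theta_i}-1)\phi_i^2$ instead of $\theta_i$.
\begin{itemize}
\item It satisfies $F_{i+1}-F_i=s_{i+1}\phi_{i+1}^2+E_i$, where $E_i:=H(c_i)\phi_i^2+H(-c_i)\phi_{i+1}^2-2a(c_i)\phi_i\phi_{i+1}\ge F_i^2/(H(c_i)\phi_i^2)$.
\item It is bounded: $|F_i|\le\frac12+\sup H$ because $\phi\le1$.
\item So a positive value of $F$ grows geometrically to the right, and a negative one grows geometrically to the left.
\item At a finite left (resp.\ right) end of $\Lambda$, the vertex equation forces $F\ge0$ (resp.\ $F\le0$).
\end{itemize}
Hence $F\equiv0$, which makes all slacks and all $E_i$ vanish and gives both equalities.
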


\begin{proof}
Suppose for a contradiction that \(\Lambda=\{r,\ldots,s\}\) was finite.  Since \(\Lambda\) is a
connected component, neither boundary edge is present, so
\[
        a(c_{r-1})=a(c_s)=0.
\]
Because \(a(t)=0\) only for \(t=\pm1\), the boundary labels satisfy
\[
        c_{r-1},c_s\in\{\pm1\}.
\]
In the coordinate equation \(J_\infty(c)\phi=q_*\phi\) at \(r\), the
term \(a(c_{r-1})\phi_{r-1}\) vanishes because \(a(c_{r-1})=0\).
Likewise, in the equation at \(s\), the term
\(a(c_s)\phi_{s+1}\) vanishes.  Thus the equations for the coordinates in \(\Lambda\) do not involve any
coordinates outside \(\Lambda\).  Let $m =s-r+1$,
and define
\[
        b=(b_0,\ldots,b_m)
        :=
        (c_{r-1},c_r,\ldots,c_s).
\]
Then, from the above, we have \(J_m(b)\)
\[
        J_m(b)\bigl(\phi|_\Lambda\bigr)
        =
        q_*\bigl(\phi|_\Lambda\bigr).
\]
where $\phi|_\Lambda$ is the restriction of $\phi$ to coordinates in $\Lambda$.

Thus $\phi|_\Lambda$ is an eigenvector of the Jacobi matrix $J_m(b)$ with eigenvalue $q_*$. This contradicts Corollary~\ref{cor:no-finite-path}, which says that the largest eigenvalue of any Jacobi matrix of a finite path is strictly less than $q_*$. Hence \(\Lambda\) is infinite.

For the second part of the lemma statement, let
\[
        A:=q_*I-J_\Lambda(c)\geq0.
\]
Similarly to above, restricting
the equation \(J_\infty(c)\phi=q_*\phi\) to \(\Lambda\) gives
\[
        A(\phi|_\Lambda)=0
\]
coordinatewise. For a finitely supported real function
\(f=(f_i)_{i\in \Lambda}\) on \(\Lambda\) to be chosen later, define the vector $x$ with entries in $\Lambda$ by
\[
        x_i=\phi_i f_i,
        \qquad i\in \Lambda.
\]
The coordinatewise equation
\(A(\phi|_\Lambda)=0\) reads
\[
\bigl(q_*-d(c_{i-1},c_i)\bigr)\phi_i
=
a(c_{i-1})\phi_{i-1}+a(c_i)\phi_{i+1},
\]
Multiplying by
\(\phi_i f_i^2\), summing over \(i\), and grouping terms by edges gives
\[
\sum_{i\in \Lambda}
\bigl(q_*-d(c_{i-1},c_i)\bigr)\phi_i^2f_i^2
=
\sum_{\{i,i+1\}\subseteq \Lambda}
a(c_i)\phi_i\phi_{i+1}(f_i^2+f_{i+1}^2).
\]
where we are using the fact that if $j$ is a left endpoint, then $a(c_{j-1}) = 0$, and if $j$ is a right endpoint, then $a(c_j) = 0$. Now, we have
\[
\langle x,Ax\rangle
=
\sum_{i\in \Lambda}
\bigl(q_*-d(c_{i-1},c_i)\bigr)\phi_i^2f_i^2
-
2\sum_{\{i,i+1\}\subseteq \Lambda}
a(c_i)\phi_i\phi_{i+1}f_if_{i+1} \,,
\]
and using the previous identity to rewrite the first sum, we obtain
\begin{align}
\langle x,Ax\rangle
&=
\sum_{\{i,i+1\}\subseteq \Lambda}
a(c_i)\phi_i\phi_{i+1}
\bigl(f_i^2+f_{i+1}^2-2f_if_{i+1}\bigr) \nonumber \\
&=
\sum_{\{i,i+1\}\subseteq \Lambda}
a(c_i)\phi_i\phi_{i+1}(f_i-f_{i+1})^2.
\label{eq:inf-ground-state-transform}
\end{align}
The usefulness of \eqref{eq:inf-ground-state-transform} is that
\(\langle x,Ax\rangle\) can be made arbitrarily small by choosing \(f\)
to vary sufficiently slowly.  We now obtain a complementary lower bound
that expresses the same quantity as a sum of nonnegative local terms.
Comparing the two bounds will force each of these terms to vanish. Define
\[
 s_i
 :=q_*-d(c_{i-1},c_i)-H(c_i)-H(-c_{i-1})\geq0.
\]
Expanding the quadratic form, substituting the definition of \(s_i\),
and grouping the \(H\)-terms by edges gives
\begin{align}
\langle x,Ax\rangle
&=
\sum_{i\in\Lambda}
\bigl(q_*-d(c_{i-1},c_i)\bigr)x_i^2
-
2\sum_{\{i,i+1\}\subseteq\Lambda}
a(c_i)x_ix_{i+1}
\notag\\
&=
\sum_{i\in\Lambda}s_i x_i^2
+
\sum_{\{i,i+1\}\subseteq\Lambda}
\Bigl[
H(c_i)x_i^2+H(-c_i)x_{i+1}^2
-2a(c_i)x_ix_{i+1}
\Bigr]
+E_\Lambda(x)
\notag\\
&\geq
\sum_{i\in\Lambda}s_i x_i^2
+
\sum_{\{i,i+1\}\subseteq\Lambda}
\Bigl[
H(c_i)x_i^2+H(-c_i)x_{i+1}^2
-2a(c_i)x_ix_{i+1}
\Bigr],
\label{eq:inf-slack-lower-bound}
\end{align}
where \(E_\Lambda(x)\geq0\) consists of the unpaired \(H\)-terms at any
finite endpoints of \(\Lambda\).  Each bracket in the final edge sum is
nonnegative by \eqref{eq:inf-product} and AM--GM.

We first prove that \eqref{eq:inf-scalar-calibration} holds with equality at
every vertex \(i\in\Lambda\), that is,
\[
H(c_i)+d(c_{i-1},c_i)+H(-c_{i-1})=q_*.
\]
Fix \(i\in\Lambda\), and choose a finite interval
\(K\subseteq\Lambda\) containing \(i\) and any finite endpoint of
\(\Lambda\).  For \(N\geq1\), define
\[
f_j^{(N)}
:=
\max\left\{
1-\frac{\operatorname{dist}(j,K)}{N},\,0
\right\},
\qquad
x_j^{(N)}:=\phi_jf_j^{(N)}.
\]
where $\operatorname{dist}(j,K)$ is the minimum distance of $j$ from any index in $K$. Then \(f^{(N)}\) is finitely supported, equals \(1\) on \(K\), and
\[
\bigl|f_j^{(N)}-f_{j+1}^{(N)}\bigr|\leq\frac1N.
\]
Moreover, \(f^{(N)}\) changes across at most \(2N\) edges. Since
\[
0\leq a(c_j)\phi_j\phi_{j+1}\leq\frac12,
\]
the identity \eqref{eq:inf-ground-state-transform} gives
\begin{equation}
\label{eq:87}
0\leq
\langle x^{(N)},Ax^{(N)}\rangle = \left\langle\phi f^{(N)},A\phi f^{(N)}\right\rangle
\leq
\frac12(2N)\frac1{N^2}
=
\frac1N 
\longrightarrow0.
\end{equation}
On the other hand, all the terms on the right-hand side of
\eqref{eq:inf-slack-lower-bound} are nonnegative.  Since
\(f_i^{(N)}=1\), that bound implies
\[
0\leq s_i\phi_i^2
\leq
\langle x^{(N)},Ax^{(N)}\rangle
\longrightarrow0.
\]
Hence, since \(s_i\phi_i^2=0\) is independent of $N$, it must be \(s_i\phi_i^2=0\). Since \(\phi_i>0\), we have \(s_i=0\), which,
by the definition of \(s_i\), is precisely
\[
H(c_i)+d(c_{i-1},c_i)+H(-c_{i-1})=q_*.
\]

We next prove that \eqref{eq:inf-product-calibration} holds for every $\{i, i+1\} \subseteq \Lambda$. Fix an edge
\(\{i,i+1\}\subseteq\Lambda\), and choose \(K\) containing both \(i\) and \(i+1\) and any finite endpoint of $\Lambda$. Then
\[
f_i^{(N)}=f_{i+1}^{(N)}=1.
\]
Since every term on the right-hand side of
\eqref{eq:inf-slack-lower-bound} is nonnegative, we obtain
\[
\begin{aligned}
0
&\leq
H(c_i)\phi_i^2+H(-c_i)\phi_{i+1}^2
-2a(c_i)\phi_i\phi_{i+1}\\
&\leq
\langle x^{(N)},Ax^{(N)}\rangle.
\end{aligned}
\]
The last quantity tends to zero, so it must be
\[
H(c_i)\phi_i^2+H(-c_i)\phi_{i+1}^2
-2a(c_i)\phi_i\phi_{i+1}=0.
\]
Now, consider the matrix
\[
M_i=
\begin{pmatrix}
H(c_i)&-a(c_i)\\
-a(c_i)&H(-c_i)
\end{pmatrix}.
\]
It is positive semidefinite because it is symmetric, its diagonal entries are non-negative and, by
\eqref{eq:inf-product},
\[
\det(M_i)
=
H(c_i)H(-c_i)-a(c_i)^2
\geq0.
\]
Thus, the sum and product of its eigenvalues is non-negative, which means both eigenvalues must be non-negative. Now, the preceding equality can be written as
\[
\begin{pmatrix}\phi_i&\phi_{i+1}\end{pmatrix}
M_i
\begin{pmatrix}\phi_i\\ \phi_{i+1}\end{pmatrix}
=0.
\]
Since \(M_i\) is positive semidefinite, this implies that
\((\phi_i,\phi_{i+1})^{\mathsf T}\) belongs to its kernel. This vector
is nonzero because \(\phi_i,\phi_{i+1}>0\), so \(M_i\) has a zero eigenvalue.
Hence, its determinant is zero, which means
\[
H(c_i)H(-c_i)=a(c_i)^2\,.
\]
This proves \eqref{eq:inf-product-calibration}.

\end{proof}

\begin{lemma}
\label{lem:inf-component-monotone}
The labels \(c_i\) for which
\(\{i,i+1\}\subseteq\Lambda\) are nonincreasing from left to right.  In addition:
\begin{itemize}
\item if \(\Lambda=[r,\infty)\), then \(c_{r-1}=+1\);
\item if \(\Lambda=(-\infty,s]\), then \(c_s=-1\).
\end{itemize}
\end{lemma}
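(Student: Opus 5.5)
The plan is to combine the equality conditions from Lemma~\ref{lem:inf-calibrated-component} with the monotone equality maps $F,G$ of Lemma~\ref{lem:inf-map-monotonicity}, and then rule out any increase using the row-sum bounds of Lemma~\ref{lem:inf-upward-row-sums}.

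\textbf{Step 1: a recursion along $\Lambda$.} Suppose $\{i-1,i\}$ and $\{i,i+1\}$ are both internal edges of $\Lambda$. By \eqref{eq:inf-product-calibration}, $c_{i-1},c_i\in D_H$. By \eqref{eq:inf-scalar-calibration} at the vertex $i$, $(c_{i-1},c_i)\in E_H$. Lemma~\ref{lem:single-valued} then gives
\[
c_{i-1}=F(c_i),\qquad c_i=G(c_{i-1}).
\]
So along internal edges the labels evolve by the nondecreasing maps $F$ (leftwards) and $G$ (rightwards).

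\textbf{Step 2: an increase propagates.} Suppose some consecutive internal labels satisfy $c_{i-1}<c_i$. Applying $F$ gives
\[
c_{i-2}=F(c_{i-1})\le F(c_i)=c_{i-1},
\]
whenever $\{i-2,i-1\}$ is internal. Applying $G$ gives
\[
c_{i+1}=G(c_i)\ge G(c_{i-1})=c_i .
\]
Iterating, the internal labels are nondecreasing on all of $\Lambda$.

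\textbf{Step 3: nondecreasing labels contradict $q_*>\tfrac14$.} In that case every row of $J_\Lambda(c)$ has
\[
d(c_{i-1},c_i)+a(c_{i-1})+a(c_i)\le\tfrac14 .
\]
For interior rows this is \eqref{eq:inf-two-sided-row}. At a finite endpoint of $\Lambda$ one off-diagonal term is absent, and \eqref{eq:inf-one-sided-row} applies. Write $d_i=d(c_{i-1},c_i)\ge -3$. The eigen-equation on $\Lambda$ gives
\[
(q_*-d_i)\phi_i=a(c_{i-1})\phi_{i-1}+a(c_i)\phi_{i+1}\le \bigl(\tfrac14-d_i\bigr)\sup_\Lambda\phi .
\]
Hence
\[
\phi_i\le\Bigl(1-\tfrac{q_*-1/4}{q_*-d_i}\Bigr)\sup_\Lambda\phi\le\theta\,\sup_\Lambda\phi,
\qquad \theta:=1-\tfrac{q_*-1/4}{q_*+3}<1,
\]
where $q_*>\tfrac14$ by Lemma~\ref{lem:inf-quarter-witness}. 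Taking the supremum over $i\in\Lambda$ forces $\sup_\Lambda\phi=0$, contradicting $\phi_0=1$. So no strict increase occurs, and the internal labels are nonincreasing.

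\textbf{Step 4: the endpoint claims.} Let $\Lambda=[r,\infty)$; it is infinite by Lemma~\ref{lem:inf-calibrated-component}. Then $c_{r-1}\in\{\pm1\}$, $c_r,c_{r+1}\in D_H$, and $(c_{r-1},c_r)\in E_H$. Suppose $c_{r-1}=-1$. Then $F(c_r)=-1$. Step 3 gives $c_{r+1}\le c_r$, and Step 1 gives
\[
c_r=F(c_{r+1})\le F(c_r)=-1 .
\]
So $c_r=-1$, contradicting $a(c_r)>0$. Hence $c_{r-1}=+1$.

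Symmetrically, let $\Lambda=(-\infty,s]$ and suppose $c_s=+1$. Then $c_s=G(c_{s-1})$ by the vertex $s$ equality together with $c_{s-1}\in D_H$. Since $c_{s-1}\le c_{s-2}$,
\[
c_{s-1}=G(c_{s-2})\ge G(c_{s-1})=c_s=1 .
\]
This contradicts $|c_{s-1}|<1$. Hence $c_s=-1$.

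\textbf{Main obstacle.} I expect the delicate point to be Step 3. Because $\phi$ is only bounded and not $\ell^2$, one cannot use a Rayleigh quotient directly. The uniform contraction factor $\theta<1$, which relies on $d\ge -3$ and $q_*>\tfrac14$, is what replaces it. One must also check that boundary rows of $\Lambda$ really satisfy the one-sided row-sum bounds.
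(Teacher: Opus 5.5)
Your proof is correct and follows essentially the same route as the paper: the nondecreasing equality maps $F,G$ propagate inequalities between consecutive labels along $\Lambda$, nondecreasing labels are ruled out by the row-sum bounds of Lemma~\ref{lem:inf-upward-row-sums} together with $q_*>\frac14$, and the endpoint labels are pinned down by the same monotonicity of $F$ and $G$. The only real difference is in your Step 3, and the ``main obstacle'' you anticipate there is milder than you expect. The normalization $\phi_0=1\ge\phi_j$ means the paper needs no uniform contraction: it evaluates the eigen-equation at index $0$ alone to get $q_*\le d(c_{-1},c_0)+a(c_{-1})+a(c_0)\le\frac14$, or the one-sided bound if $0$ is an endpoint of $\Lambda$.
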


\begin{proof}
Fix \(i\) such that \(i,i+1,i+2\in\Lambda\).
\eqref{eq:inf-scalar-calibration} at vertex \(i+1\) gives
\[
        (c_i,c_{i+1})\in E_H,
\]
while \eqref{eq:inf-product-calibration} gives
\(c_{i+1}\in D_H\).  Lemma~\ref{lem:single-valued} and the definition of
\(F\) from \eqref{eq:F-and-G} therefore imply
\[
        c_i=F(c_{i+1}).
\]
By Lemma~\ref{lem:inf-map-monotonicity}, \(F\) is nondecreasing, so any inequality between consecutive terms
propagates to the left.  For example,
\[
c_{i+1}\leq c_i
\quad\Longrightarrow\quad
c_i=F(c_{i+1})\leq F(c_i)=c_{i-1},
\]
and the same argument applies with all inequalities reversed.
Consequently, the direction of the inequalities cannot change, so the
sequence \((c_i)\) is monotone.

It remains to determine the direction.  Suppose for a contradiction that the labels
are nondecreasing from left to right. If \(0\) is an interior point of \(\Lambda\), then evaluating
\eqref{eq:inf-generalized-eigen-equation} at coordinate \(0\), using
\eqref{eq:inf-Jacobi-operator} and \(\phi_0=1\), gives
\[
q_*
=
d(c_{-1},c_0)
+a(c_{-1})\phi_{-1}
+a(c_0)\phi_1.
\]
Since \(a\geq0\) and
\(0\leq\phi_{-1},\phi_1\leq1\), and using Lemma~\ref{lem:inf-upward-row-sums}, we have
\[
q_*
\leq
d(c_{-1},c_0)+a(c_{-1})+a(c_0) \leq \frac14 \,.
\]
If \(0\) is an endpoint of \(\Lambda\), the coupling to the neighboring coordinate outside
\(\Lambda\) vanishes, and the corresponding bound in
\eqref{eq:inf-one-sided-row} again gives \(q_*\leq\frac14\). This
contradicts Lemma~\ref{lem:inf-quarter-witness}, which says that $q_* >\frac14$. Therefore, the labels
must be nonincreasing from left to right.

Suppose now that \(I=[r,\infty)\). Then, we have
\(c_{r-1}\in\{\pm1\}\), whereas \(c_r,c_{r+1}\in(-1,1)\). We will establish that $c_{r-1} = +1$. By \eqref{eq:inf-scalar-calibration} at the vertices \(r\) and \(r+1\),
\[
(c_{r-1},c_r)\in E_H,
\qquad
(c_r,c_{r+1})\in E_H.
\]
Moreover, \eqref{eq:inf-product-calibration} gives
\(c_r,c_{r+1}\in D_H\).  Therefore,
Lemma~\ref{lem:single-valued} and the definition of \(F\) from \eqref{eq:F-and-G} imply
\[
c_{r-1}=F(c_r),
\qquad
c_r=F(c_{r+1}).
\]
Since \(c_r\geq c_{r+1}\) and \(F\) is nondecreasing, we have $c_{r-1}\geq c_r$. Now, \(c_{r-1} \in \{\pm1 \}\) but $c_r \in (-1,1)$. So it must be \(c_{r-1}=+1\).

If \(I=(-\infty,s]\), then
\(c_s\in\{\pm1\}\), while \(c_{s-2},c_{s-1}\in(-1,1)\).
Using \eqref{eq:inf-scalar-calibration} at \(s-1\) and \(s\),
together with \eqref{eq:inf-product-calibration},
Lemma~\ref{lem:single-valued}, and the definition of \(G\), gives
\[
c_{s-1}=G(c_{s-2}),
\qquad
c_s=G(c_{s-1}).
\]
Since \(c_{s-2}\geq c_{s-1}\) and \(G\) is nondecreasing,
\[
c_{s-1}=G(c_{s-2})\geq G(c_{s-1})=c_s.
\]
Because \(c_{s-1}\in(-1,1)\) and \(c_s\in\{\pm1\}\), this forces
\(c_s=-1\).
\end{proof}

\subsection{Monotone paths approaching $q_*$}
The preceding subsection produced an infinite sequence $c$, as in Lemma~\ref{lem:inf-bounded-ground-state}, that is guaranteed to be nonincreasing on \(\Lambda\) (the interval defined just before Lemma~\ref{lem:inf-calibrated-component}) together with a generalized eigenvector
\(\phi\) of $J_\infty(c)$ with eigenvalue \(q_*\). The original finite paths used to construct \(c\) were not known to be monotone, however, and \(\phi\)
need not belong to \(\ell^2(\mathbb{Z})\). Proposition~\ref{prop:inf-monotone-approximants} below constructs monotone finite paths
\[
1=b_0^{(N)}\geq b_1^{(N)}
\geq\cdots\geq b_{m_N}^{(N)}=-1
\]
whose largest eigenvalues approach \(q_*\).  Thus these paths are
simultaneously finite, nonincreasing, and optimal in the limit. Their monotonicity will be essential below because it gives $\operatorname{TV}\bigl(b^{(N)}\bigr)=2$. We will later show how to build a genuine eigenvector of $J_{\infty}(c)$ with eigenvalue $q_*$ from such paths.

\begin{proposition}[Monotone finite paths approaching \(q_*\)]
\label{prop:inf-monotone-approximants}
There are finite paths
\[
        1=b^{(N)}_0
        \geq b^{(N)}_1
        \geq\cdots
        \geq b^{(N)}_{m_N}=-1
\]
such that
\[
        \lambda_{\max}\left(J_{m_N}(b^{(N)})\right)\rightarrow q_*.
\]
\end{proposition}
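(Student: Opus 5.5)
The plan is to truncate the nonincreasing infinite path produced in the previous subsection and test the resulting finite Jacobi matrices against cut-off versions of the generalized eigenvector \(\phi\). Throughout, fix \(c\), \(\phi\) and the interval \(\Lambda\) from Lemma~\ref{lem:inf-bounded-ground-state} and Lemma~\ref{lem:inf-calibrated-component}. Recall that \(\Lambda\) is infinite, that \(\phi_i>0\) on \(\Lambda\), and that \(A:=q_*I-J_\Lambda(c)\) annihilates \(\phi|_\Lambda\) coordinatewise. By Lemma~\ref{lem:inf-component-monotone}, the labels \(c_i\) with \(\{i,i+1\}\subseteq\Lambda\) are nonincreasing. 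If \(\Lambda=[r,\infty)\) then \(c_{r-1}=+1\), and if \(\Lambda=(-\infty,s]\) then \(c_s=-1\).

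\emph{Construction of the paths.} For each \(N\), let \(K=\{0\}\) (enlarged to contain any finite endpoint of \(\Lambda\)). Let \(f^{(N)}\) be the tent cutoff from the proof of Lemma~\ref{lem:inf-calibrated-component}, and set \(x^{(N)}=\phi f^{(N)}\). This vector is supported on a finite interval \([L,R]\subseteq\Lambda\). Enlarge the window by one index on each side whenever it stays inside \(\Lambda\), so that \(L-1,R+1\in\Lambda\) in the two-sided infinite case. Define
\[
b^{(N)}=(1,\,c_{L-1},\,c_L,\ldots,c_R,\,-1).
\]
When \(\Lambda=[r,\infty)\) and the window reaches \(r\), I would instead start the path directly with \(c_{r-1}=1\); symmetrically, when \(\Lambda=(-\infty,s]\), I would end it with \(c_s=-1\). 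In every case the interior labels are labels of edges inside \(\Lambda\), hence nonincreasing and in \([-1,1]\). Prepending \(1\) and appending \(-1\) therefore preserves monotonicity, giving \(1=b_0\ge b_1\ge\cdots\ge b_{m_N}=-1\).

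\emph{Rayleigh quotient comparison.} In \(J_{m_N}(b^{(N)})\), the coordinates corresponding to the indices \(L,\ldots,R\) have diagonal entries \(d(c_{i-1},c_i)\) and mutual couplings \(a(c_i)\), exactly as in \(J_\Lambda(c)\). Only the entries at the two padding coordinates differ. Since \(x^{(N)}\) vanishes outside \([L,R]\), we get \(\langle x^{(N)},J_{m_N}(b^{(N)})x^{(N)}\rangle=\langle x^{(N)},J_\Lambda(c)x^{(N)}\rangle\) after embedding \(x^{(N)}\) in the obvious way. The ground-state identity \eqref{eq:inf-ground-state-transform} and the estimate \eqref{eq:87} give \(\langle x^{(N)},Ax^{(N)}\rangle\le 1/N\). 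Moreover \(\|x^{(N)}\|^2\ge\phi_0^2=1\). Hence
\[
\lambda_{\max}\bigl(J_{m_N}(b^{(N)})\bigr)\ \ge\ q_*-\frac{\langle x^{(N)},Ax^{(N)}\rangle}{\|x^{(N)}\|^2}\ \ge\ q_*-\frac1N .
\]
The matching upper bound \(\lambda_{\max}(J_{m_N}(b^{(N)}))\le q_*\) holds by the definition of \(q_*\), because \(b^{(N)}\) is a genuine Jacobi path. Together these give the convergence to \(q_*\).

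\emph{Main obstacle.} The step needing most care is the bookkeeping at the boundary of the window. One must check that the padding labels do not create extra diagonal or coupling terms that touch the support of \(x^{(N)}\). One must also check that in the half-infinite cases the true boundary label \(\pm1\) sits on the correct side, which is exactly what the endpoint clauses of Lemma~\ref{lem:inf-component-monotone} supply, so that monotonicity is not broken. Choosing the support of \(f^{(N)}\) strictly inside the window, with one spare index on each side, should make this clean.
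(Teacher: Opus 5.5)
Your proposal is correct and follows essentially the same route as the paper. You cut off $\phi$ with the tent function $f^{(N)}$, where $K$ contains $0$ and any finite endpoint of $\Lambda$. You then build $b^{(N)}=(1,c_{L-1},\ldots,c_R,-1)$, getting monotonicity from the monotonicity and endpoint clauses of Lemma~\ref{lem:inf-component-monotone}. Finally you sandwich $\lambda_{\max}$ between the Rayleigh quotient, controlled by the ground-state identity, and $q_*$.

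The differences are only cosmetic: you drop the redundant decoupled $+1$ pad when $\Lambda$ is half-infinite, the spare-index enlargement is unnecessary but harmless, and you state the explicit rate $q_*-1/N$.
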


\begin{proof}
Choose any finite interval $K \subseteq \Lambda$ containing $0$ and any endpoint of $\Lambda$. Let \(f^{(N)}\) be the same vectors used in the proof of
Lemma~\ref{lem:inf-calibrated-component}, defined with respect to this $K$. Define the vectors $x^{(N)}$ by
\[
x_i^{(N)}:=\phi_i f_i^{(N)},
\qquad i\in\Lambda.
\]
Applying \eqref{eq:inf-ground-state-transform} with
\(f=f^{(N)}\) gives
\[
\begin{aligned}
0
&\leq
q_*\|x^{(N)}\|_2^2
-
\langle x^{(N)},J_\Lambda(c)x^{(N)}\rangle\\
&=
\sum_{\{i,i+1\}\subseteq\Lambda}
a(c_i)\phi_i\phi_{i+1}
\bigl(f_i^{(N)}-f_{i+1}^{(N)}\bigr)^2
\leq\frac1N.
\end{aligned}
\]
as in \eqref{eq:87}.
Hence
\[
q_*\|x^{(N)}\|_2^2
-
\langle x^{(N)},J_\Lambda(c)x^{(N)}\rangle
\rightarrow0.
\]
Since $f^{N}_0 = 1$, we have $\|x^{(N)}\|_2\geq\phi_0=1$. So, we can safely divide both sides by $\|x^{(N)}\|^2_2$ and deduce that
\begin{equation}
 \frac{\langle x^{(N)},J_\Lambda(C)x^{(N)}\rangle}
      {\|x^{(N)}\|_2^2}
 \rightarrow q_*.
\label{eq:inf-cutoff-Rayleigh}
\end{equation}

Let \([L_N,R_N]\) be the support interval of \(f^{(N)}\).  Define the
finite path
\[
 b^{(N)}
 =
 \bigl(1,c_{L_N-1},c_{L_N},\ldots,c_{R_N},-1\bigr).
\]
Set $m_N=R_N-L_N+3$,
which is the length of this path and the dimension of its Jacobi matrix.
At the left end, there are two cases.  If \(L_N\) is not the left
endpoint of \(\Lambda\), then \(\Lambda\) continues to \(L_N-1\), and the
monotonicity already proved gives
\[
c_{L_N-1}\geq c_{L_N}.
\]
If \(L_N\) is the left endpoint of \(\Lambda\), then
Lemma~\ref{lem:inf-component-monotone} gives
\(c_{L_N-1}=1\).  Thus, in either case,
\[
1\geq c_{L_N-1}\geq c_{L_N}.
\] The analogous
statement with $-1$ holds at the right endpoint. Therefore,
\[
        1\geq c_{L_N-1}\geq c_{L_N}\geq\cdots
        \geq c_{R_N}\geq-1,
\]
and so \(b^{(N)}\) is monotone.

Let
\[
        y^{(N)}
        =
        \bigl(0,x^{(N)}_{L_N},\ldots,x^{(N)}_{R_N},0\bigr) \,.
\]
Since the two added coordinates are zero, a direct
inspection of the diagonal and nearest-neighbor terms gives
\[
 \frac{\langle y^{(N)},J_{m_N}(b^{(N)})y^{(N)}\rangle}
      {\|y^{(N)}\|_2^2}
 =
 \frac{\langle x^{(N)},J_\Lambda(C)x^{(N)}\rangle}
      {\|x^{(N)}\|_2^2}.
\]
The LHS is at most
\(\lambda_{\max}\left(J_{m_N}(b^{(N)})\right)\), while \(\lambda_{\max}\left(J_{m_N}(b^{(N)})\right)\) is at most \(q_*\)
by definition. Then, \eqref{eq:inf-cutoff-Rayleigh} proves the
claim.
\end{proof}

Now, to obtain a nonzero \(\ell^2\)-limit from these finite paths, we need a
uniform lower bound on at least one coordinate of their normalized
top eigenvectors. The key to
obtaining this bound is uniform control of the paths' ``total variation''. For a path \(c=(c_0,\ldots,c_n)\), write
\[
        \operatorname{TV}(c)
        :=\sum_{i=1}^n|c_i-c_{i-1}|.
\]
Every monotone path from \(1\) to \(-1\) has total variation \(2\).
The next elementary bound converts this fact into a uniform lower bound
on the largest coordinate of an \(\ell^2\)-normalized nonnegative top
eigenvector of the corresponding Jacobi matrix.

\begin{lemma}
\label{lem:inf-variation-localization}
For every finite Jacobi path
\(c=(c_0,\ldots,c_n)\), and every
\(x\in\mathbb R^n\),
\begin{equation}
 \langle x,J_n(c)x\rangle
 \leq
 \frac14\|x\|_2^2
 +\frac12\operatorname{TV}(c)\|x\|_\infty^2.
\label{eq:inf-variation-bound}
\end{equation}
Consequently, if \(x\) is an $\ell^2$-normalized eigenvector corresponding to some eigenvalue \(q>1/4\), then
\begin{equation}
        \|x\|_\infty^2
        \geq
        \frac{2(q-1/4)}{\operatorname{TV}(c)}.
\label{eq:inf-max-coordinate-bound}
\end{equation}
\end{lemma}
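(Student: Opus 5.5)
The plan is to bound the quadratic form row by row: split each off-diagonal term symmetrically, then show that every resulting ``row sum'' exceeds $\frac14$ by at most half the local jump $|c_i-c_{i-1}|$.

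\textbf{Step 1 (splitting the couplings).} For real $x$, I write
\[
\langle x,J_n(c)x\rangle=\sum_{i=1}^n d(c_{i-1},c_i)x_i^2+2\sum_{i=1}^{n-1}a(c_i)x_ix_{i+1}.
\]
Since $a\ge0$, I bound each coupling by $2a(c_i)x_ix_{i+1}\le a(c_i)(x_i^2+x_{i+1}^2)$. The endpoint labels satisfy $a(c_0)=a(c_n)=0$, so I may add these vanishing terms freely. Regrouping by vertex then gives
\[
\langle x,J_n(c)x\rangle\le\sum_{i=1}^n L(c_{i-1},c_i)\,x_i^2,\qquad L(u,v):=d(u,v)+a(u)+a(v).
\]

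\textbf{Step 2 (the key scalar bound).} I aim to show that for all $u,v\in[-1,1]$,
\[
L(u,v)\le\frac14+\frac{u-v}{2}\le\frac14+\frac{|u-v|}{2}.
\]
By definition of $d$, we have $L(u,v)-\frac{u-v}{2}=uv-1+a(u)+a(v)$. Applying $uv\le\frac{u^2+v^2}{2}$ splits this into $\phi(u)+\phi(v)$, where
\[
\phi(t):=\frac{t^2-1}{2}+\frac12\sqrt{1-t^2}=\frac{y-y^2}{2},\qquad y:=\sqrt{1-t^2}\in[0,1].
\]
Each such term is at most $\frac18$, so the sum is at most $\frac14$. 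This handles both orderings of $u,v$ at once; Lemma~\ref{lem:inf-upward-row-sums} would cover only the case $u\le v$, so I would not rely on it.

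\textbf{Step 3 (summing).} Inserting Step 2 into Step 1 gives
\[
\langle x,J_n(c)x\rangle\le\frac14\|x\|_2^2+\frac12\sum_i|c_i-c_{i-1}|\,x_i^2,
\]
and bounding each $x_i^2$ by $\|x\|_\infty^2$ yields \eqref{eq:inf-variation-bound}.

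\textbf{Step 4 (the consequence).} Let $x$ be $\ell^2$-normalized with $J_n(c)x=qx$ and $q>\frac14$. Then \eqref{eq:inf-variation-bound} gives $q\le\frac14+\frac12\operatorname{TV}(c)\|x\|_\infty^2$. This forces $\operatorname{TV}(c)>0$, because otherwise the right-hand side would be $\frac14<q$. Rearranging gives \eqref{eq:inf-max-coordinate-bound}.

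I expect the only nontrivial point to be finding the correct per-row estimate in Step 2. The naive use of Lemma~\ref{lem:inf-upward-row-sums} fails for decreasing steps, which are exactly the ones that occur in monotone paths. The remedy is to absorb the asymmetric linear part $\frac{u-v}{2}$ of $d$ into the variation term before applying AM--GM.
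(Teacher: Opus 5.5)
Your proof is correct and is essentially the paper's argument. Both use $uv\le\frac{u^2+v^2}{2}$ to isolate the term $\frac{u-v}{2}\le\frac{|u-v|}{2}$, and both bound $\frac{t^2-1}{2}+a(t)=\frac{y-y^2}{2}\le\frac18$ for each label. The only difference is bookkeeping: the paper groups by edge and takes the top eigenvalue of a $2\times2$ block, which equals your row-sum bound because the block has equal diagonal entries, whereas you apply AM--GM to the couplings first and group by vertex. Your observation that $\operatorname{TV}(c)>0$ is forced when $q>\frac14$ is a detail the paper leaves implicit.
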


\begin{proof}
For \(u,v\in[-1,1]\),
\begin{align*}
 d(u,v)
 -\left(\frac{u^2-1}{2}+\frac{v^2-1}{2}\right)
 &=-\frac{(u-v)^2}{2}+\frac{u-v}{2}\\
 &\leq\frac{|u-v|}{2}.
\end{align*}
Applying this bound (after rearranging) to each diagonal coefficient
\(d(c_{i-1},c_i)\), and using
\[
\frac12\sum_{i=1}^n
|c_i-c_{i-1}|x_i^2
\leq
\frac12\operatorname{TV}(c)\|x\|_\infty^2,
\]
while leaving the off-diagonal terms unchanged, gives
\begin{align}
\langle x,J_n(c)x\rangle
&\leq
\sum_{i=1}^n
\left(\frac{c_{i-1}^2-1}{2}
      +\frac{c_i^2-1}{2}\right)x_i^2
+2\sum_{i=1}^{n-1}a(c_i)x_ix_{i+1} \nonumber\\
&\quad+
\frac12\operatorname{TV}(c)\|x\|_\infty^2. \label{eq:98}
\end{align}
For \(i\in\{1,\ldots,n-1\}\), the terms involving
\(c_i\) in the first two sums depend only on \(x_i\) and \(x_{i+1}\).
Writing \(t=c_i\), their total contribution is
\[
\frac{t^2-1}{2}\bigl(x_i^2+x_{i+1}^2\bigr)
+2a(t)x_ix_{i+1}.
\]
The latter quadratic form is represented by the symmetric matrix
\[
B_t
:=
\begin{pmatrix}
\dfrac{t^2-1}{2} & a(t)\\[2mm]
a(t) & \dfrac{t^2-1}{2}
\end{pmatrix}.
\]
Writing \(z=\sqrt{1-t^2}\), we have
\[
\frac{t^2-1}{2}=-\frac{z^2}{2},
\qquad
a(t)=\frac z2,
\]
and therefore
\[
B_t
=
\begin{pmatrix}
-\dfrac{z^2}{2} & \dfrac z2\\[2mm]
\dfrac z2 & -\dfrac{z^2}{2}
\end{pmatrix}.
\]
The eigenvalues of this matrix are
\[
-\frac{z^2}{2}\pm\frac z2.
\]
Thus its largest eigenvalue satisfies
\[
-\frac{z^2}{2}+\frac z2
=
\frac18-\frac12\left(z-\frac12\right)^2
\leq\frac18.
\]
Hence,
\[
\frac{t^2-1}{2}\bigl(x_i^2+x_{i+1}^2\bigr)
+2a(t)x_ix_{i+1}
\leq
\frac18\bigl(x_i^2+x_{i+1}^2\bigr).
\]
Summing over \(i\in\{1,\ldots,n-1\}\) bounds the first two sums on the RHS of \eqref{eq:98} by
\[
        \frac18(x_1^2+x_n^2)
        +\frac14\sum_{i=2}^{n-1}x_i^2
        \leq\frac14\|x\|_2^2.
\]
The contributions corresponding to terms involving $c_0$ and $c_n$ are zero because \(c_0^2=c_n^2=1\).
This proves \eqref{eq:inf-variation-bound}.
\eqref{eq:inf-max-coordinate-bound} follows by rearranging it.
\end{proof}

We are now ready to use these monotone paths from Lemma~\ref{prop:inf-monotone-approximants} to construct a genuine eigenvector in \(\ell^2(\mathbb Z)\) with eigenvalue \(q_*\). For each finite path, we will choose an \(\ell^2\)-normalized nonnegative eigenvector corresponding to its largest eigenvalue, and reindex it so that its largest coordinate occurs at \(0\).  The preceding estimate bounds this coordinate uniformly away from zero, so the eventual
coordinatewise limit cannot be the zero vector.

\begin{proposition}
\label{prop:inf-l2-eigenvector}
There exist a sequence
\[
c=(c_i)_{i\in\mathbb Z}\in[-1,1]^{\mathbb Z}
\]
and a nonzero nonnegative vector
\(\lambda\in\ell^2(\mathbb Z)\) such that
\begin{equation}
        J_\infty(c)\lambda=q_*\lambda.
\label{eq:inf-l2-eigen-equation}
\end{equation}
\end{proposition}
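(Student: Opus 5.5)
The plan is to take the monotone finite paths $b^{(N)}$ from Proposition~\ref{prop:inf-monotone-approximants} and extract a nonzero $\ell^2$ limit of their normalized top eigenvectors. For each $N$, let $q_N:=\lambda_{\max}(J_{m_N}(b^{(N)}))\to q_*$, and choose an $\ell^2$-normalized nonnegative top eigenvector $\lambda^{(N)}$; this is possible since the off-diagonal entries are nonnegative. Since $q_*>\frac14$ by Lemma~\ref{lem:inf-quarter-witness}, we have $q_N>\frac14+\eta$ for some fixed $\eta>0$ and all large $N$. Monotonicity gives $\operatorname{TV}(b^{(N)})=2$, so Lemma~\ref{lem:inf-variation-localization} yields
\[
\|\lambda^{(N)}\|_\infty^2\ge q_N-\tfrac14\ge\eta.
\]

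Next, pad $b^{(N)}$ with $N$ labels $+1$ on the left and $N$ labels $-1$ on the right. As in Lemma~\ref{lem:inf-bounded-ground-state}, this only adds zero blocks, and we extend $\lambda^{(N)}$ by zero. We then reindex so that a maximal coordinate sits at index $0$, giving $\lambda^{(N)}_0\ge\sqrt\eta$. Each fixed index $i$ is eventually at distance at least $N$ from the path ends, so for large $N$ the eigenvalue equation holds at $i$ with all three neighboring labels defined. All labels lie in $[-1,1]$ and all coordinates in $[0,1]$. A diagonal subsequence argument over the countable index set $\mathbb Z$ then gives coordinatewise limits $b^{(N)}_i\to c_i$ and $\lambda^{(N)}_i\to\lambda_i$. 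Passing to the limit in the coordinate equations, using continuity of $a$ and $d$ and $q_N\to q_*$, gives $J_\infty(c)\lambda=q_*\lambda$ coordinatewise. We also get $\lambda\ge0$ and $\lambda_0\ge\sqrt\eta>0$.

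The remaining step is $\lambda\in\ell^2(\mathbb Z)$. By Fatou's lemma applied to finite sums, $\sum_{|i|\le R}\lambda_i^2=\lim_N\sum_{|i|\le R}(\lambda^{(N)}_i)^2\le 1$ for every $R$, so $\|\lambda\|_2\le1$. Since $J_\infty(c)$ is bounded on $\ell^2$ and the eigen-equation holds coordinatewise, it holds as an operator identity in $\ell^2$, which proves \eqref{eq:inf-l2-eigen-equation}.

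The main obstacle was already isolated in the preceding results: mass escaping to infinity so that the limit vanishes. It is handled by the uniform lower bound on $\|\lambda^{(N)}\|_\infty$ combined with recentering at the maximal coordinate. The only remaining care is the bookkeeping that each fixed index is interior for large $N$, which the $N$-fold padding guarantees exactly as before. Note that the limiting sequence $c$ here may differ from the one in Lemma~\ref{lem:inf-bounded-ground-state}, but the statement only asserts existence.
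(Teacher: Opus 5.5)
Your proposal is correct and follows essentially the same route as the paper's proof. Both take the monotone paths from Proposition~\ref{prop:inf-monotone-approximants}, use $\operatorname{TV}=2$ with Lemma~\ref{lem:inf-variation-localization} to bound the maximal coordinate uniformly away from zero, and then pad, recenter at a maximal coordinate, extract a diagonal subsequence, and use Fatou's lemma to get a nonzero $\ell^2$ limit that satisfies the eigenvalue equation coordinatewise.
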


\begin{proof}
Let \(b^{(k)}\) be the monotone paths given by
Proposition~\ref{prop:inf-monotone-approximants}, let
\[
        q_k:=\lambda_{\max}\left(J_{m_k}(b^{(k)})\right)\rightarrow q_*,
\]
and choose normalized nonnegative top eigenvectors \(v^{(k)}\).
Since \(\operatorname{TV}(b^{(k)})=2\),
Lemma~\ref{lem:inf-variation-localization} gives
\[
        \|v^{(k)}\|_\infty^2\geq q_k-\frac14.
\]
Since $q_*>\frac14$ by Lemma~\ref{lem:inf-quarter-witness}, there is an \(\eta>0\) such that
\begin{equation}
        \|v^{(k)}\|_\infty\geq\eta
\label{eq:inf-uniform-maximum}
\end{equation}
for all sufficiently large \(k\).

Discard finitely many terms so that both this estimate and \(q_k>0\)
hold for every remaining \(k\).
Pad \(b^{(k)}\) by \(k\) repeated \(+1\) labels on the left and \(k\)
repeated \(-1\) labels on the right.  As before, this adds two zero
blocks and does not change \(q_k>0\).  Extend \(v^{(k)}\) by zeros, choose
a maximal coordinate, and translate it to \(0\).  Denote the padded and shifted sequence by
\(\widetilde c^{(k)}\) and the normalized vector by \(\widetilde v^{(k)}\). After shifting, regard \(\widetilde v^{(k)}\) as an element of
\(\ell^2(\mathbb Z)\) by setting
\(\widetilde v_i^{(k)}=0\) outside the coordinate interval of the padded
path. The
distance from \(0\) to both endpoints tends to infinity, and
\[
\widetilde v_0^{(k)}\geq\eta,
\qquad
\sum_{i\in\mathbb Z}\bigl|\widetilde v_i^{(k)}\bigr|^2=1.
\]
By a nested-subsequence argument, we may pass to a single subsequence
such that, for every fixed \(i\in\mathbb Z\),
\[
\widetilde c_i^{(k)}\rightarrow c_i,
\qquad
\widetilde v_i^{(k)}\rightarrow\lambda_i.
\]
Equation~\eqref{eq:inf-uniform-maximum} gives
\(\lambda_0\geq\eta>0\).  Moreover, Fatou's lemma gives
\[
\sum_{i\in\mathbb Z}|\lambda_i|^2
\leq
\liminf_{k\to\infty}
\sum_{i\in\mathbb Z}\bigl|\widetilde v_i^{(k)}\bigr|^2
=
1.
\]
Hence \(0\neq\lambda\in\ell^2(\mathbb Z)\).

For any fixed \(i\), the eigenvalue equation is valid at
\(i\) for all large enough \(k\), since the endpoints of the padded paths tend to infinity. Passing to the limit gives
\eqref{eq:inf-l2-eigen-equation}. 
\end{proof}

\subsection{The infinite-dimensional strategy}

We now have all we need to construct an infinite-dimensional strategy attaining $q_*$. Normalize the eigenvector from Proposition~\ref{prop:inf-l2-eigenvector}
so that \(\|\lambda\|_2=1\). The strategy is the natural infinite-dimensional extension of the canonical strategy from Section~\ref{sec:path-to-strategy}.
\begin{definition}
\label{def:inf-staggered-strategy}
Let $c=(c_i)_{i\in\mathbb Z}\in[-1,1]^{\mathbb Z}$
and \(0\neq\lambda\in\ell^2(\mathbb Z)\) be as in
Proposition~\ref{prop:inf-l2-eigenvector}, normalized so that
\(\|\lambda\|_2=1\). Let
\[
        \mathcal H_A=\mathcal H_B=\ell^2(\mathbb Z),
\]
with standard basis \((e_i)_{i\in\mathbb Z}\), and let
\[
        \ket{\psi}=\sum_{i\in\mathbb Z}\lambda_i e_i\otimes e_i.
\]
For \(s_i:=\sqrt{1-c_i^2}\), define the diagonal operators
\[
 Z_Ae_i=
 \begin{cases}
        c_{i-1}e_i,&i\text{ odd},\\
        -c_i e_i,&i\text{ even},
 \end{cases}
 \qquad
 Z_Be_i=
 \begin{cases}
        c_i e_i,&i\text{ odd},\\
        -c_{i-1}e_i,&i\text{ even}.
 \end{cases}
\]
Define \(X_A\) on the even pairs by
\[
        X_Ae_i=s_i e_{i+1},
        \qquad
        X_Ae_{i+1}=s_i e_i
        \qquad(i\text{ even}),
\]
and \(X_B\) on the odd pairs by
\[
        X_Be_i=s_i e_{i+1},
        \qquad
        X_Be_{i+1}=s_i e_i
        \qquad(i\text{ odd}).
\]
Let \(W_A\) and \(W_B\) be such that
\[
\begin{aligned}
 W_Ae_i&=e_{i+1},&\quad W_Ae_{i+1}&=e_i &&(i\text{ odd}),\\
 W_Be_i&=e_{i+1},&\quad W_Be_{i+1}&=e_i &&(i\text{ even}).
\end{aligned}
\]
Finally, set
\[
 A_1=\frac{I+Z_A-X_A}{2},
 \qquad
 A_2=\frac{I+Z_A+X_A}{2},
 \qquad
 A_3=\frac{I+W_A}{2},
\]
and
\[
 B_1=\frac{I+Z_B-X_B}{2},
 \qquad
 B_2=\frac{I+Z_B+X_B}{2},
 \qquad
 B_3=\frac{I+W_B}{2}.
\]
\end{definition}

\begin{lemma}
\label{lem:inf-staggered-projections}
The six operators in Definition~\ref{def:inf-staggered-strategy} are
orthogonal projections.
\end{lemma}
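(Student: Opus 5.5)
The plan is to reduce everything to invariant subspaces of dimension at most two. Every operator in Definition~\ref{def:inf-staggered-strategy} is block-diagonal with respect to a decomposition of $\ell^2(\mathbb Z)$ into orthogonal subspaces of dimension one or two, so each claim can be checked block by block. First note that all six operators are bounded: $|c_i|\le1$ and $0\le s_i\le1$ imply that $Z_A,Z_B,X_A,X_B$ have norm at most $1$, and $W_A,W_B$ are permutations of the basis. They are also self-adjoint, since every matrix entry in the standard basis is real and the matrices are symmetric. So it suffices to verify the projector identity $P^2=P$ on each block.

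For $A_3$ and $B_3$, observe that $W_A$ is the direct sum, over odd $i$, of the swap matrix $\begin{pmatrix}0&1\\1&0\end{pmatrix}$ on $\mathrm{span}\{e_i,e_{i+1}\}$. These pairs partition $\mathbb Z$, so $W_A$ is a self-adjoint unitary with $W_A^2=I$. Hence $A_3=(I+W_A)/2$ satisfies $A_3^2=(I+2W_A+W_A^2)/4=A_3$. The same argument applies to $B_3$, using the even pairs.

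For $A_1$ and $A_2$, use the even pairs $\{e_i,e_{i+1}\}$ with $i$ even. These partition $\mathbb Z$ (unlike the finite case, there are no unpaired vertices). On such a pair we have $Z_Ae_i=-c_ie_i$, since $i$ is even, and $Z_Ae_{i+1}=c_{i}e_{i+1}$, since $i+1$ is odd and $Z_Ae_{i+1}=c_{(i+1)-1}e_{i+1}$. Also $X_A=\begin{pmatrix}0&s_i\\s_i&0\end{pmatrix}$ on this pair. So the block is exactly the $2\times2$ form used in Proposition~\ref{prop:path-to-strategy}. There $Z_AX_A+X_AZ_A=0$ because the diagonal entries of $Z_A$ are opposite, and $Z_A^2+X_A^2=(c_i^2+s_i^2)I=I$. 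These relations hold on every block, hence globally. Then
\[
\Bigl(\tfrac{I+Z_A\pm X_A}{2}\Bigr)^2=\tfrac{1}{4}\bigl(I+2Z_A\pm2X_A+Z_A^2+X_A^2\pm(Z_AX_A+X_AZ_A)\bigr)=\tfrac{I+Z_A\pm X_A}{2},
\]
so $A_1$ and $A_2$ are projections. Bob's side is identical on the odd pairs $\{e_i,e_{i+1}\}$ with $i$ odd: there $Z_Be_i=c_ie_i$ and $Z_Be_{i+1}=-c_ie_{i+1}$.

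The only point needing care is the index bookkeeping. We must check that the pairing used for $X_A$ is the same pairing on which $Z_A$ takes opposite values $\mp c_i$, and likewise for Bob. We must also justify passing from blockwise identities to operator identities on $\ell^2(\mathbb Z)$. The latter follows because the blocks are mutually orthogonal, span a dense subspace, and all operators involved are bounded.
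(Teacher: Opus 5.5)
Your proposal is correct and follows essentially the same route as the paper. Both arguments verify $Z_AX_A+X_AZ_A=0$ and $Z_A^2+X_A^2=I$ blockwise on the even pairs, verify the analogous relations for Bob on the odd pairs, use $W_A^2=W_B^2=I$, and conclude idempotence. Your direct expansion of $\bigl(\tfrac{I+Z_A\pm X_A}{2}\bigr)^2$ is the paper's observation that $Z_A\pm X_A$ is a self-adjoint unitary, written out. Your remarks on boundedness, on the index bookkeeping, and on passing from blocks to $\ell^2(\mathbb Z)$ fill in details the paper leaves implicit.
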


\begin{proof}
All the displayed operators are bounded and self-adjoint.  On an even
pair \((i,i+1)\),
\[
        Z_A=
        \begin{pmatrix}-c_i&0\\0&c_i\end{pmatrix},
        \qquad
        X_A=
        \begin{pmatrix}0&s_i\\s_i&0\end{pmatrix}.
\]
Therefore
\[
        Z_AX_A+X_AZ_A=0,
        \qquad
        Z_A^2+X_A^2=I.
\]
The same calculation on the odd pairs gives the analogous identities
for \(Z_B,X_B\).  Consequently \(Z_A\pm X_A\) and \(Z_B\pm X_B\) are
self-adjoint unitaries.  Also \(W_A^2=W_B^2=I\).  Each of the six
operators is therefore of the form \((I+R)/2\) for a self-adjoint
unitary \(R\), and hence is an orthogonal projection.
\end{proof}

\begin{theorem}[Infinite-dimensional attainment]
\label{thm:inf-attainment}
For the strategy in Definition~\ref{def:inf-staggered-strategy},
\begin{equation}
        \langle\psi| I_{3322}|\psi\rangle
        =
        \langle\lambda, J_\infty(c)\lambda\rangle = q_* \,.
\label{eq:inf-Bell-Jacobi-identity}
\end{equation}
\end{theorem}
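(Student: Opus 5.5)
The plan is to mirror the proof of Proposition~\ref{prop:path-to-strategy}, now on $\ell^2(\mathbb Z)$. By Lemma~\ref{lem:rewrite} (which holds verbatim for bounded operators) and Lemma~\ref{lem:inf-staggered-projections}, we have
\[
I_{3322}=Z_AZ_B+\tfrac12Z_A-\tfrac12Z_B+\tfrac12X_AW_B+\tfrac12W_AX_B-I .
\]
So it suffices to compute $\langle\psi|(R\otimes S)|\psi\rangle$ for each of these bounded operators $R,S$ when $\ket{\psi}=\sum_i\lambda_i e_i\otimes e_i$. For bounded $R,S$ with matrix entries $R_{ij},S_{ij}$ in the standard basis, I will establish
\[
\langle\psi|(R\otimes S)|\psi\rangle=\sum_{i,j}\lambda_i\lambda_j R_{ij}S_{ij},
\]
where the sum converges absolutely. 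All operators involved are banded: each has nonzero entries only for $|i-j|\le1$, and these entries are bounded by $1$. The double sum therefore has at most three terms per row, each bounded by $|\lambda_i\lambda_j|$, and it is dominated via $2|\lambda_i\lambda_{j}|\le\lambda_i^2+\lambda_j^2$ by a multiple of $\|\lambda\|_2^2$. The identity itself follows by approximating $\psi$ by its finite truncations $\psi^{(N)}=\sum_{|i|\le N}\lambda_ie_i\otimes e_i$. Boundedness of $R\otimes S$ gives $\langle\psi^{(N)},(R\otimes S)\psi^{(N)}\rangle\to\langle\psi,(R\otimes S)\psi\rangle$, and the finite identity holds exactly.

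Next I compute the combined coefficient matrix
\[
M_{ij}=(Z_A)_{ij}(Z_B)_{ij}+\tfrac12(Z_A)_{ii}\delta_{ij}-\tfrac12(Z_B)_{ii}\delta_{ij}-\delta_{ij}+\tfrac12(X_A)_{ij}(W_B)_{ij}+\tfrac12(W_A)_{ij}(X_B)_{ij},
\]
entrywise, exactly as in Proposition~\ref{prop:path-to-strategy}. The diagonal entries come out as $d(c_{i-1},c_i)$ for both parities of $i$. On an even pair $(i,i+1)$, only $X_A$ and $W_B$ couple, which gives $s_i/2=a(c_i)$; on an odd pair, only $W_A$ and $X_B$ couple, again giving $a(c_i)$. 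All other entries vanish. A key simplification relative to the finite case is that there are no boundary indices on $\mathbb Z$, so every index lies in exactly one even pair and one odd pair, and $M$ is precisely the matrix of $J_\infty(c)$ from \eqref{eq:inf-Jacobi-operator}. Hence
\[
\langle\psi|I_{3322}|\psi\rangle=\sum_{i,j}\lambda_i\lambda_jM_{ij}=\langle\lambda,J_\infty(c)\lambda\rangle .
\]
Since $J_\infty(c)\lambda=q_*\lambda$ by Proposition~\ref{prop:inf-l2-eigenvector} and $\|\lambda\|_2=1$, this equals $q_*$.

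The main point requiring care, though not a real obstacle, is the justification in the first step. One has to make sure that the eigen-equation of Proposition~\ref{prop:inf-l2-eigenvector}, which was obtained coordinatewise, agrees with the $\ell^2$ operator action. Since $J_\infty(c)$ is bounded and banded, $(J_\infty(c)\lambda)_i$ is exactly the coordinate formula, so the two agree. It then follows that $\langle\lambda,J_\infty(c)\lambda\rangle=q_*\|\lambda\|^2$.
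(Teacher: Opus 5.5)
Your proposal is correct and follows essentially the same route as the paper. Both expand $\langle\psi|R\otimes S|\psi\rangle=\sum_{i,j}\lambda_i\lambda_jR_{ij}S_{ij}$, identify the diagonal coefficients as $d(c_{i-1},c_i)$ for both parities and the nearest-neighbour couplings as $a(c_i)$ (through $X_A\otimes W_B$ on even pairs and $W_A\otimes X_B$ on odd pairs), and conclude from $J_\infty(c)\lambda=q_*\lambda$ with $\|\lambda\|_2=1$; your truncation argument is just a more explicit version of the paper's remark that all the sums converge absolutely.
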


\begin{proof}
We have
\[
 I_{3322}
 =Z_A\otimes Z_B
 +\frac12Z_A\otimes I
 -\frac12I\otimes Z_B
 +\frac12X_A\otimes W_B
 +\frac12W_A\otimes X_B
 -I\otimes I.
\]
All sums below converge absolutely.  Indeed,
\[
        \sum_i\lambda_i^2=1,
        \qquad
        \sum_i|\lambda_i\lambda_{i+1}|
        \leq\frac12\sum_i(\lambda_i^2+\lambda_{i+1}^2)=1.
\]
Let \(z_i^A,z_i^B\) be the diagonal entries of \(Z_A,Z_B\).  The
diagonal part of the expectation is
\[
 \sum_i\lambda_i^2
 \left(z_i^Az_i^B+\frac12z_i^A-\frac12z_i^B-1\right).
\]
If \(i\) is odd, \((z_i^A,z_i^B)=(c_{i-1},c_i)\); if \(i\) is even,
\((z_i^A,z_i^B)=(-c_i,-c_{i-1})\).  In both cases, using
\(d(-v,-u)=d(u,v)\), the bracket equals
\(d(c_{i-1},c_i)\).  Thus the diagonal contribution is
\[
        \sum_i d(c_{i-1},c_i)\lambda_i^2.
\]

For any two of the operators \(R\) and \(S\) defined above, directly
expanding
\[
\ket{\psi}=\sum_i\lambda_i e_i\otimes e_i
\]
gives
\[
\langle\psi,R\otimes S\,\psi\rangle
=
\sum_{i,j}\lambda_i\lambda_j R_{ij}S_{ij},
\]
where \(R_{ij}=\langle e_i,Re_j\rangle\) and similarly for \(S_{ij}\).
If \(i\) is even, the edge \((i,i+1)\) contributes
\[
 \frac12\bigl(2s_i\lambda_i\lambda_{i+1}\bigr)
 =2a(c_i)\lambda_i\lambda_{i+1}
\]
through \(X_A\otimes W_B\).  If \(i\) is odd, it contributes the same
quantity through \(W_A\otimes X_B\).  There are no other off-diagonal
terms.  Hence
\[
 \langle\psi | I_{3322} | \psi\rangle
 =\sum_i d(c_{i-1},c_i)\lambda_i^2
  +2\sum_i a(c_i)\lambda_i\lambda_{i+1} = \langle\lambda,J_\infty(C)\lambda\rangle \,.
\]
The latter is $q_*$ by the choice of $\lambda$ and $c$ from Proposition~\ref{prop:inf-l2-eigenvector}.
\end{proof}

\begin{corollary}[Infinite-dimensional value]
\label{cor:infinite-value}
If \(\omega_\infty(I_{3322})\) denotes the supremum over strategies on
separable Hilbert spaces, then
\[
        \omega_\infty(I_{3322})=q_*,
\]
and the supremum is attained.
\end{corollary}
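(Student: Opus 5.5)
The statement combines two earlier results, so the proof has two halves: an upper bound and an attaining strategy.

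For the upper bound $\omega_\infty(I_{3322})\le q_*$, I will invoke Theorem~\ref{thm:universal-upper}. It bounds every finite-dimensional strategy by $q_*$ and extends the bound to separable infinite-dimensional strategies. The extension uses the fact that infinite-dimensional correlations lie in the closure of the finite-dimensional ones, and that the $I_{3322}$ value is linear in the correlation. One point needs care: that theorem was stated for pure states and projective measurements. So I will first note that a general strategy on separable spaces reduces to this case without changing the value. A mixed state can be purified on a separable ancilla, and binary POVMs can be Naimark-dilated on separable spaces; both operations keep the correlation unchanged. Taking the supremum over all strategies then gives $\omega_\infty(I_{3322})\le q_*$.

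For the lower bound and attainment, I will use the explicit strategy of Definition~\ref{def:inf-staggered-strategy}. It lives on $\ell^2(\mathbb Z)\otimes\ell^2(\mathbb Z)$, which is separable. Its state $\ket{\psi}=\sum_i\lambda_ie_i\otimes e_i$ is a unit vector because $\|\lambda\|_2=1$. Lemma~\ref{lem:inf-staggered-projections} shows that its six operators are orthogonal projections, so it is a legitimate strategy. Theorem~\ref{thm:inf-attainment} shows that its value is exactly $q_*$. Hence $\omega_\infty(I_{3322})\ge q_*$, and the supremum is attained by this strategy.

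Combining the two inequalities gives $\omega_\infty(I_{3322})=q_*$ with attainment. The only real obstacle is the reduction step in the first half, namely that restricting to pure states and projective measurements loses nothing in infinite dimensions. The rest is bookkeeping on already established results. I will state the dilation argument briefly, since it is standard.
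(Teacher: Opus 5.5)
Your proposal is correct and matches the paper's proof: the paper likewise takes the upper bound from Theorem~\ref{thm:universal-upper} and the attainment from the explicit strategy of Theorem~\ref{thm:inf-attainment}. The purification/Naimark step you single out as the main obstacle is harmless but not actually needed, because Theorem~\ref{thm:universal-upper} already covers separable infinite-dimensional strategies through the closure argument, and that argument applies directly to arbitrary states and POVMs.
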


\begin{proof}
Theorem~\ref{thm:inf-attainment} gives the
attainment, and Theorem~\ref{thm:universal-upper} gives the upper bound.
\end{proof}

\section{The final theorem}
\begin{theorem}
Let $\omega_d(I_{3322})$ be the supremum of the value of $I_{3322}$ over strategies of dimension $d$. Then, $$\omega_d(I_{3322})<q_*$$ 
for every finite $d$. Let $\omega_\infty(I_{3322})$ be the supremum over separable infinite-dimensional strategies. Then, 
$$\omega_\infty(I_{3322})=q_*\,,$$ 
and the supremum is attained.
\end{theorem}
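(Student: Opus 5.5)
The final theorem simply collects results already established, so the plan is to assemble them rather than prove anything new. First, the strict inequality $\omega_d(I_{3322})<q_*$ for each finite $d$ is exactly Theorem~\ref{thm:finite-nonattainment}. I will recall the chain behind it so the logic is self-contained at this point. Theorem~\ref{thm:universal-upper}, via the concave certificate of Theorem~\ref{thm:scalar-certificate}, gives the upper bound $q_*$. A hypothetical finite-dimensional strategy attaining $q_*$ would force zero slack (Lemma~\ref{lem:zero-slack}). By Lemma~\ref{lem:95}, it would also produce a finite graph $\Gamma$ with $M_\Gamma\le q_*I$ and an eigenvector with eigenvalue $q_*$ that is positive on every component. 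By Lemma~\ref{lem:component-jacobi}, each component is a path or a cycle, and Lemmas~\ref{lem:no-finite-calibrated-path} and~\ref{lem:no-finite-cycle} show that no such component has top eigenvalue $q_*$. Finally, compactness of the set of $d$-dimensional strategies turns the statement ``no strategy attains $q_*$'' into a strict inequality for the supremum.

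Second, for $\omega_\infty(I_{3322})=q_*$ with attainment I will invoke Corollary~\ref{cor:infinite-value}. The upper bound is the infinite-dimensional part of Theorem~\ref{thm:universal-upper}; it passes from finite dimensions because separable infinite-dimensional correlations lie in the closure of finite-dimensional ones and the Bell value is linear. Attainment comes from the explicit strategy of Definition~\ref{def:inf-staggered-strategy}. Lemma~\ref{lem:inf-staggered-projections} shows its operators are genuine projections, and Theorem~\ref{thm:inf-attainment} shows its value equals $\langle\lambda,J_\infty(c)\lambda\rangle=q_*$, where the $\ell^2$ eigenvector $\lambda$ is supplied by Proposition~\ref{prop:inf-l2-eigenvector}.

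The only step requiring any care is checking that the quantifiers match. First, ``strategies of dimension $d$'' should include arbitrary mixed states and POVMs; purification and Naimark dilation reduce these to pure-state projective strategies, which may live in a larger but still finite dimension, and Theorem~\ref{thm:finite-nonattainment}'s argument covers every finite dimension. Second, the supremum in $\omega_\infty$ is over separable Hilbert spaces, which matches both the scope of the upper bound and the space $\ell^2(\mathbb Z)\otimes\ell^2(\mathbb Z)$ of the constructed strategy. With these checks in place, the theorem follows immediately by combining the cited results.
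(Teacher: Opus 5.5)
Your proposal is correct and matches the paper exactly: the final theorem is proved there by citing Theorem~\ref{thm:finite-nonattainment} for $\omega_d(I_{3322})<q_*$, and Corollary~\ref{cor:infinite-value} (the upper bound of Theorem~\ref{thm:universal-upper} together with attainment via Theorem~\ref{thm:inf-attainment}) for $\omega_\infty(I_{3322})=q_*$. Your extra remark about mixed states and POVMs is also sound, but it needs one of two additions: either note that the dilated dimension depends only on $d$, or apply compactness directly to mixed states and POVMs on $\mathbb{C}^d$, each of which has value $<q_*$ by dilation.
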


\begin{proof}
Theorem~\ref{thm:finite-nonattainment} gives $\omega_d(I_{3322})<q_*$ for every finite $d$. Corollary~\ref{cor:infinite-value} gives that $\omega_\infty(I_{3322})=q_*$ and that the supremum is attained.
\end{proof}

\bibliographystyle{alpha}
\bibliography{references}

@article{froissart1981constructive,
  title={Constructive generalization of Bell’s inequalities},
  author={Froissart, Marcel},
  journal={Il Nuovo Cimento B (1971-1996)},
  volume={64},
  number={2},
  pages={241--251},
  year={1981},
  publisher={Springer}
}

@article{PV10,
  title={Maximal violation of a bipartite three-setting, two-outcome Bell inequality using infinite-dimensional quantum systems},
  author={P{\'a}l, K{\'a}roly F and V{\'e}rtesi, Tam{\'a}s},
  journal={Physical Review A—Atomic, Molecular, and Optical Physics},
  volume={82},
  number={2},
  pages={022116},
  year={2010},
  publisher={APS}
}

@article{coladangelo2020inherently,
  title={An inherently infinite-dimensional quantum correlation},
  author={Coladangelo, Andrea and Stark, Jalex},
  journal={Nature communications},
  volume={11},
  number={1},
  pages={3335},
  year={2020},
  publisher={Nature Publishing Group UK London}
}

@article{dykema2019non,
  title={Non-closure of the set of quantum correlations via graphs},
  author={Dykema, Ken and Paulsen, Vern I and Prakash, Jitendra},
  journal={Communications in Mathematical Physics},
  volume={365},
  number={3},
  pages={1125--1142},
  year={2019},
  publisher={Springer}
}

@inproceedings{slofstra2019set,
  title={The set of quantum correlations is not closed},
  author={Slofstra, William},
  booktitle={Forum of Mathematics, Pi},
  volume={7},
  pages={e1},
  year={2019},
  organization={Cambridge University Press}
}

@article{coladangelo2020two,
  title={A two-player dimension witness based on embezzlement, and an elementary proof of the non-closure of the set of quantum correlations},
  author={Coladangelo, Andrea},
  journal={Quantum},
  volume={4},
  pages={282},
  year={2020},
  publisher={Verein zur F{\"o}rderung des Open Access Publizierens in den Quantenwissenschaften}
}

@article{BarthMartinWilkinson1967,
  title={Handbook series linear algebra},
  author={Barth, W and Martin, RS and Wilkinson, JH},
  journal={Numerische Mathematik},
  volume={9},
  pages={386--393},
  year={1967}
}

@article{nazarov1997hunt,
  title={The hunt for a Bellman function: applications to estimates of singular integral operators and to other classical problems in harmonic analysis},
  author={Nazarov, F and Treil, S},
  journal={St Petersburg Mathematical Journal},
  volume={8},
  number={5},
  pages={721--824},
  year={1997},
  publisher={Providence, RI, USA: American Mathematical Society, c1992-}
}

@inproceedings{nazarov2001bellman,
  title={Bellman function in stochastic control and harmonic analysis},
  author={Nazarov, Fedor and Treil, Sergei and Volberg, Alexander},
  booktitle={Systems, Approximation, Singular Integral Operators, and Related Topics: International Workshop on Operator Theory and Applications, IWOTA 2000},
  pages={393--423},
  year={2001},
  organization={Springer}
}

@article{bellman1952theory,
  title={On the theory of dynamic programming},
  author={Bellman, Richard},
  journal={Proceedings of the national Academy of Sciences},
  volume={38},
  number={8},
  pages={716--719},
  year={1952}
}

@article{scholz2008tsirelson,
  title={Tsirelson's problem},
  author={Scholz, Volkher B and Werner, Reinhard F},
  journal={arXiv preprint arXiv:0812.4305},
  year={2008}
}

@article{pauwels26,
  title={The quantum supremum of the I3322 Bell inequality is not attained in finite dimension},
  author={Pauwels, Jef},
  journal={arXiv preprint arXiv:2608.29734},
  year={2026}
}

@misc{douglas2026i3322,
  author    = {Seth Douglas},
  title     = {The I3322 quantum value: certified window, spatial attainment, finite-dimensional nonattainment, and the Cqs/Cq separation},
  year      = {2026},
  publisher = {Zenodo},
  version   = {3.2.3},
  doi       = {10.5281/zenodo.21843326},
  url       = {https://doi.org/10.5281/zenodo.21843326}
}

@article{douglas26arxiv,
  title={The I3322 quantum value is attained spatially but not in finite dimension},
  author={Douglas, Seth},
  journal={arXiv preprint, to appear on September 3},
  year={2026}
}

@article{beigi2021separation,
  title={Separation of quantum, spatial quantum, and approximate quantum correlations},
  author={Beigi, Salman},
  journal={Quantum},
  volume={5},
  pages={389},
  year={2021},
  publisher={Verein zur F{\"o}rderung des Open Access Publizierens in den Quantenwissenschaften}
}

\appendix

\section{Proof of Lemma~\ref{lem:structure}}
\label{sec:app}
We prove Lemma~\ref{lem:structure}. Define
\[
        R_+:=Z+X,
        \qquad
        R_-:=Z-X.
\]
The identities in the lemma statement imply
\[
        R_\pm^2
        =
        Z^2+X^2\pm(ZX+XZ)
        =
        I.
\]
Thus, \(R_+\) and \(R_-\) are self-adjoint unitaries, and
\[
        P_+:=\frac{I+R_+}{2},
        \qquad
        P_-:=\frac{I+R_-}{2}
\]
are orthogonal projections.  Conversely,
\[
        Z=P_+ + P_- -I,
        \qquad
        X= P_+ - P_-.
\]

By Jordan's lemma (applied to the projections $P_+$ and $P_-$), the Hilbert space has an orthogonal decomposition
into subspaces of dimension at most two that are invariant under both
\(P_+\) and \(P_-\).  The same subspaces are therefore invariant under
\(Z\) and \(X\).  If a two-dimensional subspace contains a
one-dimensional subspace that is invariant under both \(P_+\) and \(P_-\), then, since the
projections are self-adjoint, its orthogonal complement is also
invariant. We may therefore split that subspace into two
one-dimensional blocks.  Hence, we may assume that every remaining two-dimensional block contains no one-dimensional subspace that is invariant under both \(Z\) and~\(X\).

We now determine the possible forms of these blocks.

\noindent \emph{One-dimensional blocks.}
On a one-dimensional block, \(Z\) and \(X\) act as multiplication by
real numbers \(z\) and \(x\).  The two identities become
\[
        2zx=0,
        \qquad
        z^2+x^2=1.
\]
Thus either
\[
        z\in\{\pm1\},
        \qquad
        x=0,
\]
or
\[
        z=0,
        \qquad
        x\in\{\pm1\}.
\]
These are precisely the two types of one-dimensional blocks in the
statement.

\medskip
\noindent
\emph{Two-dimensional blocks.}
Let \(\mathcal K\) be one of the remaining two-dimensional blocks. The restriction \(Z|_{\mathcal K}\) cannot be the zero operator.
Indeed, if \(Z|_{\mathcal K}=0\), then we could choose an eigenvector of
the self-adjoint operator \(X|_{\mathcal K}\) and its span would be
invariant under both \(Z\) and \(X\). Since \(Z|_{\mathcal K}\) is self-adjoint and nonzero, it has a nonzero
eigenvalue \(t\).  Choose a corresponding unit eigenvector
\(e_\alpha\), so that
\[
        Ze_\alpha=t e_\alpha,
        \qquad t\neq0.
\]
We have
\[
\begin{aligned}
        \|Xe_\alpha\|^2
        &=
        \langle e_\alpha,X^2e_\alpha\rangle\\
        &=
        \langle e_\alpha,(I-Z^2)e_\alpha\rangle\\
        &=
        1-t^2.
\end{aligned}
\]
Hence \(|t|\leq1\).  We cannot have \(|t|=1\), since then
\(Xe_\alpha=0\), so \(\operatorname{span}\{e_\alpha\}\) would be invariant under both
\(Z\) and \(X\).  Thus
\[
        0<|t|<1.
\]

Set
\[
        s:=\sqrt{1-t^2}>0
\]
and define
\[
        e_{\bar\alpha}:=\frac{1}{s}Xe_\alpha.
\]
This is a unit vector.  Moreover, the anticommutation relation gives
\[
        Ze_{\bar\alpha}
        =
        \frac{1}{s}ZXe_\alpha
        =
        -\frac{1}{s}XZe_\alpha
        =
        -t e_{\bar\alpha}.
\]
Since \(Z\) is self-adjoint and \(t\neq -t\), the vectors
\(e_\alpha\) and \(e_{\bar\alpha}\) are orthogonal.  They therefore form
an orthonormal basis of \(\mathcal K\).  By construction,
\[
        Xe_\alpha=s e_{\bar\alpha},
\]
while
\[
        Xe_{\bar\alpha}
        =
        \frac{1}{s}X^2e_\alpha
        =
        \frac{1-t^2}{s}e_\alpha
        =
        s e_\alpha.
\]
Consequently, with respect to the basis
\((e_\alpha,e_{\bar\alpha})\),
\[
        Z|_{\mathcal K}
        =
        \begin{pmatrix}
                t&0\\
                0&-t
        \end{pmatrix},
        \qquad
        X|_{\mathcal K}
        =
        \begin{pmatrix}
                0&\sqrt{1-t^2}\\
                \sqrt{1-t^2}&0
        \end{pmatrix}.
\]
Setting \(z_\alpha=t\) gives the required two-dimensional block.  By
interchanging \(e_\alpha\) and \(e_{\bar\alpha}\) if necessary, one may
in fact arrange that \(z_\alpha\in(0,1)\). Taking the union of the orthonormal bases chosen on all the Jordan
blocks gives the desired orthonormal basis of the full Hilbert space.

\end{document}